\RequirePackage[l2tabu,orthodox]{nag} %
\documentclass[letterpaper,11pt,]{article}

\usepackage[
  letterpaper,
  top=1in,
  bottom=1in,
  left=1in,
  right=1in
]{geometry}

\usepackage{custom}
\usepackage{macros}

\title{
  Private Edge Density Estimation for Random Graphs: \\
  Optimal, Efficient and Robust
}

\author{
  Hongjie Chen\thanks{ETH Z\"urich. This project has received funding from the European Research Council (ERC) under the European Union's Horizon 2020 research and innovation programme (grant agreement No 815464).} \and
  Jingqiu Ding\footnotemark[1] \and
  Yiding Hua\footnotemark[1] \and
  David Steurer\footnotemark[1]
}

\begin{document}

\pagestyle{empty}

\maketitle
\thispagestyle{empty} %

\begin{abstract}
  We give the first polynomial-time, differentially node-private, and robust algorithm for estimating the edge density of \ER random graphs and their generalization, inhomogeneous random graphs.
We further prove information-theoretical lower bounds, showing that the error rate of our algorithm is optimal up to logarithmic factors.
Previous algorithms incur either exponential running time or suboptimal error rates.

Two key ingredients of our algorithm are (1) a new sum-of-squares algorithm for robust edge density estimation, and (2) the reduction from privacy to robustness based on sum-of-squares exponential mechanisms due to Hopkins et al. (STOC 2023).

\end{abstract}
\clearpage

\microtypesetup{protrusion=false}
\tableofcontents{}
\microtypesetup{protrusion=true}
\clearpage

\pagestyle{plain}
\setcounter{page}{1}

\section{Introduction}
\label{sec:intro}

Privacy has nowadays become a major concern in large-scale data processing.
Releasing seemingly harmless statistics of a dataset could unexpectedly leak sensitive information of individuals (see e.g. \cite{narayanan2009anonymizing, dwork2017exposed} for privacy attacks).
Differential privacy (DP) \cite{dwork2006calibrating} has emerged as a by-now standard technique for protecting the privacy of individuals with rigorous guarantees. 
An algorithm is said to be \dpt if the distribution of its output remains largely unchanged under the change of a single data point in the dataset.

For datasets represented by graphs (e.g. social networks), two notions of \dpc have been investigated in the literature: 
\edp \cite{nissim2007smooth, karwa2011private}, where each edge is regarded as a data point; and
\ndp \cite{blocki2013differentially, kasiviswanathan2013analyzing}, where each node along with its incident edges is regarded as a data point.
Node \dpc is an arguably more desirable notion than \edp.
On the other hand, \ndp is also in general more difficult to achieve without compromising on utility, as many graph statistics usually have high sensitivity in the worst case.
It turns out that many graph statistics can have significantly smaller sensitivity on typical graphs under natural distributional assumptions.
Several recent works could thus manage to achieve optimal or nearly-optimal utility guarantees in a number of random graph parameter estimation problems \cite{borgs2015private, borgs2018revealing, ullman2019efficiently, chen2024private}. 

In this paper, we continue this line of work and study perhaps the most elementary statistical task in graph data analysis:
Given an $n$-node \ER random graph of which each edge is present with probability $\pnull$ independently, 
output an estimate $\hat{p}$ of the edge density parameter $\pnull$, subject to \ndp.
We consider the error metric $\abs{\hat{p} / \pnull - 1}$ which can reflect the fact that, the task is more difficult for smaller $\pnull$.

Without privacy requirement, the empirical edge density $\hat{p}$ achieves the information theoretically optimal error rate $\abs{\hat{p} / \pnull - 1} \le \tilde{O}\paren{1 / (n \sqrt{\pnull})}$.
The standard way to achieve $\eps$-differential node privacy is to add Laplace noise with standard deviation $\Theta(1 / (\eps n))$ to the empirical edge density $\hat{p}$.
This will incur an additional privacy cost of $\Theta(1 / (\eps n \pnull))$ which dominates the non-private error $\tilde{O}\paren{1 / (n \sqrt{\pnull})}$.
Surprisingly, Borgs et al. \cite{borgs2018revealing} gave an algorithm with privacy cost only $\tilde{O}\paren{ 1 / (\eps n \sqrt{n\pnull}) }$ which is negligible to the non-private error for any $\eps \gg 1/\sqrt{n}$.
However, their algorithm is based on a general Lipschitz extension technique that has exponential running time.
Later, Sealfon and Ullman \cite{ullman2019efficiently} provided a polynomial-time algorithm based on smooth sensitivity with privacy cost $\tilde{O}\paren{ 1 / (\eps n \sqrt{n\pnull}) + 1 / (\eps^2 n^2 \pnull) }$, which is much greater than that of \cite{borgs2018revealing} for $\eps \ll 1/(\sqrt{n\pnull})$.
Moreover, \cite{ullman2019efficiently} gives evidence that their approach is inherently prohibited from achieving better privacy cost.
On the other hand, known lower bounds in \cite{borgs2018revealing, ullman2019efficiently} are not for \ER random graphs.
This leads us to the following question:
\begin{center}
    \emph{
        Is there a polynomial-time, differentially node-private, and rate-optimal edge density estimation algorithm for \ER random graphs?
    }
\end{center}

We essentially settled this question in this paper.
Specifically, we give a polynomial-time and differentially node-private algorithm with privacy cost $\tilde{O}\paren{ 1 / (\eps n \sqrt{n\pnull}) }$.
Moreover, we show this error rate is optimal up to a logarithmic factor by proving an information-theoretical lower bound of $\Omega\paren{ 1 / (\eps n \sqrt{n\pnull}) }$.
Our algorithm actually works for the more general inhomogeneous random graphs \cite{bollobas2007phase}.
The inhomogeneous random graph model encompasses any random graph model where edges appear independently (after conditioning on node labels).
Notable examples include the stochastic block model \cite{holland1983stochastic}, the latent space model \cite{hoff2002latent}, and graphon \cite{borgs2017graphons}. 

Our algorithm largely exploits the close connection between \dpc and adversarial robustness in statistics.
This connection dates back to \cite{dwork2009differential} and has witnessed significant progress in the past few years \cite{liu2021robust, liu2022differential, kothari22Private, Georgiev22RobustPrivacy, asi2023robustness, hopkins2022efficient, Hopkins2023Privacy, alabi2023privately, chen2023private, chen2024private}.
In particular, a very recent line of works \cite{hopkins2022efficient, Hopkins2023Privacy, chen2024private} could efficiently achieve optimal or nearly-optimal accuracy guarantees in a number of high-dimensional statistical tasks, by integrating two powerful tools ---\sos method \cite{raghavendra2018high} and exponential mechanisms \cite{mcsherry2007mechanism}--- in robustness and privacy respectively.
Our algorithm extends this line of work.
The key technical ingredients of our algorithm are (1) a new sum-of-squares algorithm for robust edge density estimation and (2) an exponential mechanism whose score function is based on the \sos program.
As a consequence, our private algorithm is also robust to adversarial corruptions.

\subsection{Results}
\label{sec:results}

To state our results formally, we need the following definitions.

\begin{definition}[Node distance, neighboring graphs]
    Let $n\in \N$. 
    The node distance between two $n$-node graphs $G$ and $G'$, denoted by $\dist(G,G')$, is the minimum number of nodes in $G$ that need to be rewired to obtain $G'$.
    Moreover, we say $G$ and $G'$ are neighboring graphs if $\dist(G,G') \le 1$.
\end{definition}

\begin{definition}[Node differential privacy]
    \label{definition:node-dp}
    Let $\cG$ be the set of graphs.
    A randomized algorithm $\mathcal{A} : \cG \to \R$ is $\e$-differentially (node-)private if for every neighboring graphs $G,G'$ and every $S\subseteq\R$, we have
    \begin{equation*}
        \Pr\Brac{\mathcal{A}(G)\in S} \leq e^\eps \cdot \Pr\Brac{\mathcal{A}(G')\in S} \,.
    \end{equation*}
\end{definition}

\begin{definition}[Node corruption model]
    Let $n \in \N$ and $\eta \in [0,1]$.
    For an $n$-node graph $G$, we say an $n$-node graph $G'$ is an $\eta$-corrupted version of $G$ if $\dist(G,G') \le \eta n$.
\end{definition}

\paragraph{\ER random graphs}

We provide a polynomial-time,  differentially node-private and robust edge density estimation algorithm for \ER random graphs.

\begin{theorem}[\ER random graphs, combination of \cref{thm:coarse_estimation_inhomo} and \cref{thm:fine_estimation_er}]
    \label{thm:algorithm_ER}
    There are constants $C_1, C_2, C_3$ such that the following holds.
    For any $\eta\le C_1$, $\eps \ge C_2\log(n)/n$, and $\pnull\geq C_3/n$, there exists a polynomial-time $\eps$-differentially node-private algorithm which, given an $\eta$-corrupted \ER random graph $\bbG(n,\pnull)$, outputs an estimate $\tilde{p}$ satisfying
    \[
        \Abs{\frac{\tilde p}{\pnull}-1} 
        \le
        O\Paren{
            \frac{\sqrt{\log n}}{ n \sqrt{\pnull} } +
            \frac{\log^2 n}{\eps n \sqrt{n\pnull}} +
            \frac{\eta\log n}{\sqrt{n\pnull}} 
        } \,,
    \]
    with probability $1-n^{-\Omega(1)}$.
\end{theorem}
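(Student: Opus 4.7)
The plan is to prove \cref{thm:algorithm_ER} by chaining the two previously stated results: the coarse estimator of \cref{thm:coarse_estimation_inhomo} produces a rough constant-factor estimate $p_0 = \Theta(\pnull)$, and then the fine estimator of \cref{thm:fine_estimation_er}, parametrized by $p_0$, produces the final sharp estimate $\tilde p$. I would split the privacy budget as $\eps/2$ for each of the two phases, so that basic composition of differential privacy yields an overall $\eps$-DP algorithm. The three constants $C_1, C_2, C_3$ are taken to be the maxima of the corresponding constants needed by each subroutine.

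For the accuracy analysis, I would first apply \cref{thm:coarse_estimation_inhomo} with privacy parameter $\eps/2$ to conclude that, with probability $1 - n^{-\Omega(1)}$, the coarse estimate satisfies $p_0 \in [\tfrac{1}{2}\pnull, 2\pnull]$ (or some fixed constant-factor interval). Conditioned on this good event, I would then feed $p_0$ as an input parameter to \cref{thm:fine_estimation_er}, which calibrates the scale of its SoS exponential mechanism using $p_0$. Because $p_0$ is only a constant factor away from $\pnull$, the hypotheses of \cref{thm:fine_estimation_er} are met, and its guarantee produces $\tilde p$ with the claimed error rate up to constants. A union bound over the failure events of the two phases gives overall failure probability $n^{-\Omega(1)}$.

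The most delicate step in this combination is the privacy argument, because the fine estimator's behaviour depends on $p_0$, which is itself a (random) output of a DP mechanism. The standard way around this is to view the full procedure as the adaptive composition of two $(\eps/2)$-DP mechanisms (equivalently, to invoke post-processing together with basic composition): the joint distribution over $(p_0, \tilde p)$ is then $\eps$-DP with respect to neighboring graphs $G, G'$, and in particular so is the marginal over $\tilde p$. It is important here that the fine estimator be unconditionally $(\eps/2)$-DP for every fixed value of $p_0$, not merely on the good event $\{p_0 = \Theta(\pnull)\}$; by assumption this is exactly what \cref{thm:fine_estimation_er} provides.

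The main obstacle, conceptually, is that the error terms in the statement arise from distinct sources and must each be traceable to the appropriate subroutine: the $\sqrt{\log n}/(n\sqrt{\pnull})$ term is the intrinsic statistical rate coming from concentration of the empirical edge count, the $\log^2 n/(\eps n\sqrt{n\pnull})$ term is the privacy cost contributed by the SoS exponential mechanism of the fine phase, and the $\eta \log n / \sqrt{n\pnull}$ term is the robustness bias coming from the SoS certificate tolerating $\eta n$ adversarial node rewirings. All three appear directly in \cref{thm:fine_estimation_er}; the role of \cref{thm:coarse_estimation_inhomo} is only to supply the constant-factor scale, which contributes nothing beyond changing absolute constants. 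With privacy, accuracy, and success probability handled as above, the theorem follows.
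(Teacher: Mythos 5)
The proposal is correct and follows essentially the same approach as the paper: run the coarse estimator (\cref{thm:coarse_estimation_inhomo}, specialized to $R = n/(n-1)$) to obtain a constant-factor approximation $\hat{d}$, feed it to the fine estimator (\cref{thm:fine_estimation_er}), split the privacy budget so basic composition yields overall $\eps$-DP, and take a union bound over the two failure events; the error-to-$\pnull$ translation via $\dnull = n\pnull$ then gives exactly the three stated terms. Your explicit note that the fine estimator must be unconditionally private for every fixed coarse output — not merely on the good event — is exactly the right care required for adaptive composition, and matches what the paper relies on implicitly.
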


The first term $O(\sqrt{\log n} / (n \sqrt{\pnull}))$ is the sampling error that is necessary even without privacy or robustness.
The second term $O(\log^2(n) / (\eps n \sqrt{n\pnull}))$ is the privacy cost of our algorithm, which matches the exponential-time algorithm in \cite{borgs2018revealing}.
The third term $O( {\eta \log n} / {\sqrt{n\pnull}} )$ is the robustness cost of our algorithm, which matches the information-theoretical lower bound $\Omega( \eta / \sqrt{n\pnull}  )$ in \cite[Theorem 1.5]{acharya2022robust} up to a $\log n$ factor.

Moreover, we provide the following lower bound which shows that the privacy cost of our algorithm is optimal up to a $\log n$ factor.\footnote{
    Borgs et al. \cite{borgs2018revealing} proved a lower bound for a variant of \ER random graphs. However, it is not clear whether their proof technique can be easily extended to \ER random graphs.
}

\begin{theorem}[Privacy lower bound for \ER random graphs]
    \torestate{
        \label{thm:lower_bound_ER}
        Suppose there is an $\eps$-differentially node-private algorithm that, given an \ER random graph $\bbG(n,\pnull)$, outputs an estimate $\tilde{p}$ satisfying $\abs{\tilde{p} / \pnull - 1} \le \alpha$ with probability $1-\beta$.
        Then we must have
        \[
            \alpha 
            \geq 
            \Omega\Paren{ 
                \frac {\log(1/\beta)} {\eps n \sqrt{n\pnull}}
            } \,.
        \]
    }
\end{theorem}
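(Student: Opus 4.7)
The plan is to prove the lower bound via a two-point hypothesis-testing argument combined with group differential privacy. First, fix $p_0 := \pnull$ and $p_1 := \pnull(1 + 2\alpha)$; the accuracy assumption then forces the algorithm $\mathcal{A}$ to distinguish $\bbG(n, p_0)$ from $\bbG(n, p_1)$, so the output distributions $\mathcal{A}(\bbG(n, p_0))$ and $\mathcal{A}(\bbG(n, p_1))$ concentrate on disjoint intervals and the total variation distance between them is at least $1 - 2\beta$.

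Next, I will construct a coupling $\pi$ of $\bbG(n, p_0)$ and $\bbG(n, p_1)$ with small node distance $k$ and apply group $\eps$-differential privacy. The standard manipulation gives
\begin{equation*}
 1 - \beta \leq \Pr_{\bbG(n,p_0)}[\mathcal{A}(G) \in S] \leq e^{k\eps}\Pr_{\bbG(n,p_1)}[\mathcal{A}(G) \in S] + \gamma \leq e^{k\eps}\beta + \gamma,
\end{equation*}
where $S$ is the event ``$\tilde{p} \leq \pnull(1+\alpha)$'' and $\gamma := \Pr_{\pi}[\dist(G_0, G_1) > k]$. Choosing $\gamma \leq \beta$ and rearranging forces $k\eps \geq \log(1/(3\beta))$, and hence $k \geq \Omega(\log(1/\beta)/\eps)$. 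The lower bound on $\alpha$ follows by translating this into a bound on $\delta = p_1 - p_0 = 2\alpha\pnull$ using the coupling.

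The main work is to construct a coupling with the correct dependence of $k$ on $\delta$. Under the natural monotone edge-by-edge coupling, the symmetric difference $G_0 \oplus G_1$ is distributed as $\bbG(n, \delta)$, whose vertex cover --- equal to the node distance --- concentrates around $\Theta(n^2\delta)$ in the sparse regime $n\delta \ll 1$. Substituting $k \approx n^2\delta$ already yields $\alpha \geq \Omega(\log(1/\beta)/(\eps n^2 \pnull))$, which in fact implies the claimed bound in the sparse regime $n\pnull \leq O(1)$.

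The hard regime is $n\pnull \gg 1$, where the naive coupling falls short of the theorem's rate by a factor $\sqrt{n\pnull}$. To close this gap, I plan to insert an auxiliary near-ER distribution $Q$: start from $\bbG(n, p_0)$, pick a random subset $S$ of $k$ nodes, and resample the edges incident to $S$ at a carefully tuned density so that (i) $(\bbG(n, p_0), Q)$ admits a coupling at node distance $k$, and (ii) $\mathrm{TV}(Q, \bbG(n, p_1)) \leq \beta$. Chaining the DP group-privacy bound for $\bbG(n, p_0) \to Q$ with a post-processing bound for $Q \to \bbG(n, p_1)$ then yields the tight lower bound. The principal obstacle is realizing step (ii) with the correct quantitative trade-off: one must exploit the edge-density concentration of Erdős--Rényi graphs to show that the inhomogeneity introduced by resampling on $S$ is in fact close enough to homogeneous $\bbG(n, p_1)$ in total variation to absorb the $\sqrt{n\pnull}$ factor needed to match the target rate.
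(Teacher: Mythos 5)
Your high-level framework matches the paper's: a two-point test between $\bbG(n,p_0)$ and $\bbG(n,p_1)$ with $|p_1-p_0|=\Theta(\alpha\pnull)$, a coupling, and a group-privacy inequality. Your sparse-regime observation (the monotone edge-by-edge coupling gives node distance $\Theta(n^2\alpha\pnull)$, which implies the claim when $n\pnull = O(1)$) is also sound. The dense regime, however, which you flag as ``the principal obstacle,'' is the entire content of the theorem, and your plan for it has a genuine gap. You want an auxiliary law $Q$, obtained from $\bbG(n,p_0)$ by resampling only the edges incident to a random $k$-set $S$, such that $Q$ is within node distance $k\approx\log(1/\beta)/\eps$ of $\bbG(n,p_0)$ and within total variation $O(\beta)$ of $\bbG(n,p_1)$. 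But the $\binom{n-k}{2}$ edge variables disjoint from $S$ remain Bernoulli$(p_0)$ under $Q$, so to match the total edge count of $\bbG(n,p_1)$ you must overshoot on the $O(kn)$ resampled edges by $p_2-p_0\approx n\alpha\pnull/k$; this gives each node of $S$ an expected degree excess of $\Theta(n^2\alpha\pnull/k)$ over the rest of the graph, far outside the natural $\Theta(\sqrt{n\pnull})$ degree fluctuation of $\bbG(n,p_1)$ unless $k\gtrsim n\alpha\sqrt{n\pnull}$. So either $k$ is of that order (in which case the detour through $Q$ buys you nothing over a direct coupling) or $Q$ has an atypical degree profile and cannot be $\TV$-close to $\bbG(n,p_1)$. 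You do not explain how to escape this, and I do not see how to make it work with this construction.

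The paper sidesteps the auxiliary distribution entirely and builds the sharp coupling of $\bbG(n,p_0)$ and $\bbG(n,p_1)$ directly, via a change of model that you do not consider: pass to \emph{directed} ER graphs $\tilde\bbG(n,p)$, where each node's out-neighborhood is an independent product-Bernoulli$(p)$ vector, which one can equivalently sample by drawing an out-degree $\sim\Bin(n,p)$ and then a uniformly random subset of that size. Coupling each node's out-degree independently via an optimal $\TV$-coupling of $\Bin(n,p_0)$ and $\Bin(n,p_1)$, and reusing the random subset whenever the out-degrees agree, produces a coupling in which the number of differing rows is exactly $\Bin(n,\Delta)$ with $\Delta=\TV\bigl(\Bin(n,p_0),\Bin(n,p_1)\bigr)\lesssim\alpha\sqrt{n\pnull}$; symmetrizing the directed graph gives a coupling of the undirected models with the same node-distance law (\cref{lem:coupling}). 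This is exactly the $\sqrt{n\pnull}$ improvement you were looking for, and it handles all density regimes uniformly. The paper also avoids your fixed-threshold $k$ step, integrating the privacy loss against the full Binomial node-distance distribution via Cauchy--Schwarz and the Binomial moment generating function, which gives $\Delta\gtrsim\log(1/\beta)/(n\eps)$ cleanly.
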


\paragraph{Inhomogeneous random graphs}
Given an $n$-by-$n$ edge connection probability matrix $\Qnull$, the inhomogeneous random graph model $\bbG(n, \Qnull)$ defines a distribution over $n$-node graphs where each edge $\{i,j\}$ is present with probability $(\Qnull)_{ij}$ independently.

We provide a polynomial-time, differentially node-private 
and robust edge density estimation algorithm for inhomogeneous random graphs.

\begin{theorem}[Inhomogeneous random graphs, combination of \cref{thm:coarse_estimation_inhomo} and \cref{thm:fine_estimation_inhomo}]
    \label{thm:algorithm_inhomo}
    Let $\Qnull$ be an $n$-by-$n$ edge connection probability matrix and let $\pnull \seteq \sum_{i,j} \Qnull_{ij} / (n^2-n)$. 
    Suppose $\Norm{\Qnull}_{\infty} \le R\pnull$ for some $R$.
    There is a sufficiently small constant $c$ such that the following holds.
    For any $\eta$ such that $\eta\log(1/\eta)R \le c$, there exists a polynomial-time $\eps$-differentially node-private algorithm which, given an $\eta$-corrupted inhomogeneous random graph $\bbG(n,\Qnull)$, outputs an estimate $\tilde{p}$ satisfying
    \[
        \Abs{\frac{\tilde{p}}{\pnull}-1} 
        \le
        O\Paren{
            \frac{\sqrt{\log n}}{ n \sqrt{\pnull} } +
            \frac{R \log^2 n}{\eps n} +
            R \eta \log(1/\eta)
        } \,,
    \]
    with probability $1-n^{-\Omega(1)}$.
\end{theorem}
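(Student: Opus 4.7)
Since the theorem is stated as the combination of \cref{thm:coarse_estimation_inhomo} and \cref{thm:fine_estimation_inhomo}, the task reduces to gluing the two algorithms together with a privacy-budget split. I would first run the coarse estimator (\cref{thm:coarse_estimation_inhomo}) with privacy parameter $\eps/2$ on the $\eta$-corrupted graph to produce a rough estimate $\hat p$ satisfying, say, $\hat p \in [\pnull/2,\, 2\pnull]$ with probability $1-n^{-\Omega(1)}$. The coarse estimator need only recover the order of magnitude of $\pnull$, so it will not contribute to the leading-order error.

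Next I would feed $\hat p$ into the fine estimator (\cref{thm:fine_estimation_inhomo}) with privacy parameter $\eps/2$. The role of $\hat p$ is to calibrate internal parameters of the fine algorithm --- concretely, the normalization of the SoS relaxation certifying robust closeness between a candidate $p$ and the true $\pnull$, and the sensitivity bound plugged into the SoS-based exponential mechanism of Hopkins--Kothari--Steurer \cite{Hopkins2023Privacy}. Privacy of the overall algorithm then follows from basic composition of two $(\eps/2)$-DP mechanisms, and the accuracy bound is exactly the one stated in \cref{thm:fine_estimation_inhomo} after a union bound over the two high-probability success events.

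The main obstacle lies entirely in the fine estimation step, which is bracketed off as \cref{thm:fine_estimation_inhomo}. There one must design a constant-degree SoS program that certifies, uniformly over $\eta$-corruptions of $\bbG(n,\Qnull)$ with $\Norm{\Qnull}_{\infty} \le R\pnull$, that any candidate $p$ far from $\pnull$ attains a large objective value. The $R$-dependence in both the privacy and robustness terms suggests the SoS certificate must bound second-moment-type quantities by $R\pnull$ rather than by a worst-case $1$, while the $\log(1/\eta)$ factor will likely require an SoS version of the high-degree-vertex truncation argument standard in robust graph statistics. Granting these SoS ingredients, the Hopkins--Kothari--Steurer reduction converts the robust SoS estimator into a node-private one whose score function is the SoS objective value, yielding the claimed $R\log^2 n/(\eps n)$ privacy cost; the present theorem then follows by the composition sketched above.
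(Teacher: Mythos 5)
Your proposal is correct and matches the paper's approach: the paper also proves \cref{thm:algorithm_inhomo} simply by running the coarse estimator (\cref{thm:coarse_estimation_inhomo}) and then the fine estimator (\cref{thm:fine_estimation_inhomo}), invoking basic composition for privacy, and taking a union bound for the accuracy claim. The extended discussion of how the fine SoS program might be designed is not part of this proof (it is internal to \cref{thm:fine_estimation_inhomo}), but the combination argument itself is exactly what the paper does.
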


To the best of our knowledge, even without privacy requirement and in the special case of \ER random graphs, no previous algorithm can match our guarantees in the sparse regime.
Specifically, when $\dnull \ll \log n$ and $\eta\ge\Omega(1)$, our algorithm can provide a constant-factor approximation of $\dnull$, while the best previous robust algorithm \cite{acharya2022robust} can not.

We also provide matching lower bounds, showing that the guarantee of our algorithm in \cref{thm:algorithm_inhomo} is optimal up to logarithmic factors.

\begin{theorem}[Robustness lower bound for inhomogeneous random graphs]
    \torestate{
        \label{thm:robust_lower_bound_inhomo}
        Suppose there is an algorithm satisfies the following guarantee for any symmetric matrix $\Qnull \in [0,1]^{n \times n}$.
        Given an $\eta$-corrupted inhomogeneous random graph $\bbG(n, \Qnull)$, the algorithm outputs an estimate $\hat{p}$ satisfying $\abs{\hat{p}/\pnull - 1} \le \alpha$ with probability at least $0.99$, where $\pnull = \sum_{i,j} \Qnull_{ij} / (n^2-n)$.
        Then we must have $\alpha \geq \Omega(R\eta)$, where $R = \max_{i,j} \Qnull_{ij} / \pnull$.
    }
\end{theorem}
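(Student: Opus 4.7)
The plan is a two-point indistinguishability argument (Le Cam's method): I will construct two inhomogeneous random graph instances $Q_1, Q_2 \in [0,1]^{n\times n}$ that both satisfy the max-to-mean ratio bound, are indistinguishable after an $\eta$-node corruption via an explicit coupling, yet have edge densities differing by a multiplicative factor $1+\Omega(R\eta)$. Any algorithm meeting the stated accuracy on both instances is then forced to have $\alpha = \Omega(R\eta)$.

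For the construction, fix an arbitrary $S \subseteq [n]$ with $\abs{S} = \eta n$ and a base density $p$ with $Rp \le 1$. Let $Q_1 \equiv p$ be constant, and let $(Q_2)_{ij} = Rp$ when $i \in S$ or $j \in S$ and $(Q_2)_{ij} = p$ otherwise. A direct calculation gives $p_1 = p$ and $p_2 = p\Brac{1 + (2\eta - \eta^2)(R-1)}$, so $\Abs{p_2/p_1 - 1} = \Omega(R\eta)$ in the regime of interest (namely $R$ bounded away from $1$ and $\eta R$ below a small constant). Both matrices have max-to-mean ratio at most $R$: trivially for $Q_1$, and for $Q_2$ one has $Rp/p_2 \le R$.

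The coupling is the natural one: sample $(G_1, G_2)$ jointly by sharing a single Bernoulli$(p)$ coin flip for every pair $\{i,j\}$ with $i,j \notin S$, and drawing independent Bernoulli$(p)$ and Bernoulli$(Rp)$ coins for each pair touching $S$. The marginals are $G_i \sim \bbG(n, Q_i)$ as required, and $G_1, G_2$ disagree only on edges incident to $S$, so $\dist(G_1, G_2) \le \abs{S} = \eta n$ deterministically. Hence $G_2$ is a legitimate $\eta$-corruption of $G_1$ and vice versa.

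To finish, apply the algorithm's guarantee to both worlds. Treating world~1 as the ground truth with the adversary playing the coupling map yields $\Abs{\mathcal{A}(G_2)/p_1 - 1} \le \alpha$ with probability at least $0.99$; viewing $G_2$ as an uncorrupted sample in world~2 yields $\Abs{\mathcal{A}(G_2)/p_2 - 1} \le \alpha$ with the same probability. A union bound together with the triangle inequality then gives $\Abs{p_2 - p_1} \le \alpha(p_1 + p_2)$, hence $\alpha = \Omega(R\eta)$. The main point to verify carefully is that the construction remains in the valid parameter range (in particular $Rp \le 1$ and $\eta R$ below a small constant, so that the density gap is genuinely $\Omega(R\eta)$ and not dampened by the $(R-1)$ factor); the degenerate regime $R$ close to $1$ can either be absorbed into the implicit constant of $\Omega(\cdot)$ or handled by a symmetrized variant where both $Q_1$ and $Q_2$ place the elevated block of entries on different node subsets.
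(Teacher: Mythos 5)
The proposal takes the same two-point indistinguishability approach as the paper (explicit coupling that aligns the two distributions up to an $\eta n$-node rewiring, then separate the densities by $\Omega(R\eta)$), but your specific construction has a genuine defect when $R$ is close to $1$. With $Q_1 \equiv p$ and $Q_2$ elevated to $Rp$ on the block touching $S$, you correctly get $\abs{p_2/p_1 - 1} = (R-1)(2\eta - \eta^2)$, which is $\Theta\paren{(R-1)\eta}$ rather than $\Theta(R\eta)$. For $R = 1 + o(1)$ these differ by more than a constant factor, and this regime is not vacuous: the algorithmic upper bound in \cref{thm:algorithm_inhomo} gives $O(R\eta\log(1/\eta))$, which for $R \approx 1$ is $O(\eta\log(1/\eta))$ and is meant to be matched by an $\Omega(\eta)$ lower bound, so the claim at $R \approx 1$ has real content. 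Neither of your suggested patches closes the gap: absorbing $(R-1)$ into the $\Omega$-constant fails since $(R-1)/R$ is not bounded below, and placing elevated blocks on disjoint subsets $S_1, S_2$ makes $p_1 = p_2$ (zero density gap) while also doubling the coupling distance to $2\eta n$.

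The paper's fix is to separate the two hypotheses symmetrically about the base level on the \emph{same} subset $S$: one matrix has its $S$-block at $R\pnull$, the other has its $S$-block at $0$, and both keep $\pnull$ off the block. The resulting density gap is $(R\pnull - 0)\cdot(2\eta - \eta^2) = \Theta(R\eta\pnull)$, uniformly over all $R \geq 1$; both matrices still have max-to-mean ratio $O(R)$, and the same shared-coins coupling (independent redraws only on pairs touching $S$) gives $\dist \leq \eta n$ deterministically. Once you swap in this construction, the remainder of your argument---coupling, union bound over the two worlds, triangle inequality---goes through as written.
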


\begin{theorem}[Privacy lower bound for inhomogeneous random graphs]
    \torestate{
        \label{thm:privacy_lower_bound_inhomo}
        Suppose there is an $\eps$-differentially node-private algorithm satisfies the following guarantee for any symmetric matrix $\Qnull \in [0,1]^{n \times n}$.
        Given an inhomogeneous random graph $\bbG(n, \Qnull)$, the algorithm outputs an estimate $\hat{p}$ satisfying $\abs{\hat{p} / \pnull - 1} \le \alpha$ with probability $1-\beta$, where $\pnull = \sum_{i,j} \Qnull_{ij} / (n^2-n)$.
        Then we must have 
        \[
            \alpha \geq \Omega\Paren{\frac{R\log(1/\beta)}{n\eps} } \,,
        \]
        where $R = \max_{i,j} \Qnull_{ij} / \pnull$.
    }
\end{theorem}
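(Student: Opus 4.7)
The plan is to combine a standard two-point packing argument with group privacy. For any target ratio $R \ge 2$, I will exhibit two connection matrices $\Qnull^{(0)}, \Qnull^{(1)}$ whose induced edge densities $p_0, p_1$ satisfy $p_1/p_0 > (1+\alpha)/(1-\alpha)$, while the corresponding inhomogeneous random graphs can be coupled to lie within node distance $k = \Theta(\alpha n / R)$. Group privacy along this $k$-step coupling then forces $k\eps \gtrsim \log(1/\beta)$, which rearranges to the claimed lower bound on $\alpha$.

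For the construction I would fix a parameter $p \le 1/R$ and a subset $S \subseteq [n]$ of size $k$ (to be chosen at the end). I set $(\Qnull^{(0)})_{ij} = p$ for $i \ne j$, and let $\Qnull^{(1)}$ agree with $\Qnull^{(0)}$ except that every off-diagonal entry with at least one endpoint in $S$ is boosted to $Rp$. A direct count then gives $p_0 = p$ and $p_1 = p \cdot \bigl(1 + \Theta((R-1)k/n)\bigr)$ for $k \ll n$. To couple $G_0 \sim \bbG(n, \Qnull^{(0)})$ with $G_1 \sim \bbG(n, \Qnull^{(1)})$, I use a common Bernoulli coin for every edge disjoint from $S$ and independent coins for the edges incident to $S$; then $G_0 \triangle G_1$ is supported on edges incident to $S$, so rewiring the $k$ nodes of $S$ transforms $G_0$ into $G_1$, giving $\dist(G_0, G_1) \le k$ almost surely under the coupling.

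For the reduction, I take $k$ to be the least integer with $p_1/p_0 > (1+\alpha)/(1-\alpha)$, which by the previous computation gives $k = O(\alpha n / (R-1))$, and set $T_j := [p_j(1-\alpha), p_j(1+\alpha)]$; by construction $T_0 \cap T_1 = \emptyset$. The accuracy hypothesis yields $\Pr[\mathcal{A}(G_j) \in T_j] \ge 1 - \beta$ for $j \in \{0,1\}$. Iterating $\eps$-DP along a path of $k$ neighboring graphs (group privacy) and averaging over the coupling then gives
\begin{equation*}
    1 - \beta \;\le\; \Pr[\mathcal{A}(G_1) \in T_1] \;\le\; e^{k\eps}\, \Pr[\mathcal{A}(G_0) \in T_1] \;\le\; e^{k\eps}\cdot \beta \,,
\end{equation*}
so $k\eps \ge \log((1-\beta)/\beta) = \Omega(\log(1/\beta))$. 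Substituting $k = \Theta(\alpha n / (R-1))$ yields $\alpha \ge \Omega((R-1)\log(1/\beta)/(n\eps))$, which is the claimed $\Omega(R \log(1/\beta)/(n\eps))$ for $R \ge 2$; the remaining range $R \in [1, 2)$ I would absorb into the constant, or alternatively invoke \cref{thm:lower_bound_ER} with a choice of $p$ making $np = \Theta(1)$.

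I do not expect any single step to be hard in isolation, since each ingredient (two-point packing, group privacy, and coupling to bound node distance) is standard. The real design choice is to boost the rows \emph{incident to} $S$ rather than the block inside $S$: this produces $\Theta(kn)$ boosted entries and hence a density shift of order $Rkp/n$ (linear in $R$), while only $k$ node rewirings are needed in the coupling. That is precisely what places the $R$-factor multiplicatively in the separation $p_1 - p_0$ without inflating the node distance, and it is the key observation that makes the construction deliver an $R$-dependent lower bound.
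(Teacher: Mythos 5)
Your proposal is correct and follows essentially the same strategy as the paper: construct two connection matrices that differ only on $O(k)$ (resp.\ $\eta n$) rows/columns, couple the resulting inhomogeneous random graphs so the node distance is bounded by $k$ (resp.\ $\eta n$) almost surely, and invoke group privacy to force $k\eps \gtrsim \log(1/\beta)$. The differences are cosmetic: the paper compares a matrix with a block of boosted entries $R\pnull$ against one with those same entries zeroed out (which gives a density separation $\approx 2\eta R \pnull$ that handles $R$ close to $1$ automatically), whereas you boost from $p$ to $Rp$, getting a separation $\approx 2(R-1)k p/n$, and therefore need the fallback to the Erd\H{o}s--R\'{e}nyi lower bound for $R \in [1,2)$ — which you correctly note; simply ``absorbing $R-1$ into the constant'' would not work as $R\to 1$, so the fallback is genuinely needed. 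Your group-privacy step is also slightly more direct than the paper's: since the coupling gives $\dist(G_0,G_1)\le k$ surely, you can apply the $e^{k\eps}$ factor pointwise and then average, bypassing the Cauchy--Schwarz manipulation the paper inherits from its Erd\H{o}s--R\'{e}nyi proof (where the distance is binomial rather than deterministic).
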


\subsection{Techniques}
\label{sec:techniques}

We give an overview of the key techniques used to obtain our algorithm.
As our techniques for \ER random graphs can be easily extended to the more general inhomogeneous random graph model, we will focus on \ER random graphs to avoid a proliferation of notation.
Specifically, given an $\eta$-corrupted \ER random graph $\bbG(n,\dnull/n)$, our goal is to output a private estimate of $\dnull$.

\paragraph{Reduction from privacy to robustness}
Hopkins et al. \cite{Hopkins2023Privacy} and Asi et al. \cite{asi2023robustness} independently discovered the following black-box reduction from privacy to robustness. 
Given a robust algorithm $\cA_{\mathrm{robust}}$, one can directly obtain a private algorithm via applying the exponential mechanism \cite{mcsherry2007mechanism} with the following score function,
\begin{equation}
    \label{eq:black_box_reduction}
    \score(d; A) \seteq \min_{A'} \Set{ \dist(A', A) \,:\, \Abs{ \cA_{\mathrm{robust}}(A') - d } \le 1/\poly(n) } \,,
\end{equation}
where $A$ is the adjacency matrix of input graph and $d$ is a candidate estimate.
For privacy analysis, note that the sensitivity of the above score function is bounded by $1$, as the node distance between neighboring graphs is at most $1$.
For utility analysis, when the input graph is a typical \ER random graph, the exponential mechanism will with high probability output a $\hat{d}$ of score $O(\log(n)/\e)$.
Then we can argue that such a $\hat{d}$ is close to $\dnull$ using the robustness of $\cA_{\mathrm{robust}}$.
For example, if we plug in the robust algorithm in \cite[Theorem 1.3]{acharya2022robust}, then the corresponding exponential mechanism will only incur a privacy cost of $\tilde{O} \paren{1 / (\epsilon n\sqrt{\dnull})}$.

However, directly plugging in the robust algorithm in \cite{acharya2022robust} will lead to an exponential-time algorithm, as a single evaluation of the score function requires enumerating all $n$-node graphs.
To obtain a polynomial-time algorithm, we develop a new robust algorithm via the \emph{sum-of-squares} method.

\paragraph{Robust algorithm via sum-of-squares}
The \sos method uses convex programming (in particular, semidefinite programming) to solve polynomial programming.
It is a very powerful tool for designing polynomial-time robust estimators (see \cite{raghavendra2018high}).
To obtain a robust algorithm via sum-of-squares, we first identify a set of polynomial constraints that a typical (uncorrupted) \ER random graph would satisfy. 
Specifically, these polynomial constraints encode the following regularity conditions:
(1) the degrees of the nodes are highly concentrated, and (2) the centered adjacency matrix is spectrally bounded.
We also include the constraint that at most $\eta$ fraction of the nodes in the graph are corrupted.
Then we give a proof that if a graph satisfies the above constraints, then its average degree will be close to $\dnull$, even when $\eta$ fraction of nodes in the input graph are arbitrarily corrupted.
Importantly, the proof is simple enough that it is captured by the \sos proof system (see \cite{fleming2019semialgebraic}).
This allows us to extend the utility guarantee of the polynomial program to its semidefinite programming relaxation, which results in a polynomial-time robust algorithm.

\paragraph{Sum-of-squares exponential mechanism}
Given the above robust algorithm, we then use the sum-of-squares exponential mechanism developed in \cite{hopkins2022efficient,Hopkins2023Privacy} to obtain a private algorithm.
More specifically, we apply the exponential mechanism with the \sos relaxation of the score function in \cref{eq:black_box_reduction}.
In this way, we obtain a private algorithm that is also robust to adversarial corruptions.

\subsection{Notation}
\label{sec:notation}

We introduce some notation used throughout this paper.
We write \(f \lesssim g\) to denote the inequality \(f \le C \cdot g\) for some absolute constant \(C>0\).
We write \(O(f)\) and \(\Omega(f)\) to denote quantities \(f_-\) and \(f_+\) satisfying \(f_-\lesssim f\) and \(f \lesssim f_+\) respectively.
We use boldface to denote random variables, e.g., $\bm X, \bm Y, \bm Z$.
For a matrix $M$, we use $\normop{M}$ for the spectral norm of $M$.
Let $\one$ and $\zero$ denote the all-one and all-zero vector respectively, of which the size will be clear from the context.
We use a graph $G$ and its adjacency matrix $A=A(G)$ interchangeably when there is no ambiguity.
For an $n$-by-$n$ matrix $M$, we use $d(M)$ to denote its average row/column sum, i.e., $d(M) = \sum_{i,j} M_{ij}/n$.
For any matrices (or vectors) $M,N$ of the same shape, we use $M\odot N$ to denote the element-wise product (aka Hadamard product) of $M$ and $N$.

\subsection{Organization}
\label{sec:organization}

The rest of the paper is organized as follows.
In \cref{sec:proof-sketch}, we give a proof overview of our results.
In \cref{sec:sos_exp_mechanism}, we present a general \sos exponential mechanism that all of our private algorithms in this paper are based on.
In \cref{sec:coarse_estimation}, we present our coarse estimation algorithm and give a full proof of its guarantees (\cref{thm:coarse_estimation_inhomo}).
In \cref{sec:fine_estimation_inhmomo}, we present our fine estimation algorithm for inhomogeneous random graphs and give a full proof of its guarantees (\cref{thm:fine_estimation_inhomo}).
In \cref{sec:fine_estimation_ER}, we present our fine estimation algorithm for \ER random graphs and give a full proof of its guarantees (\cref{thm:fine_estimation_er}).
All lower bounds are proved in \cref{sec:lower-bounds}.
We provide some \sos background in \cref{sec:prelim} and some concentration inequalities for random graphs in \cref{sec:concentration-inequalities}.

\section{Private and robust algorithm for \ER random graphs}
\label{sec:proof-sketch}

In this section, we describe our private and robust algorithm for \ER random graphs.
We also give an overview of the analysis of our algorithm and sketch the proof of our lower bounds.

Our overall algorithm consists of two stages. 
In the first stage, we compute a coarse estimate that approximates the edge density parameter within constant factors. 
In the second stage, we improve the accuracy of this coarse estimate to the optimum.
Since our algorithm is private in both stages, it is also private overall by the composition theorem of differential privacy (see \cite[Section 3.5]{dwork2014algorithmic}).

We remark that for the \ER random graph model $\bbG(n, \pnull)$, estimating its edge density parameter $\pnull$ is equivalent to estimating its expected average degree $\dnull \seteq n\pnull$.
For the convenience of notation, we set our goal as estimating the expected average degree $\dnull$ throughout this section.

\subsection{General algorithm framework}

Given an $n$-by-$n$ symmetric matrix $A$ and a scalar $\gamma\in[0,1]$, 
let $\cT(Y, z; A, \gamma)$ be a polynomial system with indeterminates $Y=(Y_{ij})_{i,j\in[n]}$ and $z=(z_i)_{i\in[n]}$ that encodes the node distance between $Y$ and $A$:
\begin{equation}\label{eq:techniques-graph-constraints}
    \cT(Y,z; A,\gamma) \seteq 
    \left. 
        \begin{cases}
            z \odot z = z,\, \iprod{\one, z} \ge (1-\gamma)n \\
            0 \leq Y \leq \one \one^{\top},\, Y=Y^\top \\
            Y \odot zz^\top = A \odot zz^\top
        \end{cases} 
    \right\} \,.
\end{equation}
Let $\cR(Y)$ be A polynomial system that encodes regularity conditions of \ER random graphs. The key observation here is that, for any $Y \in \set{0, 1}^{n \times n}$ and $z \in \set{0, 1}^{n}$ that satisfy constraints in $\cT(Y, z; A, \gamma) \cup \cR(Y)$, $Y$ is a graph that behaves like \ER random graphs (in the sense of the regularity conditions) and is within node distance $\gamma n$ to $A$ where they agree on $\set{i\in[n] \,:\, z_i=1}$.

The key ingredient of our result is that, given proper regularity conditions $\cR(Y)$, we can give degree-$8$ sum-of-squares proofs: for any $Y$ that satisfies constraints in $\cT(Y, z; A, \gamma) \cup \cR(Y)$, the average degree of $Y$ is close to the expected average degree $\dnull$, even when the input graph $A$ is a $\gamma$-corrupted \ER random graph $\bbG(n, \dnull/n)$.
As a result of the sum-of-squares proofs-to-algorithms framework (see \cref{theorem:SOS_algorithm}), we can get an efficient and robust estimator $\Tilde{\E}[d(Y)]$, where $\Tilde{\E}$ is a pseudo-expectation obtained by solving level-$8$ sum-of-squares relaxation of $\cT(Y, z; A, \gamma) \cup \cR(Y)$.

Based on the above identifiability proof for robust estimation, we design a private and robust algorithm by applying the exponential mechanism\footnote{
    To efficiently implement this exponential mechanism, we note that the score function \cref{eq:sos_relaxation} can be evaluated in polynomial time by combining binary search and semidefinite programming. 
    By discretizing $[0,n]$ with step size $1/\poly(n)$, one can sample from the distribution \cref{eq:sos_exp_mec} with a polynomial number of queries to the score function. 
    For more detailed discussions, see \cref{remark:score_function_computation} and \cref{remark:sampling}.
}
with the following score function:
\begin{align}
    \label{eq:sos_relaxation}
    \begin{split}
        \soscore(d; A) \seteq 
        \min_{0\le\gamma\le1} \gamma n \text{ s.t. }
            & \exists \text{ level-}8 \text{ pseudo-expectation } \tE \text{ satisfying } \\[-8pt]
            & \cT(Y,z; A,\gamma) \,\cup\, 
            \cR(Y) \,\cup\, 
             \Set{\Abs{d(Y)-d} \le 1/\poly(n)}  \,.
    \end{split}
\end{align}

Similar to \cref{eq:black_box_reduction}, it is easy to observe this exponential mechanism is private.

\begin{lemma}[Privacy]
    \label{fact:sketch-privacy}
    Consider the distribution $\mu_{A,\eps}$ with support $[0,n]$ and density 
    \begin{equation}
        \label{eq:sos_exp_mec}
        \mathrm{d}\mu_{A,\eps}(d) \propto \exp\paren{-\eps\cdot \soscore(d;A)} \,,
    \end{equation}
    where $\soscore(d;A)$ is defined in \cref{eq:sos_relaxation}.
    A sample from $\mu_{A,\eps}$ is $2\e$-differentially private. 
\end{lemma}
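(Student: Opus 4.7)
The plan is to show that the score function $\soscore(\cdot; A)$ has node-sensitivity at most $1$, i.e., for every pair of neighboring graphs $A, A'$ and every candidate $d \in [0,n]$, we have $\abs{\soscore(d; A) - \soscore(d; A')} \le 1$. Once this is established, the claim follows from the standard analysis of the exponential mechanism: for any measurable $S \subseteq [0,n]$, the ratio $\Pr_{\bm d \sim \mu_{A,\eps}}[\bm d \in S] / \Pr_{\bm d \sim \mu_{A',\eps}}[\bm d \in S]$ is bounded by $e^{2\eps}$, where one factor of $e^{\eps}$ comes from the pointwise bound $e^{-\eps \soscore(d; A)} \le e^{\eps} \cdot e^{-\eps \soscore(d; A')}$ and the other from the corresponding bound on the normalizing constants.

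To bound the sensitivity, suppose $A$ and $A'$ are neighboring, and let $i^{*} \in [n]$ be an index such that $A$ and $A'$ agree on all entries outside row and column $i^{*}$. Let $\gamma n = \soscore(d; A)$ and let $\tE$ be a level-$8$ pseudo-expectation witnessing this, satisfying $\cT(Y, z; A, \gamma) \cup \cR(Y) \cup \{\abs{d(Y) - d} \le 1/\poly(n)\}$. The idea is to modify $\tE$ into a pseudo-expectation $\tE'$ for $A'$ by ``zeroing out'' the $i^{*}$-th selection variable: formally, define $\tE'[f(Y, z)] \seteq \tE[f(Y, z')]$, where $z'_i = z_i$ for $i \ne i^{*}$ and $z'_{i^{*}} = 0$. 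This substitution is a valid polynomial operation, so $\tE'$ is still a degree-$8$ pseudo-expectation.

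It remains to verify that $\tE'$ satisfies the constraints $\cT(Y, z; A', \gamma + 1/n) \cup \cR(Y) \cup \{\abs{d(Y) - d} \le 1/\poly(n)\}$. The constraints $\cR(Y)$ and the one on $d(Y)$ only involve $Y$ and are unaffected. For the Boolean constraint, $(z'_i)^2 = z'_i$ holds coordinatewise. For the cardinality constraint, $\iprod{\one, z'} = \iprod{\one, z} - z_{i^{*}} \ge (1-\gamma) n - 1 = (1 - \gamma - 1/n) n$. For the agreement constraint, $(Y \odot z' z'^{\top})_{ij} = Y_{ij} z'_i z'_j$, which vanishes whenever $i = i^{*}$ or $j = i^{*}$, and equals $Y_{ij} z_i z_j = A_{ij} z_i z_j = A'_{ij} z_i z_j = A'_{ij} z'_i z'_j$ otherwise, using that $A$ and $A'$ agree off row/column $i^{*}$ together with the original constraint $Y \odot zz^{\top} = A \odot zz^{\top}$. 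Therefore $\soscore(d; A') \le (\gamma + 1/n) n = \soscore(d; A) + 1$, and by symmetry the sensitivity is at most $1$.

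The main obstacle is purely bookkeeping in the SoS framework: making sure that the substitution $z \mapsto z'$ preserves the pseudo-expectation structure and that the modified $\tE'$ continues to satisfy each constraint, most subtly the agreement constraint $Y \odot z'z'^{\top} = A' \odot z'z'^{\top}$. Given the structure of $\cT$, this goes through cleanly, and the rest of the argument is a verbatim application of the exponential mechanism privacy lemma.
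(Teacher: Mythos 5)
Your proof is correct and takes essentially the same route as the paper. The paper's own proof of this lemma is a sketch that defers the sensitivity bound to the general \cref{lem:sos_score_sensitivity}, which argues exactly as you do: pick the corrupted index $i^*$, use the linear substitution $z_{i^*}\mapsto 0$ (the paper phrases it via linear functions $\ell_i$ and then invokes \cite[Lemma 8.1]{chen2024private} to transfer satisfaction of the constraint system across the substitution), and conclude $\soscore(d;A')\le\soscore(d;A)+1$ by symmetry. Your inline verification of the three constraint families --- Booleanity, cardinality (losing $1/n$), and agreement off row/column $i^*$ --- is precisely the content of the paper's claim that each constraint of $\cQ_1(Y,z;A',\gamma+1/n)\cup\cQ_2(Y;d)$ is SoS-derivable after substitution. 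The one thing worth flagging is that "this substitution is a valid polynomial operation, so $\tE'$ is still a degree-$8$ pseudo-expectation" plus "$\tE'$ satisfies each constraint" is slightly more than a pointwise check: formally one must verify $\tE'[h\cdot\prod_{i\in S}f'_i]\ge 0$ for all SoS $h$ of appropriate degree, which is what the cited lemma handles; you correctly identify this as bookkeeping, and it does go through here because $\ell$ is linear and the new constraints are derived from the old at low degree.
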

\begin{proof}
    Since the node distance between neighboring graphs is at most $1$, the sensitivity of the following score function is bounded by $1$:
   \begin{equation*}
        \score(d;A) \seteq 
        \min_{0\le\gamma\le1} \gamma n \text{ s.t. } 
        \cT(Y,z; A,\gamma) \,\cup\, 
        \cR(Y) \,\cup\, 
        \Set{\Abs{d(Y)-d} \le 1/\poly(n)} 
        \text{ is feasible.} 
    \end{equation*}
    One can show that such sensitivity bound is inherited by its sum-of-squares relaxation $\soscore$ as defined in \cref{eq:sos_relaxation}.
    By a standard sensitivity-to-privacy argument (see e.g. \cite[Theorem 3.10]{dwork2014algorithmic}), the exponential mechanism is $2\e$-differentially private.
\end{proof}

To analyze the utility of the private algorithm, we use the robustness of the score function. 
Assume the input graph is uncorrupted for simplicity.
For a typical \ER random graph $\Anull\sim \bbG(n,\dnull/n)$, we have $\soscore(\dnull,\Anull)=0$.
By a standard volume argument (see e.g. \cite[Theorem 3.11]{dwork2014algorithmic}), the exponential mechanism with high probability outputs a scalar $d$ satisfying $\soscore(d; \Anull) \lesssim \log(n)/\eps$.
By the definition of our score function  in \cref{eq:sos_relaxation}, this implies that there exists a level-8 pseudo-distribution satisfying $\cT(Y,z; \Anull,\gamma) \,\cup\, \cR(Y)$ with $\gamma \lesssim \log(n)/ (\e n)$. 
The utility then follows from the above identifiability proof for robust estimation.

\subsection{Coarse estimation}
\label{sec:proof-sketch-coarse-estimation}

In this part, we describe a private and robust algorithm that can estimate the expected average degree $\dnull$ within a constant approximation ratio. 

\begin{theorem}[Coarse estimation algorithm, informal restatement of \cref{thm:coarse_estimation_inhomo}]
    \label{thm:sketch-coarse-estimation}
    For $\eta$ smaller than some constant, there is a polynomial-time $\e$-differentially node-private algorithm which, given an $\eta$-corrupted \ER random graph $\bbG(n,{\dnull}/{n})$, outputs an estimate $\hat{d}$ such that $\abs{\hat{d}-\dnull}\leq 0.5 \dnull$.
\end{theorem}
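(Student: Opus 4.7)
The plan is to instantiate the framework of \cref{sec:proof-sketch} by choosing regularity conditions $\cR(Y)$ appropriate for a constant-factor estimate and then carrying out three steps: (i) verify that a typical uncorrupted \ER graph satisfies $\cR$ together with the node-distance constraints, so that the score \cref{eq:sos_relaxation} at $d=\dnull$ vanishes on a small-$\gamma$ witness; (ii) give a degree-$8$ sum-of-squares proof that any $Y$ satisfying $\cT(Y,z;A,\gamma)\cup\cR(Y)$ has $d(Y)$ within a constant factor of $\dnull$ whenever $A$ is an $\eta$-corrupted sample with $\gamma+\eta$ below an absolute constant; and (iii) combine \cref{fact:sketch-privacy} with the standard volume argument for exponential mechanisms to conclude.

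First I would take $\cR(Y)$ to encode two SoS-friendly properties of typical \ER graphs. The first is a row-sum bound $\sum_j Y_{ij}\le T$ for every $i$, with $T$ a mild multiple of $\max\Set{\dnull,\log n}$, which is linear and holds with high probability for $\bbG(n,\dnull/n)$ after discarding a negligible fraction of exceptionally high-degree vertices in the very sparse regime. The second is a spectral bound $\normop{Y-\hat p\,\one\one^\top}\le C\sqrt{\dnull}$, where $\hat p$ is an auxiliary indeterminate playing the role of $d(Y)/n$; encoded in the standard SoS way through auxiliary PSD matrix variables, this is a polynomial system of constant degree that holds whp on a typical \ER graph (possibly after the same trimming).

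The SoS identifiability step is the main content. Since $Y\odot zz^\top=A\odot zz^\top$ and $\iprod{\one,z}\ge(1-\gamma)n$,
\[
    d(Y)-d(A)=\tfrac{1}{n}\,\one^\top\bigl((Y-A)\odot(\one\one^\top-zz^\top)\bigr)\one \,,
\]
so the discrepancy comes only from the at most $\gamma n$ rows and columns on which $z_i=0$. The row-sum constraint in $\cR$ bounds the $Y$-contribution by $O((\gamma+\eta)\dnull)$, and the concentration of the underlying clean graph $\Anull$ (inherited by $A$ outside the at most $\eta n$ corrupted rows) gives the matching bound for $A$. The spectral constraint handles any cross terms that are more naturally controlled in operator norm, via a Cauchy--Schwarz step that lifts directly to a degree-$8$ SoS inequality. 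Together with the non-private concentration $\abs{d(\Anull)-\dnull}=\tilde O(\sqrt{\dnull}/n)$, this yields $\abs{d(Y)-\dnull}\lesssim (\gamma+\eta)\dnull+\tilde O(\sqrt{\dnull})$. Plugging in $\gamma\lesssim\log(n)/(\eps n)$ delivered by the volume argument, and taking $\eta$ and $1/\dnull$ below suitable absolute constants, gives $\abs{\hat d-\dnull}\le 0.5\,\dnull$.

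The main obstacle I expect is making the spectral regularity condition simultaneously (a) hold for sparse $\bbG(n,\dnull/n)$ with $\dnull$ possibly only a large constant and (b) SoS-provable at degree $8$ with the right dependence on the corruption level. For $\dnull\ge C\log n$ the bound $\normop{\Anull-(\dnull/n)\one\one^\top}=O(\sqrt{\dnull})$ is classical and the auxiliary-witness encoding is routine; for smaller $\dnull$ the raw operator norm is too large and one must apply the constraint to a \emph{trimmed} matrix (with high-degree vertices removed, matching the row-sum cap in $\cR$), then check that the weaker inequality still closes the identifiability bound above. Coordinating this trimming between $\cR$ and the constraints in $\cT$ at low SoS degree, while keeping the $\eta n$ corrupted nodes absorbable into the $z_i=0$ set, is the most delicate part of the argument.
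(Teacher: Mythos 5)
There is a genuine gap, and it lies precisely in the place the paper does something clever. Your proposed regularity system $\cR(Y)$ is calibrated against the unknown quantity $\dnull$: the row-sum cap $T$ is ``a mild multiple of $\max\{\dnull,\log n\}$'' and the spectral bound is $\normop{Y-\hat p\,\one\one^\top}\le C\sqrt{\dnull}$. But the whole point of the coarse stage is that $\dnull$ is not yet known even to within a constant factor, so these constraints cannot be written down; the construction is circular. The paper avoids this by making the degree constraint \emph{self-referential} in $Y$: $\cR(Y)=\{(Y\one)_i\le 2\log(1/\eta)\,d(Y)\ \forall i\}$ (equation \cref{eq:regularity_coarse}), with $\sigma=2\log(1/\eta)$ a constant depending only on the known corruption rate. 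This forces a genuinely different identifiability argument --- you end up with an inequality of the form $n\,d(Y)\le n\,d(A^*)+2(\gamma+\gamma^*)\sigma n\,d(Y)$ which you then \emph{solve} for $d(Y)$, yielding $d(Y)\le d(A^*)/(1-2(\gamma+\gamma^*)\sigma)$ and symmetrically a lower bound (this is the content of \cref{lem:coarse-sos}). Your sketch instead treats the row-sum cap as an externally fixed number, which short-circuits this step and is why it appears deceptively linear.

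A second issue is that you import a spectral constraint into the coarse stage at all. In the paper, spectral constraints appear only in the \emph{fine} estimation system (\cref{eq:regularity_fine}), where they are calibrated against the already-obtained coarse estimate $\hat d$; the coarse stage deliberately uses degree constraints alone. The difficulty you yourself flag --- that the raw operator-norm bound fails for constant $\dnull$ and would require coordinating a trimming step inside the SoS system --- is exactly the kind of complication the paper sidesteps by this division of labor. The degree-pruning in the paper's \cref{lem:coarse-existence-proof} happens only in the \emph{feasibility} direction (constructing a witness $(A^*,z^*)$), not inside $\cR$, and is therefore a purely probabilistic statement rather than an SoS one; your proposal would need to make trimming an SoS-expressible operation, which is considerably harder and is not addressed.

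The overall architecture you describe --- feasibility, SoS identifiability, exponential mechanism via the score function, volume argument --- is the right scaffolding and matches the paper. But the core technical step, designing a regularity system that is both instantiable without knowing $\dnull$ and SoS-provable, is missing, and the self-referential degree constraint is the key idea you would need to discover to fill it.
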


We give a proof sketch of \cref{thm:sketch-coarse-estimation} at the end of this subsection.
The formal theorem and proofs are deferred to \cref{sec:coarse_estimation}.

\paragraph{Identifiability proof for robust estimation}  
We first give a polynomial system that can identify the expected average degree $\dnull$ up to constant factors, even when $\eta$-fraction of nodes are corrupted. 
Consider the following regularity condition on degrees:
\begin{equation}
    \label{eq:regularity_coarse}
    \cR(Y) \seteq \Set{ (Y\one)_i \le 2\log(1/\eta) \cdot d(Y) \,, \quad \forall i\in[n] } \,.
\end{equation}
The following lemma shows that \ER random graphs satisfy $\cT(Y,z; A,2\eta) \cup \cR(Y)$ with high probability.
\begin{lemma}[Feasibility]
    \label{lem:sketch-feasibility}
    Let $A^{\circ} \sim \bbG(n,{\dnull}/{n})$ and let $A$ be an $\eta$-corrupted version of $\Anull$.
    With high probability, there exists a graph $Y$ that satisfies the constraints in $\cT(Y,z; A,2\eta) \cup \cR(Y)$.
\end{lemma}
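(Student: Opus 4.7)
The plan is to exhibit an explicit witness $(Y, z)$ built from the clean graph $\Anull$ and verify all constraints in $\cT(Y, z; A, 2\eta) \cup \cR(Y)$ with high probability. Since $A$ is $\eta$-corrupted from $\Anull$, there is a set $S \subseteq [n]$ with $|S| \le \eta n$ such that $A_{ij} = \Anull_{ij}$ whenever $i, j \notin S$. Fix a large absolute constant $C$, let $\tau := C(\dnull + \log(1/\eta))$, and define the high-degree set $T := \{i \in [n] : (\Anull\one)_i \ge \tau\}$. Take $z$ to be the indicator vector of $[n] \setminus (S \cup T)$ and set $Y_{ij} := \Anull_{ij}$ if both $i, j \notin S \cup T$, and $Y_{ij} := 0$ otherwise. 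By construction $Y$ is symmetric and $\{0,1\}$-valued, and $z_i = z_j = 1$ forces $i, j \notin S$, so $Y_{ij} = \Anull_{ij} = A_{ij}$, giving $Y \odot zz^\top = A \odot zz^\top$ for free. The remaining constraints are $\iprod{\one, z} \ge (1-2\eta)n$---equivalent to $|T| \le \eta n$---and the regularity $(Y\one)_i \le 2\log(1/\eta)\, d(Y)$. Since $(Y\one)_i = 0$ for $i \in S \cup T$ and $(Y\one)_i \le (\Anull\one)_i < \tau$ otherwise, regularity reduces to $\tau \le 2\log(1/\eta)\, d(Y)$, and by the choice of $\tau$, further to $d(Y) = \Omega(\dnull)$.

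\paragraph{Bounding $|T|$.}
A Chernoff bound on the single-node degree $(\Anull\one)_i$, distributed as $\mathrm{Binomial}(n-1, \dnull/n)$, shows $\Pr[(\Anull\one)_i \ge \tau] \le \eta^3$ for $C$ large enough; this uses the multiplicative Chernoff bound when $\dnull \gtrsim \log(1/\eta)$ and the upper tail $(e\dnull/\tau)^{\tau}$ when $\dnull \lesssim \log(1/\eta)$. Hence $\E[|T|] \le \eta^3 n$. Since flipping a single edge of $\Anull$ changes $|T|$ by at most $2$, the edge-exposure martingale concentration (McDiarmid's inequality) yields $|T| \le \eta n$ with probability $1 - n^{-\Omega(1)}$.

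\paragraph{Bounding $d(Y)$ (the main obstacle).}
The hard part is that removing the $O(\eta n)$ rows and columns indexed by $S \cup T$ could in principle destroy a substantial fraction of the edges of $\Anull$. We write
\[
    d(\Anull) - d(Y) \;\le\; \frac{2}{n}\sum_{i \in S \cup T}(\Anull\one)_i.
\]
The sum over $S \setminus T$ is at most $|S|\tau \le \eta n \tau$, since every node outside $T$ has degree below $\tau$. For the sum over $T$, a layer-cake decomposition by dyadic degree levels $\{i : 2^{k-1}\tau \le (\Anull\one)_i < 2^{k}\tau\}$, combined with the geometrically decaying Chernoff tails, yields $\E\sum_{i \in T}(\Anull\one)_i = O(\eta^3 n \tau)$; a similar McDiarmid argument concentrates it around its mean. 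Altogether $d(\Anull) - d(Y) = O(\eta \tau) = O(\eta(\dnull + \log(1/\eta)))$, which together with the standard concentration $d(\Anull) = \dnull(1 \pm o(1))$ and the hypothesis that $\eta$ is a small enough constant (so that $\eta \log(1/\eta) \ll \dnull$ under the coarse-estimation theorem's assumption that $\dnull$ exceeds a large enough constant) gives $d(Y) \ge \dnull/2$, closing the argument.
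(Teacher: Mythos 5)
Your witness construction is essentially the paper's: you zero out the high-degree rows (plus the corrupted ones) and take $Y$ to be the induced subgraph of $\Anull$, while the paper's formal version (\cref{lem:coarse-existence-proof}) likewise sets $A^*$ to a degree-pruned copy of $\Anull$ and $z^*$ to the unpruned-and-uncorrupted indicator. The reduction of the regularity constraint to ``few high-degree nodes, few edges removed, hence $d(Y) = \Omega(\dnull)$'' also matches the paper's outline.

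The concentration steps, however, do not go through. After bounding $\E|T|\le \eta^3 n$ you appeal to McDiarmid over edge exposures to conclude $|T|\le\eta n$ with probability $1-n^{-\Omega(1)}$. With $\binom{n}{2}$ edge coordinates, each of bounded difference $2$, McDiarmid gives $\Pr[|T|\ge\E|T|+t]\le\exp\bigl(-\Theta(t^2/n^2)\bigr)$; at the relevant deviation $t\approx\eta n$ this is $\exp\bigl(-\Theta(\eta^2)\bigr)$, a constant rather than $n^{-\Omega(1)}$, for every $\eta\le 1$. The same failure hits the ``similar McDiarmid argument'' for $\sum_{i\in T}(\Anull\one)_i$, where an edge flip can push a node across the threshold and change the sum by $\Theta(\tau)$, making the bounded-difference tail even weaker. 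You need a qualitatively sharper tail for the number of high-degree nodes: either a direct combinatorial union bound over candidate sets of $\gamma n$ offending nodes together with the $\ge\gamma n\tau/2$ edges that would be needed to realize them, which is exactly what the paper's \cref{lem:degree_distribution} and \cref{lem:degree_pruning} do (yielding a tail of $\exp(-\Omega(\gamma n\tau))$), or a second-moment/Chebyshev argument exploiting that the degree indicators are only pairwise weakly dependent. Note also that for tiny $\eta$ (so $\E|T|\ll1$) Markov already gives $|T|=0$ whp, but for constant $\eta$ one of these stronger arguments is essential. A secondary point: your threshold $\tau = C(\dnull+\log(1/\eta))$ makes the closing inequality $\tau\le 2\log(1/\eta)\,d(Y)$ depend on both $\eta\le e^{-\Omega(C)}$ and $\dnull\ge\Omega(C)$; the paper's choice of $\approx\log(1/\eta)\,\dnull$ keeps the two sides on the same scale automatically, which is why its hypotheses are simply $\eta\log(1/\eta)$ small and $\dnull$ at least a constant.
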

\begin{proof}[Proof sketch]
    For $\dnull \gg \log(n)$, the maximum degree of $\Anull$ is of order $O(\dnull)$. 
    Therefore, the uncorrupted graph $A^{\circ}$ satisfies the constraints.
    For $\dnull \ll \log n$, using concentration properties of random graphs, we can show that the number of high degree nodes is bounded by $\eta n$.
    A feasible graph can then be obtained from the uncorrupted graph $A^{\circ}$ by trimming these highest degree nodes.
\end{proof}

Next, we show that these polynomial constraints give an identifiability proof for the expected average degree $\dnull$.
\begin{lemma}[Identifiability]
    \label{lem:sketch-coarse-identifiability}
    Let $A^{\circ} \sim \bbG(n,{\dnull}/{n})$ and let $A$ be an $\eta$-corrupted version of $\Anull$.
    For $\eta$ smaller than some constant and $\gamma\leq O(\eta)$, 
    with high probability there is a degree-$8$ sum-of-squares proof that,
    if $Y$ satisfies $\cT(Y,z; A,\gamma) \cup \cR(Y)$,
    then $\abs{d(Y)-\dnull}\leq 0.001 \dnull$.
\end{lemma}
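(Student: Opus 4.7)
}
The plan is to certify, inside degree-$8$ SoS, that $d(Y)$ is close to $d(A^\circ)$, and then invoke standard concentration to replace $d(A^\circ)$ by $\dnull$. The key object is a ``witness'' vector: since $A$ is an $\eta$-corrupted version of $A^\circ$, there is a fixed indicator $z^\circ\in\{0,1\}^n$ with $\iprod{\one,z^\circ}\ge(1-\eta)n$ such that $A\odot z^\circ(z^\circ)^\top = A^\circ\odot z^\circ(z^\circ)^\top$. This $z^\circ$ is not an indeterminate; it only appears as a constant coefficient in the SoS proof. Define $w_i\seteq z_i z^\circ_i$, so that from the constraints $\iprod{\one, z}\ge(1-\gamma)n$ we obtain $\iprod{\one,\one-w}\le(\gamma+\eta)n \leq O(\eta)n$ (since $\gamma=O(\eta)$), and combining the two agreement identities we get $Y\odot ww^\top = A^\circ\odot ww^\top$.

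The core algebraic identity is
\[
    n\bigl(d(Y)-d(A^\circ)\bigr) = \sum_{ij}(Y_{ij}-A^\circ_{ij})(1-w_iw_j),
\]
which follows because $Y$ and $A^\circ$ agree on the support of $ww^\top$. I would bound the two sides of this identity separately in SoS. For the upper bound, since $A^\circ_{ij}\ge 0$, we have $n(d(Y)-d(A^\circ))\le \sum_{ij} Y_{ij}(1-w_iw_j)$; for the lower bound, since $Y_{ij}\ge 0$, we have $n(d(Y)-d(A^\circ))\ge -\sum_{ij} A^\circ_{ij}(1-w_iw_j)$. The basic SoS fact driving both estimates is that $1-w_iw_j \le (1-w_i)+(1-w_j)$, which has a degree-$2$ SoS proof: using $z_i^2 = z_i$, one checks $(1-w_i)+(1-w_j)-(1-w_iw_j) = (1-w_i)(1-w_j)$, and each factor $(1-w_i) = (1-z_iz^\circ_i)$ is a square since $(1-z_iz^\circ_i)^2 = 1-z_iz^\circ_i$ (here using $(z^\circ_i)^2=z^\circ_i$ as a numerical identity).

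For the $Y$-side, apply the above inequality and the regularity constraint in $\cR(Y)$:
\[
    \sum_{ij} Y_{ij}(1-w_iw_j) \;\le\; 2\sum_i(1-w_i)(Y\one)_i \;\le\; 4\log(1/\eta)\,d(Y)\,\iprod{\one,\one-w} \;\le\; O(\eta\log(1/\eta))\, n\, d(Y).
\]
For the $A^\circ$-side we cannot use regularity of $Y$, so instead I would apply Cauchy--Schwarz (a degree-$4$ SoS inequality):
\[
    \Bigl(\sum_i (1-w_i)(A^\circ\one)_i\Bigr)^2 \le \Bigl(\sum_i (1-w_i)\Bigr)\cdot \Bigl(\sum_i (A^\circ\one)_i^2\Bigr) \le O(\eta n)\cdot\|A^\circ\one\|_2^2,
\]
again using $(1-w_i)^2=1-w_i$. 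With high probability, concentration of the second degree moment for $\bbG(n,\dnull/n)$ gives $\|A^\circ\one\|_2^2 = O(n\dnull^2)$ (for $\dnull\gtrsim 1$, which holds since $\dnull\ge C_3$); crucially this is a deterministic statement about $A^\circ$ and can be hard-wired as a numerical coefficient into the SoS proof. Hence $\sum_i(1-w_i)(A^\circ\one)_i^2 \le O(\eta)\,n^2\dnull^2$, and after taking a (degree-$8$) square root certificate one obtains $\sum_{ij} A^\circ_{ij}(1-w_iw_j) \le O(\sqrt{\eta})\,n\dnull$. Finally, concentration of the total edge count gives $|d(A^\circ)-\dnull|=o(\dnull)$, and combining everything yields, as polynomial inequalities in SoS,
\[
    d(Y) \;\le\; \dnull\bigl(1+O(\eta\log(1/\eta))\bigr) + o(\dnull),\qquad d(Y)\;\ge\;\dnull\bigl(1-O(\sqrt{\eta})\bigr) - o(\dnull),
\]
which for a sufficiently small constant $\eta$ certifies $|d(Y)-\dnull|\le 0.001\,\dnull$.

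The step I expect to be the main obstacle is packaging the $A^\circ$-side Cauchy--Schwarz estimate as a genuine \emph{degree-$8$} SoS certificate: one has to upper bound $\sum(1-w_i)(A^\circ\one)_i$ (a linear quantity in the indeterminates) by $\sqrt{\eta}\,n\dnull$, which typically requires either squaring the whole target inequality $(d(Y)-\dnull)^2\le(0.001\dnull)^2$ so that Cauchy--Schwarz can be applied without an explicit square root, or invoking the standard SoS ``square root trick'' with a scalar auxiliary parameter---both inflate the proof degree but remain within the degree-$8$ budget. Everything else (feasibility of $w$, the decomposition identity, the degree-$2$ inequality $1-w_iw_j\le(1-w_i)+(1-w_j)$, and the use of $\cR(Y)$) is routine degree-$\le 4$ SoS arithmetic.
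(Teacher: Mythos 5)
Your proof is correct and the overall structure (agree on $w = z\odot z^\circ$, decompose $n(d(Y)-d(A^\circ)) = \sum_{ij}(Y_{ij}-A^\circ_{ij})(1-w_iw_j)$, split into a $Y$-side and an $A^\circ$-side) matches the paper. The genuine difference is in how the $A^\circ$-side is controlled. The paper (cf.\ \cref{lem:coarse-existence-proof} and \cref{lem:coarse-sos}) replaces $A^\circ$ by a \emph{degree-truncated} witness $A^*$ built via \cref{lem:degree_pruning}, so that $A^*$ satisfies the same regularity constraint $(A^*\one)_i\le\sigma\,d(A^*)$ as $Y$; the argument then runs symmetrically and yields error $O(\eta\log(1/\eta))$ on both sides. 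You instead keep the raw $A^\circ$ and replace the missing degree bound by a Cauchy--Schwarz estimate against the second moment $\|A^\circ\one\|_2^2 = O(n\dnull^2)$, a fixed number you can hard-wire into the SoS certificate.

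What your route buys: you avoid the degree-pruning machinery (the sparse regime $\dnull\ll\log n$ needs no special treatment, since the second moment is dominated by the $\Theta(n)$ nodes of degree $\Theta(\dnull)$ rather than the few $\log n$-degree outliers). What the paper's route buys: the symmetric degree bound gives the tighter rate $O(\eta\log(1/\eta))$ rather than your $O(\sqrt\eta)$ on the $A^\circ$-side, hence a weaker constraint on the admissible constant $\eta$; it also carries over to inhomogeneous graphs with general $R$ (where the constant $\sigma$ in $\cR$ must scale with $R$), whereas the Cauchy--Schwarz bound would need to be re-derived there. The degree-$8$ concern you flag is real but manageable: your two estimates $\bigl(\sum_i(1-w_i)(A^\circ\one)_i\bigr)^2 \le (\sum_i(1-w_i))\,\|A^\circ\one\|_2^2$ and the subsequent square-root extraction stay within degree $8$ by combining the degree-$4$ SoS Cauchy--Schwarz with exactly the kind of ``$x^2\le C^2 \Rightarrow x\le C$'' step the paper packages as \cref{lem:preliminary-SOS-square-to-abs}. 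Two small points worth double-checking when you formalize: (i) you need $\dnull\ge\Omega(1)$ so that $\E\|A^\circ\one\|_2^2 \approx n(\dnull+\dnull^2)\le 2n\dnull^2$, which the theorem hypotheses provide; and (ii) the booleanity identities $(1-w_i)^2=1-w_i$ (using $(z^\circ_i)^2=z^\circ_i$ as numerical facts) must be invoked explicitly to make both the Cauchy--Schwarz and the $(1-w_i)(1-w_j)\ge 0$ steps bona-fide low-degree SoS derivations.
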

\begin{proof}[Proof sketch]
    We first assume that $\dnull \gg \log(n)$, for which the proof is simpler. 
    By the degree-bound constraint $\cR(Y)$, we have $n\abs{d(Y)-d(A^\circ)} \leq 2\log(1/\eta) \cdot (d(Y)+\dnull)\cdot \text{dist}(Y,A^\circ)$.
    Using the constraints $Y\odot zz^\top= A\odot zz^\top$ and $\iprod{1,z}\ge(1-\gamma)n$, we have $\dist(Y,A) \le \gamma n$.
    Since $\dist(A,\Anull) \le \eta n$, by triangle inequality, we have $\dist(Y, \Anull) \leq (\gamma+\eta)n$.
    Therefore, we have $\abs{d(Y)-d(A^\circ)}\leq 0.0001 \dnull$ when $\gamma,\eta$ are at most some small constants. 
    Finally, by random graph concentration, we have $\abs{\dnull-d(A^\circ)}\leq o(\dnull)$ with high probability.
    Therefore, we have $\abs{d(Y)-d(A^\circ)}\leq 0.001 \dnull$.
    
    To deal with the sparse regime where $\dnull \ll \log n$, we need to truncate the nodes of $A^\circ$ with degree $\Omega(\log(1/\eta) \dnull)$. 
    Our key observation is that, the average degree of the graph before and after truncation only differ by a constant factor. 
    Therefore, we can still get $\abs{d(Y)-d(A^\circ)}\leq 0.001 \dnull$.

    Furthermore, it can be shown that this proof is a degree-8 sum-of-squares proof.
\end{proof}

\paragraph{Robust algorithm via sum-of-squares}
Consider the algorithm that finds a level-$8$ pseudo-expectation satisfying $\cT(Y,z; A,2\eta) \,\cup\, \cR(Y)$ ---with $\cR(Y)$ given in \cref{eq:regularity_coarse}--- and outputs $\tE [d(Y)]$.
By \cref{lem:sketch-feasibility}, such a pseudo-expectation $\tE$ exists with high probability. 
It follows from the \sos identifiability proof in \cref{lem:sketch-coarse-identifiability} that $\abs{\tE [d(Y)]-\dnull}\leq 0.001\dnull$.
Moreover, the algorithm can be implemented by semidefinite programming and run in polynomial time.

\paragraph{Private and robust algorithm via sum-of-squares exponential mechanism} 
We present our private and robust algorithm in \cref{alg:coarse_sos_exp_mec} and give a proof sketch of \cref{thm:sketch-coarse-estimation}.

\begin{algorithmbox}[Private coarse estimation for \ER random graphs]
    \label{alg:coarse_sos_exp_mec}
    \mbox{}\\
    \textbf{Input:} $\eta$-corrupted \ER random graph $A$.
    
    \noindent
    \textbf{Privacy parameter:} $\eps$.
    
    \noindent
    \textbf{Output:} 
        A sample from the distribution $\mu_{A,\eps}$ with support $[0,n]$ and density 
        \begin{equation}
            \label{eq:coarse_sos_exp_mec}
            \mathrm{d}\mu_{A,\eps}(d) \propto \exp\paren{-\eps\cdot \soscore(d;A)} \,,
        \end{equation}
        where 
        \begin{equation}
            \label{eq:coarse_sos_exp_mec_score}
         \begin{split}
    \soscore(d; A) \seteq 
    \min_{0\le\gamma\le1} \gamma n \text{ s.t. }
        & \exists \text{ level-}8 \text{ pseudo-expectation } \tE \text{ satisfying } \\[-8pt]
        & \cT(Y,z; A,\gamma) \,\cup\, 
        \cR(Y) \,\cup\, 
        \Set{\Abs{d(Y)-d} \le 1/\poly(n)}  \,,
\end{split}
        \end{equation}
        with $\cR(Y)$ given in \cref{eq:regularity_coarse}.
\end{algorithmbox}

\begin{proof}[Proof sketch of \cref{thm:sketch-coarse-estimation}]
    \emph{Privacy.}
    By \cref{fact:sketch-privacy}, \cref{alg:coarse_sos_exp_mec} is $2\e$-differentially private. 
    
    \emph{Utility.} 
    For simplicity, we consider the case when there is no corruption (i.e. $\eta=0$).
    The analysis for the case when $\eta>0$ is similar.
    Let $\Anull \sim \bbG(n, \dnull/n)$.
    Then with high probability $\soscore(\dnull;\Anull)=0$.
    By a standard volume argument, \cref{alg:coarse_sos_exp_mec} outputs a scalar $d$ that satisfies $\soscore(d; \Anull) \lesssim \log(n) / \eps$ with high probability.
    By the definition of $\soscore$ in \cref{eq:coarse_sos_exp_mec_score}, this implies that there exists a level-$8$ pseudo-distribution satisfying $\cT(Y,z; A,\gamma) \,\cup\, \cR(Y) \,\cup\, \set{\abs{d(Y)-d} \le 1/\poly(n)}$ with $\gamma \lesssim \log(n)/(\e n)$. 
    When $\log(n) / (\e n)$ is at most a small constant, it follows from our \sos identifiability proof in \cref{lem:sketch-coarse-identifiability} that, \cref{alg:coarse_sos_exp_mec} outputs a constant-factor approximation of $\dnull$ with high probability.
\end{proof}

\subsection{Fine estimation}
\label{sec:proof-sketch-fine-estimation}

From \cref{sec:proof-sketch-coarse-estimation}, we know how to obtain a constant-factor approximation of $\dnull$ privately and robustly.
In this section, we show how to improve the accuracy to the optimum.

\begin{theorem}[Fine estimation algorithm, informal restatement of \cref{thm:fine_estimation_er}]
    \label{thm:sketch-fine-estimation}
    Let $0.5\dnull \le \hat{d} \le 2\dnull$.
    For $\eta$ smaller than some constant, 
    there is a polynomial-time $\eps$-differentially node-private algorithm which,
    given an $\eta$-corrupted \ER random graph $\bbG(n,\dnull/n)$ and $\hat{d}$,
    outputs an estimate $\tilde d$ such that 
    \begin{equation*}
        \Abs{\frac{\tilde{d}}{\dnull}-1}\leq \tilde{O}\Paren{\frac{1}{\sqrt{n\dnull}}+\frac{1}{\e n\sqrt{\dnull}}+\frac{\eta}{\sqrt{\dnull}}}\,.
    \end{equation*}
\end{theorem}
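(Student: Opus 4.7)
The plan is to instantiate the sum-of-squares exponential mechanism of \cref{alg:coarse_sos_exp_mec} with a \emph{richer} regularity system $\cR(Y)$ tailored to fine-grained estimation. Whereas the coarse algorithm only uses per-node degree bounds, the fine algorithm will also include a spectral-norm bound on the $\hat{d}$-centered adjacency matrix, e.g.
\[
    \cR(Y) \;\supseteq\; \Set{ (Y\one)_i \le C\hat{d} \;\forall i\in[n],\;\; \Norm{ Y - \tfrac{\hat{d}}{n}\one\one^{\top} }_{\mathrm{op}} \le C \sqrt{\hat{d}}\,\log n } \,,
\]
where the spectral inequality is encoded as a semidefinite constraint on the pseudo-expectation $\tE$. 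The coarse estimate $\hat{d}$, guaranteed to lie in $[0.5\dnull, 2\dnull]$, plays the dual role of providing a sharp centering for the spectral constraint and calibrating the degree thresholds. With this $\cR(Y)$ in place, the algorithm and privacy analysis carry over verbatim from \cref{alg:coarse_sos_exp_mec} and \cref{fact:sketch-privacy}; all the new work lives in the feasibility and identifiability lemmas that feed the volume argument.

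First I would establish \emph{feasibility}: with high probability the uncorrupted graph $\Anull\sim\bbG(n,\dnull/n)$, possibly after trimming $O(\eta n)$ highest-degree nodes in the sparse regime, satisfies $\cR(Y)$ together with the distance constraint $\cT(Y,z;A,O(\eta))$ of \cref{eq:techniques-graph-constraints}. The spectral bound is a standard concentration statement for \ER random graphs (Feige--Ofek-type estimates, with trimming for $\dnull\ll\log n$); the feasible witness $(Y,z)$ is obtained by taking $Y$ equal to the trimmed $\Anull$ on the good set and $z$ equal to the indicator of uncorrupted and untrimmed nodes.

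Next comes the \emph{SOS identifiability} bound: for any $(Y,z)$ feasible in $\cT(Y,z;A,\gamma)\cup\cR(Y)$,
\[
    \Abs{ d(Y) - \dnull } \;\le\; \tilde{O}\Paren{ \sqrt{\tfrac{\dnull}{n}} + \sqrt{\dnull}\,\gamma + \sqrt{\dnull}\,\eta } \,.
\]
The pivotal identity is $n\paren{d(Y) - d(A)} = \iprod{\one\one^{\top} - zz^{\top},\,Y-A}$, which uses $Y\odot zz^{\top} = A\odot zz^{\top}$. Writing $w = \one - z$ and expanding $\one\one^{\top} - zz^{\top} = w\one^{\top} + \one w^{\top} - ww^{\top}$, one decomposes the right-hand side into a linear piece $2w^{\top}(Y-A)\one$ controlled by a pointwise degree-concentration condition added to $\cR(Y)$ (or by the degree bound in the sparse regime), and a quadratic piece $w^{\top}(Y-A)w$. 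After adding and subtracting $\tfrac{\hat{d}}{n}\one\one^{\top}$, the quadratic piece is bounded in SOS by $\|w\|_2^{2} \cdot \paren{ \Norm{Y - \tfrac{\hat{d}}{n}\one\one^{\top}}_{\mathrm{op}} + \Norm{A - \tfrac{\hat{d}}{n}\one\one^{\top}}_{\mathrm{op}} } \le \gamma n \cdot \tilde{O}(\sqrt{\dnull})$ via the spectral regularity (together with the triangle inequality $\|A - \tfrac{\hat{d}}{n}\one\one^{\top}\|_{\mathrm{op}} \le \|\Anull - \tfrac{\hat{d}}{n}\one\one^{\top}\|_{\mathrm{op}} + \|A - \Anull\|_{\mathrm{op}}$, whose last term is $\tilde{O}(\sqrt{\dnull})$ for $\eta$-corruptions of graphs of bounded maximum degree). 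The adversary-versus-population residual $\iprod{\one\one^{\top}, A - \Anull}$ is handled analogously, and $|d(\Anull) - \dnull| \le \tilde{O}(\sqrt{\dnull/n})$ is standard empirical-edge-density concentration. Because every manipulation is a polynomial inequality of degree at most $8$ in $Y$ and $z$, the derivation is captured by a degree-$8$ sum-of-squares proof in the sense of \cref{theorem:SOS_algorithm}, and transfers to any level-$8$ pseudo-expectation satisfying the constraints.

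Finally, by \cref{fact:sketch-privacy} the exponential mechanism built from the resulting $\soscore$ is $2\eps$-differentially private. Feasibility gives $\soscore(\dnull;\Anull) = 0$ with high probability, so the standard volume argument \cite[Theorem~3.11]{dwork2014algorithmic} produces $\tilde{d}$ with $\soscore(\tilde{d};\Anull) \lesssim \log n/\eps$, equivalently a level-$8$ pseudo-expectation of $\cT\cup\cR$ at node-distance $\gamma n \lesssim \log n/\eps$. Plugging $\gamma \lesssim \log n/(\eps n)$ into the identifiability bound then yields the advertised relative error $\tilde{O}\paren{ 1/\sqrt{n\dnull} + 1/(\eps n\sqrt{\dnull}) + \eta/\sqrt{\dnull} }$. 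The hardest step will be the SOS identifiability proof itself, especially the sparse regime $\dnull \lesssim \log n$, where untypically high-degree vertices must be absorbed into $z$ or into a pre-trim so that the spectral concentration kicks in without inflating the degree of the SOS derivation; expressing the operator-norm regularity and the matrix-H\"older step $w^{\top} M w \le \|w\|_2^{2}\|M\|_{\mathrm{op}}$ as low-degree SOS derivations is the other place where care is needed.
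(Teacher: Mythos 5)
Your high-level plan, namely a richer $\cR(Y)$ with degree concentration and a spectral-norm bound, verified by feasibility and an SOS identifiability lemma, then fed into the exponential mechanism, matches the paper's proof skeleton. But two concrete choices in the sketch would make it break. The first is the $\hat{d}$-centering of the spectral constraint. Writing $\Anull - \tfrac{\hat{d}}{n}\one\one^\top = \bigl(\Anull - \tfrac{d(\Anull)}{n}\one\one^\top\bigr) + \tfrac{d(\Anull)-\hat{d}}{n}\one\one^\top$ and using that $\Normop{\tfrac{1}{n}\one\one^\top}=1$, the reverse triangle inequality gives $\Normop{\Anull - \tfrac{\hat{d}}{n}\one\one^\top} \ge \abs{d(\Anull) - \hat{d}} - \tilde{O}(\sqrt{\dnull})$, and $\abs{d(\Anull) - \hat{d}}$ can be $\Theta(\dnull)$ because $\hat{d}$ is only a constant-factor approximation. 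For $\dnull \gg \log^2 n$ this exceeds $C\sqrt{\hat{d}}\log n$, so the planted witness violates your $\cR(Y)$, feasibility fails, and the score of $\dnull$ is no longer zero. The paper's \cref{eq:regularity_fine} avoids this precisely by centering the spectral constraint at $\tfrac{d(Y)}{n}\one\one^\top$ (a linear function of the indeterminate $Y$), so the dominant eigenvalue is cancelled at the sample value and the residual is $\tilde{O}(\sqrt{\dnull})$; only the \emph{magnitude} of the bound is expressed in terms of $\hat{d}$.

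The second gap is more serious: your identifiability step targets $d(Y) - d(A)$ and requires $\Normop{A - \tfrac{\hat{d}}{n}\one\one^\top}$, which you try to control via $\Normop{A-\Anull} \le \tilde{O}(\sqrt{\dnull})$. That last bound is false in the worst case. $A - \Anull$ is supported on $\eta n$ adversarial rows and columns, and an adversary that fills those rows densely makes $\Normop{A - \Anull} \ge \Omega(n\sqrt{\eta})$. Crucially, no regularity is ever imposed on the observed matrix $A$ — only on the program variable $Y$ and (for feasibility) on $\Anull$ — so there is no SOS route to a useful spectral bound on $A$. The paper sidesteps this by never comparing $d(Y)$ to $d(A)$: the identity
\[
\frac{\iprod{\one,w}^2}{n}\,\bigl(d(Y_1)-d(Y_2)\bigr)
= \Iprod{Y_1-\tfrac{d(Y_1)}{n}\one\one^\top,\;\one\one^\top - ww^\top}
+ \Iprod{\tfrac{d(Y_2)}{n}\one\one^\top-Y_2,\;\one\one^\top - ww^\top}
\]
is applied with $Y_1 = Y$ and $Y_2 = \Anull$, both of which satisfy $\cR$, and $w = z\odot z^\circ$ the shared-coordinate indicator. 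The matrix $A$ enters only through the consistency constraint $Y\odot zz^\top = A\odot zz^\top$, which is used to certify that $Y$ and $\Anull$ agree on a $(1-\gamma-\eta)n$-node subset; its spectral norm is never needed. If you re-center the spectral regularity at $d(Y)$ and reformulate the identifiability target as $\abs{d(Y) - d(\Anull)}$, using only spectral and degree bounds for $Y$ and $\Anull$, the rest of your outline — including the volume argument and the final $\tilde{O}\bigl(1/\sqrt{n\dnull} + 1/(\eps n\sqrt{\dnull}) + \eta/\sqrt{\dnull}\bigr)$ tally — goes through as in the paper.
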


We give a proof sketch of \cref{thm:sketch-fine-estimation} at the end of this section. The formal theorem and proofs are deferred to \cref{sec:fine_estimation_ER}.

\paragraph{Identifiability proof for robust estimation} 
We first give a polynomial system which can identify the expected average degree $\dnull$ with optimal error rate, when provided with a coarse estimate $\hat{d}$.
Consider the following regularity conditions on degrees and eigenvalues:
\begin{equation}
    \label{eq:regularity_fine}
    \cR(Y) \seteq
    \left. 
        \begin{cases}
            \Abs{(Y\one)_i - d(Y)} \le \sqrt{\hat{d}} \log n \,, & \forall i\in[n] \\
            \Normop{Y - \frac{d(Y)}{n} \one \one^{\top}} \leq \sqrt{\hat{d} \log n}
        \end{cases}
    \right\} \,.
\end{equation}

\begin{lemma}[Feasibility]
\label{lem:sketch-fine-feasibility}
    Let $A^{\circ} \sim \bbG(n,{\dnull}/{n})$ and let $A$ be an $\eta$-corrupted version of $\Anull$.
    Suppose $\dnull/2\leq\hat{d}\leq 2\dnull$.
    Then with high probability, there exists a graph $Y$ that satisfies the constraints in $\cT(Y,z; A,\eta) \cup \cR(Y)$.
\end{lemma}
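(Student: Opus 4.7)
The natural candidate for the feasible witness is simply $Y=A^\circ$, the uncorrupted graph itself, with $z$ the indicator of the nodes the adversary never touches. By the definition of $\eta$-corruption we may fix a set $T\subseteq[n]$ of size at most $\eta n$ such that $A^\circ$ and $A$ agree on every entry whose row and column both lie outside $T$; set $S=[n]\setminus T$ and $z=\one_S$. With these choices every constraint in $\cT(A^\circ,z;A,\eta)$ holds by construction: $z\odot z=z$ with $\iprod{\one,z}\ge(1-\eta)n$; $A^\circ$ is symmetric with $0\le A^\circ\le\one\one^\top$; and $A^\circ\odot zz^\top=A\odot zz^\top$ by the choice of $T$.

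The content of the lemma therefore reduces to showing that $A^\circ\sim\bbG(n,\dnull/n)$ satisfies $\cR(A^\circ)$ with high probability, which I would derive from two standard concentration facts for sparse random graphs (the ones collected in \cref{sec:concentration-inequalities}). For the degree bound, each row sum $(A^\circ\one)_i$ is a sum of $n-1$ independent $\mathrm{Bernoulli}(\dnull/n)$ variables, so Bernstein's inequality with a union bound over $i$ yields $\max_i\abs{(A^\circ\one)_i-\dnull}\lesssim\sqrt{\dnull\log n}+\log n$ with high probability, while $\abs{d(A^\circ)-\dnull}$ is of even lower order. Using the hypothesis $\hat d\ge\dnull/2$ and $\dnull=\Omega(1)$, the triangle inequality gives $\abs{(A^\circ\one)_i-d(A^\circ)}\le\sqrt{\hat d}\log n$ with room to spare. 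For the spectral bound, matrix concentration for sparse \ER graphs gives $\normop{A^\circ-(\dnull/n)\one\one^\top}\lesssim\sqrt{\dnull\log n}$ w.h.p., and the correction $\abs{d(A^\circ)-\dnull}$ between $(\dnull/n)\one\one^\top$ and $(d(A^\circ)/n)\one\one^\top$ is negligible, so $\normop{A^\circ-(d(A^\circ)/n)\one\one^\top}\le\sqrt{\hat d\log n}$ as required.

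The main conceptual subtlety, and the only place where the $\log n$ factors in \cref{eq:regularity_fine} are actually used, is the spectral bound in the truly sparse regime $\dnull\lesssim\log n$. There the operator norm of $A^\circ-\E A^\circ$ is dominated by the square root of the maximum degree and can be as large as $\sqrt{\log n/\log\log n}$. Had the target read $\sqrt{\hat d}$ rather than $\sqrt{\hat d\log n}$, one would be forced to truncate, moving high-degree vertices into $T$ and zeroing the corresponding rows and columns of $Y$, which would in turn demand a separate budget argument to keep $\abs{T}\le\eta n$. The extra $\sqrt{\log n}$ factor built into the regularity condition is precisely what absorbs this sparse-regime inflation, so no truncation is needed and $Y=A^\circ$ works verbatim in all regimes covered by the lemma.
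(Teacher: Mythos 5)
Your proof is correct and takes essentially the same route as the paper: you take $Y=A^\circ$ and $z$ to be the indicator of the uncorrupted nodes, observe that $\cT$ then holds by construction of the corruption set, and reduce the lemma to the two standard concentration facts (Chernoff/Bernstein for max degree deviation, Bandeira--van Handel for the centered spectral norm). Your extra paragraph explaining that the $\sqrt{\log n}$ slack in the spectral constraint of $\cR$ is exactly what lets $Y=A^\circ$ survive the sparse regime $\dnull\lesssim\log n$ without any degree truncation is a nice piece of intuition that the paper's one-line proof omits, and your Bernstein-style bound $\sqrt{\dnull\log n}+\log n$ is marginally sharper than the paper's $O(\sqrt{\dnull}\log n)$, but neither changes the argument.
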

\begin{proof}
    By Chernoff bound, with high probability, the degree of each node in $\Anull$ deviates from $\dnull$ by at most $O\paren{ \sqrt{\dnull} \log n }$.
    By the concentration of the spectral norm of random matrices~\cite{bandeira2016sharp}, with high probability, we have $\normop{\Anull - \frac{d(\Anull)}{n} \one\one^\top} \lesssim {\sqrt{\dnull \log n}}$.
    Hence, $\cT(Y,z; A,\eta) \cup \cR(Y)$ is satisfied by $Y=\Anull$ and $z=z^{\circ}$ where $z^{\circ}$ is the indicator vector for uncorrupted nodes.
\end{proof}

Next we give a sum-of-squares identifiability proof for expected average degree estimation with optimal accuracy.
\begin{lemma}[Identifiability]
    \label{lem:sketch-fine-identifiability}
    Let $A^{\circ} \sim \bbG(n,{\dnull}/{n})$ and let $A$ be an $\eta$-corrupted version of $\Anull$.
    Suppose $\dnull/2\leq\hat{d}\leq 2\dnull$.
    For $\eta$ smaller than some constant and $\gamma\leq O(\eta)$, 
    with high probability there is a degree-$8$ sum-of-squares proof that,
    if $Y$ satisfies $\cT(Y,z; A,\eta) \cup \cR(Y)$, then
    \begin{equation*}
        \Abs{\frac{d(Y)}{\dnull}-1}\leq \tilde{O}\Paren{\frac{1}{\sqrt{n\dnull}}+\frac{\eta}{\sqrt{\dnull}}}\,.
    \end{equation*}
\end{lemma}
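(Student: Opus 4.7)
The plan is to exhibit a degree-$8$ sum-of-squares derivation that, after conditioning on a few high-probability concentration events for $\Anull$, bounds $\Abs{d(Y)-\dnull}$ using the axioms $\cT(Y,z;A,\gamma)\cup\cR(Y)$ for $\gamma\le O(\eta)$.

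\emph{Setup and key identity.}
Let $z^\circ\in\{0,1\}^n$ be the fixed indicator of uncorrupted nodes, so that $A\odot z^\circ(z^\circ)^\top = \Anull\odot z^\circ(z^\circ)^\top$ and $\iprod{\one,z^\circ}\ge(1-\eta)n$. Set $w\seteq z\odot z^\circ$ and $v\seteq\one-w$. Using $z_i^2=z_i$, the identity $v_i^2=v_i$ holds in the sum-of-squares system, hence $v_i\ge 0$ as a square and $\iprod{v,v}=\iprod{v,\one}$; moreover the coordinatewise identity $v_i=(1-z_i)+(1-z_i^\circ)-(1-z_i)(1-z_i^\circ)$ combined with $\iprod{\one,z}\ge(1-\gamma)n$ yields $\iprod{\one,v}\le(\gamma+\eta)n$ at degree $2$. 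From $Y\odot zz^\top=A\odot zz^\top$ we get the polynomial identity $Y\odot ww^\top=\Anull\odot ww^\top$, so $\iprod{w,(Y-\Anull)w}=0$. Expanding $\iprod{\one,(Y-\Anull)\one}=n\Delta$ via $\one=w+v$ and using symmetry of $Y-\Anull$,
\[
    n\Delta \;=\; 2\iprod{v,(Y-\Anull)\one}-\iprod{v,(Y-\Anull)v}, \qquad \Delta\seteq d(Y)-d(\Anull).
\]
Writing $Y=\tfrac{d(Y)}{n}\one\one^\top+E_Y$ and $\Anull=\tfrac{d(\Anull)}{n}\one\one^\top+E_\circ$ and substituting yields the key identity
\[
    \Delta\cdot\frac{(n-\iprod{v,\one})^2}{n} \;=\; 2\iprod{v,(E_Y-E_\circ)\one}-\iprod{v,(E_Y-E_\circ)v}.
\]

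\emph{Error bounds.}
Condition on the high-probability events (treated as constants in the derivation): (a) $\Abs{d(\Anull)-\dnull}\le O(\sqrt{\dnull\log n/n})$ (Bernstein); (b) $\Norm{E_\circ\one}_\infty\le O(\sqrt{\dnull}\log n)$ (per-node Chernoff with union bound); (c) $\Normop{E_\circ}\le O(\sqrt{\dnull\log n})$ (Bandeira--van Handel). Combining (b) with the degree-bound axiom $\Abs{(Y\one)_i-d(Y)}\le\sqrt{\hat d}\log n$ from $\cR(Y)$ (and $\hat d\le 2\dnull$), and using $v_i\ge 0$ in the sum-of-squares system, an $\ell_\infty$--$\ell_1$ duality argument gives
\[
    (\iprod{v,(E_Y-E_\circ)\one})^2 \;\le\; O(\iprod{v,\one}^2\cdot\dnull\log^2 n).
\]
Encoding the spectral axiom $\Normop{E_Y}\le\sqrt{\hat d\log n}$ as $\iprod{u,(\hat d\log n)u}-\iprod{u,E_Y^2 u}\ge 0$ for a fresh vector indeterminate $u$ and chaining with sum-of-squares Cauchy--Schwarz $\iprod{v,Mv}^2\le\iprod{v,v}\iprod{v,M^2 v}$ (instantiated at $u=v$) for $M\in\{E_Y,E_\circ\}$, together with $\iprod{v,v}=\iprod{v,\one}$, gives
\[
    (\iprod{v,(E_Y-E_\circ)v})^2 \;\le\; O(\iprod{v,\one}^2\cdot\dnull\log n).
\]

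\emph{Solving for $\Delta$.}
Squaring the key identity and using $(a-b)^2\le 2a^2+2b^2$ together with the above error bounds,
\[
    \Delta^2\cdot\frac{(n-\iprod{v,\one})^4}{n^2} \;\le\; O(\iprod{v,\one}^2\cdot\dnull\log^2 n).
\]
For $\gamma+\eta$ small enough, the linear inequality $n-\iprod{v,\one}\ge n/2$ (SoS-derivable from $\iprod{v,\one}\le(\gamma+\eta)n$) squared twice gives $(n-\iprod{v,\one})^4\ge n^4/16$. Multiplying through by $16/n^2$ yields $\Delta^2\le O(\eta^2\,\dnull\,\log^2 n)$. Combined with event (a) via $(d(Y)-\dnull)^2\le 2\Delta^2+2(d(\Anull)-\dnull)^2$ and division by $\dnull^2$, we obtain the claimed relative error $\tilde{O}(\eta/\sqrt{\dnull}+1/\sqrt{n\dnull})$.

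\emph{Main obstacle.}
The principal technical difficulty is realizing the spectral-norm bound on $E_Y$ at low sum-of-squares degree and deducing $\iprod{v,E_Y v}^2\le(\hat d\log n)\cdot\iprod{v,\one}^2$. The standard device --- encoding $(\hat d\log n)I\succeq E_Y^2$ via a fresh vector indeterminate $u$, instantiating at $u=v$, and combining with the classical polynomial identity $\iprod{v,v}\iprod{u,u}-\iprod{v,u}^2=\tfrac{1}{2}\sum_{ij}(v_iu_j-v_ju_i)^2$ --- produces an SoS certificate of degree at most $6$ in the primal indeterminates. Composed with the degree-$4$ step $(n-\iprod{v,\one})^4\ge n^4/16$ and the degree-$2$ manipulations of $v$ from $v^2=v$, the entire derivation fits within degree $8$. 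A secondary concern is that in the very sparse regime $\dnull\lesssim\log n$, events (b), (c) require Bernstein/Kim--Vu-type concentration rather than vanilla Chernoff; these only affect logarithmic factors absorbed by $\tilde{O}$.
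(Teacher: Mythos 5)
Your proposal follows essentially the same route as the paper's proof. The key decomposition identity
\[
  \Delta\cdot\frac{\iprod{w,\one}^2}{n} \;=\; 2\iprod{v,(E_Y-E_\circ)\one}-\iprod{v,(E_Y-E_\circ)v}
\]
is exactly the paper's $\frac{\iprod{\one,w}^2}{n}\bigl(d(Y_1)-d(Y_2)\bigr)=\iprod{Y_1-\tfrac{d(Y_1)}{n}\one\one^\top,\one\one^\top-ww^\top}+\iprod{\tfrac{d(Y_2)}{n}\one\one^\top-Y_2,\one\one^\top-ww^\top}$ after expanding $\one\one^\top-ww^\top=\one v^\top+v\one^\top-vv^\top$ with $v=\one-w$, and your degree-constraint and spectral-constraint bounds on the two pieces match the paper's term-by-term. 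The only real deviation is cosmetic: you certify the spectral piece by routing through $E_Y^2\preceq(\hat d\log n)\Id$ and an SoS Cauchy--Schwarz $\iprod{v,E_Yv}^2\le\iprod{v,v}\iprod{v,E_Y^2 v}$, whereas the paper uses the two matrix-PSD axioms $\delta\sqrt{\hat d}\,\Id\pm E_Y\succeq 0$ directly to get the linear bound $v^\top E_Y v\le\delta\sqrt{\hat d}\,\norm{v}_2^2$; the latter is simpler and lower degree, but both fit comfortably in degree $8$, and the rest of the derivation — $\iprod{v,\one}\le(\gamma+\eta)n$, the idempotence $v\odot v=v$, the $(n-\iprod{v,\one})^4\ge n^4/16$ step, and closing with $\abs{d(\Anull)-\dnull}\lesssim\sqrt{\dnull\log n/n}$ — mirrors the paper.
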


\begin{proof}[Proof sketch]
Let $Y_1,Y_2$ be two graphs satisfying the regularity condition $\cR(Y_1)$ and $\cR(Y_2)$ as described in \cref{eq:regularity_fine}, respectively.
We give sum-of-squares proof that, if $\dist(Y_1, Y_2) \le \zeta n$ and $\zeta$ is at most some small constant, then \(\Abs{ d(Y_1) - d(Y_2) } \le \zeta \sqrt{\hat d} \log n \)\,.

Let $w\in \Set{0,1}^n$ be the indicator vector for the shared induced subgraph between $Y_1$ and $Y_2$, i.e $Y_1\odot ww^\top=Y_2\odot ww^\top$.
When $\dist(Y_1, Y_2) \le \zeta n$, we have $\iprod{w,\one}\geq (1-\zeta)n$.
We have
\begin{align*}
\begin{split}
    n \Bigparen{d(Y_1) - d(Y_2)} = & \iprod{Y_1-Y_2,\one \one^{\top}} \\
    = & \iprod{Y_1-Y_2,\one \one^{\top} - ww^{\top}} \\
    = & \iprod{Y_1-\frac{d(Y_1)}{n}\one \one^{\top} +\frac{d(Y_1)}{n}\one \one^{\top}-\frac{d(Y_2)}{n}\one \one^{\top}+\frac{d(Y_2)}{n}\one \one^{\top}-Y_2,\one \one^{\top} - ww^{\top}} \\
    = & \iprod{Y_1-\frac{d(Y_1)}{n}\one \one^{\top},\one \one^{\top} - ww^{\top}} + \iprod{\frac{d(Y_2)}{n}\one \one^{\top}-Y_2,\one \one^{\top} - ww^{\top}} \\
    & + \iprod{\frac{d(Y_1)}{n}\one \one^{\top}-\frac{d(Y_2)}{n}\one \one^{\top},\one \one^{\top} - ww^{\top}} \,.
\end{split}
\end{align*}
By rearranging terms, we can get
\begin{equation*}
    \frac{\iprod{\one, w}^2}{n} \Bigparen{d(Y_1) - d(Y_2)}
    = \iprod{Y_1-\frac{d(Y_1)}{n}\one \one^{\top},\one \one^{\top} - ww^{\top}} + \iprod{\frac{d(Y_2)}{n}\one \one^{\top}-Y_2,\one \one^{\top} - ww^{\top}} \,.
\end{equation*}
For the first term $\iprod{Y_1-\frac{d(Y_1)}{n}\one \one^{\top},\one \one^{\top} - ww^{\top}}$, we have
\begin{equation*}
    \iprod{Y_1-\frac{d(Y_1)}{n}\one \one^{\top},\one \one^{\top} - ww^{\top}}
    = 2 \iprod{Y_1-\frac{d(Y_1)}{n}\one \one^{\top},\one (\one - w)^{\top}} + \iprod{\frac{d(Y_1)}{n}\one \one^{\top}-Y_1,(\one - w) (\one - w)^{\top}} \,.
\end{equation*}
From constraints $|(Y_1\one)_i - d(Y_1)| \le  \sqrt{\hat{d}}\log(n)$ for all  $i\in[n]$,  we have
\begin{align*}
    \iprod{Y_1-\frac{d(Y_1)}{n}\one \one^{\top},\one (\one - w)^{\top}}
     = \iprod{Y_1 \one-d(Y_1)\one,\one - w} 
     \leq \zeta n \log(n) \sqrt{\hat{d}} \,.
\end{align*}
From constraints $\Normop{Y_1 - \frac{d(Y_1)}{n} \one \one^{\top}} \leq \delta \sqrt{\hat{d}}$, we have
\begin{align*}
    \iprod{\frac{d(Y_1)}{n}\one \one^{\top}-Y_1,(\one - w) (\one - w)^{\top}}
    \leq \Normop{Y_1 - \frac{d(Y_1)}{n} \one \one^{\top}} \Norm{\one - w}_2^2 
    \leq \zeta n  \sqrt{\hat{d}} \log(n)\,,
\end{align*}

The same bounds also apply for the second term $\iprod{Y_2-\frac{d(Y_2)}{n}\one \one^{\top},\one \one^{\top} - ww^{\top}}$. 
Since $\iprod{\one,w}\geq \Omega(n)$, it follows that
$\Abs{d(Y_1) - d(Y_2)}  \leq \tilde{O}(\zeta \sqrt{\hat{d}}) \leq \tilde{O}(\zeta  \sqrt{\dnull})$.

Since the original uncorrupted graph satisfies the regularity conditions, this gives the identifiability proof that $\abs{d(Y)-d(\Anull)}\leq \tilde{O}(\zeta  \sqrt{\dnull})$.
By random graph concentration, with high probability, we have $\abs{\dnull-d(\Anull)}\leq \tilde{O}(\sqrt{\dnull/n})$. 
The claim thus follows.
\end{proof}

\paragraph{Robust algorithm via sum-of-squares}
Consider the algorithm that finds a level-$8$ pseudo-expectation satisfying $\cT(Y,z; A,\eta) \,\cup\, \cR(Y)$ ---with $\cR(Y)$ given in \cref{eq:regularity_fine}--- and outputs $\tE [d(Y)]$.
By \cref{lem:sketch-fine-feasibility}, such a pseudo-expectation $\tE$ exists with high probability. 
It follows from the \sos identifiability proof in \cref{lem:sketch-fine-identifiability} that $\abs{\tE [d(Y)]/\dnull-1} \leq \tilde{O}\paren{{1}/{\sqrt{n\dnull}}+{\eta}/{\sqrt{\dnull}}}$.
Moreover, the algorithm can be implemented by semidefinite programming and run in polynomial time.

\paragraph{Private and robust algorithm via sum-of-squares exponential mechanism}
We present our private and robust algorithm in \cref{alg:sketch_fine_er_exp_mec} and give a proof sketch of \cref{thm:sketch-fine-estimation}.
\begin{algorithmbox}[Private fine estimation for \ER random graphs]
    \label{alg:sketch_fine_er_exp_mec}
    \mbox{}\\
    \textbf{Input:}
        $\eta$ corrupted random graph $A$, $\e$-differentially private coarse estimate $\hat{d}$.
    
    \noindent
    \textbf{Privacy parameter:} 
        $\eps$.
    
    \noindent
    \textbf{Output:} 
        A sample from the distribution $\mu_{A,\eps}$ with support $[0,n]$ and density 
        \begin{equation}
            \label{eq:sketch-fine_er_exp_mec}
            \mathrm{d}\mu_{A,\eps}(d) \propto \exp\paren{-\eps\cdot \soscore(d;A)} \,,
        \end{equation}
        where $\soscore(d;A)$ is defined as
\begin{equation}
    \label{eq:sketch_fine_er_score}
    \begin{split}
        \soscore(d; A) \seteq 
        \min_{0\le\gamma\le1} \gamma n \text{ s.t. }
        & \exists \text{ level-}8 \text{ pseudo-expectation } \tE \text{ satisfying } \\[-8pt]
        & \cT(Y,z; A,\gamma) \,\cup\, \cR(Y) \,\cup\, \Set{\Abs{d(Y)-d} \le 1/\poly(n)}  \,,
    \end{split}
\end{equation}
with $\cR(Y)$ given in \cref{eq:regularity_fine}.
\end{algorithmbox}

\begin{proof}[Proof sketch of \cref{thm:sketch-fine-estimation}]
    \emph{Privacy.}  
    By \cref{fact:sketch-privacy}, \cref{alg:sketch_fine_er_exp_mec} is $2\e$-differentially private. 
    \emph{Utility.} 
    For simplicity, we consider the case when there is no corruption (i.e. $\eta=0$).
    The analysis for the case when $\eta>0$ is similar.
    Let $\Anull \sim \bbG(n, \dnull/n)$.
    Then with high probability $\soscore(\dnull;\Anull)=0$.
    By a standard volume argument, \cref{alg:sketch_fine_er_exp_mec} outputs a scalar $d$ that satisfies $\soscore(d; \Anull) \lesssim \log(n) / \eps$ with high probability.
    By the definition of $\soscore$ in \cref{eq:sketch_fine_er_score}, this implies that with high probability there exists a level-$8$ pseudo-distribution satisfying $\cT(Y,z; A,\gamma) \,\cup\, \cR(Y)$ with $\gamma \lesssim \log(n) / (\e n)$.
    Taking $\eta=\log(n) / (\eps n)$ in \cref{lem:sketch-fine-identifiability}, it follows that \cref{alg:sketch_fine_er_exp_mec} outputs an estimate $\tilde{d}$ such that
    \( \abs{\tilde{d}/{\dnull}-1}\leq \tilde{O}\paren{{1}/{\sqrt{n\dnull}}+ 1/(\e n\sqrt{\dnull}}) \) with high probability.
\end{proof}

\subsection{Lower bound}
\label{sec:proof-sketch-lower-bound}

We sketch the proof of \cref{thm:lower_bound_ER}. 
Let $\alpha\in[0,1]$ and $d=(1-\alpha)\dnull$.
We can construct a coupling $\omega$ between the distributions $\bbG(n, d/n)$ and $\bbG(n,{\dnull}/{n})$ with the following property.
For $(\rv G, \rv G') \sim \omega$, we have $\dist(\rv G, \rv G')$ bounded by $\tilde{O}\paren{\alpha n\sqrt{\dnull}}$ with overwhelmingly high probability.
By the definition of differential privacy, when $\e \alpha n\sqrt{\dnull}\leq 1/\polylog(n)$, the output of an $\eps$-differentially private algorithm are indistinguishable under $\bbG(n,{d}/{n})$ and $\bbG(n,{\dnull}/{n})$. 
Therefore, by setting $\alpha=\tilde{O}(1/\e n\sqrt{\dnull})$, we conclude that no $\eps$-differentially private algorithm can achieve error rate better than $\tilde{O}(1/\e n\sqrt{\dnull})$.
This provides a matching lower bound for our private edge density estimation algorithm.

\section{Sum-of-squares exponential mechanism}
\label{sec:sos_exp_mechanism}

In this section, we present our sum-of-squares exponential mechanism and prove its properties in a general setting that incorporates all special cases in \cref{sec:coarse_estimation}, \cref{sec:fine_estimation_inhmomo} and \cref{sec:fine_estimation_ER}.

\paragraph{Setup}
Let $\cD\subset\R^N$.
Given an $n$-by-$n$ symmetric matrix $A$, our goal is to output an element $d$ from $\cD$ privately.
We say two symmetric matrices are neighboring if they differ in at most one row and one column.
The utility of an element $d\in\cD$ is quantified by a score function defined as follows.

\paragraph{Score function}
For an $n$-by-$n$ symmetric matrix $A$ and a scalar $\gamma$, consider the following polynomial system with indeterminates $(Y_{ij})_{i,j\in[n]}$, $(z_i)_{i\in[n]}$ and coefficients that depend on $A, \gamma$:
\begin{equation}
    \cQ_1(Y,z; A,\gamma) \seteq 
    \left. 
        \begin{cases}
            z \odot z = z,\, \iprod{\one, z} \ge (1-\gamma)n \\
            0 \leq Y \leq \one \one^{\top},\, Y=Y^\top \\
            Y \odot zz^\top = A \odot zz^\top
        \end{cases}
    \right\} \,.
\end{equation}
For an element $d\in\cD$, let $\cQ_2(Y; d)$ be a polynomial system with coefficients depending on $d$ (and independent of $A,\gamma$).
Then for a matrix $A$ and an element $d\in\cD$, we define the score of $d$ with regard to $A$ to be
\begin{align}
    \label{eq:sos_score}
    s(d; A) \seteq 
    \min_{0\leq\gamma\leq1} \gamma n \text{ s.t. } 
    \begin{cases} 
        \exists \text{ level-}\ell \text{ pseudo-expectation } \tE \text{ satisfying } \\
        \cQ_1(Y,z; A,\gamma) \,\cup\,  \cQ_2(Y; d) \,.
    \end{cases} 
\end{align}
For $s(d;A)$ to be well-defined, we assume that for every $d\in\cD$ there exists a symmetric matrix $A^* \in [0,1]^{n\times n}$ such that $\cQ_2(A^*;d)$ is true.

\begin{remark}[Score function computation]
    \label{remark:score_function_computation}
    Observe that a level-$\ell$ pseudo-expectation satisfying $\cQ_1(Y,z; A,\gamma) \,\cup\,  \cQ_2(Y; d)$ is also a level-$\ell$ pseudo-expectation satisfying $\cQ_1(Y,z; A,\gamma') \,\cup\,  \cQ_2(Y; d)$ for any $\gamma'\ge\gamma$.
    Thus we can compute $s(d;A)$ using binary search.
    Given a scalar $\gamma$, checking if there exists a level-$\ell$ pseudo-expectation satisfying $\cQ_1(Y,z; A,\gamma) \,\cup\,  \cQ_2(Y; d)$ is equivalent to checking if a semidefinite program of size $n^{O(\ell)}$ is feasible.
    Since we only have efficient algorithms for semidefinite programming up to a given precision, we can only efficiently search for pseudo-distributions that \emph{approximately} satisfy a given polynomial system.
    In spite of this, as long as the bit complexity of the coefficients in our sum-of-squares proof are polynomially bounded, the analysis of our algorithm based on sum-of-squares proofs will still work due to our discussion in \cref{remark:sos-numerical-issue}.
    We refer interested readers to \cite{Hopkins2023Privacy} for a formal (and quite technical) treatment of approximate pseudo-expectations.
\end{remark}

\paragraph{Exponential mechanism}
Given an $n$-by-$n$ symmetric matrix $A$, our sos exponential mechanism with privacy parameter $\eps$ outputs a sample from the distribution $\mu_{A,\eps}$ that is supported on $\cD$ and has density 
\begin{equation}
    \label{eq:sos_exp_mechanism}
    \mathrm{d}\mu_{A,\eps}(d) \propto \exp\paren{-\eps\cdot s(d;A)} \,.
\end{equation}

\begin{remark}[Sampling]
    \label{remark:sampling}
    To efficiently sample from $\mu_{A,\eps}$, we can use the following straightforward discretization scheme.
    More specifically, given a discretization parameter $\delta$, we output an element $d\in\set{0, \delta, 2\delta, \dots, \floor{n/\delta}\delta}$ with probability proportional to $\exp\paren{-\eps\cdot \soscore(d;A)}$.
    As the error introduced by discretization is at most $\delta$ and our target estimation error is $\omega(1/n)$, we can choose $\delta=1/n$ and the discretization error is then negligible.
    Moreover, our algorithm requires at most $n^2$ evaluations of score functions.
\end{remark}

\paragraph{Properties}
The following lemma shows that the sensitivity of score function $s(d;A)$ is at most $1$.

\begin{lemma}[Sensitivity bound]
    \label{lem:sos_score_sensitivity}
    For any $d\in\cD$ and any two $n$-by-$n$ symmetric matrices $A,A'$ that differ in at most one row and one column,
    the score function defined in \cref{eq:sos_score} satisfies
    \[
        \Abs{s(d;A) - s(d;A')} \le 1 \,.
    \]
\end{lemma}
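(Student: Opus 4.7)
The plan is to exploit the fact that $A$ and $A'$ agree outside a single row/column, say index $i_0$. If we have a pseudo-expectation that witnesses $s(d;A) \le \gamma n$, we would like to convert it into one that witnesses $s(d;A') \le \gamma n + 1$ by effectively \emph{removing} node $i_0$ from the shared induced subgraph. Removing one node decreases $\iprod{\one, z}$ by at most $1$, i.e.\ increases $\gamma$ by at most $1/n$. By symmetry of the roles of $A, A'$, this yields the desired bound.

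Concretely, fix $d \in \cD$ and let $\tE$ be a level-$\ell$ pseudo-expectation satisfying $\cQ_1(Y,z;A,\gamma) \cup \cQ_2(Y;d)$. Let $\pi$ be the polynomial substitution that sends $z_{i_0} \mapsto 0$ and leaves every other indeterminate unchanged. Define a new linear functional on polynomials of degree at most $\ell$ by
\[
    \tE'[p(Y,z)] \seteq \tE[\pi(p)(Y,z)].
\]
Since $\pi$ is a substitution by a constant, $\pi$ preserves degrees and maps squares to squares; hence $\tE'[1] = 1$ and $\tE'[q^2] = \tE[\pi(q)^2] \ge 0$ for every $q$ of degree at most $\ell/2$, so $\tE'$ is a valid level-$\ell$ pseudo-expectation.

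The next step is to check that $\tE'$ satisfies $\cQ_1(Y,z;A',\gamma + 1/n) \cup \cQ_2(Y;d)$ constraint by constraint. Constraints in $\cQ_2$ and the constraints $0 \le Y \le \one\one^{\top}$, $Y = Y^{\top}$, and $z_i^2 = z_i$ for $i \ne i_0$ are unaffected by $\pi$, so they are inherited from $\tE$. The axiom $z_{i_0}^2 = z_{i_0}$ becomes $0 = 0$ after substitution. The Hadamard constraint $Y \odot zz^{\top} = A' \odot zz^{\top}$ reduces, entry by entry, either to $Y_{ij} z_i z_j = A'_{ij} z_i z_j = A_{ij} z_i z_j$ for $i,j \ne i_0$ (which $\tE$ satisfies because $A$ and $A'$ agree there), or to the trivial identity $0 = 0$ for entries involving $i_0$. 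For the inequality $\iprod{\one,z} \ge (1-\gamma-1/n)n$, the substituted polynomial is $\iprod{\one,z} - z_{i_0}$, and the identity
\[
    \bigl(\iprod{\one,z} - z_{i_0}\bigr) - (1-\gamma-1/n)n
    = \bigl(\iprod{\one,z}-(1-\gamma)n\bigr) + (1 - z_{i_0})
\]
writes the relevant polynomial as a sum of two terms that both admit SOS certificates (the first from the inequality satisfied by $\tE$; the second from $1 - z_{i_0} = (1-z_{i_0})^2 + (z_{i_0}^2 - z_{i_0})$ together with the axiom $z_{i_0}^2 = z_{i_0}$), so the combined certificate lifts to $\tE'$.

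Together these verifications give $s(d;A') \le \gamma n + 1$ whenever $s(d;A) \le \gamma n$, and hence $s(d;A') \le s(d;A) + 1$; swapping the roles of $A$ and $A'$ gives the reverse inequality. The main conceptual obstacle is ensuring that the inequality constraint is preserved in the pseudo-expectation sense (i.e.\ against arbitrary SOS test polynomials of appropriate degree), which is exactly where the SOS decomposition of $1 - z_{i_0}$ above is needed; the other constraints pass through $\pi$ with only routine bookkeeping.
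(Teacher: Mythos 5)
Your proof is correct and takes essentially the same approach as the paper: both zero out the coordinate $z_{i_0}$ and transport the witnessing pseudo-expectation from $A$ to $A'$, costing at most $1/n$ in $\gamma$. The paper delegates the transport step to the cited Lemma 8.1 of chen et al., while you unpack it by explicitly defining $\tE'[p] = \tE[\pi(p)]$ and checking constraints; this is what that lemma encapsulates. One small sign slip worth fixing: the decomposition should be $1 - z_{i_0} = (1-z_{i_0})^2 - (z_{i_0}^2 - z_{i_0})$; since $z_{i_0}^2 - z_{i_0}$ is an equality axiom (usable with either sign) this has no effect on validity.
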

\begin{proof}
    Without loss of generality, we assume that $A$ and $A'$ differ in the first row and column.
    Consider the linear functions $(\ell_i)$ where $\ell_1(z)=0$ and $\ell_i(z)=z_i$ for $i\ge2$.
    Then for every polynomial inequality $q(Y,z)\geq0$ in $\cQ_1(Y,z; A',\gamma+1/n) \,\cup\, \cQ_2(Y; d)$,
    \[
        \cQ_1(Y,z; A,\gamma) \,\cup\, \cQ_2(Y; d)
        \; \sststile{\deg(q)}{Y,z} \;
        q(Y,\ell(z)) \ge 0 \,.
    \]
    The same argument also holds for polynomial equalities.
    Then by \cite[Lemma 8.1]{chen2024private}, $s(d;A') \le s(d;A)+1$.
    Due to symmetry of $A$ and $A'$, we also have $s(d;A) \le s(d;A')+1$.
    Therefore, $\abs{s(d;A) - s(d;A')} \le 1$.
\end{proof}

The following privacy guarantee of our sos exponential mechanism is a direct corollary of \cref{lem:sos_score_sensitivity}.

\begin{lemma}[Privacy]
    \label{lem:sos_exp_mec_privacy}
    The exponential mechanism defined in \cref{eq:sos_exp_mechanism} is $2\eps$-differentially node private.
\end{lemma}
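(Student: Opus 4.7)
The plan is to derive this as a direct consequence of the sensitivity bound in \cref{lem:sos_score_sensitivity} together with the standard privacy analysis of the exponential mechanism. The key observation is that by the definition of node neighboring graphs, the adjacency matrices of two neighboring graphs differ in at most one row and one column, which is exactly the setting in which \cref{lem:sos_score_sensitivity} asserts $\abs{s(d;A) - s(d;A')} \le 1$.

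First, I would fix arbitrary neighboring symmetric matrices $A, A'$ (differing in one row/column) and an arbitrary measurable set $S \subseteq \cD$. Writing $Z(A) \seteq \int_{\cD} \exp(-\eps\cdot s(d;A))\,\mathrm{d}d$ for the normalizing constant, the ratio of densities at any point $d$ is
\[
    \frac{\mathrm{d}\mu_{A,\eps}(d)}{\mathrm{d}\mu_{A',\eps}(d)} = \frac{Z(A')}{Z(A)} \cdot \exp\Bigparen{-\eps \bigparen{s(d;A) - s(d;A')}} \,.
\]
By \cref{lem:sos_score_sensitivity}, the pointwise ratio $\exp(-\eps(s(d;A) - s(d;A')))$ is bounded above by $e^{\eps}$ and below by $e^{-\eps}$. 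Applying the same inequality inside the normalizing integral gives $Z(A')/Z(A) \le e^{\eps}$, so the overall density ratio is at most $e^{2\eps}$.

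Finally, integrating the pointwise bound over $S$ yields $\Pr_{\mu_{A,\eps}}[d \in S] \le e^{2\eps} \cdot \Pr_{\mu_{A',\eps}}[d \in S]$, which is exactly the definition of $2\eps$-differential node privacy in \cref{definition:node-dp}. I do not anticipate any technical obstacle here since the argument is essentially the textbook analysis of the exponential mechanism (see e.g. \cite[Theorem 3.10]{dwork2014algorithmic}); the only nontrivial input is the sensitivity bound, which has already been established in \cref{lem:sos_score_sensitivity}. The proof can therefore be written in a few lines by invoking that lemma and the standard exponential mechanism inequality.
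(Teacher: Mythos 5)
Your proof is correct and is essentially the argument the paper intends: the paper states this lemma as a "direct corollary" of \cref{lem:sos_score_sensitivity} and leaves the standard exponential-mechanism calculation implicit, whereas you spell it out (density ratio splits into the normalizing-constant ratio and the pointwise ratio, each bounded by $e^{\eps}$ via the sensitivity bound, then integrate over $S$). This matches \cite[Theorem 3.10]{dwork2014algorithmic} exactly and is the same route the authors take.
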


\begin{lemma}[Volume of low-score points]
    \label{lem:sos_exp_mec_utility}
    Let $A\in\R^{n\times n}$ and $\eps>0$.
    Consider the distribution $\mu_{A,\eps}$ defined by \cref{eq:sos_exp_mechanism}.
    Suppose $(Y=A^*, z=z^*)$ is a solution to $\cQ_1(Y,z; A,\gamma^*)$.
    Then for any $t\geq0$,
    \[
        \Pr_{\rv d\sim\mu_{A,\eps}}\Paren{s(\rv d;A) \ge \gamma^* n + \frac{t\log n}{\eps}} 
        \le \frac{\vol(\cD)}{\vol\Paren{\cG(A^*)}} \cdot n^{-t} \,,
    \]
    where $\cG(A^*) \seteq \set{d\in\cD \,:\, \cQ_2(A^*; d) \text{ is true}}$.
\end{lemma}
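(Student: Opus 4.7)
The plan is to execute the standard volume/normalization argument for the exponential mechanism, with the only additional step being to verify that every $d \in \cG(A^*)$ admits a low-score certificate coming from the hypothesized solution $(A^*, z^*)$.

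First I would establish the key upper bound on the score: for every $d \in \cG(A^*)$, we have $s(d; A) \le \gamma^* n$. The witness is the pseudo-expectation associated with the degenerate distribution that puts all its mass on the point $(Y,z) = (A^*, z^*)$. By assumption this point satisfies $\cQ_1(Y,z; A,\gamma^*)$, and by definition of $\cG(A^*)$ it also satisfies $\cQ_2(Y; d)$. Such a genuine distribution is in particular a valid level-$\ell$ pseudo-expectation satisfying the combined system, which forces $s(d; A) \le \gamma^* n$ by the definition of the score in \cref{eq:sos_score}.

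Next I would lower bound the normalizing constant $Z = \int_{\cD} \exp(-\eps \cdot s(d; A))\, \mathrm{d}d$ by restricting to $\cG(A^*)$ and applying the previous step:
\[
    Z \;\ge\; \int_{\cG(A^*)} \exp\bigl(-\eps\cdot s(d;A)\bigr)\, \mathrm{d}d \;\ge\; \vol\bigl(\cG(A^*)\bigr) \cdot \exp(-\eps \gamma^* n)\,.
\]
On the other hand, the mass placed on the ``bad'' set $B_t = \{ d \in \cD \,:\, s(d;A) \ge \gamma^* n + t\log(n)/\eps \}$ is bounded by
\[
    \int_{B_t} \exp\bigl(-\eps\cdot s(d;A)\bigr)\, \mathrm{d}d \;\le\; \vol(\cD) \cdot \exp\bigl(-\eps\gamma^* n - t\log n\bigr)\,.
\]

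Finally, taking the ratio cancels the $\exp(-\eps\gamma^* n)$ factor and produces exactly $\vol(\cD)/\vol(\cG(A^*)) \cdot n^{-t}$, which is the claimed bound on $\Pr_{\rv d \sim \mu_{A,\eps}}(s(\rv d; A) \ge \gamma^* n + t\log(n)/\eps)$. There is no real obstacle here; the only subtlety worth double-checking is that a genuine Dirac distribution on $(A^*, z^*)$ qualifies as a level-$\ell$ pseudo-expectation satisfying the prescribed constraints, which is immediate since actual distributions are always valid pseudo-expectations and the constraints in $\cQ_1 \cup \cQ_2$ hold pointwise at $(A^*, z^*)$. The computation is otherwise a routine application of the volume argument underlying the standard exponential mechanism utility analysis (cf.\ \cite[Theorem 3.11]{dwork2014algorithmic}).
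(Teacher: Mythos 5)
Your proof is correct and follows exactly the same approach as the paper: exhibit the Dirac distribution on $(A^*,z^*)$ as a valid pseudo-expectation to conclude $s(d;A)\le\gamma^* n$ on $\cG(A^*)$, then run the standard numerator/denominator volume bound for the exponential mechanism. The paper's version is just more terse; the underlying argument is identical.
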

\begin{proof}
    Note $(Y=A^*, z=z^*)$ is also a solution to $\cQ_1(Y,z; A,\gamma^*) \,\cup\, \cQ_2(Y; d)$ for any $d$ such that $\cQ_2(A^*; d)$ is true.
    Let $\cG(A^*) \seteq \set{d\in\cD \,:\, \cQ_2(A^*; d) \text{ is true}}$.
    Thus $s(d;A)\le\gamma^* n$ for any $d \in \cG(A^*)$.
    For $t\geq0$,
    \[
        \Pr_{\rv d\sim\mu_{A,\eps}}\Paren{s(\rv d;A) \ge \gamma^* n + \frac{t\log n}{\eps}} 
        \le \frac{\vol(\cD) \cdot \exp\Paren{-\eps\gamma^* n - t\log n}}{\vol\Paren{\cG(A^*)} \cdot \exp\Paren{-\eps\gamma^* n}}
        = \frac{\vol(\cD)}{\vol\Paren{\cG(A^*)}} \cdot n^{-t} \,.
    \]
\end{proof}

\section{Coarse estimation}
\label{sec:coarse_estimation}

In this section, we describe our coarse estimation algorithm that achieves constant multiplicative approximation of the expected average degree $\dnull$.

\begin{theorem}[Coarse estimation for inhomogeneous random graphs]
    \label{thm:coarse_estimation_inhomo}
    Let $\Qnull$ be an $n$-by-$n$ edge connection probability matrix and let $\dnull \seteq d(\Qnull)$. 
    Suppose $\normi{\Qnull} \le R\dnull/n$ for some $R$.
    There are constants $C_1, C_2, C_3$ such that the following holds.
    For any $\eta,\eps,\dnull$ such that $\eta\log(1/\eta)R \le C_1$, $\eps \ge C_2\log^2(n)R/n$, and $\dnull\geq C_3$, there exists a polynomial-time $\eps$-differentially node private algorithm which, given an $\eta$-corrupted inhomogeneous random graph $\bbG(n,\Qnull)$, outputs an estimate $\hat{d}$ satisfying $\abs{\hat{d}/\dnull-1} \le 0.5$ with probability $1-n^{-\Omega(1)}$.
\end{theorem}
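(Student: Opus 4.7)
The plan is to instantiate the sum-of-squares exponential mechanism of \cref{sec:sos_exp_mechanism} with $\cQ_1(Y,z;A,\gamma) \seteq \cT(Y,z;A,\gamma)$ from \cref{eq:techniques-graph-constraints} and $\cQ_2(Y;d) \seteq \cR(Y) \,\cup\, \{|d(Y)-d| \le 1/\poly(n)\}$, where $\cR(Y)$ is the degree-bound system $(Y\one)_i \le C R\log(1/\eta)\,d(Y)$ for every $i \in [n]$ --- the natural generalization of \cref{eq:regularity_coarse} inflated by the max-to-mean ratio $R$. The algorithm outputs a sample $\hat d$ from the distribution $\mu_{A,\eps}$ in \cref{eq:sos_exp_mechanism}, and its $\eps$-node-privacy follows directly from \cref{lem:sos_exp_mec_privacy}.

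The utility analysis requires two lemmas mirroring \cref{lem:sketch-feasibility,lem:sketch-coarse-identifiability}. First, I would show \emph{feasibility}: with probability $1-n^{-\Omega(1)}$ there exist $(Y,z)$ with $z\in\{0,1\}^n$ and $\iprod{\one,z}\ge(1-2\eta)n$ satisfying $\cT(Y,z;A,2\eta)\cup\cR(Y)$. In the dense regime $R\dnull\gtrsim\log n$ a Chernoff plus union bound gives that every node of $\Anull\sim\bbG(n,\Qnull)$ has degree $O(R\dnull)$, so we may take $Y=\Anull$ and $z=z^{\circ}$ the indicator of uncorrupted nodes. In the sparse regime, a Bennett-type estimate bounds the number of vertices of $\Anull$ with degree exceeding $C R\log(1/\eta)\dnull$ by $\eta n$ with high probability; deleting these yields $Y$, incurring extra node-distance $\eta n$ absorbed in the $2\eta$ budget.

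Second, I would give a degree-$8$ \sos proof in $(Y,z)$ that any $Y$ satisfying $\cT(Y,z;A,\gamma)\cup\cR(Y)$ obeys $|d(Y)-\dnull|\le 0.5\dnull$. The chain parallels the sketch of \cref{lem:sketch-coarse-identifiability}: the $Y\odot zz^\top = A\odot zz^\top$ and $\iprod{\one,z}\ge(1-\gamma)n$ clauses of $\cT$ give $\dist(Y,A)\le\gamma n$ as a polynomial identity; combined with the data-side bound $\dist(A,\Anull^{\mathrm{trim}})\le O(\eta)n$ (which is a deterministic fact about the specific trimmed graph) this controls the set on which $Y$ and $\Anull^{\mathrm{trim}}$ differ. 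The degree-bound constraints on $Y$ (from $\cR$) and on $\Anull^{\mathrm{trim}}$ (from trimming) then give the inequality
\[
    n\,\Abs{d(Y)-d(\Anull^{\mathrm{trim}})} \le 2C R\log(1/\eta)\bigl(d(Y)+\dnull\bigr)(\gamma+\eta)n,
\]
which has \sos degree $\le 4$ once the $zz^\top$ matching in $\cT$ is expanded. Combining with the classical concentration bound $|d(\Anull^{\mathrm{trim}})-\dnull|=o(\dnull)$ and choosing constants so that $C R\log(1/\eta)(\gamma+\eta)$ is a small constant (guaranteed by the hypothesis $\eta\log(1/\eta)R\le C_1$) closes the loop, and the bound is inherited by $\tE[d(Y)]$ for every pseudo-expectation satisfying the system.

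Combining the two lemmas through \cref{lem:sos_exp_mec_utility}: feasibility exhibits a score-$\le 2\eta n$ witness, so with probability $1-n^{-\Omega(1)}$ the sampled $\hat d$ satisfies $\soscore(\hat d;A)\le 2\eta n + O(\log n/\eps)$ after accounting for the $\vol(\cD)/\vol(\cG(\Anull))$ factor, which is $\poly(n)$ thanks to the $1/\poly(n)$ slack in $\cQ_2$. The hypothesis $\eps\ge C_2\log^2(n)R/n$ then forces the corresponding $\gamma$ below the identifiability threshold of order $1/(R\log(1/\eta))$, so identifiability yields $|\hat d/\dnull-1|\le 0.5$. The hard part will be the feasibility and identifiability in the sparse regime: the trimming must be done probabilistically outside of \sos (the \sos proof compares $Y$ only to the deterministic trimmed graph $\Anull^{\mathrm{trim}}$), and the key polynomial inequality must be kept at constant \sos degree despite the product of factors involving $Y$, $z$, and the degree difference --- enabled by the fact that the $\cR$ constraints are linear in $Y$, so the only nontrivial \sos degree contribution comes from the $zz^\top$ matching in $\cT$.
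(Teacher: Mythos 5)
Your proposal is correct and follows essentially the same route as the paper. You instantiate the SOS exponential mechanism with $\cQ_1 = \cT = \cP_1$ and $\cR(Y)$ the $R$-scaled degree bound $(Y\one)_i \le O(R\log(1/\eta))\,d(Y)$, which is exactly the paper's system $\cC = \cP_1 \cup \cP_2$ with $\sigma = 2R\log(1/\eta)$; the feasibility step (Chernoff plus degree-pruning of high-degree vertices, with the pruned and corrupted nodes both absorbed into the $2\eta$ budget) is the paper's \cref{lem:coarse-existence-proof}, and the identifiability step (compare $Y$ to the deterministic trimmed graph on the shared certified subgraph and push the degree-bound constraints through, self-bounding $d(Y)$ by itself) is the paper's \cref{lem:coarse-sos}. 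The only cosmetic difference is that you use an additive $1/\poly(n)$ slack in $\cQ_2$, which makes $\vol(\cG(A^*))/\vol(\cD) = 1/\poly(n)$, whereas the paper uses a multiplicative $\alpha=0.01$ slack making this ratio constant; both yield the same $O(\log n/\eps)$ volume term after taking logarithms in \cref{lem:sos_exp_mec_utility}. (One small imprecision: your key inequality is stated with $\dnull$ on the right-hand side, which is not a quantity available inside the SOS system --- it should be $d(\Anull^{\mathrm{trim}})$, exactly as the paper uses $d(A^*)$, and then the comparison to $\dnull$ happens outside SOS via standard concentration; but your surrounding text makes clear you know this.)
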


We make a few remarks on \cref{thm:coarse_estimation_inhomo}.
\begin{itemize}
    \item Our algorithm in \cref{thm:coarse_estimation_inhomo} is a sum-of-squares exponential mechanism. $R,\eta,\eps$ are parameters given as inputs to our algorithm.
    
    \item We can get a constant estimate of $p^{\circ}$ by taking $\hat{p} = \frac{\hat{d}}{n-1}$. Since $\frac{\hat{p}}{\pnull} = \frac{\hat{d}}{\dnull}$, it follows that $\abs{\frac{\hat{p}}{\pnull}-1} \le 0.5$.

    \item When $\Qnull = \pnull (\one\one^\top - \Id)$, the inhomogeneous random graph $\bbG(n,\Qnull)$ is just the \ER random graph $\bbG(n,\pnull)$. Thus, by setting $R=\frac{n}{n-1}$ in \cref{thm:coarse_estimation_inhomo}, we directly obtain a coarse estimation result for \ER random graphs.

    \item The utility guarantee of our algorithm holds in the constant-degree regime (i.e. $\dnull\ge\Omega(1)$).
    To the best of our knowledge, even without privacy requirement and in the special case of \ER random graphs, no previous algorithm can match our guarantees in the constant-degree regime.
    Specifically, when $\dnull \ll \log n$ and $\eta\ge\Omega(1)$, the robust algorithm in \cite{acharya2022robust} can not provide a constant-factor approximation of $\dnull$.
\end{itemize}

In \cref{sec:coarse_sos}, we set up polynomial systems that our algorithm uses and prove useful sos inequalities.
In \cref{sec:robust_coarse_estimation}, we show that we can easily obtain a robust algorithm via sos proofs in \cref{sec:coarse_sos}.
Then in \cref{sec:private_coarse_estimation}, we describe our algorithm and prove \cref{thm:coarse_estimation_inhomo}.

\subsection{Sum-of-squares}
\label{sec:coarse_sos}

For an adjacency matrix $A$ and two nonnegative scalars $\gamma$ and $\sigma$, consider the following polynomial systems with indeterminates $Y=(Y_{ij})_{i,j\in[n]}$, $z=(z_i)_{i\in[n]}$ and coefficients that depend on $A,\gamma,\sigma$:
\begin{gather}
  \cP_1(Y,z; A,\gamma) \seteq 
  \left. 
    \begin{cases}
      z \odot z = z,\, \iprod{\one, z} \ge (1-\gamma)n \\
      0 \leq Y \leq \one \one^{\top},\, Y=Y^\top \\
      Y \odot zz^\top = A \odot zz^\top
    \end{cases} 
  \right\} \,, \label{poly_sys:coarse_subgraph_intersect}
  \\
  \cP_2(Y; \sigma) \seteq
  \left. 
    \begin{cases}
      d(Y) = \Iprod{Y,\one\one^\top} / n \\
      (Y\one)_i \le \sigma d(Y) & \forall i\in[n]
    \end{cases}
  \right\} \,. \label{poly_sys:coarse_max_degree_bound}
\end{gather}

For convenience of notation, we will consider the following combined polynomial system in remaining of the section
\begin{equation}
    \label{poly_sys:coarse_subgraph_intersect_and_max_degree_bound}
    \cC(Y,z;A,\gamma,\sigma) \seteq \cP_1(Y,z; A,\gamma) \,\cup\, \cP_2(Y;\sigma) \,.
\end{equation}

\begin{lemma}
\label{lem:coarse-sos}
    If $(A^*, z^*)$ is a feasible solution to $\cC(Y,z;A,\gamma^*,\sigma)$ and $1-2 \gamma \sigma-2 \gamma^* \sigma > 0$, then it follows that
    \begin{equation*}
        \cC(Y,z;A,\gamma,\sigma)
        \sststile{8}{Y, z} (1-2 \gamma \sigma-2 \gamma^* \sigma) d(A^*) \leq d(Y) \leq \frac{1}{1-2 \gamma \sigma-2 \gamma^* \sigma} d(A^*)\,.
    \end{equation*}
\end{lemma}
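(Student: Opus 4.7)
The key observation is that $Y$ and $A^*$ agree on the induced subgraph indexed by the ``shared good'' vertices $w \seteq z \odot z^*$. From the constraint $Y \odot zz^\top = A \odot zz^\top$ in $\cP_1(Y,z;A,\gamma)$, together with the hypothesis that $(A^*, z^*)$ is a feasible solution to $\cC(Y,z;A,\gamma^*,\sigma)$ (so in particular $A^* \odot z^*(z^*)^\top = A \odot z^*(z^*)^\top$), we get the polynomial identity $(Y - A^*) \odot ww^\top = 0$ modulo $\cC(Y,z;A,\gamma,\sigma)$. Letting $B \seteq \one\one^\top - ww^\top$ and $u \seteq \one - w$, this gives
\[
    n\paren{d(Y) - d(A^*)} \;=\; \iprod{Y - A^*,\, B}\,,
    \qquad
    B \;=\; \one u^\top + u\one^\top - uu^\top\,.
\]

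\medskip

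\noindent I would then establish four degree-$O(1)$ SOS inequalities. First, $u_j \ge 0$ is degree-$2$ SOS in $z$ using $z_j^2 = z_j$: when $z_j^* = 1$ we have $u_j = 1 - z_j = (1-z_j)^2$, and when $z_j^* = 0$ we have $u_j = 1$. Second, $\sum_j u_j \le (\gamma + \gamma^*)\,n$ follows from the SOS inequality $\sum_j (1 - z_j)(1 - z_j^*) \ge 0$ combined with the constraint $\iprod{\one, z} \ge (1-\gamma) n$ and the numerical bound $\iprod{\one, z^*} \ge (1 - \gamma^*) n$. Third, $1 - w_i w_j \ge 0$ (and hence $\iprod{Y,B}, \iprod{A^*,B} \ge 0$ since $Y, A^* \ge 0$) is SOS via the identity $1 - z_i z_j = \tfrac{1}{2}\paren{(1-z_i)^2 + (1-z_j)^2 + (z_i - z_j)^2}$ modulo $z_i^2 = z_i,\; z_j^2 = z_j$. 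Fourth, combining the degree cap $(Y \one)_j \le \sigma\, d(Y)$ from $\cP_2(Y;\sigma)$ with $u_j \ge 0$ and the previous budget bound yields
\[
    \iprod{Y, \one u^\top} \;\le\; \sigma\, d(Y) \sum_j u_j \;\le\; \sigma\, d(Y)\,(\gamma + \gamma^*)\,n\,,
\]
and analogously $\iprod{A^*, \one u^\top} \le \sigma\, d(A^*)\,(\gamma + \gamma^*)\,n$ as a numerical inequality.

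\medskip

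\noindent With these in hand, dropping $-\iprod{Y, uu^\top} \le 0$ from the identity $\iprod{Y, B} = 2\iprod{Y, \one u^\top} - \iprod{Y, uu^\top}$ yields $0 \le \iprod{Y, B} \le 2\sigma\, d(Y)\,(\gamma + \gamma^*)\,n$, and likewise $0 \le \iprod{A^*, B} \le 2\sigma\, d(A^*)\,(\gamma + \gamma^*)\,n$. For the upper bound on $d(Y)$, substituting into the decomposition and dropping $-\iprod{A^*, B} \le 0$ gives $n\paren{d(Y) - d(A^*)} \le 2\sigma\, d(Y)\,(\gamma + \gamma^*)\,n$, which rearranges to $d(Y) \le d(A^*)/\paren{1 - 2\sigma(\gamma + \gamma^*)}$. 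The lower bound $d(Y) \ge \paren{1 - 2\sigma(\gamma + \gamma^*)} d(A^*)$ is symmetric, using $\iprod{Y,B} \ge 0$ and the upper bound on $\iprod{A^*,B}$.

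\medskip

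\noindent The main obstacle is the SOS degree bookkeeping rather than any analytic difficulty. Multiplying the degree-$1$ degree-cap inequality $\sigma\, d(Y) - (Y\one)_j \ge 0$ by the degree-$2$ SOS certificate for $u_j \ge 0$ and summing gives degree $3$; multiplying the degree-$1$ SOS certificate for $d(Y) \ge 0$ (from $Y \ge 0$) by the degree-$2$ vertex-budget inequality $(\gamma+\gamma^*) n - \sum u_j \ge 0$ gives degree $3$; these combine with the degree-$2$ entry-wise nonnegativity of $B$ and the degree-$3$ shared-subgraph equality $(Y - A^*) \odot ww^\top = 0$ to yield an SOS proof well within the claimed degree-$8$ budget. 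All other algebra is routine polynomial manipulation.
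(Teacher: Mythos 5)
Your proposal is correct and follows essentially the same approach as the paper's proof. The paper writes $n\,d(Y)=\iprod{A^*,ww^\top}+\iprod{Y,\one\one^\top-ww^\top}$ and expands the second term as $\iprod{Y,2(\one-w)\one^\top}-\iprod{Y,(\one-w)(\one-w)^\top}$, then bounds the three pieces exactly as you do: $\iprod{A^*,ww^\top}\le n\,d(A^*)$ (equivalently your $\iprod{A^*,B}\ge 0$), the one-sided term via the degree cap and the budget bound $\sum_j(1-w_j)\le(\gamma+\gamma^*)n$ (which the paper derives from the same SOS inequality $(1-z_j)(1-z^*_j)\ge 0$, packaged as a set-union lemma), and drops $\iprod{Y,uu^\top}\ge 0$; the lower bound is then obtained by the same role-swap between $Y$ and $A^*$ using feasibility of $(A^*,z^*)$ for the degree cap on $A^*$. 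Your rewriting in terms of $B=\one\one^\top-ww^\top=\one u^\top+u\one^\top-uu^\top$ and centering on $n(d(Y)-d(A^*))=\iprod{Y-A^*,B}$ is a cosmetic reorganization of the identical algebra, and your degree accounting is compatible with (indeed tighter than) the paper's stated degree-$8$ bound.
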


\begin{proof}
Let $w = z \odot z^*$, by constraint $Y \odot zz^\top = A \odot zz^\top$ and $A^* \odot z^*(z^*)^\top = A \odot z^*(z^*)^\top$, we have
\begin{align*}
    \cC
    \sststile{4}{Y, z} Y \odot w w^{\top}
    & = Y \odot zz^\top \odot z^*(z^*)^\top \\
    & = A \odot zz^\top \odot z^*(z^*)^\top \\
    & = A \odot z^*(z^*)^\top \odot zz^\top \\
    & = A^* \odot z^*(z^*)^\top \odot zz^\top \\
    & = A^* \odot w w^{\top} \,.
\end{align*}
Applying this equality, it follows that
\begin{align*}
    \cC
    \sststile{4}{Y, z} n \cdot d(Y)
    & = \Iprod{Y,\one\one^\top} \\
    & = \Iprod{Y,w w^\top} + \Iprod{Y,\one\one^\top - w w^\top} \\
    & = \Iprod{A^*,w w^\top} + \Iprod{Y,2(\one-w)\one^{\top}} - \Iprod{Y,(\one-w)(\one-w)^{\top}} \,.
\end{align*}
For the first term, since $A^*_{i, j} \in [0, 1]$, $z^*_{i} \in  \set{0, 1}$ and $\cC \sststile{2}{Y, z} 0 \leq z_i \leq 1$ for all $i, j \in [n]$, we have
\begin{equation*}
    \cC
    \sststile{4}{Y, z}
    A^*_{i, j} w_{i} w_{j} = A^*_{i, j} z^*_{i} z^*_{j} z_{i} z_{j} \leq A^*_{i, j} \,.
\end{equation*}
Therefore, it follows that
\begin{equation}
    \label{eq:coarse-SOS3}
    \cC
    \sststile{4}{Y, z}
    \Iprod{A^*,w w^\top} \leq \Iprod{A^*,\one \one^\top} \leq n \cdot d(A^*) \,.
\end{equation}
For the second term, we have
\begin{align*}
    \cC
    \sststile{4}{Y, z}
    \Iprod{Y,2(\one-w)\one^{\top}}
    & = \Iprod{Y\one,2(\one-w)} \\
    & = \sum_{i \in [n]} 2 (1-w_i) \cdot (Y\one)_i \\
    & = \sum_{i \in [n]} 2 (1-z_i z^*_i) \cdot (Y\one)_i \\
    & \leq \sum_{i \in [n]} 2 (1-z_i) \cdot (Y\one)_i + \sum_{i \in [n]} 2 (1-z^*_i) \cdot (Y\one)_i \,,
\end{align*}
where the last inequality is due to \cref{lem:preliminary-SOS-set-union-bound}. From constraints $\sum_{i \in [n]} 1-z^*_i \leq \gamma^* n$, $\sum_{i \in [n]} 1-z_i \leq \gamma n$ and $(Y\one)_i \le \sigma d(Y)$ for all $i \in [n]$, it follows that
\begin{align}
\label{eq:coarse-SOS4}
\begin{split}
    \cC
    \sststile{4}{Y, z}
    \Iprod{Y,2(\one-w)\one^{\top}}
    & \leq \sum_{i \in [n]} 2 (1-z_i) \cdot \sigma d(Y) + \sum_{i \in [n]} 2 (1-z^*_i) \cdot \sigma d(Y) \\
    & = 2 \sigma d(Y) \cdot \Paren{\sum_{i \in [n]} 1-z_i} + 2 \sigma d(Y) \cdot \Paren{\sum_{i \in [n]} 1-z^*_i} \\
    & \leq 2 \gamma n \sigma d(Y) + 2 \gamma^* n \sigma d(Y) \,.
\end{split}
\end{align}
For the third term, since $\cC \sststile{2}{Y, z} Y_{i,j} \geq 0$ and $\cC \sststile{2}{Y, z} 1-w_i \geq 0$ for all $i, j \in [n]$, it follows that
\begin{equation}
    \label{eq:coarse-SOS5}
    \cC
    \sststile{8}{Y, z}
    \Iprod{Y,(\one-w)(\one-w)^{\top}} \geq 0 \,.
\end{equation}
Combining \cref{eq:coarse-SOS3}, \cref{eq:coarse-SOS4} and \cref{eq:coarse-SOS5}, we can get
\begin{align*}
    \cC
    & \sststile{8}{Y, z} n \cdot d(Y)
    \leq n \cdot d(A^*) + 2 \gamma n \sigma d(Y) + 2 \gamma^* n \sigma d(Y) \\
    & \sststile{8}{Y, z} d(Y)
    \leq \frac{d(A^*)}{1-2\gamma \sigma-2\gamma^* \sigma}\,.
\end{align*}
Swapping the roll of $A^*$ and $Y$, we can use the same proof to get
\begin{align*}
    \cC
    & \sststile{8}{Y, z} n \cdot d(A^*)
    \leq n \cdot d(Y) + 2 \gamma n \sigma d(A^*) +2 \gamma^* n \sigma d(A^*) \\
    & \sststile{8}{Y, z}
    (1-2\gamma \sigma-2\gamma^* \sigma) d(A^*) \leq d(Y)\,.
\end{align*}
This completes the proof.
\end{proof}

\begin{lemma}
\label{lem:coarse-existence-proof}
  Let $Q^{\circ}$ be an $n$-by-$n$ edge connection probability matrix and $d^{\circ} \seteq d(Q^{\circ})$. Suppose $\Norm{Q^{\circ}}_{\infty} \le R d^{\circ} / n$ for $R \in \R$.
  Let $A$ be an $\eta$-corrupted adjacency matrix of a random graph $\rv G^{\circ} \sim \bbG(n,Q^{\circ})$.
  Suppose $\eta\log(1/\eta)R \le C_1$ for some constant $C_1$ that is small enough.
  With probability $1-n^{-\Omega(1)}$, there exists $A^*$ and $z^*$ such that
    \begin{enumerate}
        \item $|d(A^*) - d^{\circ}| \leq 0.1 d^{\circ}$.
        \item $(A^*, z^*)$ is a feasible solution to $\cC(Y,z;A,\gamma, \sigma)$ with $\gamma = 2\eta$ and $\sigma = 2 \log(1/\eta)R$.
    \end{enumerate}
\end{lemma}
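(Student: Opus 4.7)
The plan is to construct $A^*$ and $z^*$ explicitly by trimming two small node sets from the uncorrupted graph $G^\circ$---the corrupted nodes and the nodes whose $G^\circ$-degree is unusually large---and then verifying the polynomial constraints of $\cC$ together with the density bound by direct computation.

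Let $B \subseteq [n]$ be the set of nodes rewired by the adversary, so $|B| \le \eta n$. Let $H$ be the set of the $\eta n$ highest-degree nodes in $G^\circ$ (breaking ties arbitrarily) and write $\tau$ for the resulting cutoff, so $\deg_{G^\circ}(i) \le \tau$ for every $i \notin H$. Define $z^*$ to be the indicator vector of $[n] \setminus (B \cup H)$ and set $A^* = G^\circ \odot z^*(z^*)^\top$, i.e., keep exactly the edges of $G^\circ$ with both endpoints good. With this choice, feasibility for $\cC(Y, z; A, 2\eta, 2\log(1/\eta)R)$ reduces to one-line checks: $z^*$ is $0/1$ and $A^*$ is symmetric with $\{0,1\}$ entries; $A^* \odot z^*(z^*)^\top = A \odot z^*(z^*)^\top$ because $A$ and $G^\circ$ coincide off $B$; $\iprod{\one, z^*} \ge (1-2\eta)n$ since $|B|+|H|\le 2\eta n$; and the row-sum bound $(A^*\one)_i \le \sigma\,d(A^*)$ holds because rows indexed by $B\cup H$ vanish while remaining rows have sum at most $\tau$, which in turn is at most $2\log(1/\eta) R \cdot d(A^*)$ provided both $\tau$ lies strictly below $2\log(1/\eta) R\, d^\circ$ and $d(A^*) \ge 0.9\,d^\circ$.

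Two probabilistic facts then remain. First, $\tau$ is an order statistic: by a Chernoff bound on each degree (a sum of independent Bernoullis with mean at most $R d^\circ$), the probability that any single node has degree above some $\tau_0 = \Theta(\log(1/\eta)) R d^\circ$ chosen strictly below $2\log(1/\eta)R d^\circ$ can be made at most $\eta/2$, so the expected number of such nodes is at most $\eta n/2$; upgrading this to $|H| \le \eta n$ with probability $1-n^{-\Omega(1)}$ uses standard concentration for Lipschitz functions of the independent edge indicators (in the dense regime $d^\circ = \Omega(\log n)$ a union bound makes $H$ empty whp, and in the sparse regime a second-moment or Bernstein-type argument exploits the weak correlation between different nodes' degree-exceedance events). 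Second, $d(A^*)$ differs from $d(G^\circ)$ only through edges incident to $B \cup H$; a moment computation on the independent edge indicators shows this removed contribution is of order $\eta \log(1/\eta) R \cdot n d^\circ = O(C_1)\cdot n d^\circ$, which combined with the usual concentration $d(G^\circ) = (1\pm o(1))\,d^\circ$ yields $|d(A^*) - d^\circ|\le 0.1\,d^\circ$ after choosing $C_1$ small enough.

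The main obstacle is the sparse-regime step: when $d^\circ = O(\log n)$ and $\eta$ is small, the $n$ degree-exceedance events are not literally independent and a naive McDiarmid inequality is too weak to upgrade $\E|H|\le \eta n/2$ into $|H|\le \eta n$ with probability $1-n^{-\Omega(1)}$. We plan to handle this either by slightly sharpening the threshold so a simple union bound over nodes suffices, or by a Chebyshev/second-moment argument exploiting that the correlations between the indicators $\mathbf{1}[\deg(i)>\tau]$ and $\mathbf{1}[\deg(j)>\tau]$ come only from the single shared edge $(i,j)$, or by invoking a Bernstein-type bound for sums of weakly dependent indicators; everything else is routine concentration and algebraic verification.
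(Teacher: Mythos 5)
Your construction and all the deterministic feasibility checks are correct and essentially match the paper's: the paper also trims (i) the corrupted nodes and (ii) a set of high-degree nodes, takes $A^*$ to be the resulting pruned graph and $z^*$ the indicator of the surviving nodes, and then verifies $\cC$ by exactly the kind of direct algebra you sketch. You also correctly isolate the one genuinely delicate step: showing with probability $1-n^{-\Omega(1)}$ that at most $\eta n$ nodes of $\rv{G}^\circ$ have degree above a threshold of order $\log(1/\eta)R\dnull$.

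The gap is that none of your three candidate fixes closes that step across the full parameter range, and the paper uses a different (and cleaner) tool. Sharpening the threshold until a per-node union bound works would require the cutoff to be $\Omega(\log n)$, but you are constrained to stay below $2\log(1/\eta)R\dnull$, which in the sparse regime with, say, $R\dnull=O(1)$ and $\eta$ only moderately small is $o(\log n)$. Chebyshev gives a failure probability on the order of $1/(\eta n)$ (the count is roughly Poisson with mean $\Theta(\eta n)$), which is not $n^{-\Omega(1)}$ when $\eta n = O(1)$; since the regime $\eta \in [\Theta(1/n), o(1)]$ must be covered (it is exactly the regime the downstream private algorithm cares about), this is not enough. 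The Bernstein route would need to be fleshed out and also tends to give $\exp(-\Omega(\eta n))$, which again is too weak when $\eta n = O(1)$. What the paper does instead (\cref{lem:degree_distribution}) is a first-moment union bound over \emph{subsets}: for any $\gamma$, the probability that some $\gamma n$-subset of nodes all have degree $\ge t d$ is at most $\binom{n}{\gamma n}\binom{\gamma n^2}{\gamma n t d/2}(d/n)^{\gamma n t d /2}$, because such a subset would need $\ge \gamma n t d / 2$ incident edges. Setting $\gamma = e^{-t}$ gives failure probability $\exp(-te^{-t}nd/4)$, and with $t=\log(1/\eta)$, $d=R\dnull$, and the standing assumption $\dnull \gtrsim 1$ (so $Rd^\circ\ge 5$) this is $n^{-\Omega(1)}$ even when $\eta n = \Theta(1)$. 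This completely sidesteps the dependence of the degree-exceedance indicators. The companion bound on the number of pruned edges (\cref{lem:degree_pruning}) follows from the same lemma by summing over dyadic scales of $t$, which is also a bit tighter than the ``moment computation'' you allude to. So: keep your construction and your feasibility algebra, but replace your concentration step with the subset-counting union bound from the appendix.
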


\begin{proof}
  Let $\rv A^{\circ}$ be the adjacency matrix of $\rv G^{\circ}$ and $z^{\circ} \in \set{0, 1}^n$ denote the set of uncorrupted nodes ($z^{\circ}_i = 1$ if and only if node $i$ is uncorrupted).
  
  By \cref{lem:degree_distribution} and \cref{lem:degree_pruning}, we know that, with probability $1-n^{-\Omega(1)}$, there exists a degree-pruned adjacency matrix $\Tilde{A}$ such that
  \begin{enumerate}
      \item $\Norm{\Tilde{A} \one}_{\infty} \leq \log(1/\eta) R d^{\circ}$.
      \item At most $\eta n$ nodes are pruned.
      \item At most $2 \eta \log(1/\eta) n R d^{\circ}$ edges are pruned.
  \end{enumerate}
  
  Let $\Tilde{z} \in \set{0, 1}^n$ denote the set of unpruned nodes ($z^{\circ}_i = 1$ if and only if node $i$ is not pruned). We will show that $A^* = \Tilde{A}$ and $z^* = z^{\circ} \odot \Tilde{z}$ satisfies the lemma.

  \paragraph{Guarantee 1}
  By \cref{lem:average_degree_concentration}, we know that, with probability $1-n^{-\Omega(1)}$,
  \begin{equation}
  \label{eq:coarse-robustness-1}
      |d(\rv A^{\circ}) - d^{\circ}| \leq 10 \sqrt{\frac{d^{\circ} \log n}{n}} \,.
  \end{equation}
  From degree pruning guarantee (3), we have that
  \begin{equation}
  \label{eq:coarse-robustness-2}
      |d(\Tilde{A}) - d(\rv A^{\circ})| \leq 2 \eta \log(1/\eta) n R d^{\circ} \,.
  \end{equation}
  Combining \cref{eq:coarse-robustness-1} and \cref{eq:coarse-robustness-2}, for some constant $C_1$ that is small enough, we have
  \begin{align}
  \label{eq:coarse-robustness-3}
  \begin{split}
      |d(\Tilde{A}) - d^{\circ}|
      & \leq |d(\Tilde{A}) - d(\rv A^{\circ})| + |d(\rv A^{\circ}) - d^{\circ}| \\
      & \leq 10 \sqrt{\frac{d^{\circ} \log n}{n}} + 2 \eta \log{\frac{1}{\eta}} R d^{\circ} \\
      & \leq 10 \sqrt{\frac{\log n}{n}} d^{\circ} + 2 C_1 d^{\circ}\\
      & \leq 0.1 d^{\circ} \,.
  \end{split}
  \end{align}

  \paragraph{Guarantee 2}
  It is easy to check that $z^* \odot z^* = z^*$, $0 \leq A^* \leq \one \one^{\top}$ and $A^* = (A^*)^\top$. Since $\iprod{\one, \Tilde{z}} \geq 1 - \eta n$ by degree pruning condition (2) and $\iprod{\one, z^{\circ}} \geq 1 - \eta n$ by corruption rate, it is easy to verify that
  \begin{equation*}
      \iprod{\one, z^*} \geq 1 - 2 \eta n \,.
  \end{equation*}
  Moreover, we have $A^* \odot z^* (z^*)^{\top} = A \odot z^* (z^*)^{\top}$ due to
  \begin{equation*}
      \Tilde{A} \odot \Tilde{z} \Tilde{z}^{\top} \odot z^{\circ} (z^{\circ})^{\top}
      = \rv A^{\circ} \odot \Tilde{z} \Tilde{z}^{\top} \odot z^{\circ} (z^{\circ})^{\top}
      = \rv A^{\circ} \odot z^{\circ} (z^{\circ})^{\top} \odot \Tilde{z} \Tilde{z}^{\top}
      = A \odot z^{\circ} (z^{\circ})^{\top} \odot \Tilde{z} \Tilde{z}^{\top} \,.
  \end{equation*}
  From \cref{eq:coarse-robustness-3}, we can get that $d^{\circ} \leq 2 d(\Tilde{A})$. Plugging this into degree pruning condition (1), we get
  \begin{equation*}
      \Norm{\Tilde{A} \one}_{\infty}
      \leq \log(1/\eta) R d^{\circ}
      \leq 2 \log(1/\eta) R d(\Tilde{A}) \,.
  \end{equation*}
  Therefore, we have
  \begin{equation*}
      (A^* \one)_i \leq 2 \log(1/\eta) R d(A^*) \,.
  \end{equation*}
  for all $i \in [n]$.
  
  Thus, $(A^*, z^*)$ is a feasible solution to $\cC(Y,z;A,\gamma, \sigma)$ with $\gamma = 2\eta$ and $\sigma = 2 \log(1/\eta) R$.
\end{proof}

\subsection{Robust algorithm}
\label{sec:robust_coarse_estimation}

In this section, we show that the following algorithm based on sum-of-squares proofs in \cref{sec:coarse_sos} obtains a robust constant multiplicative approximation of $\dnull$.

\begin{algorithmbox}[Robust coarse estimation algorithm]
    \label{alg:robust-coarse}
    \mbox{}\\
    \textbf{Input:}
        $\eta$-corrupted adjacency matrix $A$, corruption fraction $\eta$ and parameter $R$.

    \noindent
    \textbf{Algorithm: } Obtain level-$8$ pseudo-expectation $\tilde{\E}$ by solving sum-of-squares relaxation of program $\cC(Y,z;A,\gamma,\sigma)$ (defined in \cref{poly_sys:coarse_subgraph_intersect_and_max_degree_bound}) with $A$, $\gamma = 2\eta$ and $\sigma = 2 \log(1/\eta) R$.
    
    \noindent
    \textbf{Output:} $\tilde{\E}[d(Y)]$
\end{algorithmbox}

\begin{theorem}[Robust coarse estimation]
    \label{thm:robust-coarse-estimation}
    Let $\Qnull$ be an $n$-by-$n$ edge connection probability matrix and let $\dnull\seteq d(\Qnull)$. 
    Suppose $\normi{\Qnull} \le R\dnull/n$ for some $R$.
    Let $A$ be an $\eta$-corrupted adjacency matrix of a random graph $\rv G^{\circ} \sim \bbG(n,Q^{\circ})$.
    Suppose $\eta\log(1/\eta)R \le c$ for some constant $c$ that is small enough.
    With probability $1-n^{-\Omega(1)}$, \cref{alg:robust-coarse} outputs an estimate $\hat{d}$ satisfying $\abs{\frac{\hat{d}}{\dnull}-1} \le 0.5$.
\end{theorem}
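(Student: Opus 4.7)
The plan is to combine the feasibility lemma (\cref{lem:coarse-existence-proof}) with the sum-of-squares identifiability lemma (\cref{lem:coarse-sos}) in a straightforward way, exploiting the sum-of-squares proofs-to-algorithms framework.

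\textbf{Step 1: Feasibility of the SoS relaxation.} First I would invoke \cref{lem:coarse-existence-proof} with parameters $\gamma^\star = 2\eta$ and $\sigma = 2\log(1/\eta)R$. This gives, with probability $1 - n^{-\Omega(1)}$, an integer solution $(A^\star, z^\star)$ to $\cC(Y,z;A,\gamma^\star,\sigma)$ such that $\abs{d(A^\star) - \dnull} \le 0.1 \dnull$. Since any integer solution to the polynomial system yields a degree-$8$ pseudo-expectation $\tilde\E$ satisfying $\cC(Y,z;A,2\eta,\sigma)$ (and the SoS relaxation is solvable in polynomial time by semidefinite programming up to negligible numerical error as discussed around \cref{remark:score_function_computation}), \cref{alg:robust-coarse} is well-defined on this high-probability event.

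\textbf{Step 2: SoS identifiability applied to the pseudo-expectation.} Next I would apply \cref{lem:coarse-sos} with $\gamma = 2\eta$ and $\gamma^\star = 2\eta$. The smallness condition $1 - 2\gamma\sigma - 2\gamma^\star\sigma > 0$ becomes $1 - 8\eta\log(1/\eta) R > 0$, which holds whenever the constant $c$ in the hypothesis $\eta\log(1/\eta)R \le c$ is chosen small enough (e.g., $c \le 1/16$). The lemma then provides degree-$8$ SoS proofs of the polynomial inequalities
\[
    (1 - 8\eta\log(1/\eta)R)\, d(A^\star) \;\le\; d(Y) \;\le\; \frac{d(A^\star)}{1 - 8\eta\log(1/\eta)R} \,.
\]
Since these are polynomial inequalities certified by the SoS system satisfied by $\tilde\E$, applying $\tilde\E$ to both sides preserves them (pseudo-expectations respect SoS-proved inequalities), giving the same sandwich bound on $\tilde\E[d(Y)]$.

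\textbf{Step 3: Conclusion via triangle inequality.} Finally, I would combine Step 2 with $\abs{d(A^\star) - \dnull} \le 0.1\dnull$ from Step 1. Writing $\kappa \seteq 8\eta\log(1/\eta)R \le 8c$, we obtain
\[
    (1 - \kappa)(0.9 \dnull) \;\le\; \tilde\E[d(Y)] \;\le\; \frac{1.1\,\dnull}{1 - \kappa} \,,
\]
so choosing $c$ small enough (e.g., $\kappa \le 0.1$) gives $\abs{\tilde\E[d(Y)]/\dnull - 1} \le 0.5$, as desired.

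The only nontrivial step is Step 2, but the heavy lifting was already done in \cref{lem:coarse-sos}; the remaining concern is merely that the degree-$8$ SoS inequality proved there remains valid when pseudo-expectations are taken, which is automatic from the proofs-to-algorithms framework. The feasibility argument in Step 1 relies on the graph-regularity concentration bounds and the degree-pruning lemma cited in that proof, which are already established in the appendices. No further obstacles are anticipated.
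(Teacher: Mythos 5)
Your proof is correct and follows essentially the same route as the paper: invoke \cref{lem:coarse-existence-proof} to get a feasible integral solution $(A^\star,z^\star)$ with $\abs{d(A^\star)-\dnull}\le 0.1\dnull$, apply \cref{lem:coarse-sos} (with $\gamma=\gamma^\star=2\eta$, $\sigma=2\log(1/\eta)R$) to obtain a degree-$8$ SoS sandwich on $d(Y)$, and use the fact that pseudo-expectations respect SoS-certified inequalities to conclude. The paper's version is slightly more careful to phrase the final bound as an SoS-certified inequality $\tfrac12 \dnull \le d(Y) \le \tfrac32 \dnull$ before applying $\tilde\E$, but this is the same argument; both also share the same benign factor-of-$2$ arithmetic slip in evaluating $2\gamma\sigma + 2\gamma^\star\sigma$, which is absorbed by the choice of $c$.
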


\begin{proof}
  By \cref{lem:coarse-sos} and \cref{lem:coarse-existence-proof}, we know that
  \begin{equation*}
    \cC(Y,z;A,\gamma,\sigma)
    \sststile{8}{Y, z} (1-4 \gamma \sigma) d(A^*) \leq d(Y) \leq \frac{1}{1-4 \gamma \sigma} d(A^*) \,,
  \end{equation*}
  and,
  \begin{equation*}
    |d(A^*) - d^{\circ}| \leq 0.1 d^{\circ} \,.
  \end{equation*}
  Therefore, we have 
    \begin{equation*}
    \cC(Y,z;A,\gamma,\sigma)
    \sststile{8}{Y, z} 0.9 (1-4 \gamma \sigma)d^{\circ} \leq d(Y) \leq \frac{1.1}{1-4 \gamma \sigma} d^{\circ} \,.
  \end{equation*}
  Consider $4 \gamma \sigma$, for constant $c$ that is small enough, we have
  \begin{equation*}
      4 \gamma \sigma = 8 \eta \log(1/\eta) R \leq 8 c \leq 0.1\,.
  \end{equation*}
  This implies that $0.9 (1-4 \gamma \sigma) \geq \frac{1}{2}$ and $\frac{1.1}{1-4 \gamma \sigma} \leq \frac{11}{9} \leq \frac{3}{2}$. Therefore, we have
  \begin{equation*}
    \cC(Y,z;A,\gamma,\sigma)
    \sststile{8}{Y, z} \frac{1}{2} d^{\circ} \leq d(Y) \leq \frac{3}{2} d^{\circ} \,.
  \end{equation*}
  Thus, the level-$8$ pseudo-expectation $\tilde{\E}$ satisfies
  \begin{equation*}
      \frac{1}{2} d^{\circ} \leq \tilde{\E}[d(Y)] \leq \frac{3}{2} d^{\circ} \,,
  \end{equation*}
  which implies that
  \begin{equation*}
      \Abs{\frac{\tilde{\E}[d(Y)]}{d^{\circ}} - 1} \leq \frac{1}{2} \,.
  \end{equation*}
\end{proof}

\subsection{Private algorithm}
\label{sec:private_coarse_estimation}

In this section, we present our algorithm and prove \cref{thm:coarse_estimation_inhomo}. 
Our algorithm instantiates the sum-of-squares exponential mechanism in \cref{sec:sos_exp_mechanism}.

\paragraph{Score function}
For an $n$-by-$n$ symmetric matrix $A$ and a scalar $d$, we define the score of $d$ with regard to $A$ to be
\begin{align}
    \label{eq:coarse_sos_score}
    s(d; A) \seteq 
    \min_{0\le\gamma\le1} \gamma n \text{ s.t. } 
    \begin{cases} 
        \exists \text{ level-}8 \text{ pseudo-expectation } \tE \text{ satisfying } \\
        \cC(Y,z;A,\gamma,\sigma) \,\cup\, \Set{\Abs{d(Y)-d}\le\alpha d} \,,
    \end{cases}
\end{align}
where $\cC(Y,z;A,\gamma,\sigma)$ is the polynomial system defined in \cref{poly_sys:coarse_subgraph_intersect_and_max_degree_bound}, and
$\sigma,\alpha$ are fixed parameters whose values will be decided later.
Note that $(Y=\frac{d}{n}\one\one^\top, z=\zero)$ is a solution to the polynomial system $\cC(Y,z;A,1,\sigma) \,\cup\, \Set{\Abs{d(Y)-d}\le\alpha d}$ for any $A\in\R^{n\times n}$, $d\in[0,n]$, and $\sigma\ge1$.

To efficiently compute $s(d;A)$, we can use the scheme as described in \cref{remark:score_function_computation}.

\paragraph{Exponential mechanism}
Given a privacy parameter $\eps$ and an $n$-by-$n$ symmetric matrix $A$, our algorithm is the exponential mechanism with score function \cref{eq:coarse_sos_score} and range $[0,n]$.

\begin{algorithmbox}[Coarse estimation] %
    \label{alg:coarse_exp_mec}
    \mbox{}\\
    \textbf{Input:}
        Graph $A$.
    
    \noindent
    \textbf{Parameters:} 
        $\eps, \sigma, \alpha$.
    
    \noindent
    \textbf{Output:} 
        A sample from the distribution $\mu_{A,\eps}$ with support $[0,n]$ and density 
        \begin{equation}
            \label{eq:coarse_exp_mec}
            \mathrm{d}\mu_{A,\eps}(d) \propto \exp\paren{-\eps\cdot s(d;A)} \,,
        \end{equation}
        where $s(d;A)$ is defined in \cref{eq:coarse_sos_score}.
\end{algorithmbox}

To efficiently sample from $\mu_{A,\eps}$, we can use the scheme as described in \cref{remark:sampling}.

\paragraph{Privacy}
The following privacy guarantee of our algorithm is a direct corollary of \cref{lem:sos_exp_mec_privacy}.

\begin{lemma}[Privacy]
    \label{lem:coarse_exp_mec_privacy}
    \cref{alg:coarse_exp_mec} is $2\eps$-differentially node private.
\end{lemma}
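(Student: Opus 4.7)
The plan is to derive this privacy claim as a direct instantiation of the general sum-of-squares exponential mechanism guarantee in \cref{lem:sos_exp_mec_privacy}, so the only real work is to verify that \cref{alg:coarse_exp_mec} fits the template of \cref{sec:sos_exp_mechanism}.

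First I would match the template. Setting $\cD := [0,n]$, I would take $\cQ_1(Y,z;A,\gamma) := \cP_1(Y,z;A,\gamma)$ from \cref{poly_sys:coarse_subgraph_intersect} and $\cQ_2(Y;d) := \cP_2(Y;\sigma) \,\cup\, \Set{\Abs{d(Y)-d}\le \alpha d}$. The key structural observation is that $\cP_2(Y;\sigma)$ depends only on the fixed algorithm parameter $\sigma$ (not on $A$ or $\gamma$), and the constraint $\abs{d(Y)-d}\le \alpha d$ depends only on the candidate $d$, so $\cQ_2$ depends only on $d$, exactly as the template requires. Under this identification, the score $s(d;A)$ in \cref{eq:coarse_sos_score} coincides with the general score of \cref{eq:sos_score}, and the sampling density $\mu_{A,\eps}$ in \cref{eq:coarse_exp_mec} coincides with \cref{eq:sos_exp_mechanism}.

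Next I would verify the standing well-definedness assumption on $\cQ_2$, namely that for every $d\in\cD$ there exists a symmetric $A^*\in[0,1]^{n\times n}$ with $\cQ_2(A^*;d)$ true. A clean witness is $A^* := \tfrac{d}{n}\one\one^\top$: for $d\in[0,n]$ it lies in $[0,1]^{n\times n}$ and is symmetric, its average row/column sum equals $d(A^*)=d$, and each row sum is $d = d(A^*)$, so $(A^*\one)_i \le \sigma\, d(A^*)$ is satisfied whenever $\sigma\ge 1$ (which holds in the relevant regime, since $\sigma = 2\log(1/\eta)R \ge 1$). The estimate constraint $\abs{d(A^*)-d}\le\alpha d$ is trivially satisfied.

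With the template matched and well-definedness confirmed, I would invoke \cref{lem:sos_exp_mec_privacy} to immediately conclude that \cref{alg:coarse_exp_mec} is $2\eps$-differentially node private. There is essentially no obstacle here: the nontrivial content---the unit sensitivity bound of the SOS-relaxed score function---has already been established in \cref{lem:sos_score_sensitivity}, and the remainder is just bookkeeping to confirm that this algorithm is a specialization of the general mechanism.
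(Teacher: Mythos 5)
Your proposal is correct and follows exactly the paper's route: the paper states this privacy lemma as a direct corollary of \cref{lem:sos_exp_mec_privacy}, and you merely spell out the bookkeeping (matching $\cC$ to the $\cQ_1,\cQ_2$ template and checking the well-definedness witness $A^*=\tfrac{d}{n}\one\one^\top$, which the paper also notes explicitly just after defining the score in \cref{eq:coarse_sos_score}). No gaps.
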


\paragraph{Utility}
The utility guarantee of our algorithm is stated in the following lemma.

\begin{lemma}[Utility]
    \label{lem:coarse_exp_mec_util}
    Let $\Qnull$ be an $n$-by-$n$ edge connection probability matrix and let $\dnull\seteq d(\Qnull)$. 
    Suppose $\normi{\Qnull} \le R\dnull/n$ for some $R$.
    There are constants $C_1, C_2, C_3$ such that the following holds.
    For any $\eta,\eps,\dnull$ such that $\eta\log(1/\eta)R \le C_1$, $\eps \ge C_2\log^2(n)R/n$, and $\dnull\geq C_3$, 
    given an $\eta$-corrupted inhomogeneous random graph $\bbG(n,\Qnull)$,
    \cref{alg:coarse_exp_mec} outputs an estimate $\hat{d}$ satisfying $\abs{\hat{d} - \dnull} \le 0.5\dnull$ with probability $1-n^{-\Omega(1)}$.
\end{lemma}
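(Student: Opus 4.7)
}
The plan is to combine the feasibility lemma (\cref{lem:coarse-existence-proof}) with the volume-of-low-score-points lemma (\cref{lem:sos_exp_mec_utility}) from the general sum-of-squares exponential mechanism, and then conclude via the sum-of-squares identifiability proof in \cref{lem:coarse-sos}. Throughout, I will set $\sigma = 2\log(1/\eta)R$ to match the parameters of \cref{lem:coarse-existence-proof,lem:coarse-sos}, and fix $\alpha$ to a small absolute constant, say $\alpha = 0.01$ (any $\alpha \ge 1/\poly(n)$ will do; a small constant keeps the volume ratio small).

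First, by \cref{lem:coarse-existence-proof}, with probability $1 - n^{-\Omega(1)}$ there exist $A^* \in [0,1]^{n \times n}$ and $z^* \in \{0,1\}^n$ such that $(A^*, z^*)$ is feasible for $\cC(Y,z;A,\gamma^*,\sigma)$ with $\gamma^* = 2\eta$, and moreover $|d(A^*) - \dnull| \le 0.1\,\dnull$. I then apply \cref{lem:sos_exp_mec_utility} with this $(A^*, \gamma^*)$, identifying $\cQ_2(Y;d) = \{|d(Y)-d|\le \alpha d\}$, so $\cG(A^*) = \{d \in [0,n] : |d - d(A^*)| \le \alpha d\}$ has measure $\Omega(\alpha\,d(A^*)) = \Omega(\alpha\,\dnull)$. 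Since $\vol(\cD) = n$ and $\dnull \ge C_3$, the ratio $\vol(\cD)/\vol(\cG(A^*))$ is $\poly(n)$. Taking $t = \Theta(1)$ large enough, I conclude that with probability $1 - n^{-\Omega(1)}$, the sample $\hat d \sim \mu_{A,\eps}$ has
\[
    s(\hat d; A) \;\le\; \gamma^* n + O\!\Paren{\tfrac{\log n}{\eps}} \;=\; 2\eta n + O\!\Paren{\tfrac{\log n}{\eps}}.
\]

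Unpacking the definition of the score function in \cref{eq:coarse_sos_score}, this means there exists a level-$8$ pseudo-expectation $\tilde{\E}$ satisfying $\cC(Y,z;A,\gamma,\sigma) \cup \{|d(Y) - \hat d| \le \alpha \hat d\}$ for some $\gamma \le 2\eta + O(\log n / (\eps n))$. I now apply \cref{lem:coarse-sos} with this $\gamma$ and $\gamma^* = 2\eta$; its hypothesis $1 - 2\gamma\sigma - 2\gamma^*\sigma > 0$ will hold because
\[
    2\gamma\sigma + 2\gamma^*\sigma \;\le\; 16\,\eta\log(1/\eta)R \;+\; O\!\Paren{\tfrac{\log^2 n \cdot R}{\eps n}} \;\le\; 16 C_1 + O(1/C_2),
\]
which can be made, say, at most $0.1$ by choosing $C_1$ small enough and $C_2$ large enough. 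Taking pseudo-expectations of the resulting sum-of-squares inequalities yields
\[
    (1 - 2\gamma\sigma - 2\gamma^*\sigma)\, d(A^*) \;\le\; \tilde{\E}[d(Y)] \;\le\; \tfrac{d(A^*)}{1 - 2\gamma\sigma - 2\gamma^*\sigma}.
\]
Combining with $|\tilde{\E}[d(Y)] - \hat d| \le \alpha \hat d$ and $|d(A^*) - \dnull| \le 0.1\,\dnull$ (and choosing $\alpha$ and the two factors above all at most small absolute constants) gives $|\hat d / \dnull - 1| \le 0.5$, as required.

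The main obstacle I foresee is purely bookkeeping: ensuring that the three sources of multiplicative slack — the $0.1$ factor from $|d(A^*) - \dnull| \le 0.1\dnull$, the factor $1/(1 - 2\gamma\sigma - 2\gamma^*\sigma)$ from \cref{lem:coarse-sos}, and the $(1\pm\alpha)$ factor from the score constraint — compose into the claimed $0.5$ relative error, and that the constants $C_1, C_2$ in the hypotheses are chosen consistently to make each piece small. A secondary technicality is verifying that the score function is well-defined and that the exponential mechanism can be sampled efficiently (handled by \cref{remark:score_function_computation,remark:sampling}, plus the observation that $(Y = \tfrac{d}{n}\one\one^\top, z = \zero)$ is always feasible for $\gamma = 1$, so the score is always finite on $[0,n]$). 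No new random-graph input beyond \cref{lem:coarse-existence-proof} is needed.
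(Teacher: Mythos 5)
Your proposal follows essentially the same approach as the paper's proof: invoke \cref{lem:coarse-existence-proof} to get a feasible $(A^*,z^*)$ with $d(A^*)\approx\dnull$, use the volume argument (\cref{lem:sos_exp_mec_utility}) to show the sampled $\hat d$ has low score with high probability, and then feed the resulting pseudo-expectation into the identifiability proof \cref{lem:coarse-sos} and take pseudo-expectations. The paper merely wraps the last two steps in two auxiliary lemmas (\cref{lem:coarse_exp_mec_volume_of_low_score_points,lem:coarse_exp_mec_low_score_implies_utility}), which are trivial specializations of the tools you invoke directly; the substance is identical. One small omission: when you bound $2(\gamma+\gamma^*)\sigma$ by $16\eta\log(1/\eta)R + O(\log^2 n\cdot R/(\eps n))$, the second term requires $\log(1/\eta)=O(\log n)$, which needs the WLOG assumption $\eta\geq 1/(2n)$ (an $\eta$-corrupted graph with $\eta<1/n$ is uncorrupted) — the paper states this explicitly and you should too, but this is a one-line fix, not a gap.
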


Before proving \cref{lem:coarse_exp_mec_util}, we need the following two lemmas.

\begin{lemma}[Volume of low-score points]
    \label{lem:coarse_exp_mec_volume_of_low_score_points}
    Let $A\in\R^{n\times n}$ and $\eps>0$.
    Consider the distribution $\mu_{A,\eps}$ defined by \cref{eq:coarse_exp_mec}.
    Suppose $(Y=A^*, z=z^*)$ is a solution to $\cC(Y,z; A,\gamma^*,\sigma)$ and $d(A^*)\ge2$.
    Then for any $t\geq0$,
    \[
        \Pr_{ \rv{d} \sim \mu_{A,\eps} } \Paren{ s(\rv{d};A) \ge \gamma^* n + \frac{t\log n}{\eps} } 
        \le \frac{ n^{-t+1} }{\alpha} \,.
    \]
\end{lemma}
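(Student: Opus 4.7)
The plan is to apply the general volume-of-low-score-points lemma (\cref{lem:sos_exp_mec_utility}) with the instantiation $\cD = [0,n]$, $\cQ_1(Y,z;A,\gamma) \seteq \cC(Y,z;A,\gamma,\sigma)$ and $\cQ_2(Y;d) \seteq \Set{\Abs{d(Y)-d}\le\alpha d}$. With these choices the score function $s(d;A)$ coincides with the score function defined in \cref{eq:coarse_sos_score}, and $\mu_{A,\eps}$ coincides with the distribution in \cref{eq:coarse_exp_mec}.

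The core of the argument is a lower bound on the volume of the set of low-score candidates
\[
    \cG(A^*) \seteq \Set{d\in[0,n] \,:\, \Abs{d(A^*)-d} \le \alpha d} \,.
\]
Solving the defining inequality, $\cG(A^*)$ equals the intersection of $[0,n]$ with the interval $\bigl[\tfrac{d(A^*)}{1+\alpha},\, \tfrac{d(A^*)}{1-\alpha}\bigr]$, which has length $\tfrac{2\alpha\, d(A^*)}{1-\alpha^2}$. Assuming $\alpha$ is at most a small absolute constant (which the caller can always ensure) and using the hypothesis $d(A^*)\ge 2$, this length is at least $4\alpha$. A small additional check handles the intersection with $[0,n]$: since $d(A^*) \le n$, the right endpoint of the interval is close to $d(A^*)$, and at most a constant fraction of the interval can fall outside $[0,n]$, so the volume bound $\vol(\cG(A^*)) \ge 4\alpha$ still goes through (possibly after shrinking the constant).

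With $\vol(\cD) = n$ and $\vol(\cG(A^*)) \ge 4\alpha$ in hand, \cref{lem:sos_exp_mec_utility} yields
\[
    \Pr_{\rv d\sim\mu_{A,\eps}}\Paren{s(\rv d;A) \ge \gamma^* n + \tfrac{t\log n}{\eps}}
    \le \frac{\vol(\cD)}{\vol(\cG(A^*))} \cdot n^{-t}
    \le \frac{n}{4\alpha}\cdot n^{-t}
    \le \frac{n^{-t+1}}{\alpha} \,,
\]
which is the desired inequality.

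The only real obstacle is a careful verification that $\cG(A^*)$ has volume at least on the order of $\alpha$: we must ensure the relative interval $\bigl[\tfrac{d(A^*)}{1+\alpha}, \tfrac{d(A^*)}{1-\alpha}\bigr]$ is both nonempty (which is where $d(A^*)\ge 2$ enters, together with $\alpha$ being a small constant) and not unduly truncated by the window $[0,n]$. Everything else is a direct invocation of the general bound from \cref{sec:sos_exp_mechanism}.
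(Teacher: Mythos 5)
Your proposal is correct and takes essentially the same route as the paper: instantiate \cref{lem:sos_exp_mec_utility} with $\cD=[0,n]$ and $\cG(A^*)=\Set{d\in\cD : \Abs{d(A^*)-d}\le\alpha d}$, then lower-bound $\vol(\cG(A^*))$. The only place you diverge is the volume bound itself. You work with the full interval $[d(A^*)/(1+\alpha),\,d(A^*)/(1-\alpha)]$ and then hand-wave the truncation by $[0,n]$, which forces you to invoke an extra standing assumption that $\alpha$ is a small constant. The paper avoids both issues by observing that the left half $[d(A^*)/(1+\alpha),\,d(A^*)]$ is already a subset of $\cG(A^*)$, lies entirely inside $[0,n]$ since every entry of $A^*$ is in $[0,1]$ (so $d(A^*)\le n$), and has length $d(A^*)\alpha/(1+\alpha)\ge\alpha$ once you use the stated hypothesis $d(A^*)\ge2\ge1+\alpha$ (which already forces $\alpha\le1$, no extra smallness assumption needed). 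That small simplification removes the truncation check entirely and yields $\vol(\cG(A^*))\ge\alpha$ cleanly, from which the claimed bound $n\cdot n^{-t}/\alpha=n^{-t+1}/\alpha$ follows at once.
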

\begin{proof}
    Apply \cref{lem:sos_exp_mec_utility} with $\cD=[0,n]$ and
    \[
        \cG(A^*) = \Set{d\in\cD \,:\, \frac{d(A^*)}{1+\alpha} \le d \le \frac{d(A^*)}{1-\alpha}} \,.
    \]
    As $[d(A^*)/(1+\alpha), d(A^*)] \subseteq \cG(A^*)$ and $d(A^*) \ge 2 \ge 1+\alpha$, we have $\vol(\cG(A^*)) \ge \alpha$.
\end{proof}

\begin{lemma}[Low score implies utility]
    \label{lem:coarse_exp_mec_low_score_implies_utility}
    Let $A\in\R^{n\times n}$ and consider the score function $s(\cdot \,; A)$ defined in \cref{eq:coarse_sos_score}.
    Suppose $(Y=A^*, z=z^*)$ is a solution to $\cC(Y,z; A, \gamma^*, \sigma)$.
    For a scalar $d$ such that $s(d;A) \le \tau n$ and $(\gamma^* + \tau)\sigma \le 0.1$,
    \[
        \frac{0.8}{1+\alpha} d(A^*) 
        \leq d \leq 
        \frac{1.25}{1-\alpha} d(A^*) \,.
    \]
\end{lemma}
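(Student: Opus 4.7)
The plan is to chain together the sum-of-squares identifiability bound from \cref{lem:coarse-sos} with the proximity constraint $\abs{d(Y)-d}\le\alpha d$ built into the score function, then take pseudo-expectations to convert the SOS bounds into scalar bounds on $d$ versus $d(A^*)$.

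First I would unpack the hypothesis $s(d;A)\le\tau n$: by the definition in \cref{eq:coarse_sos_score}, there exists some $\gamma\le\tau$ and a level-$8$ pseudo-expectation $\tE$ satisfying the combined system $\cC(Y,z;A,\gamma,\sigma)\cup\Set{\Abs{d(Y)-d}\le\alpha d}$. The absolute-value constraint is understood as the pair of polynomial inequalities $d(Y)\le(1+\alpha)d$ and $d(Y)\ge(1-\alpha)d$, so taking pseudo-expectations yields $(1-\alpha)d\le\tE[d(Y)]\le(1+\alpha)d$.

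Next I would apply \cref{lem:coarse-sos} with the given feasible witness $(A^*,z^*)$ for $\cC(Y,z;A,\gamma^*,\sigma)$ and the parameter $\gamma$ produced above. The hypothesis $1-2\gamma\sigma-2\gamma^*\sigma>0$ needed by that lemma follows from our assumption $(\gamma^*+\tau)\sigma\le0.1$: since $\gamma\le\tau$, we have $2\gamma\sigma+2\gamma^*\sigma\le 2(\gamma^*+\tau)\sigma\le 0.2$, so the quantity $1-2\gamma\sigma-2\gamma^*\sigma\ge 0.8$. Consequently the SOS proof of \cref{lem:coarse-sos} gives
\[
    \cC(Y,z;A,\gamma,\sigma)\sststile{8}{Y,z}\; 0.8\cdot d(A^*)\le d(Y)\le \tfrac{1}{0.8}\cdot d(A^*) = 1.25\cdot d(A^*)\,.
\]
Taking the pseudo-expectation on both sides produces the scalar inequality $0.8\cdot d(A^*)\le \tE[d(Y)]\le 1.25\cdot d(A^*)$.

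Finally, I would combine the two pairs of bounds on $\tE[d(Y)]$. The upper chain $(1-\alpha)d\le\tE[d(Y)]\le 1.25\cdot d(A^*)$ rearranges to $d\le \tfrac{1.25}{1-\alpha}d(A^*)$, while the lower chain $0.8\cdot d(A^*)\le\tE[d(Y)]\le(1+\alpha)d$ rearranges to $d\ge\tfrac{0.8}{1+\alpha}d(A^*)$, giving exactly the claimed sandwich. There is no serious obstacle here; the only mildly delicate point is verifying that the constant $0.1$ in the hypothesis is chosen so that $1-2(\gamma^*+\tau)\sigma\ge 0.8$, which matches the $0.8$ and $1.25=1/0.8$ appearing in the conclusion.
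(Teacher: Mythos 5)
Your proposal is correct and matches the paper's proof essentially step for step: both invoke \cref{lem:coarse-sos} under the hypothesis $(\gamma^*+\tau)\sigma\le 0.1$ to get the degree-$8$ SOS bound $0.8\,d(A^*)\le d(Y)\le 1.25\,d(A^*)$, and then combine it with the constraint $\abs{d(Y)-d}\le\alpha d$ to sandwich $d$. You are merely slightly more explicit than the paper in spelling out the intermediate step of taking pseudo-expectations of the linear inequalities in $d(Y)$, which is a perfectly valid rendering of the same argument.
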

\begin{proof}
    Applying \cref{lem:coarse-sos} with $(\gamma^* + \tau)\sigma \le 0.1$, we have
    \[
        \cC(Y,z; A,\tau,\sigma)
        \; \sststile{8}{Y, z} \;
        0.8 d(A^*) \le d(Y) \le 1.25 d(A^*)
    \]
    Thus,
    \begin{align*}
        \cC(Y,z;A,\tau,\sigma) \cup \Set{\Abs{d(Y)-d} \le \alpha d}
        \; \sststile{8}{Y, z} \; 
        \frac{0.8}{1+\alpha} d(A^*) 
        \leq d \leq 
        \frac{1.25}{1-\alpha} d(A^*) \,.
    \end{align*}
\end{proof}

Now we are ready to prove \cref{lem:coarse_exp_mec_util}.

\begin{proof}[Proof of \cref{lem:coarse_exp_mec_util}]
    Let $A$ be a realization of $\eta$-corrupted $\bbG(n,\Qnull)$.
    By \cref{lem:coarse-existence-proof}, the following event happens with probability $1-n^{-\Omega(1)}$.
    There exists a solution $(Y=A^*, z=z^*)$ to $\cC(Y,z; A,\gamma^*,\sigma)$ with $\gamma^*=2\eta$, $\sigma = 2\log(1/\eta)R$, and $0.9\dnull \le d(A^*) \le 1.1\dnull$.
    
    As $d(A^*) \ge 0.9\dnull \ge 2$, then it follows by setting $t=10$ and $\alpha=0.01$ in \cref{lem:coarse_exp_mec_volume_of_low_score_points} that,
    \[
        \Pr_{\rv{d} \sim \mu_{A,\eps}} \Paren{s(\rv{d};A) \le \tau n} \ge 1 - n^{-9} 
        \text{ where } \tau \seteq 2\eta + 10\log(n)/(\eps n) \,.
    \]
    
    As an $\eta$-corrupted graph is actually uncorrupted when $\eta<1/n$, we can assume $\eta\ge1/(2n)$ without loss of generality.
    Thus,
    \[
        (2\eta+\tau)\sigma \le 8\eta\log(1/\eta)R + \frac{20\log^2(n)R}{n\eps}  \,.
    \]
    For $\eta\log(1/\eta)R$ and $\log^2(n) R / (\eps n)$ smaller than some constant, we have $(2\eta+\tau) \sigma \le 0.1$.
    Let $\hat d$ be a scalar such that $s(\hat{d}; A) \le \tau n$.
    Then by \cref{lem:coarse_exp_mec_low_score_implies_utility},
    \[
        \frac{0.8}{1+\alpha} d(A^*) \leq \hat{d} \leq \frac{1.25}{1-\alpha} d(A^*) \,.
    \]
    Plugging in $\alpha\le0.01$ and $0.9\dnull \le d(A^*) \le 1.1\dnull$, we have
    \[
        0.5\dnull \le \hat{d} \le 1.5\dnull \,.
    \]
\end{proof}

\paragraph{Proof of \cref{thm:coarse_estimation_inhomo}}
By \cref{lem:coarse_exp_mec_privacy} and \cref{lem:coarse_exp_mec_util}.

\section{Fine estimation for inhomogeneous random graphs}
\label{sec:fine_estimation_inhmomo}

From \cref{sec:coarse_estimation}, we have a constant multiplicative approximation of the expected average degree $\dnull$.
In this section, we show how to use this coarse estimate to obtain our fine estimator for inhomogeneous random graphs.

\begin{theorem}[Fine estimation for inhomogeneous random graphs]
    \label{thm:fine_estimation_inhomo}
    Let $\Qnull$ be an $n$-by-$n$ edge connection probability matrix and let $\dnull \seteq d(\Qnull)$. 
    Suppose $\Norm{\Qnull}_{\infty} \le R\dnull / n$ for some $R$.
    There is a sufficiently small constant $c$ such that the following holds.
    For any $\eta$ such that $\eta\log(1/\eta)R \le c$, there exists a polynomial-time $\eps$-differentially node private algorithm which, given an $\eta$-corrupted inhomogeneous random graph $\bbG(n,\Qnull)$ and a constant-factor approximation of $\dnull$, outputs an estimate $\tilde{d}$ satisfying
    \[
        \Abs{\frac{\tilde{d}}{\dnull}-1} \le 
        O \Paren{\sqrt{\frac{\log n}{\dnull n}} +
        \frac{R \log^2 n}{\eps n} +
        \eta\log(1/\eta)R} \,,
    \]
    with probability $1-n^{-\Omega(1)}$.
\end{theorem}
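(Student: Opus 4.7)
I would follow the three-stage template sketched in \cref{sec:proof-sketch-fine-estimation}, generalized to inhomogeneous random graphs: (i) set up a polynomial system $\cT(Y,z;A,\gamma)\cup\cR(Y)$ coupling the node-distance constraints of \cref{sec:sos_exp_mechanism} with regularity conditions $\cR$ calibrated by the coarse estimate $\hat d$ from \cref{thm:coarse_estimation_inhomo}; (ii) prove feasibility of $\cR$ on (a mildly degree-pruned version of) the uncorrupted graph and give a degree-$8$ SOS identifiability bound for $d(Y)$; and (iii) wrap the system inside the SOS exponential mechanism of \cref{sec:sos_exp_mechanism}. Privacy is inherited from \cref{lem:sos_exp_mec_privacy} and composed with the coarse stage. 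Because $\hat d$ enters $\cR$ purely as a known scale, conditioning on $\hat d = \Theta(\dnull)$ (which happens with probability $1-n^{-\Omega(1)}$ by \cref{thm:coarse_estimation_inhomo}) decouples the two stages.

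\textbf{Regularity and feasibility.} The proposed constraints are the inhomogeneous analog of \cref{eq:regularity_fine}:
\[
    \cR(Y) \seteq \Set{(Y\one)_i \le C_1 R\hat d\ \text{for all}\ i\in[n]\,;\ \Normop{Y - \tfrac{d(Y)}{n}\one\one^\top} \le C_2\sqrt{R\hat d \log n}} \,,
\]
where both bounds are inflated by $R$ to accommodate $\Normop{\Qnull}\le R\dnull$. Feasibility of $(\Anull,z^\circ)$, modulo a light degree-pruning step as in \cref{lem:coarse-existence-proof} for the sparse regime, follows from a Bernstein bound on each row sum of $\Anull$, the matrix concentration $\Normop{\Anull-\Qnull}\lesssim\sqrt{R\dnull\log n}$ of \cite{bandeira2016sharp}, and the triangle inequality $\Normop{\Qnull - \tfrac{\dnull}{n}\one\one^\top}\le\Normop{\Qnull}\le R\dnull$.

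\textbf{Identifiability via degree-$8$ SOS.} Setting $w = z\odot z^*$ so that $Y\odot ww^\top = A^*\odot ww^\top$ and $\iprod{\one,w}\ge(1-O(\gamma))n$, the key identity is
\[
    n\Paren{d(Y)-d(A^*)} = 2\iprod{Y-A^*,\,\one(\one-w)^\top} - \iprod{Y-A^*,\,(\one-w)(\one-w)^\top} \,.
\]
The cross term is bounded by $O(R\hat d\cdot\iprod{\one,\one-w})$ via the per-node degree constraint applied to both $Y$ and $A^*$. For the square term, the triangle inequality $\Normop{Y-A^*}\le 2C_2\sqrt{R\hat d\log n}+\Abs{d(Y)-d(A^*)}$ reduces it to the spectral constraint specialized at $v=\one-w$, yielding $O\bigl((\sqrt{R\hat d\log n}+\Abs{d(Y)-d(A^*)})\Norm{\one-w}^2\bigr)$. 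Moving the $\Abs{d(Y)-d(A^*)}$ feedback to the left, dividing by $\dnull$, and substituting $\hat d = \Theta(\dnull)$ yields, after a pruning refinement in the sparse regime, $\Abs{d(Y)/\dnull - 1} \le O(\gamma R\log(1/\gamma) + \sqrt{\log n/(\dnull n)})$. Each of these bounds is a linear consequence of the constraints, and since $w$ is quadratic in $(z,z^*)$ and the spectral constraint is only invoked at one specific $v$, the entire derivation fits within degree $8$ in $(Y,z)$.

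\textbf{Conclusion and main obstacle.} Given identifiability, \cref{lem:sos_exp_mec_utility} shows that a sample $\rv d\sim\mu_{A,\eps}$ has score at most $\eta n + O(\log n/\eps)$ with probability $1-n^{-\Omega(1)}$; plugging in $\gamma = \eta + O(\log n/(\eps n))$ then produces the three-term rate $O(\sqrt{\log n/(\dnull n)} + R\log^2 n/(\eps n) + R\eta\log(1/\eta))$ claimed in the theorem. The main technical obstacle, as in the ER fine stage, is making the operator-norm constraint SOS-friendly: it must be encoded as the polynomial family $\tau\Norm{v}^2 - v^\top\Paren{Y-\tfrac{d(Y)}{n}\one\one^\top}v \ge 0$ for every $v\in\R^n$ and then instantiated at $v=\one-w$, which is quartic in $(z,z^*)$ and bilinear in $Y$; the certificate must absorb this into degree $8$ while preserving every $R$-factor through the lift. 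The remaining care lies in verifying that the sparse-regime pruning produces a deterministic witness satisfying $(Y\one)_i\le C_1 R\hat d$ without too large an edge perturbation, analogous to \cref{lem:coarse-existence-proof}.
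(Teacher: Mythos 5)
\textbf{There is a genuine gap: the spectral-norm regularity constraint you propose is infeasible for inhomogeneous random graphs.} Your $\cR(Y)$ includes $\Normop{Y - \tfrac{d(Y)}{n}\one\one^\top} \le C_2\sqrt{R\hat d\log n}$, and your own feasibility sketch triangle-inequalities through $\Normop{\Qnull - \tfrac{\dnull}{n}\one\one^\top}\le\Normop{\Qnull}\le R\dnull$. But $R\dnull$ is not $O(\sqrt{R\hat d\log n})$ in general --- it exceeds the budget whenever $R\dnull\gg\log n$, i.e. for most of the parameter regime the theorem covers. Unlike \ER random graphs, the \emph{expectation} matrix $\Qnull$ itself can have spectral norm $\Theta(R\dnull)$ around the uniform matrix (e.g.\ a two-block stochastic block model), so the witness $(\Anull, z^\circ)$ or even a degree-pruned version of it does not satisfy your constraint. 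This is precisely why the paper's \cref{thm:fine_estimation_inhomo} has error terms $R\log^2 n/(\eps n)$ and $R\eta\log(1/\eta)$ without the $1/\sqrt{\dnull}$ improvement that appears in the \ER fine estimate (\cref{thm:fine_estimation_er}); the spectral constraint is only available there.

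Fortunately the constraint is also unnecessary. In your identity
\[
    n\Paren{d(Y)-d(A^*)} = 2\iprod{Y-A^*,\,\one(\one-w)^\top} - \iprod{Y-A^*,\,(\one-w)(\one-w)^\top} \,,
\]
the square term can be controlled by the degree constraint alone: using $0\le A^*\le\one\one^\top$ and $0\le 1-w_j\le 1$, one has $\iprod{A^*,(\one-w)(\one-w)^\top}\le\iprod{A^*\one,\one-w}\le\sigma\hat d\iprod{\one,\one-w}$, and $\iprod{Y,(\one-w)(\one-w)^\top}\ge 0$ lets you discard that piece when proving each one-sided inequality. This is exactly how the paper's \cref{lem:fine-inhomo-sos} proceeds, using only the max-degree constraint $(Y\one)_i\le\sigma\hat d$ with $\sigma=\Theta(R\log(1/\eta))$ (the $\log(1/\eta)$ factor, which you should also carry in your degree bound after pruning, comes from \cref{lem:degree_pruning}). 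If you drop the spectral constraint and make these two repairs, the rest of your outline --- pruned-witness feasibility via \cref{lem:degree_distribution,lem:degree_pruning,lem:average_degree_concentration}, the degree-$8$ SOS certificate, and the exponential mechanism wrap-up via \cref{lem:sos_exp_mec_privacy,lem:sos_exp_mec_utility} --- matches the paper's proof.
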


We make a few remarks on \cref{thm:fine_estimation_inhomo}.
\begin{itemize}
    \item Our algorithm in \cref{thm:fine_estimation_inhomo} is a sum-of-squares exponential mechanism. $R,\eta,\eps$ are parameters given as input to our algorithm.
    
    \item We can get an estimate of $p^{\circ}$ by taking $\tilde{p} = \frac{\tilde{d}}{n-1}$. Since $\frac{\tilde{p}}{\pnull} = \frac{\tilde{d}}{\dnull}$, it follows that
    \[
        \Abs{\frac{\tilde{p}}{\pnull}-1} \le 
        O \Paren{\sqrt{\frac{\log n}{\dnull n}} +
        \frac{R \log^2 n}{\eps n} +
        \eta\log(1/\eta)R} \,.
    \]

    \item Combining \cref{thm:coarse_estimation_inhomo} and \cref{thm:fine_estimation_inhomo} gives us an efficient, private, and robust edge density estimation algorithm for inhomogeneous random graphs whose utility guarantee is information-theoretically optimal up to a factor of  $\log n$ and $\log(1/\eta)$.
\end{itemize}

In \cref{sec:fine-inhomo-sos}, we set up polynomial systems that our algorithm uses and prove useful sos inequalities.
In \cref{sec:robust_fine_estimation_inhomo}, we show that we can easily obtain a robust algorithm via sos proofs in \cref{sec:fine-inhomo-sos}.
Then in \cref{sec:private_fine_estimation_inhomo}, we describe our algorithm and prove \cref{thm:fine_estimation_inhomo}.

\subsection{Sum-of-squares}
\label{sec:fine-inhomo-sos}

For an adjacency matrix $A$ and nonnegative scalars $\gamma$, $\sigma$ and $\hat{d}$, consider the following polynomial systems with indeterminates $Y=(Y_{ij})_{i,j\in[n]}$, $z=(z_i)_{i\in[n]}$ and coefficients that depend on $A,\gamma,\sigma,\hat{d}$:
\begin{gather}
  \cP_1(Y,z; A,\gamma) \seteq 
  \left. 
    \begin{cases}
      z \odot z = z,\, \iprod{\one, z} \ge (1-\gamma)n \\
      0 \leq Y \leq \one \one^{\top},\, Y=Y^\top \\
      Y \odot zz^\top = A \odot zz^\top
    \end{cases}
  \right\} \,, \label{poly_sys:fine_inhomo_subgraph_intersect}
  \\
  \cP_3(Y; \sigma, \hat{d}) \seteq
  \left. 
    \begin{cases}
      (Y\one)_i \le \sigma \hat{d} & \forall i\in[n]
    \end{cases}
  \right\} \,. \label{poly_sys:fine_inhomo_max_degree_bound}
\end{gather}
For convenience of notation, we will consider the following combined polynomial system in the remaining of this section
\begin{equation}
\label{poly_sys:fine_inhomo_subgraph_intersect_and_max_degree_bound}
    \cD(Y,z;A,\gamma,\sigma, \hat{d}) \seteq \cP_1(Y,z; A,\gamma) \,\cup\, \cP_3(Y; \sigma, \hat{d}) \,.
\end{equation}

\begin{lemma}
\label{lem:fine-inhomo-sos}
    If $(A^*, z^*)$ is a feasible solution to $\cD(Y,z;A,\gamma^*,\sigma, \hat{d})$, then it follows that
    \begin{equation*}
        \cD(Y,z;A,\gamma,\sigma, \hat{d})
        \sststile{8}{Y, z} |d(Y) - d(A^*)| \leq 2 (\gamma + \gamma^*) \sigma \hat{d} \,.
    \end{equation*}
\end{lemma}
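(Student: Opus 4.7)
The plan is to mirror the structure of \cref{lem:coarse-sos}, but to exploit the fact that here the degree upper bound $\sigma\hat{d}$ is a constant independent of $Y$. This should eliminate the multiplicative $d(Y)$-dependence that forced the awkward denominator $1-2\gamma\sigma-2\gamma^*\sigma$ in the coarse lemma and leave a clean additive bound.

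First, I would introduce the SOS-indeterminate $w \seteq z \odot z^*$, which plays the role of the indicator of the set of nodes that are uncorrupted both under $(Y,z)$ and under $(A^*, z^*)$. From the constraint $Y \odot zz^\top = A \odot zz^\top$ (an axiom of $\cD(Y,z;A,\gamma,\sigma,\hat d)$) together with the hypothesis $A^* \odot z^*(z^*)^\top = A \odot z^*(z^*)^\top$, a few lines of SOS manipulation give the identity $Y\odot ww^\top = A^* \odot ww^\top$, exactly as in the coarse case. Hence
\[
n\bigl(d(Y)-d(A^*)\bigr) \;=\; \Iprod{Y - A^*,\, \one\one^\top} \;=\; \Iprod{Y - A^*,\, \one\one^\top - ww^\top}.
\]

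Next, I would bound the two resulting pieces $\iprod{Y,\one\one^\top - ww^\top}$ and $\iprod{A^*,\one\one^\top - ww^\top}$ separately, using the decomposition $\one\one^\top - ww^\top = (\one-w)\one^\top + w(\one-w)^\top$ to rewrite each as $\iprod{Y\one,\one-w} + \iprod{Yw,\one-w}$ (and similarly for $A^*$). The max-degree constraint $(Y\one)_i \le \sigma\hat d$ in $\cP_3$ together with the SOS facts that $0 \le w_j \le 1$ (which follow from $z^2 = z$ and the fixed $z^*_i \in \{0,1\}$) and $Y_{ij}\ge 0$ give $(Yw)_i \le (Y\one)_i \le \sigma\hat d$. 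Each of the two inner products is therefore dominated by $\sigma\hat d\cdot \iprod{\one,\one-w}$, and the same holds on the $A^*$ side using its own degree bound. Finally, the SOS union-bound lemma \cref{lem:preliminary-SOS-set-union-bound} combined with $\iprod{\one,z} \ge (1-\gamma)n$ and $\iprod{\one,z^*} \ge (1-\gamma^*)n$ yields $\iprod{\one,\one-w} \le (\gamma+\gamma^*)n$.

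Putting these bounds together gives $d(Y) - d(A^*) \le 2(\gamma+\gamma^*)\sigma\hat d$, and the reverse inequality follows by swapping the roles of $Y$ and $A^*$ in the same argument (both use only the symmetric degree constraint, which holds for both). The only place where one has to be careful is verifying that each cited monotonicity step admits a \emph{degree-$8$} SOS certificate — in particular that $Y_{ij}(1-w_j)\ge 0$, $w_j = w_j^2$, and the union-bound inequality are all provable within the given degree budget. This is the only real obstacle, but it is routine: all of these facts are quadratic identities or products of nonnegative axioms, so the overall proof fits comfortably inside degree $8$, matching the bound in \cref{lem:coarse-sos}.
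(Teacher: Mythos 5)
Your proposal is correct and matches the paper's proof in all essentials: same auxiliary variable $w = z\odot z^*$, same identity $Y\odot ww^\top = A^*\odot ww^\top$, same use of the degree constraint on $Y\one$ and of the cardinality lower bounds on $\iprod{\one,z}$, $\iprod{\one,z^*}$ via the SOS union bound. The paper writes the error term as $\Iprod{Y, 2(\one-w)\one^\top - (\one-w)(\one-w)^\top}$ and drops the nonnegative subtracted piece, while you write it as $\Iprod{Y\one,\one-w} + \Iprod{Yw,\one-w}$ and bound the second by the first using $(Yw)_i \le (Y\one)_i$; these are algebraically the same step, so this is purely cosmetic.
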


\begin{proof}
Let $w = z \odot z^*$. Using similar analysis as in the proof of \cref{lem:coarse-sos}, it follows that
\begin{equation*}
    \cD
    \sststile{4}{Y, z} Y \odot w w^{\top} = A^* \odot w w^{\top} \,,
\end{equation*}
and,
\begin{align*}
    \cD
    \sststile{8}{Y, z} n \cdot d(Y)
    & = \Iprod{Y,\one\one^\top} \\
    & = \Iprod{A^*,w w^\top} + \Iprod{Y,2(\one-w)\one^{\top}} - \Iprod{Y,(\one-w)(\one-w)^{\top}} \\
    & \leq \Iprod{A^*,\one \one^\top} + \Iprod{Y \one,2(\one-w)} \\
    & \leq n \cdot d(A^*) + 2 (\gamma + \gamma^*) \sigma n \hat{d} \,.
\end{align*}
By rearranging the terms, we have
\begin{equation*}
    \cD
    \sststile{8}{Y, z}
    d(Y) - d(A^*) \leq 2 (\gamma + \gamma^*) \sigma \hat{d} \,.
\end{equation*}
Swapping the roll of $Y$ and $A^*$, we can also get
\begin{equation*}
    \cD
    \sststile{8}{Y, z}
    d(A^*) - d(Y) \leq 2 (\gamma + \gamma^*) \sigma \hat{d} \,.
\end{equation*}
This completes the proof.
\end{proof}

\begin{lemma}
\label{lem:fine-inhomo-existence-proof}
  Let $Q^{\circ}$ be an $n$-by-$n$ edge connection probability matrix and $d^{\circ} \seteq d(Q^{\circ})$. Suppose $\Norm{Q^{\circ}}_{\infty} \le R d^{\circ} / n$ for $R \in \R$.
  Let $A$ be an $\eta$-corrupted adjacency matrix of a random graph $\rv G^{\circ} \sim \bbG(n,Q^{\circ})$.
  Suppose $\eta\log(1/\eta)R \le C_1$ for some constant $C_1$ that is small enough.
  With probability $1-n^{-\Omega(1)}$, there exists $A^*$ and $z^*$ such that
    \begin{itemize}
        \item $|d(A^*) - d^{\circ}| \leq 10 \sqrt{\frac{d^{\circ} \log n}{n}} + 2 \eta \log(1/\eta) R d^{\circ}$.
        \item $(A^*, z^*)$ is a feasible solution to $\cD(Y,z;A,\gamma,\sigma, \hat{d})$ with $\eta$-corrupted $A$, $\gamma = 2\eta$, $\sigma = 10 \log(1/\eta) R$ and $\hat{d} \geq \frac{1}{2} d^{\circ}$.
    \end{itemize}
\end{lemma}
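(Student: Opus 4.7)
}

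The plan is to adapt the construction used in \cref{lem:coarse-existence-proof} with essentially the same ingredients: concentration of the average degree of $\rv A^\circ$, and the degree-pruning lemma applied to $\rv A^\circ$. Specifically, I would let $\rv A^\circ$ denote the adjacency matrix of $\rv G^\circ$, let $z^\circ\in\set{0,1}^n$ denote the indicator of uncorrupted nodes, and apply \cref{lem:degree_distribution,lem:degree_pruning} to $\rv A^\circ$ to obtain (with probability $1-n^{-\Omega(1)}$) a degree-pruned matrix $\tilde A$ satisfying the three guarantees: maximum degree at most $\log(1/\eta)Rd^\circ$, at most $\eta n$ pruned nodes, and at most $2\eta\log(1/\eta)nRd^\circ$ pruned edges. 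Let $\tilde z\in\set{0,1}^n$ indicate the unpruned nodes and set
\[
A^* \seteq \tilde A, \qquad z^* \seteq z^\circ \odot \tilde z \,.
\]

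For the first bullet, I would combine the average-degree concentration bound $|d(\rv A^\circ)-d^\circ|\le 10\sqrt{d^\circ \log n/n}$ from \cref{lem:average_degree_concentration} with the edge-pruning bound $|d(\tilde A)-d(\rv A^\circ)|\le 2\eta\log(1/\eta)Rd^\circ$, then apply the triangle inequality to obtain exactly the claimed deviation bound on $|d(A^*)-d^\circ|$.

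For the second bullet, I would verify each constraint in $\cD(Y,z;A,2\eta,10\log(1/\eta)R,\hat d)$ separately. The conditions $z^*\odot z^*=z^*$ (since both $z^\circ$ and $\tilde z$ are $\set{0,1}$-valued), $0\le A^* \le \one\one^\top$, $A^*=(A^*)^\top$ and $\iprod{\one,z^*}\ge (1-2\eta)n$ (since each of $z^\circ,\tilde z$ has at most $\eta n$ zero coordinates) are all immediate. The subgraph-intersection identity $A^*\odot z^*(z^*)^\top = A\odot z^*(z^*)^\top$ follows because $\tilde A$ and $\rv A^\circ$ agree on the unpruned block, and $A$ and $\rv A^\circ$ agree on the uncorrupted block, so both equal $\rv A^\circ\odot z^*(z^*)^\top$ after Hadamard-multiplying by $z^*(z^*)^\top = \tilde z\tilde z^\top \odot z^\circ(z^\circ)^\top$. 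Finally, for the max-degree constraint $\cP_3$, the pruning guarantee gives $(A^*\one)_i = (\tilde A\one)_i \le \log(1/\eta)Rd^\circ$, and the hypothesis $\hat d \ge d^\circ/2$ together with $\sigma=10\log(1/\eta)R$ yields $\sigma \hat d \ge 5\log(1/\eta)Rd^\circ \ge (A^*\one)_i$ with room to spare.

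The proof has no real obstacle since it is a straightforward adaptation of \cref{lem:coarse-existence-proof}; the only simplification is that the degree bound here is $(Y\one)_i\le \sigma\hat d$ rather than $\sigma\, d(Y)$, which decouples the verification of the max-degree constraint from the value of $d(A^*)$ and only requires the lower bound $\hat d\ge d^\circ/2$. The assumption $\eta\log(1/\eta)R\le C_1$ is used only to keep the pruning guarantees meaningful (in particular, $\eta\le 1/e$ so $\log(1/\eta)\ge 1$) and to ensure that concentration and pruning results apply.
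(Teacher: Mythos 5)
Your proposal follows the paper's proof essentially verbatim: same choice $A^*=\tilde A$, $z^*=z^\circ\odot\tilde z$ after applying \cref{lem:degree_distribution} and \cref{lem:degree_pruning}, same triangle-inequality argument combining \cref{lem:average_degree_concentration} with the edge-pruning bound for the first bullet, and the same constraint-by-constraint verification for the second bullet (including the observation that $\sigma\hat d \ge 5\log(1/\eta)R d^\circ$ dominates the pruned max degree). The only cosmetic difference is that you comment explicitly on why the $\hat d$-based degree constraint decouples the verification from $d(A^*)$; this is consistent with, and accurately reflects, what the paper does.
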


\begin{proof}
  Let $\rv A^{\circ}$ be the adjacency matrix of $\rv G^{\circ}$ and $z^{\circ} \in \set{0, 1}^n$ denote the set of uncorrupted nodes ($z^{\circ}_i = 1$ if and only if node $i$ is uncorrpted).
  
  By \cref{lem:degree_distribution} and \cref{lem:degree_pruning}, we know that, with probability $1-n^{-\Omega(1)}$, there exists a degree-pruned adjacency matrix $\Tilde{A}$ such that
  \begin{enumerate}
      \item $\Norm{\Tilde{A} \one}_{\infty} \leq \log(1/\eta) R d^{\circ}$.
      \item At most $\eta n$ nodes are pruned.
      \item At most $2 \eta \log(1/\eta) n R d^{\circ}$ edges are pruned.
  \end{enumerate}
  Let $\Tilde{z} \in \set{0, 1}^n$ denote the set of unpruned nodes ($\Tilde{z}_i = 1$ if and only if node $i$ is not pruned). We will show that $A^* = \Tilde{A}$ and $z^* = z^{\circ} \odot \Tilde{z}$ satisfies the lemma.
  
  \paragraph{Guarantee 1}
  By \cref{lem:average_degree_concentration}, we know that, with probability $1-n^{-\Omega(1)}$,
  \begin{equation}
  \label{eq:fine-inhomo-robustness-1}
      |d(\rv A^{\circ}) - d^{\circ}| \leq 10 \sqrt{\frac{d^{\circ} \log n}{n}} \,.
  \end{equation}
  From degree pruning guarantee (3), we have that
  \begin{equation}
  \label{eq:fine-inhomo-robustness-2}
      |d(\Tilde{A}) - d(\rv A^{\circ})| \leq 2 \eta \log(1/\eta) R d^{\circ} \,.
  \end{equation}
  Combining \cref{eq:fine-inhomo-robustness-1} and \cref{eq:fine-inhomo-robustness-2}, we have
  \begin{align*}
      |d(\Tilde{A}) - d^{\circ}|
      & \leq |d(\Tilde{A}) - d(A^{\circ})| + |d(A^{\circ}) - d^{\circ}| \\
      & \leq 10 \sqrt{\frac{d^{\circ} \log n}{n}} + 2 \eta \log(1/\eta) R d^{\circ} \,.
  \end{align*}
  
  \paragraph{Guarantee 2}
  It is easy to check that $z^* \odot z^* = z^*$, $0 \leq A^* \leq \one \one^{\top}$ and $A^* = (A^*)^\top$. Since $\iprod{\one, \Tilde{z}} \geq 1 - \eta n$ by degree pruning condition (2) and $\iprod{\one, z^{\circ}} \geq 1 - \eta n$ by corruption rate, it is easy to verify that
  \begin{equation*}
      \iprod{\one, z^*} \geq 1 - 2 \eta n \,.
  \end{equation*}
  Moreover, we have $A^* \odot z^* (z^*)^{\top} = A \odot z^* (z^*)^{\top}$ due to
  \begin{equation*}
      \Tilde{A} \odot \Tilde{z} \Tilde{z}^{\top} \odot z^{\circ} (z^{\circ})^{\top}
      = \rv A^{\circ} \odot \Tilde{z} \Tilde{z}^{\top} \odot z^{\circ} (z^{\circ})^{\top}
      = \rv A^{\circ} \odot z^{\circ} (z^{\circ})^{\top} \odot \Tilde{z} \Tilde{z}^{\top}
      = A \odot z^{\circ} (z^{\circ})^{\top} \odot \Tilde{z} \Tilde{z}^{\top} \,.
  \end{equation*}
  By degree pruning condition (1), we have
  \begin{equation*}
      (A^* \one)_i \leq \log(1/\eta) R d^{\circ} \leq \sigma \hat{d} \,.
  \end{equation*}
  for all $i \in [n]$.
  
  Thus, $(A^*, z^*)$ is a feasible solution to $\cD(Y,z;A,\gamma,\sigma, \hat{d})$ with $\gamma = 2\eta$, $\sigma = 10 \log(1/\eta) R$ and $\hat{d} \geq \frac{1}{2} d^{\circ}$.
\end{proof}

\subsection{Robust algorithm}
\label{sec:robust_fine_estimation_inhomo}

In this section, we show that the following algorithm based on sum-of-squares proofs in \cref{sec:fine-inhomo-sos} obtains a robust approximation of $\dnull$ that is optimal up to logarithmic factors.

\begin{algorithmbox}[Robust fine estimation algorithm for inhomogeneous random graphs]
    \label{alg:robust-fine-inhomo}
    \mbox{}\\
    \textbf{Input:}
        $\eta$-corrupted adjacency matrix $A$, corruption fraction $\eta$ and parameter $R$.

    \noindent
    \textbf{Algorithm:}
    \begin{enumerate}
        \item Obtain coarse estimator $\hat{d}$ by applying \cref{alg:robust-coarse} with $A, \eta, R$ as input.
        \item  Obtain level-$8$ pseudo-expectation $\tilde{\E}$ by solving sum-of-squares relaxation of program $\cD(Y,z;A,\gamma,\sigma, \hat{d})$ (defined in \cref{poly_sys:fine_inhomo_subgraph_intersect_and_max_degree_bound}) with $A$, $\gamma = 2\eta$, $\sigma = 10 \log(1/\eta) R$ and $\hat{d}$.
    \end{enumerate}
    
    \noindent
    \textbf{Output:} $\tilde{\E}[d(Y)]$
\end{algorithmbox}

\begin{theorem}[Robust fine estimation for inhomogeneous random graphs]
    \label{thm:robust-fine-estimation-inhomo}
    Let $\Qnull$ be an $n$-by-$n$ edge connection probability matrix and let $\dnull\seteq d(\Qnull)$.
    Suppose $\normi{\Qnull} \le R\dnull/n$ for some $R$.
    Let $A$ be an $\eta$-corrupted adjacency matrix of a random graph $\rv G^{\circ} \sim \bbG(n,Q^{\circ})$.
    Suppose $\eta\log(1/\eta)R \le c$ for some constant $c$ that is small enough.
    With probability $1-n^{-\Omega(1)}$, \cref{alg:robust-fine-inhomo} outputs an estimate $\tilde{d}$ satisfying
    \begin{equation*}
        \Abs{\frac{\tilde{d}}{\dnull}-1} \le 
        O \Paren{\sqrt{\frac{\log n}{\dnull n}} + \eta\log(1/\eta)R} \,.
    \end{equation*}
\end{theorem}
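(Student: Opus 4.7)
The plan is to chain together three ingredients that are already assembled in the excerpt: (i) the coarse estimator guarantee from \cref{thm:robust-estimation-coarse-estimation} (really \cref{thm:robust-coarse-estimation}), (ii) the existence of a feasible solution close in average degree to $\dnull$ from \cref{lem:fine-inhomo-existence-proof}, and (iii) the sum-of-squares identifiability proof in \cref{lem:fine-inhomo-sos}. Since the algorithm runs the coarse stage first, I would invoke \cref{thm:robust-coarse-estimation} to conclude that with probability $1-n^{-\Omega(1)}$ the coarse estimate satisfies $\tfrac{1}{2}\dnull \le \hat d \le \tfrac{3}{2}\dnull$. This is exactly the hypothesis needed to apply \cref{lem:fine-inhomo-existence-proof} with $\gamma^* = 2\eta$ and $\sigma = 10\log(1/\eta)R$.

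Next I would condition on the high-probability event that \cref{lem:fine-inhomo-existence-proof} produces a solution $(A^*, z^*)$ to $\cD(Y,z; A, 2\eta, \sigma, \hat d)$ with
\[
    \Abs{d(A^*) - \dnull} \le 10\sqrt{\frac{\dnull \log n}{n}} + 2\eta\log(1/\eta) R \dnull \,.
\]
Given the existence of this feasible $(A^*, z^*)$ with parameter $\gamma^* = 2\eta$, \cref{lem:fine-inhomo-sos} instantiated with $\gamma = 2\eta$ yields the degree-$8$ sum-of-squares proof
\[
    \cD(Y,z; A, 2\eta, \sigma, \hat d) \;\sststile{8}{Y,z}\; \Abs{d(Y) - d(A^*)} \le 2(2\eta + 2\eta)\sigma \hat d = 80\,\eta\log(1/\eta) R \hat d \,.
\]
Taking pseudo-expectations on both sides (which preserves SoS-provable inequalities) and using $\hat d \le \tfrac{3}{2}\dnull$, I obtain
\[
    \Abs{\tE [d(Y)] - d(A^*)} \le 120\,\eta\log(1/\eta) R \dnull \,.
\]

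Combining the two displays via the triangle inequality and dividing by $\dnull$ gives the claimed bound
\[
    \Abs{\frac{\tilde d}{\dnull} - 1} \le O\Paren{\sqrt{\frac{\log n}{n\dnull}} + \eta\log(1/\eta) R} \,,
\]
where the first term absorbs the $10\sqrt{\dnull\log n/n}$ sampling deviation after dividing by $\dnull$, and the second term absorbs both the $2\eta\log(1/\eta) R\dnull$ existence-approximation slack and the $120\eta\log(1/\eta) R\dnull$ identifiability slack. No step requires genuinely new work: all the heavy lifting is already packaged into \cref{lem:fine-inhomo-existence-proof} and \cref{lem:fine-inhomo-sos}. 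The only mild subtlety is ensuring that the high-probability events for the coarse estimator, the average-degree concentration, the degree-distribution control from \cref{lem:degree_distribution}, and the existence of $A^*$ simultaneously hold, which follows from a union bound since each fails with probability $n^{-\Omega(1)}$ and there are only a constant number of them. No new obstacles arise here because the polynomial systems $\cP_1$ and $\cP_3$ were designed precisely so that both the identifiability proof and the construction of $(A^*, z^*)$ go through at degree $8$.
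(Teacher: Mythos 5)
Your proposal is correct and follows essentially the same route as the paper: apply \cref{thm:robust-coarse-estimation} to pin down $\hat d$ within constant factors, use \cref{lem:fine-inhomo-existence-proof} to get a feasible $(A^*, z^*)$ with $\gamma^*=2\eta$ and $\sigma = 10\log(1/\eta)R$, instantiate \cref{lem:fine-inhomo-sos} at $\gamma = 2\eta$, take pseudo-expectations, and combine with the triangle inequality. The arithmetic matches the paper's ($80\eta\log(1/\eta)R\hat d \le 120\eta\log(1/\eta)R\dnull$ after plugging in $\hat d \le \tfrac{3}{2}\dnull$), and your remark about taking a union bound over the constantly many high-probability events is the right way to make rigorous what the paper leaves implicit.
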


\begin{proof}
By \cref{thm:robust-coarse-estimation}, we have $\frac{1}{2}d^{\circ} \leq \hat{d} \leq \frac{3}{2} d^{\circ}$. Let $\gamma^* = 2\eta$, by \cref{lem:fine-inhomo-sos} and \cref{lem:fine-inhomo-existence-proof}, it follows that
  \begin{align*}
    \cD(Y,z;A,\gamma,\sigma, \hat{d})
    \sststile{8}{Y, z}
    |d(Y) - d(A^*)| & \leq 2 (\gamma + \gamma^*) \sigma \hat{d} \\
    & \leq 2 \cdot 4 \eta \cdot 10 \log(1/\eta) R \cdot \frac{3}{2} d^{\circ} \\
    & = 120 \eta \log(1/\eta) R d^{\circ} \,.
  \end{align*}
  and,
  \begin{equation*}
    |d(A^*) - d^{\circ}| \leq  10 \sqrt{\frac{d^{\circ} \log n}{n}} + 2 \eta \log(1/\eta) R d^{\circ} \,.
  \end{equation*}
  Therefore, we have 
    \begin{equation*}
    \cD(Y,z;A,\gamma,\sigma, \hat{d})
    \sststile{O(1)}{Y, z} \Abs{d(Y) - d^{\circ}} \leq 200 \eta \log(1/\eta) R d^{\circ} + 10 \sqrt{\frac{d^{\circ} \log n}{n}} \,.
  \end{equation*}
  Thus, the level-$8$ pseudo-expectation $\tilde{\E}$ satisfies
  \begin{equation*}
      \Abs{\tilde{\E}[d(Y)] - d^{\circ}} \leq 200 \eta \log(1/\eta) R d^{\circ} + 10 \sqrt{\frac{d^{\circ} \log n}{n}} \,,
  \end{equation*}
  which implies that
  \begin{equation*}
      \Abs{\frac{\tilde{\E}[d(Y)]}{d^{\circ}} - 1}
      \le O \Paren{\sqrt{\frac{\log n}{\dnull n}} + \eta\log(1/\eta)R} \,.
  \end{equation*}
\end{proof}

\subsection{Private algorithm}
\label{sec:private_fine_estimation_inhomo}

In this section, we present our algorithm and prove \cref{thm:fine_estimation_inhomo}. 
Our algorithm instantiates the sum-of-squares exponential mechanism in \cref{sec:sos_exp_mechanism}.

\paragraph{Score function}
For an $n$-by-$n$ symmetric matrix $A$ and a scalar $d$, we define the score of $d$ with regard to $A$ to be
\begin{align}
    \label{eq:fine_inhomo_sos_score}
    s(d; A) \seteq 
    \min_{0\le\gamma\le1} \gamma n \text{ s.t. } 
    \begin{cases} 
        \exists \text{ level-}8 \text{ pseudo-expectation } \tE \text{ satisfying } \\
        \cD(Y,z; A,\gamma,\sigma,\hat{d}) \,\cup\, \Set{\Abs{d(Y)-d} \le \alpha d} \,,
    \end{cases} 
\end{align}
where $\cD(Y,z; A,\gamma,\sigma,\hat{d})$ is the polynomial system defined in \cref{poly_sys:fine_inhomo_subgraph_intersect_and_max_degree_bound}, 
$\hat{d}$ is a coarse estimate, and
$\sigma, \alpha$ are fixed parameters whose values will be decided later.
Note that $(Y=\frac{d}{n}\one\one^\top, z=\zero)$ is a solution to the polynomial system $\cD(Y,z; A,1,\sigma,\hat{d}) \,\cup\, \Set{\Abs{d(Y)-d} \le \alpha d}$ for any $A\in\R^{n\times n}$ and any $d$ such that $0\le d\le \min\{\sigma\hat{d}, n\}$.

To efficiently compute $s(d;A)$, we can use the scheme as described in \cref{remark:score_function_computation}.

\paragraph{Exponential mechanism}
Given a privacy parameter $\eps$ and an $n$-by-$n$ symmetric matrix $A$, our private algorithm in \cref{thm:fine_estimation_inhomo} is the exponential mechanism with score function \cref{eq:fine_inhomo_sos_score} and range $[ 0, \min\set{\sigma\hat{d}, n} ]$.

\begin{algorithmbox}[Fine estimation for inhomogeneous random graphs]
    \label{alg:fine_inhomo_exp_mec}
    \mbox{}\\
    \textbf{Input:}
        Graph $A$, coarse estimate $\hat{d}$.
    
    \noindent
    \textbf{Parameters:} 
        $\eps, \sigma, \alpha$.
    
    \noindent
    \textbf{Output:} 
        A sample from the distribution $\mu_{A,\eps}$ with support $[ 0, \min\set{\sigma\hat{d}, n} ]$ and density 
        \begin{equation}
            \label{eq:fine_inhomo_exp_mec}
            \mathrm{d}\mu_{A,\eps}(d) \propto \exp\paren{-\eps\cdot s(d;A)} \,,
        \end{equation}
        where $s(d;A)$ is defined in \cref{eq:fine_inhomo_sos_score}.
\end{algorithmbox}

To efficiently sample from $\mu_{A,\eps}$, we can use the scheme as described in \cref{remark:sampling}.

\paragraph{Privacy}
The following privacy guarantee of our algorithm is a direct corollary of \cref{lem:sos_exp_mec_privacy}.

\begin{lemma}[Privacy]
    \label{lem:fine_inhomo_exp_mec_privacy}
    \cref{alg:fine_inhomo_exp_mec} is $2\eps$-differentially node private.
\end{lemma}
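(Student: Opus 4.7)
The plan is to recognize \cref{alg:fine_inhomo_exp_mec} as an instantiation of the general sum-of-squares exponential mechanism developed in \cref{sec:sos_exp_mechanism} and then invoke the black-box privacy guarantee \cref{lem:sos_exp_mec_privacy}. The key bookkeeping step is to split the polynomial system $\cD(Y,z;A,\gamma,\sigma,\hat d) \cup \Set{\Abs{d(Y)-d}\le \alpha d}$ into the required two parts: a part $\cQ_1(Y,z;A,\gamma)$ whose coefficients depend on the input graph $A$ and the corruption level $\gamma$, and a part $\cQ_2(Y;d)$ whose coefficients depend only on the candidate estimate $d$ (and on fixed parameters independent of $A$).

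Concretely, I would set $\cQ_1(Y,z;A,\gamma) \seteq \cP_1(Y,z;A,\gamma)$ from \cref{poly_sys:fine_inhomo_subgraph_intersect} and $\cQ_2(Y;d) \seteq \cP_3(Y;\sigma,\hat d) \cup \Set{\Abs{d(Y)-d}\le \alpha d}$. The crucial observation is that $\sigma$, $\alpha$, and the coarse estimate $\hat d$ are parameters of the fine-estimation stage that are fixed before $A$ is consulted in this stage, so $\cQ_2$ is indeed independent of $A$. Under this identification, the score function in \cref{eq:fine_inhomo_sos_score} coincides exactly with the abstract score in \cref{eq:sos_score} at level $\ell=8$, and the density $\mu_{A,\eps}$ in \cref{eq:fine_inhomo_exp_mec} coincides with the abstract density \cref{eq:sos_exp_mechanism}, with the range $\cD = [0,\min\{\sigma\hat d,n\}]$.

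Before applying the lemma I need to check the well-definedness assumption stated after \cref{eq:sos_score}: for every $d\in\cD$, there must exist a symmetric $A^*\in[0,1]^{n\times n}$ such that $\cQ_2(A^*;d)$ is satisfied. This is witnessed by the matrix $A^* = \tfrac{d}{n}\one\one^\top$, which lies in $[0,1]^{n\times n}$ whenever $d\le n$, has row sums $(A^*\one)_i = d \le \sigma\hat d$ (so $\cP_3$ holds), and satisfies $d(A^*) = d$ (so the indicator constraint holds with $\alpha\ge 0$). With this in hand, the sensitivity bound \cref{lem:sos_score_sensitivity} applies verbatim, and then \cref{lem:sos_exp_mec_privacy} yields $2\eps$-differential node privacy for \cref{alg:fine_inhomo_exp_mec}.

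There is no real obstacle here; the entire content of the argument is the structural observation that the fine-estimation score function fits the general template, with all $A$-dependence confined to $\cP_1$. I do not expect to need any new SOS reasoning or utility argument — the proof should be a single short paragraph invoking \cref{lem:sos_exp_mec_privacy}. (I am treating $\hat d$ as an input to this stage; the overall two-stage algorithm inherits its privacy guarantee from basic composition of differential privacy, which is orthogonal to the claim being proved here.)
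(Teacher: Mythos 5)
Your proposal is correct and matches the paper's approach exactly: the paper states this lemma as "a direct corollary of \cref{lem:sos_exp_mec_privacy}," and you simply unpack that corollary by identifying $\cQ_1 = \cP_1$ and $\cQ_2 = \cP_3 \cup \{|d(Y)-d|\le\alpha d\}$, verifying well-definedness via the witness $A^* = \frac{d}{n}\one\one^\top$ on the range $\cD=[0,\min\{\sigma\hat d,n\}]$, which is precisely the observation the paper records just before the algorithm box. Your caveat that $\hat d$ is treated as fixed input to this stage, with overall privacy coming from composition, is also the correct reading and consistent with how the paper structures the two-stage argument.
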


\paragraph{Utility}
The utility guarantee of our algorithm is stated in the following lemma.

\begin{lemma}[Utility]
    \label{lem:fine_inhomo_exp_mec_util}
    Let $\Qnull$ be an $n$-by-$n$ edge connection probability matrix and let $\dnull \seteq d(\Qnull)$. 
    Suppose $\Norm{\Qnull}_{\infty} \le R\dnull / n$ for some $R$.
    There is a sufficiently small constant $c$ such that the following holds.
    For any $\eta$ such that $\eta\log(1/\eta)R \le c$, given an $\eta$-corrupted inhomogeneous random graph $\bbG(n,\Qnull)$ and a coarse estimate $\hat{d}$ such that $0.5\dnull \le \hat{d} \le 2\dnull$,
    \cref{alg:fine_inhomo_exp_mec} outputs an estimate $\tilde{d}$ satisfying
    \[
        \Abs{\frac{\tilde{d}}{\dnull}-1} \le 
        O \Paren{\sqrt{\frac{\log n}{\dnull n}} +
        \frac{R \log^2 n}{\eps n} +
        \eta\log(1/\eta)R} \,,
    \]
    with probability $1-n^{-\Omega(1)}$.
\end{lemma}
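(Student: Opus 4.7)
The plan is to follow the template of \cref{lem:coarse_exp_mec_util}, substituting \cref{lem:fine-inhomo-existence-proof} and \cref{lem:fine-inhomo-sos} for their coarse counterparts. The three ingredients I need are: a feasibility witness giving a low-score point near $\dnull$, the exponential-mechanism concentration argument \cref{lem:sos_exp_mec_utility}, and the SOS identifiability to translate low score into utility.

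First, because $\hat d \in [\tfrac12 \dnull, 2\dnull]$ and $\eta\log(1/\eta)R \le c$, \cref{lem:fine-inhomo-existence-proof} yields, with probability $1-n^{-\Omega(1)}$, a feasible pair $(Y=A^*, z=z^*)$ for $\cD(Y,z; A,\gamma^*,\sigma,\hat d)$ with $\gamma^* = 2\eta$ and $\sigma = 10\log(1/\eta)R$, satisfying $|d(A^*) - \dnull| \le 10\sqrt{\dnull \log(n)/n} + 2\eta\log(1/\eta)R\dnull$. This witness certifies $s(d;A) \le \gamma^* n$ for every $d$ in a window of length $\Omega(\alpha\, d(A^*))$ around $d(A^*)$. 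Applying \cref{lem:sos_exp_mec_utility} with $\cG(A^*) \supseteq [d(A^*)/(1+\alpha),\, d(A^*)]$---whose volume is $\Omega(\alpha \dnull)$---against an output support of volume at most $n$ shows that, with probability $1-n^{-\Omega(1)}$, the sample $\tilde d$ satisfies $s(\tilde d; A) \le \tau n$, where $\tau \seteq 2\eta + C\log(n)/(\eps n)$ for a suitable constant $C$.

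Next, I convert the low score into a utility bound. Conditional on $s(\tilde d; A) \le \tau n$, the definition \cref{eq:fine_inhomo_sos_score} provides a level-$8$ pseudo-expectation $\tE$ satisfying $\cD(Y,z;A,\tau,\sigma,\hat d) \,\cup\, \{|d(Y) - \tilde d| \le \alpha \tilde d\}$. Pseudo-expecting both sides of the SOS inequality in \cref{lem:fine-inhomo-sos} with parameters $\gamma=\tau$ and $\gamma^*=2\eta$, and using $\hat d \le 2\dnull$, yields
\[
    \bigl|\tE[d(Y)] - d(A^*)\bigr| \le 2(\tau + 2\eta)\sigma \hat d = O\!\Paren{R\log(1/\eta)\bigl(\eta + \log(n)/(\eps n)\bigr)\dnull}.
\]
Combining with $|\tE[d(Y)] - \tilde d| \le \alpha \tilde d$ (take $\alpha$ a small constant), the feasibility bound on $|d(A^*) - \dnull|$, and a triangle inequality, then dividing by $\dnull$, delivers the claimed rate.

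The main bookkeeping concern, rather than a genuine obstacle, is ensuring that the support $[0, \min\{\sigma\hat d, n\}]$ of $\mu_{A,\eps}$ actually contains both $\dnull$ and the $\Omega(\alpha\dnull)$ window around $d(A^*)$ used in the volume argument; this is automatic because $\sigma\hat d \ge 5\log(1/\eta)R\dnull \gg \dnull$. A minor subtlety is the degenerate regime where $\dnull$ is so small that the ratio $\vol(\cD)/\vol(\cG(A^*))$ fails to give a poly-small failure probability, but the target error is already $\Omega(1)$ there, so the trivial output absorbs into the $O(\cdot)$.
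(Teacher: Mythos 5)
Your overall structure follows the paper's proof exactly: \cref{lem:fine-inhomo-existence-proof} for the feasible witness, \cref{lem:sos_exp_mec_utility} for the volume/concentration argument, and \cref{lem:fine-inhomo-sos} for the SOS identifiability; the remaining bookkeeping about the support and degenerate $\dnull$ regime is also correct.

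However, the parenthetical ``take $\alpha$ a small constant'' is a genuine error, not a harmless choice. The constraint $\set{\abs{d(Y) - d} \le \alpha d}$ in the score function only pins $\tilde d$ down to a multiplicative $(1\pm\alpha)$ window around $d(A^*)$, so the final bound inherits an additive $\Omega(\alpha)$ term in $\abs{\tilde d/\dnull - 1}$. If $\alpha$ is a fixed constant, this dominates the claimed fine rate $\sqrt{\log n/(\dnull n)} + R\log^2 n/(\eps n) + \eta\log(1/\eta)R$ whenever that quantity is $o(1)$ — which is exactly the regime where the lemma has content. You need $\alpha$ to shrink with $n$; the paper takes $\alpha = n^{-2}$, which makes the $\alpha$-contribution $O(n^{-2})$, dominated by the sampling-error term $\sqrt{\log n/(\dnull n)} \ge \sqrt{\log n}/n$ (using $\dnull \le n$). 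The volume argument survives this choice since the failure probability in \cref{lem:fine_inhomo_exp_mec_volume_of_low_score_points} is $n^{-t+1}/\alpha = n^{-t+3}$, still poly-small for $t=10$. Once you replace ``small constant'' with $\alpha = 1/\poly(n)$, the rest of your proposal (including the implicit use of $\eta \ge 1/(2n)$ so that $\log(1/\eta) = O(\log n)$) goes through as written.
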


Before proving \cref{lem:fine_inhomo_exp_mec_util}, we need the following two lemmas.

\begin{lemma}[Volume of low-score points]
    \label{lem:fine_inhomo_exp_mec_volume_of_low_score_points}
    Let $A\in\R^{n\times n}$ and $\eps>0$.
    Consider the distribution $\mu_{A,\eps}$ defined by \cref{eq:fine_inhomo_exp_mec}.
    Suppose $(Y=A^*, z=z^*)$ is a solution to $\cD(Y,z; A,\gamma^*,\sigma,\hat{d})$ and $2 \le d(A^*) \le \sigma\hat{d}$.
    Then for any $t\geq0$,
    \[
        \Pr_{ \rv{d} \sim \mu_{A,\eps} } \Paren{ s(\rv{d};A) \ge \gamma^* n + \frac{t\log n}{\eps} } 
        \le \frac{ n^{-t+1} }{\alpha} \,.
    \]
\end{lemma}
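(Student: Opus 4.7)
The plan is to derive this lemma as an essentially direct consequence of the general volume bound \cref{lem:sos_exp_mec_utility}, mirroring the argument already used for the coarse estimator in \cref{lem:coarse_exp_mec_volume_of_low_score_points}. I would instantiate the general framework of \cref{sec:sos_exp_mechanism} with $\cQ_1 = \cD(Y,z;A,\gamma,\sigma,\hat d)$, the auxiliary system $\cQ_2(Y;d) = \Set{\Abs{d(Y)-d}\le \alpha d}$, and the range $\cD = [0,\min\Set{\sigma\hat d, n}]$ used by \cref{alg:fine_inhomo_exp_mec}.

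The first step is to identify the set of scalars that $A^*$ certifies to have low score. Specifically, I would define
\[
  \cG(A^*) \;\seteq\; \Set{d \in \cD \,:\, \Abs{d(A^*) - d} \le \alpha d}
  \;=\; \Set{d \in \cD \,:\, \tfrac{d(A^*)}{1+\alpha} \le d \le \tfrac{d(A^*)}{1-\alpha}} .
\]
For any $d \in \cG(A^*)$, the pair $(Y=A^*, z=z^*)$ is a genuine (and hence a level-$8$ pseudo-) solution to $\cD(Y,z;A,\gamma^*,\sigma,\hat d) \cup \cQ_2(Y;d)$, so $s(d;A) \le \gamma^* n$.

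The second step is to lower bound $\vol(\cG(A^*))$. Since by hypothesis $d(A^*) \le \sigma\hat d$, the subinterval $[d(A^*)/(1+\alpha), d(A^*)]$ lies entirely inside the support $\cD$. Its length is $d(A^*)\cdot \alpha/(1+\alpha)$, which using $d(A^*)\ge 2$ and $\alpha \le 1$ (which we are free to ensure by choice of parameters) is at least $\alpha$. Therefore $\vol(\cG(A^*)) \ge \alpha$, while trivially $\vol(\cD) \le n$.

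The final step is to plug these bounds into \cref{lem:sos_exp_mec_utility}, yielding
\[
  \Pr_{\rv d \sim \mu_{A,\eps}}\!\Paren{s(\rv d;A) \ge \gamma^* n + \tfrac{t\log n}{\eps}}
  \;\le\; \frac{\vol(\cD)}{\vol(\cG(A^*))} \cdot n^{-t}
  \;\le\; \frac{n^{-t+1}}{\alpha},
\]
which is exactly the desired inequality. There is no real obstacle here; the only thing to verify carefully is that the interval $[d(A^*)/(1+\alpha), d(A^*)]$ really sits inside the truncated support $[0,\min\{\sigma\hat d, n\}]$, which is precisely why the hypothesis $d(A^*) \le \sigma\hat d$ is stated in the lemma.
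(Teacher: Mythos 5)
Your proof is correct and matches the paper's argument essentially line for line: apply \cref{lem:sos_exp_mec_utility} with $\cD = [0,\min\{\sigma\hat d,n\}]$ and $\cG(A^*) = \{d \in \cD : d(A^*)/(1+\alpha) \le d \le d(A^*)/(1-\alpha)\}$, then lower bound $\vol(\cG(A^*)) \ge \alpha$ via the subinterval $[d(A^*)/(1+\alpha), d(A^*)] \subseteq \cD$ using $d(A^*) \ge 2 \ge 1+\alpha$ and the hypothesis $d(A^*) \le \sigma\hat d$. Your explicit remark about why the hypothesis $d(A^*) \le \sigma\hat d$ is needed (to keep the subinterval inside the truncated support) is exactly the point that distinguishes this lemma from its coarse-estimation counterpart.
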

\begin{proof}
    Apply \cref{lem:sos_exp_mec_utility} with $\cD = [ 0,\min\{\sigma\hat{d}, n\} ]$ and
    \[
        \cG(A^*) = \Set{d\in\cD \,:\, \frac{d(A^*)}{1+\alpha} \le d \le \frac{d(A^*)}{1-\alpha}} \,.
    \]
    As $[d(A^*)/(1+\alpha), d(A^*)] \subseteq \cG(A^*)$ and $d(A^*) \ge 2 \ge 1+\alpha$, we have $\vol(\cG(A^*)) \ge \alpha$.
\end{proof}

\begin{lemma}[Low score implies utility]
    \label{lem:fine_inhomo_low_score_implies_utility}
    Let $A\in\R^{n\times n}$ and consider the score function $s(\cdot \,; A)$ defined in \cref{eq:fine_inhomo_sos_score}.
    Suppose $(Y=A^*, z=z^*)$ is a solution to $\cD(Y,z; A,\gamma^*,\sigma,\hat{d})$.
    For a scalar $d$ such that $s(d;A) \le \tau n$, 
    \[
       \frac{d(A^*) - 2(\gamma^*+\tau)\sigma\hat{d}}{1+\alpha} 
       \leq d \leq 
       \frac{d(A^*) + 2(\gamma^*+\tau)\sigma\hat{d}}{1-\alpha} \,.
    \]
\end{lemma}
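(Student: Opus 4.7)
The plan is to unwind the definition of the score function and apply the sum-of-squares identifiability statement of \cref{lem:fine-inhomo-sos} at the level of pseudo-expectations. Suppose $s(d;A) \le \tau n$. By definition of the score function in \cref{eq:fine_inhomo_sos_score}, there exist $\gamma \le \tau$ and a level-$8$ pseudo-expectation $\tE$ satisfying the polynomial system $\cD(Y,z; A,\gamma,\sigma,\hat d) \cup \Set{\Abs{d(Y)-d} \le \alpha d}$. Since the only constraint in $\cD(Y,z; A,\gamma,\sigma,\hat d)$ that depends on $\gamma$ is $\iprod{\one, z} \ge (1-\gamma)n$, which only weakens as $\gamma$ grows, $\tE$ also satisfies $\cD(Y,z; A,\tau,\sigma,\hat d) \cup \Set{\Abs{d(Y)-d} \le \alpha d}$.

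Next, I would invoke \cref{lem:fine-inhomo-sos} with the feasible solution $(A^*, z^*)$ to $\cD(Y,z; A, \gamma^*, \sigma, \hat d)$ and with $\gamma$ replaced by $\tau$. This yields the sum-of-squares proof
\[
  \cD(Y,z;A,\tau,\sigma,\hat d) \;\sststile{8}{Y,z}\; \Abs{d(Y) - d(A^*)} \le 2(\gamma^* + \tau)\sigma \hat d \,.
\]
Taking pseudo-expectations on both sides of this inequality gives $\Abs{\tE[d(Y)] - d(A^*)} \le 2(\gamma^* + \tau)\sigma \hat d$, i.e.,
\[
  d(A^*) - 2(\gamma^* + \tau)\sigma \hat d \;\le\; \tE[d(Y)] \;\le\; d(A^*) + 2(\gamma^* + \tau)\sigma \hat d \,.
\]

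Independently, the pseudo-expectation $\tE$ satisfies the scalar constraint $\Abs{d(Y) - d} \le \alpha d$, which expanded says $(1-\alpha) d \le d(Y) \le (1+\alpha) d$. Taking pseudo-expectations yields $(1-\alpha) d \le \tE[d(Y)] \le (1+\alpha) d$. Combining the two two-sided bounds on $\tE[d(Y)]$: the lower bound on $\tE[d(Y)]$ from identifiability together with $\tE[d(Y)] \le (1+\alpha) d$ gives $d \ge (d(A^*) - 2(\gamma^*+\tau)\sigma\hat d)/(1+\alpha)$, while the upper bound on $\tE[d(Y)]$ together with $(1-\alpha) d \le \tE[d(Y)]$ gives $d \le (d(A^*) + 2(\gamma^*+\tau)\sigma\hat d)/(1-\alpha)$, which is exactly the claimed conclusion.

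The proof is essentially mechanical once the SOS identifiability in \cref{lem:fine-inhomo-sos} is in hand, so there is no real obstacle; the only point that merits care is the monotonicity observation that a pseudo-expectation feasible for the system with a smaller $\gamma$ remains feasible for the system with the larger value $\tau$, which justifies applying the identifiability lemma at $\gamma = \tau$.
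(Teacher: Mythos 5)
Your proof is correct and follows essentially the same route as the paper: apply \cref{lem:fine-inhomo-sos} at $\gamma=\tau$ to get the SOS bound $\Abs{d(Y)-d(A^*)} \le 2(\gamma^*+\tau)\sigma\hat d$, and combine it with the constraint $\Abs{d(Y)-d}\le\alpha d$ to squeeze $d$. The only difference is stylistic: the paper phrases the final step as a single SOS derivation of the scalar bound on $d$ (leaving implicit that the existence of a feasible pseudo-expectation is what makes the scalar conclusion true), while you make this explicit by taking pseudo-expectations of both two-sided bounds and combining the resulting numerical inequalities; you also spell out the monotonicity-in-$\gamma$ step that the paper only records in a remark elsewhere.
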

\begin{proof}
    By \cref{lem:fine-inhomo-sos},
    \[
        \cD(Y,z; A,\tau,\sigma,\hat{d})
        \; \sststile{8}{Y, z} \;
        \Abs{d(Y) - d(A^*)} \leq 2 (\gamma^* + \tau) \sigma \hat{d} \,.
    \]
    Thus,
    \begin{align*}
        & \cD(Y,z; A,\tau,\sigma,\hat{d}) \,\cup\, \Set{\Abs{d(Y)-d} \le \alpha d} \\
        \sststile{8}{Y, z} \; 
        & \frac{d(A^*) - 2(\gamma^*+\tau)\sigma\hat{d}}{1+\alpha} \leq d \leq \frac{d(A^*) + 2(\gamma^*+\tau)\sigma\hat{d}}{1-\alpha} \,.
    \end{align*}
\end{proof}

Now we are ready to prove \cref{lem:fine_inhomo_exp_mec_util}.

\begin{proof}[Proof of \cref{lem:fine_inhomo_exp_mec_util}]
    Let $A$ be a realization of $\eta$-corrupted $\bbG(n,\Qnull)$.
    By \cref{lem:fine-inhomo-existence-proof}, the following event happens with probability at least $1-n^{-\Omega(1)}$.
    There exists a solution $(Y=A^*, z=z^*)$ to $\cD(Y,z; A,\gamma^*,\sigma,\hat{d})$ with 
    $\gamma^* = 2\eta$, 
    $\sigma = 10\log(1/\eta)R$, and
    \[
        \abs{d(A^*) - \dnull} \le 10 \sqrt{\dnull\log(n)/n} + 2 \eta \log(1/\eta) R\dnull \,.
    \]
    
    For $\eta \log(1/\eta) R$ smaller than some constant, we have $0.9\dnull \le d(A^*) \le 1.1\dnull$. 
    Note that $d(A^*) \ge 0.9\dnull \ge 2$ and $d(A^*) \le 1.1\dnull \le \sigma\hat{d}$.
    Then it follows by setting $t=10$ and $\alpha=n^{-2}$ in \cref{lem:fine_inhomo_exp_mec_volume_of_low_score_points} that, 
    \[
        \Pr_{\rv{d} \sim \mu_{A,\eps}} \Paren{s(\rv{d};A) \le \tau n} \ge 1 - n^{-7} 
        \text{ where } \tau \seteq 2\eta + 10\log(n)/(\eps n) \,.
    \]

    Let $\tilde d$ be a scalar such that $s(\tilde{d}; A) \le \tau n$.
    Then by \cref{lem:fine_inhomo_low_score_implies_utility}, 
    \[
       \frac{d(A^*) - 2(2\eta+\tau)\sigma\hat{d}}{1+\alpha} 
       \leq \tilde{d} \leq 
       \frac{d(A^*) + 2(2\eta+\tau)\sigma\hat{d}}{1-\alpha} \,.
    \]
    Plugging in everything, we have
    \[
        \Abs{\frac{\tilde d}{\dnull}-1} \le 
        O \Paren{\sqrt{\frac{\log n}{\dnull n}} +
        \frac{R\log(1/\eta)\log n}{\eps n} +
        R\eta\log(1/\eta)} \,.
    \]
    As an $\eta$-corrupted graph is actually uncorrupted when $\eta<1/n$, we can assume $\eta\ge1/(2n)$ without loss of generality.
    Therefore,
    \[
        \Abs{\frac{\tilde d}{\dnull}-1} \le 
        O\Paren{\sqrt{\frac{\log n}{\dnull n}} +
        \frac{R \log^2 n}{\eps n} +
        R\eta\log(1/\eta)} \,.
    \]
\end{proof}

\paragraph{Proof of \cref{thm:fine_estimation_inhomo}}
By \cref{lem:fine_inhomo_exp_mec_privacy} and \cref{lem:fine_inhomo_exp_mec_util}.

\section{Fine estimation for \ER random graphs}
\label{sec:fine_estimation_ER}

From \cref{sec:coarse_estimation}, we have a a constant multiplicative approximation of the expected average degree $\dnull$.
In this section, we show how to use this coarse estimate to obtain our fine estimate for \ER random graphs.

\begin{theorem}[Fine estimation for \ER random graphs]
    \label{thm:fine_estimation_er}
    There are constants $C_1, C_2, C_3$ such that the following holds.
    For any $\eta\le C_1$, $\eps \ge C_2\log(n)/n$, and $\dnull\geq C_3$, there exists a polynomial-time $\eps$-differentially node private algorithm which, given an $\eta$-corrupted \ER random graph $\bbG(n,\dnull/n)$ and a constant-factor approximation of $\dnull$, outputs an estimate $\tilde{d}$ satisfying
    \[
        \Abs{\frac{\tilde d}{\dnull}-1} 
        \le 
        O \Paren{\sqrt{\frac{\log n}{\dnull n}} +
        \frac{\log^2 n}{\sqrt{\dnull} \eps n} +
        \frac{\eta\log n}{\sqrt{\dnull}}} \,,
    \]
    with probability $1-n^{-\Omega(1)}$.
\end{theorem}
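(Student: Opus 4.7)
The plan is to follow the three-step framework used for inhomogeneous graphs in \cref{sec:fine_estimation_inhmomo}, replacing the maximum-degree regularity constraint with the sharper conditions in \eqref{eq:regularity_fine} that exploit the stronger concentration available in the \ER setting. First I would set up the polynomial system $\cD_{\mathrm{ER}}(Y,z;A,\gamma,\hat{d}) \seteq \cP_1(Y,z;A,\gamma) \cup \cR(Y)$, where $\cR(Y)$ encodes the tight degree concentration $\abs{(Y\one)_i - d(Y)} \le \sqrt{\hat{d}}\log n$ for every $i \in [n]$ and the spectral bound written as the PSD constraint $\hat{d}\log(n) \cdot \Id \succeq \bigl(Y - \tfrac{d(Y)}{n}\one\one^\top\bigr)^2$. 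The feasibility lemma, analogous to \cref{lem:sketch-fine-feasibility}, shows that with probability $1-n^{-\Omega(1)}$ there exists $(Y=A^*, z=z^*)$ satisfying $\cD_{\mathrm{ER}}(Y,z; A, 2\eta, \hat{d})$ with $\abs{d(A^*) - \dnull} \le O(\sqrt{\dnull \log n / n})$; I obtain this by pruning at most $\eta n$ highest-degree nodes from the uncorrupted $A^\circ \sim \bbG(n, \dnull/n)$ (Chernoff bounds then give the degree constraint) and invoking standard random-matrix concentration such as Bandeira--van Handel for $\normop{A^\circ - \mathbb{E} A^\circ} \le O(\sqrt{\dnull\log n})$, absorbing the $O(1)$-spectral-norm correction $\mathbb{E}A^\circ - \tfrac{d(A^\circ)}{n}\one\one^\top$.

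The core technical step is the \sos identifiability lemma, mirroring \cref{lem:sketch-fine-identifiability}. Letting $w = z \odot z^*$, one has $\iprod{\one,w} \ge (1-\gamma-2\eta)n$ and $Y \odot ww^\top = A^* \odot ww^\top$ in \sos. Following the algebraic decomposition of the sketch, one writes
\begin{align*}
    \frac{\iprod{\one,w}^2}{n}\bigl(d(Y) - d(A^*)\bigr)
    &= \Iprod{Y - \tfrac{d(Y)}{n}\one\one^\top,\, \one\one^\top - ww^\top} \\
    &\quad + \Iprod{\tfrac{d(A^*)}{n}\one\one^\top - A^*,\, \one\one^\top - ww^\top}\,,
\end{align*}
and expands $\one\one^\top - ww^\top = 2\one(\one-w)^\top - (\one-w)(\one-w)^\top$. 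The linear pieces $2\iprod{Y\one - d(Y)\one,\, \one-w}$ are bounded by $2\sqrt{\hat{d}}\log(n)\cdot\iprod{\one,\one-w}$ via the degree constraint, contributing $O((\gamma+\eta) n \sqrt{\hat{d}}\log n)$. The quadratic pieces $\iprod{Y - \tfrac{d(Y)}{n}\one\one^\top, (\one-w)(\one-w)^\top}$ are bounded via the spectral constraint: the PSD inequality yields $\iprod{Mv, Mv} \le \hat{d}\log(n)\cdot\iprod{v,v}$ for $M = Y - \tfrac{d(Y)}{n}\one\one^\top$, and then \sos Cauchy--Schwarz gives $\iprod{v, Mv}^2 \le \iprod{v,v}\cdot\iprod{Mv,Mv} \le \hat{d}\log(n)\cdot\iprod{v,v}^2$; applied with $v = \one - w$ and $\iprod{\one-w, \one-w} \le (\gamma+2\eta)n$, this contributes $O((\gamma+\eta) n \sqrt{\hat{d}\log n})$. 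Combining with the analogous bounds for the $A^*$ term and dividing by $\iprod{\one,w}^2 \ge \Omega(n^2)$ gives $\abs{d(Y) - d(A^*)} \le \tilde{O}\bigl((\gamma+\eta)\sqrt{\dnull}\bigr)$ in degree-$8$ \sos.

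The remaining steps parallel \cref{sec:private_fine_estimation_inhomo}. The private algorithm instantiates the \sos exponential mechanism of \cref{sec:sos_exp_mechanism} with the score
\[
    \soscore(d; A) \seteq \min_{0\le \gamma\le 1} \gamma n \;\text{ s.t. }\; \exists \text{ level-}8 \text{ pseudo-expectation satisfying } \cD_{\mathrm{ER}}(Y,z; A,\gamma,\hat{d}) \cup \Set{\abs{d(Y) - d} \le \alpha d}\,.
\]
Privacy is immediate from \cref{lem:sos_exp_mec_privacy}; for utility, \cref{lem:sos_exp_mec_utility} delivers a sample with score at most $O(\eta n + \log(n)/\eps)$, which by the identifiability lemma translates to an effective $\gamma + \eta = O(\eta + \log(n)/(\eps n))$. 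Summing the sampling error $\tilde{O}(\sqrt{\log n /(\dnull n)})$ from $\abs{d(A^*) - \dnull}$, the privacy error $\tilde{O}(\log^2(n)\sqrt{\dnull}/(\eps n))$, and the robustness error $\tilde{O}(\eta \sqrt{\dnull}\log n)$, then dividing by $\dnull$, yields the claimed bound on $\abs{\tilde{d}/\dnull - 1}$.

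The principal obstacle is formalizing the spectral-norm constraint within low-degree \sos. Writing $\normop{M}^2 \le \lambda$ as $\lambda \Id \succeq M^2$ and deriving $\iprod{v, Mv}^2 \le \lambda \iprod{v,v}^2$ inside a degree-$8$ proof requires invoking \sos Cauchy--Schwarz together with the PSD constraint, and one must verify that the overall degree in $Y$ stays bounded since $M$ itself depends linearly on $Y$ and $d(Y)$. A secondary subtlety is the bookkeeping around the $\iprod{\one,w}^2/n$ prefactor arising from $\iprod{\tfrac{1}{n}\one\one^\top, ww^\top}$, whose effective inversion inside \sos must be justified using $\iprod{\one,w} \ge (1-\gamma-2\eta) n$ together with the assumption that $\gamma+\eta$ is below a small constant.
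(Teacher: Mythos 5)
Your proposal follows essentially the same route as the paper's proof of \cref{thm:fine_estimation_er}: the same regularity system (node-distance constraints plus per-node degree concentration $\abs{(Y\one)_i - d(Y)} \le \tilde O(\sqrt{\hat d})$ and the spectral-norm constraint), the same algebraic decomposition via $\iprod{\one,w}^2(d(Y)-d(A^*))/n$ and the two cross terms, the same handling of the operator-norm constraint (squaring plus Cauchy--Schwarz is a mild variant of the paper's direct use of $\pm M \preceq \delta\sqrt{\hat d}\,\Id$, both of which work at degree~8), the same technique of first bounding the squared difference times $\iprod{\one,w}^4/n^2$ and then inverting the prefactor using $\iprod{\one,w} \ge (1-\gamma-\gamma^*)n$, and the same \sos exponential mechanism with the same utility bookkeeping.

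One thing to flag in the feasibility step: you propose pruning $\eta n$ highest-degree nodes from $A^\circ$, but for \ER graphs with $\dnull \ge C_3$ this pruning is unnecessary and, if taken literally, is not free. Chernoff bounds already give $\abs{(A^\circ\one)_i - d^\circ} \le \sqrt{d^\circ}\log n$ simultaneously for all $i$ with probability $1-n^{-\Omega(1)}$, so the paper sets $A^* = A^\circ$ and $z^* = z^\circ$ directly, taking $\gamma^* = \eta$. If instead you set rows and columns of pruned nodes to zero, you would need to separately re-establish the spectral bound $\normop{\tilde A - \frac{d(\tilde A)}{n}\one\one^\top} \le \delta\sqrt{\hat d}$ for the pruned matrix $\tilde A$ (the lemma you cite applies to the unpruned $A^\circ$); this extra step is not addressed in your sketch. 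Since for \ER there is nothing to prune, the simplest fix is to drop the pruning entirely, after which your argument coincides with the paper's.
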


We make a few remarks on \cref{thm:fine_estimation_er}.
\begin{itemize}
    \item Our algorithm in \cref{thm:fine_estimation_er} is an sum-of-squares exponential mechanism. $R,\eta,\eps$ are parameters given as input to our algorithm.

    \item We can get an estimate of $p^{\circ}$ by taking $\hat{p} = \frac{\hat{d}}{n-1}$. Since $\frac{\hat{p}}{\pnull} = \frac{\hat{d}}{\dnull}$, it follows that
    \[
        \Abs{\frac{\hat{p}}{\pnull}-1} 
        \le 
        O \Paren{\sqrt{\frac{\log n}{\dnull n}} +
        \frac{\log^2 n}{\sqrt{\dnull} \eps n} +
        \frac{\eta\log n}{\sqrt{\dnull}}} \,.
    \]

    \item Combining \cref{thm:coarse_estimation_inhomo} and \cref{thm:fine_estimation_er} gives us an efficient, private, and robust edge density estimation algorithm for \ER random graphs whose utility guarantee is information-theoretically optimal up to a factor of  $\log n$.
\end{itemize}

In \cref{sec:fine-er-sos}, we set up polynomial systems that our algorithm uses and prove useful sos inequalities.
In \cref{sec:robust_fine_estimation_er}, we show that we can easily obtain a robust algorithm via sos proofs in \cref{sec:fine-er-sos}.
Then in \cref{sec:private_fine_estimation_er}, we describe our algorithm and prove \cref{thm:fine_estimation_er}.

\subsection{Sum-of-squares}
\label{sec:fine-er-sos}

For an adjacency matrix $A$ and nonnegative scalars $\gamma$, $\sigma$ and $\hat{d}$, consider the following polynomial systems with indeterminates $Y=(Y_{ij})_{i,j\in[n]}$, $z=(z_i)_{i\in[n]}$ and coefficients that depend on $A,\gamma,\sigma,\delta,\hat{d}$:
\begin{gather}
  \cP_1(Y,z; A,\gamma) \seteq 
  \left. 
    \begin{cases}
      z \odot z = z,\, \iprod{\one, z} \ge (1-\gamma)n \\
      0 \leq Y \leq \one \one^{\top},\, Y=Y^\top \\
      Y \odot zz^\top = A \odot zz^\top
    \end{cases}
  \right\} \,, \label{poly_sys:fine_er_subgraph_intersect}
  \\
  \cP_4(Y; \sigma, \delta, \hat{d}) \seteq
  \left. 
    \begin{cases}
      |(Y\one)_i - d(Y)| \le \sigma \sqrt{\hat{d}} & \forall i\in[n] \\
      \Normop{Y - \frac{d(Y)}{n} \one \one^{\top}} \leq \delta \sqrt{\hat{d}}
    \end{cases}
  \right\} \,. \label{poly_sys:fine_er_max_degree_bound}
\end{gather}
For convenience of notation, we will consider the following combined polynomial system in remaining of the section
\begin{equation}
    \label{poly_sys:fine_er_subgraph_intersect_and_max_degree_bound}
    \cE(Y,z;A,\gamma,\sigma, \delta, \hat{d}) \seteq \cP_1(Y,z; A,\gamma) \,\cup\, \cP_4(Y; \sigma, \delta, \hat{d}) \,.
\end{equation}

\begin{lemma}
\label{lem:fine-er-sos}
    If $(A^*, z^*)$ is a feasible solution to $\cE(Y,z;A,\gamma^*,\sigma,\delta,\hat{d})$ and $\gamma + \gamma^* < 1$, then it follows that
    \begin{equation*}
        \cE(Y,z;A,\gamma,\sigma, \delta, \hat{d})
        \sststile{8}{Y, z}
        \Abs{d(Y) - d(A^*)}
        \leq \frac{4 (\gamma+\gamma^*) \sigma \sqrt{\hat{d}} + 2 (\gamma+\gamma^*) \delta \sqrt{\hat{d}}}{(1-\gamma-\gamma^*)^2} \,.
    \end{equation*}
\end{lemma}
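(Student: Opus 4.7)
The plan is to formalize the identifiability argument sketched in the proof of \cref{lem:sketch-fine-identifiability} as a degree-$8$ sum-of-squares derivation over the indeterminates $(Y, z)$, treating $(A^*, z^*, \gamma^*)$ and the scalars $\sigma, \delta, \hat d$ as fixed. Set $w \seteq z \odot z^*$. Using $z_i^2 = z_i$ and $(z_i^*)^2 = z_i^*$, together with the feasibility of $(A^*, z^*)$ in $\cP_1(Y, z; A, \gamma^*)$, I would first derive three low-degree SoS consequences: the subgraph-agreement identity $Y \odot ww^\top = A^* \odot ww^\top$; the size bound $\iprod{\one, w} = \sum_i z_i z_i^* \ge \iprod{\one, z} - \sum_{i : z_i^* = 0} z_i \ge (1 - \gamma - \gamma^*) n$; and, as consequences, $\iprod{\one, w}^2 \ge (1 - \gamma - \gamma^*)^2 n^2$ together with $\|\one - w\|_2^2 = n - \iprod{\one, w} \le (\gamma + \gamma^*) n$.

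Next, I would establish the polynomial identity
\[
    \tfrac{\iprod{\one, w}^2}{n}\bigl(d(Y) - d(A^*)\bigr)
    = \Iprod{Y - \tfrac{d(Y)}{n}\one\one^\top,\; \one\one^\top - ww^\top} + \Iprod{\tfrac{d(A^*)}{n}\one\one^\top - A^*,\; \one\one^\top - ww^\top}
\]
by expanding $n(d(Y) - d(A^*)) = \iprod{Y - A^*, \one\one^\top}$, replacing $\one\one^\top$ with $\one\one^\top - ww^\top$ via subgraph agreement, and then moving the $\tfrac{d(Y)}{n}\one\one^\top$ and $\tfrac{d(A^*)}{n}\one\one^\top$ pieces to the left. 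Decomposing $\one\one^\top - ww^\top = 2\,\one(\one - w)^\top - (\one - w)(\one - w)^\top$ inside each inner product with a symmetric matrix, the $\one(\one-w)^\top$ piece collapses to $\iprod{Y\one - d(Y)\one,\, \one - w}$ and is controlled by $\sigma\sqrt{\hat d}\cdot \iprod{\one, \one - w} \le (\gamma + \gamma^*)\,n\,\sigma\sqrt{\hat d}$ via the per-node degree constraint, while the $(\one - w)(\one - w)^\top$ piece is controlled by $\delta\sqrt{\hat d}\cdot \|\one - w\|_2^2 \le (\gamma + \gamma^*)\,n\,\delta\sqrt{\hat d}$ via the SoS consequence of the operator-norm constraint evaluated at the test vector $\one - w$. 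Applying the analogous bounds to the $A^*$ term yields
\[
    \pm\,\iprod{\one, w}^2 \bigl(d(Y) - d(A^*)\bigr) \le 4(\gamma + \gamma^*) n^2 \sigma \sqrt{\hat d} + 2(\gamma + \gamma^*) n^2 \delta \sqrt{\hat d}
\]
as an SoS inequality of degree at most $6$.

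The main obstacle is the final step, since one cannot directly divide a polynomial SoS inequality by the polynomial $\iprod{\one, w}^2$. I would handle this by multiplying the $+$ and $-$ versions above to obtain the degree-$8$ SoS inequality $\iprod{\one, w}^4 (d(Y) - d(A^*))^2 \le n^4 R^2$, with $R$ the numerator in the stated bound, and then combining with the two SoS facts $(d(Y) - d(A^*))^2 \ge 0$ and $\iprod{\one, w}^4 - (1 - \gamma - \gamma^*)^4 n^4 \ge 0$, whose product is itself SoS, to conclude $(d(Y) - d(A^*))^2 \le R^2/(1 - \gamma - \gamma^*)^4$ at degree $8$. This is the sense in which $\cE \sststile{8}{Y, z} |d(Y) - d(A^*)| \le R/(1 - \gamma - \gamma^*)^2$: the squared form is exactly what is needed downstream, since applying a pseudo-expectation and Cauchy--Schwarz yields $|\tE[d(Y)] - d(A^*)| \le R/(1 - \gamma - \gamma^*)^2$, which is how the conclusion is used in \cref{thm:robust-fine-estimation-inhomo}. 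A careful degree count --- $\iprod{\one, w}^2$ contributes degree $2$, the spectral constraint evaluated at $u = \one - w$ contributes degree at most $4$, and the final multiplication doubles the degree --- confirms that the whole derivation fits within degree $8$.
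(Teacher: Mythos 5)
Your derivation follows the same route as the paper's: the identity relating $\frac{\iprod{\one,w}^2}{n}(d(Y)-d(A^*))$ to the two centered inner products, the split of $\one\one^\top - ww^\top$ into a $\one(\one-w)^\top$ piece (controlled by the degree constraint) and a $(\one-w)(\one-w)^\top$ piece (controlled by the spectral constraint), the two-sided $\pm$ bounds, squaring, and the lower bound on $\iprod{\one,w}^4$. The degree accounting is also in line with the paper's.

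The one place you go astray is the last step. You claim that one must settle for the squared inequality $(d(Y)-d(A^*))^2 \le C^2$ and recover the absolute-value statement only after applying a pseudo-expectation and Cauchy--Schwarz, so that the lemma ``holds in a loose sense.'' That is not needed: since the bound $C = \frac{4(\gamma+\gamma^*)\sigma\sqrt{\hat d} + 2(\gamma+\gamma^*)\delta\sqrt{\hat d}}{(1-\gamma-\gamma^*)^2}$ is a \emph{scalar} (it involves only $\gamma, \gamma^*, \sigma, \delta, \hat d$, not the indeterminates $Y, z$), the implication $\{x^2 \le C^2\} \sststile{2}{x} -C \le x \le C$ is itself a direct sum-of-squares derivation, via the identity
\[
    C - x = \frac{(C-x)^2}{2C} + \frac{C^2 - x^2}{2C}\,,
\]
and symmetrically for $C + x$. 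This is precisely the paper's \cref{lem:preliminary-SOS-square-to-abs}, which it invokes at the end (together with \cref{lem:preliminary-SOS-abs-to-square} for the squaring step that you perform by hand). Composing this degree-$2$ derivation with your degree-$8$ derivation of the squared inequality keeps the whole proof at degree $8$, and yields the literal statement of the lemma, i.e.\ an SoS proof of both $d(Y)-d(A^*) \le C$ and $d(A^*)-d(Y) \le C$, not merely a consequence for pseudo-expectations. So your proof is essentially complete; just replace the hedged final paragraph with this elementary scalar-bound argument.
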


\begin{proof}
Let $w = z \odot z^*$. Notice that, by \cref{lem:preliminary-SOS-set-union-bound}, we have $\cE \sststile{4}{z} 1-w_i \leq 2-z_i-z^*_i$ for all $i \in [n]$. Moreover, using similar analysis as in the proof of \cref{lem:coarse-sos}, it follows that
\begin{equation*}
    \cE
    \sststile{4}{Y, z}
    Y \odot w w^{\top} = A^* \odot w w^{\top} \,.
\end{equation*}
Therefore, we can get
\begin{align}
\label{eq:fine-er-sos-1}
\begin{split}
    \cE
    \sststile{4}{Y, z}
    n \Bigparen{d(Y) - d(A^*)} = & \iprod{Y-A^*,\one \one^{\top}} \\
    = & \iprod{Y-A^*,\one \one^{\top} - ww^{\top}} \\
    = & \iprod{Y-\frac{d(Y)}{n}\one \one^{\top} +\frac{d(Y)}{n}\one \one^{\top}-\frac{d(A^*)}{n}\one \one^{\top}+\frac{d(A^*)}{n}\one \one^{\top}-A^*,\one \one^{\top} - ww^{\top}} \\
    = & \iprod{Y-\frac{d(Y)}{n}\one \one^{\top},\one \one^{\top} - ww^{\top}} + \iprod{\frac{d(A^*)}{n}\one \one^{\top}-A^*,\one \one^{\top} - ww^{\top}} \\
    & + \iprod{\frac{d(Y)}{n}\one \one^{\top}-\frac{d(A^*)}{n}\one \one^{\top},\one \one^{\top} - ww^{\top}} \\
    = & \iprod{Y-\frac{d(Y)}{n}\one \one^{\top},\one \one^{\top} - ww^{\top}} + \iprod{\frac{d(A^*)}{n}\one \one^{\top}-A^*,\one \one^{\top} - ww^{\top}} \\
    & + \Bigparen{d(Y) - d(A^*)} \Paren{n - \frac{1}{n}\iprod{\one, w}^2} \,.
\end{split}
\end{align}
By rearranging terms, we can get
\begin{equation}
\label{eq:fine-er-sos-2}
    \cE
    \sststile{8}{Y, z}
    \frac{\iprod{\one, w}^2}{n} \Bigparen{d(Y) - d(A^*)}
    = \iprod{Y-\frac{d(Y)}{n}\one \one^{\top},\one \one^{\top} - ww^{\top}} + \iprod{\frac{d(A^*)}{n}\one \one^{\top}-A^*,\one \one^{\top} - ww^{\top}} \,.
\end{equation}
We bound the two terms on the right-hand side separately. For the first term $\iprod{Y-\frac{d(Y)}{n}\one \one^{\top},\one \one^{\top} - ww^{\top}}$, we have
\begin{equation}
\label{eq:fine-er-sos-3}
    \cE
    \sststile{8}{Y, z}
    \iprod{Y-\frac{d(Y)}{n}\one \one^{\top},\one \one^{\top} - ww^{\top}}
    = 2 \iprod{Y-\frac{d(Y)}{n}\one \one^{\top},\one (\one - w)^{\top}} + \iprod{\frac{d(Y)}{n}\one \one^{\top}-Y,(\one - w) (\one - w)^{\top}} \,.
\end{equation}
From constraints $|(Y\one)_i - d(Y)| \le \sigma \sqrt{\hat{d}}$ for all  $i\in[n]$, $\iprod{\one, z} \ge (1-\gamma)n$ and $\iprod{\one, z^*} \ge (1-\gamma^*)n$, we have
\begin{align}
\label{eq:fine-er-sos-4}
\begin{split}
    \cE
    \sststile{8}{Y, z}
    \iprod{Y-\frac{d(Y)}{n}\one \one^{\top},\one (\one - w)^{\top}}
    & = \iprod{Y \one-d(Y)\one,\one - w} \\
    & \leq \sum_{i \in [n]} (1-w_i) \sigma \sqrt{\hat{d}} \\
    & \leq \sigma \sqrt{\hat{d}} \cdot \Paren{\sum_{i \in [n]} 2-z_i-z^*_i} \\
    & \leq (\gamma+\gamma^*) n \sigma \sqrt{\hat{d}} \,.
\end{split}
\end{align}
From constraints $\Normop{Y - \frac{d(Y)}{n} \one \one^{\top}} \leq \delta \sqrt{\hat{d}}$, $\iprod{\one, z} \ge (1-\gamma)n$ and $\iprod{\one, z^*} \ge (1-\gamma^*)n$, we have
\begin{align}
\label{eq:fine-er-sos-5}
\begin{split}
    \cE
    \sststile{8}{Y, z}
    \iprod{\frac{d(Y)}{n}\one \one^{\top}-Y,(\one - w) (\one - w)^{\top}}
    & \leq \Normop{Y - \frac{d(Y)}{n} \one \one^{\top}} \Norm{\one - w}_2^2 \\
    & \leq \delta \sqrt{\hat{d}} \cdot \Paren{\sum_{i \in [n]} (1-w_i)^2} \\
    & = \delta \sqrt{\hat{d}} \cdot \Paren{\sum_{i \in [n]} 1-w_i} \\
    & \leq \delta \sqrt{\hat{d}} \cdot \Paren{\sum_{i \in [n]} 2-z_i-z^*_i} \\
    & \leq (\gamma+\gamma^*) n \delta \sqrt{\hat{d}} \,,
\end{split}
\end{align}
where the equality is because $\cE \sststile{2}{z} (1-w_i)^2 = (1-z_i z^*_i)^2 = 1-z_i z^*_i = 1-w_i$.

Plugging \cref{eq:fine-er-sos-4} and \cref{eq:fine-er-sos-5} into \cref{eq:fine-er-sos-3}, it follows that
\begin{equation}
\label{eq:fine-er-sos-6}
    \cE
    \sststile{8}{Y, z}
    \iprod{Y-\frac{d(Y)}{n}\one \one^{\top},\one \one^{\top} - ww^{\top}}
    \leq 2 (\gamma+\gamma^*) n \sigma \sqrt{\hat{d}} + (\gamma+\gamma^*) n \delta \sqrt{\hat{d}} \,.
\end{equation}
For the second term $\iprod{\frac{d(A^*)}{n}\one \one^{\top}-A^*,\one \one^{\top} - ww^{\top}}$, we can apply the same proof as above to get
\begin{equation}
\label{eq:fine-er-sos-7}
    \cE
    \sststile{8}{Y, z}
    \iprod{\frac{d(A^*)}{n}\one \one^{\top}-A^*,\one \one^{\top} - ww^{\top}}
    \leq 2 (\gamma+\gamma^*) n \sigma \sqrt{\hat{d}} + (\gamma+\gamma^*) n \delta \sqrt{\hat{d}} \,.
\end{equation}
Plugging \cref{eq:fine-er-sos-6} and \cref{eq:fine-er-sos-7} into \cref{eq:fine-er-sos-2}, it follows that
\begin{equation*}
    \cE
    \sststile{8}{Y, z}
    \frac{\iprod{\one, w}^2}{n} \Bigparen{d(Y) - d(A^*)}
    \leq 4 (\gamma+\gamma^*) n \sigma \sqrt{\hat{d}} + 2 (\gamma+\gamma^*) n \delta \sqrt{\hat{d}} \,.
\end{equation*}
Using the same proof strategy, we can also get
\begin{equation*}
    \cE
    \sststile{8}{Y, z}
    \frac{\iprod{\one, w}^2}{n} \Bigparen{d(A^*)-d(Y)}
    \leq 4 (\gamma+\gamma^*) n \sigma \sqrt{\hat{d}} + 2 (\gamma+\gamma^*) n \delta \sqrt{\hat{d}} \,.
\end{equation*}
Applying \cref{lem:preliminary-SOS-abs-to-square}, it follows that
\begin{equation}
\label{eq:fine-er-sos-8}
    \cE
    \sststile{8}{Y, z}
    \frac{\iprod{\one, w}^4}{n^2} \Bigparen{d(Y) - d(A^*)}^2
    \leq \Paren{4 (\gamma+\gamma^*) n \sigma \sqrt{\hat{d}} + 2 (\gamma+\gamma^*) n \delta \sqrt{\hat{d}}}^2 \,.
\end{equation}
Now, we would like to lower bound $\iprod{\one, w}^4$. By \cref{lem:preliminary-SOS-set-union-bound}, we have $\cE \sststile{4}{z} w_i \geq z_i+z^*_i-1$ for all $i \in [n]$. Therefore,
\begin{equation*}
    \cE
    \sststile{8}{Y, z}
    \iprod{\one, w}
    = \sum_{i \in [n]} w_i
    \geq \sum_{i \in [n]} (z_i+z^*_i-1)
    \geq (1- \gamma -\gamma^*) n \,.
\end{equation*}
Since $\gamma + \gamma^* < 1$, we have $1- \gamma -\gamma^* > 0$, and, therefore,
\begin{equation*}
    \cE
    \sststile{8}{Y, z}
    \iprod{\one, w}^4
    \geq (1- \gamma -\gamma^*)^4 n^4 \,.
\end{equation*}
Plugging this into \cref{eq:fine-er-sos-8}, we have
\begin{align*}
    \cE
    & \sststile{8}{Y, z}
    (1- \gamma -\gamma^*)^4 n^2 \Bigparen{d(Y) - d(A^*)}^2
    \leq \Paren{4 (\gamma+\gamma^*) n \sigma \sqrt{\hat{d}} + 2 (\gamma+\gamma^*) n \delta \sqrt{\hat{d}}}^2 \\
    & \sststile{8}{Y, z}
    \Bigparen{d(Y) - d(A^*)}^2
    \leq \frac{\Paren{4 (\gamma+\gamma^*) \sigma \sqrt{\hat{d}} + 2 (\gamma+\gamma^*) \delta \sqrt{\hat{d}}}^2}{(1- \gamma -\gamma^*)^4} \,.
\end{align*}
Applying \cref{lem:preliminary-SOS-square-to-abs}, it follows that
\begin{equation*}
    \cE
    \sststile{8}{Y, z}
    \Abs{d(Y) - d(A^*)}
    \leq \frac{4 (\gamma+\gamma^*) \sigma \sqrt{\hat{d}} + 2 (\gamma+\gamma^*) \delta \sqrt{\hat{d}}}{(1-\gamma-\gamma^*)^2} \,.
\end{equation*}
\end{proof}

\begin{lemma}
\label{lem:fine-er-existence-proof}
  Let $A$ be an $\eta$-corrupted adjacency matrix of a random graph $\rv G^{\circ} \sim \bbG(n, \frac{d^{\circ}}{n})$. With probability $1-n^{-\Omega(1)}$, there exists $A^*$ and $z^*$ such that
    \begin{itemize}
        \item $|d(A^*) - d^{\circ}| \leq 10 \sqrt{\frac{d^{\circ} \log n}{n}}$.
        \item $(A^*, z^*)$ is a feasible solution to $\cE(Y,z;A,\gamma,\sigma, \delta, \hat{d})$ with $\gamma = \eta$, $\sigma = 4 \log n$, $\delta = 4 C \sqrt{\log n}$ for some constant $C$ and $\hat{d} \geq \frac{1}{2} d^{\circ}$.
    \end{itemize}
\end{lemma}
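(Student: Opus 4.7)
The natural candidate witness is to take $A^{*}=A^{\circ}$, the adjacency matrix of the uncorrupted sample $\rv G^{\circ}\sim\bbG(n,d^{\circ}/n)$, and $z^{*}=z^{\circ}\in\{0,1\}^{n}$, the indicator of the set of uncorrupted nodes (so $\langle\one,z^{\circ}\rangle\ge(1-\eta)n$ by definition of the corruption model). The plan is to verify each of the defining constraints of $\cE(Y,z;A,\eta,4\log n,4C\sqrt{\log n},\hat d)$ in turn, using the assumption $\hat d\ge d^{\circ}/2$ together with standard concentration for $\bbG(n,d^{\circ}/n)$.

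The combinatorial constraints of $\cP_{1}$ are essentially by construction: $z^{\circ}\odot z^{\circ}=z^{\circ}$ and $A^{\circ}=(A^{\circ})^{\top}$, $0\le A^{\circ}\le\one\one^{\top}$ are immediate, and $A^{\circ}\odot z^{\circ}(z^{\circ})^{\top}=A\odot z^{\circ}(z^{\circ})^{\top}$ holds because $\eta$-corruption only alters rows and columns indexed outside the support of $z^{\circ}$. The average-degree bound $|d(A^{\circ})-d^{\circ}|\le 10\sqrt{d^{\circ}\log n/n}$ follows from \cref{lem:average_degree_concentration} in the appendix, with failure probability $n^{-\Omega(1)}$.

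The two remaining constraints in $\cP_{4}$ are the regularity conditions, and these are the technical heart of the argument. For the per-node degree bound, I would apply a standard Bernstein/Chernoff bound to each row sum $(A^{\circ}\one)_{i}$ (which is a sum of independent Bernoullis with mean at most $d^{\circ}$) and union bound over $i\in[n]$ to get $\max_{i}|(A^{\circ}\one)_{i}-d^{\circ}|\le O(\sqrt{d^{\circ}\log n}+\log n)$ with probability $1-n^{-\Omega(1)}$; combined with the average-degree bound this yields $\max_{i}|(A^{\circ}\one)_{i}-d(A^{\circ})|\le O(\sqrt{d^{\circ}\log n}+\log n)$. Since $\hat d\ge d^{\circ}/2\ge C_{3}/2$, one checks that $4\log(n)\sqrt{\hat d}\ge 4\log(n)\sqrt{d^{\circ}/2}$ dominates both $\sqrt{d^{\circ}\log n}$ and $\log n$ when $C_{3}$ is a sufficiently large absolute constant. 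For the spectral bound, I would invoke the sharp inhomogeneous Wigner bound of \cite{bandeira2016sharp}, which gives $\normop{A^{\circ}-\E A^{\circ}}\le C\sqrt{d^{\circ}\log n}$ with probability $1-n^{-\Omega(1)}$ for some absolute constant $C$. Since $\E A^{\circ}=(d^{\circ}/n)(\one\one^{\top}-\Id)$, a triangle-inequality argument writing
\[
A^{\circ}-\tfrac{d(A^{\circ})}{n}\one\one^{\top}=\bigl(A^{\circ}-\E A^{\circ}\bigr)+\bigl(\E A^{\circ}-\tfrac{d^{\circ}}{n}\one\one^{\top}\bigr)+\tfrac{d^{\circ}-d(A^{\circ})}{n}\one\one^{\top}
\]
bounds the spectral norm by $C\sqrt{d^{\circ}\log n}+d^{\circ}/n+10\sqrt{d^{\circ}\log n/n}\le 2C\sqrt{d^{\circ}\log n}$, which in turn is at most $4C\sqrt{\log n}\sqrt{\hat d}$ using $\hat d\ge d^{\circ}/2$.

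The main obstacle is purely bookkeeping rather than conceptual: one needs to confirm that the slack absorbed into the constants $4$ and $4C$ is wide enough to swallow (i) the fluctuation of $d(A^{\circ})$ around $d^{\circ}$, (ii) the gap between $\E A^{\circ}$ and $(d^{\circ}/n)\one\one^{\top}$ (a rank-one correction of operator norm $d^{\circ}/n$), and (iii) the slack between $\hat d$ and $d^{\circ}/2$. Taking the final union bound across the three concentration events (average degree, maximum degree, spectral norm) yields a joint failure probability of $n^{-\Omega(1)}$, completing the argument.
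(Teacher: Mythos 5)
Your proof takes essentially the same approach as the paper: choose the witness $A^{*}=A^{\circ}$, $z^{*}=z^{\circ}$, then verify $\cP_{1}$ by construction and $\cP_{4}$ via three standard concentration events (average degree, maximum degree deviation, spectral norm), with a final union bound. The places you differ are cosmetic and in your favor: the paper cites its Lemma~\ref{lem:degree_distribution} (which actually bounds the \emph{count} of high-degree nodes) to justify $\Norm{A^{\circ}\one-d^{\circ}\one}_{\infty}\le\sqrt{d^{\circ}}\log n$, whereas your direct Chernoff-plus-union-bound argument is the cleaner and more precise justification of that row-sum concentration; you also spell out the rank-one correction $\E A^{\circ}-(d^{\circ}/n)\one\one^{\top}=-({d^{\circ}}/{n})\Id$ explicitly rather than silently absorbing it. Your remark that one needs $d^{\circ}$ at least a constant (so that the $\log n$ additive term in Bernstein is dominated by $\sigma\sqrt{\hat d}$) is a real, if minor, hypothesis the paper leaves implicit; it is harmless because the downstream Theorem~\ref{thm:fine_estimation_er} already assumes $\dnull\ge C_{3}$.
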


\begin{proof}
  Let $\rv A^{\circ}$ be the adjacency matrix of $\rv G^{\circ}$ and $z^{\circ} \in \set{0, 1}^n$ denote the set of uncorrupted nodes ($z^{\circ}_i = 1$ if and only if node $i$ is uncorrupted).We will show that $A^* = \rv A^{\circ}$ and $z^* = z^{\circ}$ satisfies the lemma.
  
  \paragraph{Guarantee 1}
  By \cref{lem:average_degree_concentration}, we know that, with probability $1-n^{-\Omega(1)}$,
  \begin{equation}
  \label{eq:fine-er-robustness-1}
      |d(A^{\circ}) - d^{\circ}| \leq 10 \sqrt{\frac{d^{\circ} \log n}{n}} \,.
  \end{equation}
  
  \paragraph{Guarantee 2}
  It is easy to check that $z^* \odot z^* = z^*$, $0 \leq A^* \leq \one \one^{\top}$, $A^* = (A^*)^\top$ and $\iprod{\one, z^*} \geq 1 - \eta n$.
  By \cref{lem:degree_distribution}, we know that, with probability $1-n^{-\Omega(1)}$,
  \begin{equation}
  \label{eq:fine-er-robustness-2}
      \Norm{A^{\circ} \one - d^{\circ}\one}_{\infty} \leq \sqrt{d^{\circ}} \log n \,.
  \end{equation}
  Combining \cref{eq:fine-er-robustness-1} and \cref{eq:fine-er-robustness-2}, we have
  \begin{align}
  \label{eq:fine-er-robustness-3}
  \begin{split}
      \Norm{A^{\circ} \one - d(A^{\circ}) \one}_{\infty}
      & \leq \Norm{A^{\circ} \one - d^{\circ} \one}_{\infty} + \Norm{d^{\circ} \one - d(A^{\circ}) \one}_{\infty}\\
      & \leq \sqrt{d^{\circ}} \log n + 10 \sqrt{\frac{d^{\circ} \log n}{n}} \\
      & \leq 2 \log n \sqrt{d^{\circ}} \,.
  \end{split}
  \end{align}
  Therefore, for $\sigma = 4 \log n$ and $\hat{d} \geq \frac{1}{2}d^{\circ}$, it follows that
  \begin{equation*}
    \Abs{(A^* \one)_i - d(A^*)} \leq 2 \log n \sqrt{d^{\circ}} \leq \sigma \sqrt{\hat{d}} \,,
  \end{equation*}
  for all $i \in [n]$.
  
  By \cref{lem:spectral_bound_with_degree_pruning}, we know that, with probability $1-n^{-\Omega(1)}$, for some universal constant $C$,
  \begin{equation}
  \label{eq:fine-er-robustness-4}
      \Normop{A^{\circ} - \frac{d^{\circ}}{n}\one\one^\top} \leq C \sqrt{d^{\circ}\log n} \,.
  \end{equation}
  Combining \cref{eq:fine-er-robustness-1} and \cref{eq:fine-er-robustness-4}, we have
  \begin{align}
  \label{eq:fine-er-robustness-5}
  \begin{split}
      \Normop{A^{\circ} - \frac{d(A^{\circ})}{n}\one\one^\top}
      & \leq \Normop{A^{\circ} - \frac{d(A^{\circ})}{n}\one\one^\top} + \Normop{\frac{d(A^{\circ})}{n}\one\one^\top - \frac{d^{\circ}}{n}\one\one^\top}\\
      & \leq C \sqrt{d^{\circ}\log n} + 10 \sqrt{\frac{d^{\circ} \log n}{n}} \\
      & \leq 2 C \sqrt{d^{\circ}\log n} \,.
  \end{split}
  \end{align}
  Therefore, for $\delta = 4 C \sqrt{\log n}$ and $\hat{d} \geq \frac{1}{2} d^{\circ}$, it follows that
  \begin{equation*}
      \Normop{A^* - \frac{d(A^*)}{n}\one\one^\top} \leq 2 C \sqrt{d^{\circ}\log n} \leq \delta \sqrt{\hat{d}} \,.
  \end{equation*}
  Thus, $(A^*, z^*)$ is a feasible solution to $\cE(Y,z;A,\gamma,\sigma, \delta, \hat{d})$ with $\gamma = \eta$, $\sigma = 4 \log n$, $\delta = 4 C \sqrt{\log n}$ and $\hat{d} \geq \frac{1}{2} d^{\circ}$.
\end{proof}

\subsection{Robust algorithm}
\label{sec:robust_fine_estimation_er}

In this section, we show that the following algorithm based on sum-of-squares proofs in \cref{sec:fine-er-sos} obtains a robust approximation of $\dnull$ that is optimal up to logarithmic factors.

\begin{algorithmbox}[Robust fine estimation algorithm for \ER random graphs]
    \label{alg:robust-fine-er}
    \mbox{}\\
    \textbf{Input:}
        $\eta$-corrupted adjacency matrix $A$ and corruption fraction $\eta$.

    \noindent
    \textbf{Algorithm:}
    \begin{enumerate}
        \item Obtain coarse estimator $\hat{d}$ by applying \cref{alg:robust-coarse} with $A, \eta, R=1$ as input.
        \item  Obtain level-$8$ pseudo-expectation $\tilde{\E}$ by solving sum-of-squares relaxation of program $\cE(Y,z;A,\gamma,\sigma, \delta, \hat{d})$ (defined in \cref{poly_sys:fine_er_subgraph_intersect_and_max_degree_bound}) with $A$, $\gamma = \eta$, $\sigma = 4 \log n$, $\delta = 4 C \sqrt{\log n}$ and $\hat{d}$.
    \end{enumerate}
    
    \noindent
    \textbf{Output:} $\tilde{\E}[d(Y)]$
\end{algorithmbox}

\begin{theorem}[Robust fine estimation for \ER random graphs]
    \label{thm:robust-fine-estimation-er}
    Let $A$ be an $\eta$-corrupted adjacency matrix of a random graph $\rv G^{\circ} \sim \bbG(n, \frac{d^{\circ}}{n})$.
    With probability $1-n^{-\Omega(1)}$, \cref{alg:robust-fine-er} outputs an estimate $\tilde{d}$ satisfying
    \begin{equation*}
        \Abs{\frac{\tilde d}{\dnull}-1} 
        \le O \Paren{\sqrt{\frac{\log n}{\dnull n}} + \frac{\eta\log n}{\sqrt{\dnull}}} \,.
    \end{equation*}
\end{theorem}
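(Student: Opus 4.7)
}

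The plan is to chain together the coarse estimation guarantee (\cref{thm:robust-coarse-estimation}), the feasibility lemma (\cref{lem:fine-er-existence-proof}), and the identifiability sum-of-squares inequality (\cref{lem:fine-er-sos}), then take pseudo-expectations on both sides. All the hard analytic work has already been done; this theorem is essentially the composition.

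First, I would invoke \cref{thm:robust-coarse-estimation} with $R=1$ (applicable since for $\bbG(n,\dnull/n)$ we have $\Norm{\Qnull}_\infty = \dnull/n$, and provided $\eta$ is below the small constant required there). With probability $1-n^{-\Omega(1)}$ the coarse stage produces $\hat{d}$ satisfying $\tfrac{1}{2}\dnull \le \hat{d} \le \tfrac{3}{2}\dnull$. This justifies the hypothesis $\hat{d} \ge \tfrac{1}{2}\dnull$ needed to apply the feasibility lemma. Next, I would apply \cref{lem:fine-er-existence-proof} with $\gamma^\ast = \eta$, $\sigma = 4\log n$, $\delta = 4C\sqrt{\log n}$, yielding (on a further high-probability event) a witness $(A^\ast, z^\ast)$ feasible for $\cE(Y,z;A,\eta,\sigma,\delta,\hat{d})$ and obeying $|d(A^\ast) - \dnull| \le 10\sqrt{\dnull \log n / n}$. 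Union-bounding over the two events preserves probability $1-n^{-\Omega(1)}$.

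Now I would apply \cref{lem:fine-er-sos} with $\gamma = \eta$ and the above $\gamma^\ast = \eta$, which (assuming $\eta$ is at most a small constant so that $1-\gamma-\gamma^\ast \ge \tfrac{1}{2}$) produces the degree-$8$ sum-of-squares inequality
\[
    \cE(Y,z;A,\eta,\sigma,\delta,\hat{d})
    \;\sststile{8}{Y,z}\;
    \Abs{d(Y) - d(A^\ast)}
    \le \frac{8\eta \sigma \sqrt{\hat{d}} + 4\eta \delta \sqrt{\hat{d}}}{(1-2\eta)^2}
    \le O\Paren{\eta \log(n) \sqrt{\dnull}},
\]
where the last bound uses $\sigma = 4\log n$, $\delta = 4C\sqrt{\log n}$, and $\hat{d} \le \tfrac{3}{2}\dnull$. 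Taking the pseudo-expectation $\tilde{\E}$ of the output level-$8$ pseudo-distribution and combining with $|d(A^\ast) - \dnull| \le 10\sqrt{\dnull\log n/n}$ via the triangle inequality yields
\[
    \Abs{\tilde{\E}[d(Y)] - \dnull}
    \le O\Paren{\sqrt{\dnull \log n / n} + \eta \log(n) \sqrt{\dnull}}.
\]
Dividing by $\dnull$ produces exactly the stated error bound.

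The only non-routine step is verifying that the pseudo-expectation $\tilde{\E}$ truly exists, i.e.\ that the SoS program $\cE$ with the chosen parameters is feasible --- but this is precisely what \cref{lem:fine-er-existence-proof} gives us. The main conceptual subtlety (and hence the natural place where one could slip) is the careful propagation of the high-probability events across the two stages and the fact that $\hat{d}$ appearing inside the second-stage program is itself the random output of a private mechanism; however, since our analysis only uses the \emph{deterministic} property $\tfrac{1}{2}\dnull \le \hat{d} \le \tfrac{3}{2}\dnull$ (and the second-stage feasibility/identifiability lemmas are proved for any such $\hat{d}$), conditioning on the success of the coarse stage decouples the two analyses cleanly.
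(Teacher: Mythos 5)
Your proposal is correct and mirrors the paper's proof essentially line by line: invoke \cref{thm:robust-coarse-estimation} to pin $\hat d$ to within a constant factor of $\dnull$, apply \cref{lem:fine-er-existence-proof} to obtain the feasible witness $(A^\ast,z^\ast)$ with $|d(A^\ast)-\dnull|\le 10\sqrt{\dnull\log n/n}$, feed these into \cref{lem:fine-er-sos} with $\gamma=\gamma^\ast=\eta$, and take pseudo-expectations. The only (harmless) addition is that you explicitly flag the small-constant restriction on $\eta$ and the union bound over the two high-probability events, which the paper leaves implicit.
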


\begin{proof}
By \cref{thm:robust-coarse-estimation}, we have $\frac{1}{2}d^{\circ} \leq \hat{d} \leq \frac{3}{2} d^{\circ}$. Let $\gamma^*=\eta$, by \cref{lem:fine-er-sos} and \cref{lem:fine-er-existence-proof}, it follows that
\begin{align*}
    \cE(Y,z;A,\gamma,\sigma, \delta, \hat{d})
    \sststile{O(1)}{Y, z}
    \Abs{d(Y) - d(A^*)}
    & \leq \frac{4 (\gamma+\gamma^*) \sigma \sqrt{\hat{d}} + 2 (\gamma+\gamma^*) \delta \sqrt{\hat{d}}}{(1-\gamma-\gamma^*)^2} \\
    & \leq \frac{40 \eta \log n \sqrt{d^{\circ}} + 40 C \eta \sqrt{d^{\circ} \log n}}{(1-2\eta)^2} \\
    & \leq C' \eta \log n \sqrt{d^{\circ}} \,,
\end{align*}
for some constant $C'$, and,
\begin{equation*}
    |d(A^*) - d^{\circ}| \leq 10 \sqrt{\frac{d^{\circ} \log n}{n}}\,.
\end{equation*}
Therefore, we have 
\begin{align*}
    \cE(Y,z;A,\gamma,\sigma, \delta, \hat{d})
    \sststile{O(1)}{Y, z} \Abs{d(Y) - d^{\circ}} \leq C' \eta \log n \sqrt{d^{\circ}} + 10 \sqrt{\frac{d^{\circ} \log n}{n}} \,.
\end{align*}
Thus, the level-$8$ pseudo-expectation $\tilde{\E}$ satisfies
\begin{equation*}
  \Abs{\tilde{\E}[d(Y)] - d^{\circ}} \leq C' \eta \log n \sqrt{d^{\circ}} + 10 \sqrt{\frac{d^{\circ} \log n}{n}} \,,
\end{equation*}
which implies that
\begin{equation*}
  \Abs{\frac{\tilde{\E}[d(Y)]}{d^{\circ}} - 1}
  O \Paren{\sqrt{\frac{\log n}{\dnull n}} + \frac{\eta\log n}{\sqrt{\dnull}}} \,.
\end{equation*}
\end{proof}

\subsection{Private algorithm}
\label{sec:private_fine_estimation_er}

In this section, we present our algorithm and prove \cref{thm:fine_estimation_er}. 
Our algorithm instantiates the sum-of-squares exponential mechanism in \cref{sec:sos_exp_mechanism}.

\paragraph{Score function}
For an $n$-by-$n$ symmetric matrix $A$ and a scalar $d$, we define the score of $d$ with regard to $A$ to be
\begin{align}
    \label{eq:fine_er_sos_score}
    s(d; A) \seteq 
    \min_{0\le\gamma\le1} \gamma n \text{ s.t. } 
    \begin{cases} 
        \exists \text{ level-}8 \text{ pseudo-expectation } \tE \text{ satisfying } \\
        \cE(Y,z; A,\gamma,\sigma,\delta,\hat{d}) \,\cup\, \Set{\Abs{d(Y)-d} \le \alpha d} \,,
    \end{cases} 
\end{align}
where $\cE(Y,z; A,\gamma,\sigma,\delta,\hat{d})$ is the polynomial system defined in \cref{poly_sys:fine_er_subgraph_intersect_and_max_degree_bound}, 
$\hat{d}$ is a coarse estimate, and
$\sigma, \delta, \alpha$ are fixed parameters whose values will be decided later.
Note that $(Y=\frac{d}{n}\one\one^\top, z=\zero)$ is a solution to the polynomial system $\cE(Y,z; A,1,\sigma,\delta,\hat{d}) \,\cup\, \Set{\Abs{d(Y)/d-1}\le\alpha}$ for any $A\in\R^{n\times n}$ and any $d\in[0,n]$.

To efficiently compute $s(d;A)$, we can use the scheme as described in \cref{remark:score_function_computation}.

\paragraph{Exponential mechanism}
Given a privacy parameter $\eps$ and an $n$-by-$n$ symmetric matrix $A$, our private algorithm in \cref{thm:fine_estimation_er} is the exponential mechanism with score function \cref{eq:fine_er_sos_score} and range $[0,n]$.

\begin{algorithmbox}[Fine estimation for \ER random graphs]
    \label{alg:fine_er_exp_mec}
    \mbox{}\\
    \textbf{Input:}
        Graph $A$, coarse estimate $\hat{d}$.
    
    \noindent
    \textbf{Parameters:} 
        $\eps, \sigma, \delta, \alpha$.
    
    \noindent
    \textbf{Output:} 
        A sample from the distribution $\mu_{A,\eps}$ with support $[0,n]$ and density 
        \begin{equation}
            \label{eq:fine_er_exp_mec}
            \mathrm{d}\mu_{A,\eps}(d) \propto \exp\paren{-\eps\cdot s(d;A)} \,,
        \end{equation}
        where $s(d;A)$ is defined in \cref{eq:fine_er_sos_score}.
\end{algorithmbox}

To efficiently sample from $\mu_{A,\eps}$, we can use the scheme as described in \cref{remark:sampling}.

\paragraph{Privacy}
The following privacy guarantee of our algorithm is a direct corollary of \cref{lem:sos_exp_mec_privacy}.

\begin{lemma}[Privacy]
    \label{lem:fine_er_exp_mec_privacy}
    \cref{alg:fine_er_exp_mec} is $2\eps$-differentially node private.
\end{lemma}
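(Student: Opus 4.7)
The plan is to reduce this privacy statement directly to the general privacy guarantee of the sum-of-squares exponential mechanism established in \cref{lem:sos_exp_mec_privacy}, by exhibiting \cref{alg:fine_er_exp_mec} as a concrete instance of the framework set up in \cref{sec:sos_exp_mechanism}. The only real work is to match notation and verify the two conditions the framework requires.

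First I would choose the parametrization. Taking $\cQ_1(Y,z;A,\gamma) \seteq \cP_1(Y,z;A,\gamma)$ makes the $A$-dependent part of the constraint system coincide exactly with the one used in \cref{eq:sos_score}, and taking
\[
    \cQ_2(Y;d) \seteq \cP_4(Y;\sigma,\delta,\hat{d}) \,\cup\, \Set{\Abs{d(Y)-d} \le \alpha d}
\]
places all the remaining constraints into the $A$-independent bucket: the inputs $\sigma$, $\delta$, and the coarse estimate $\hat{d}$ are fixed algorithm parameters and, crucially, do not depend on the adjacency matrix $A$ (the coarse estimate is supplied as an external input to \cref{alg:fine_er_exp_mec}, so for the purposes of this lemma it is treated as a fixed constant). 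With this identification, the score function in \cref{eq:fine_er_sos_score} is precisely the score \cref{eq:sos_score} of the general framework, and the exponential mechanism in \cref{eq:fine_er_exp_mec} is precisely the mechanism in \cref{eq:sos_exp_mechanism} with range $\cD = [0,n]$.

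Next I would verify the sole technical precondition of the framework, namely that for every $d\in\cD$ there exists a symmetric matrix $A^* \in [0,1]^{n\times n}$ with $\cQ_2(A^*;d)$ true. This is exactly the observation already recorded just after the definition of the score function: the matrix $A^* = \tfrac{d}{n}\one\one^\top$ is symmetric, has entries in $[0,1]$ for $d\in[0,n]$, has all row sums equal to $d$ (so the degree-deviation constraints in $\cP_4$ hold trivially with zero slack), satisfies $A^* - \tfrac{d(A^*)}{n}\one\one^\top = 0$ (so the spectral constraint holds), and yields $d(A^*) = d$ (so the $\alpha$-closeness constraint holds). Hence $s(d;A)$ is well-defined on all of $\cD$.

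Having matched the setup, the conclusion is immediate: by \cref{lem:sos_score_sensitivity} the score function $s(\cdot;A)$ has sensitivity at most $1$ under the rewiring of a single node, and \cref{lem:sos_exp_mec_privacy} then guarantees that the exponential mechanism sampling proportional to $\exp(-\eps\cdot s(d;A))$ is $2\eps$-differentially node private. Since this is exactly \cref{alg:fine_er_exp_mec}, the lemma follows. I do not anticipate any obstacle here; the main conceptual point—already handled abstractly in \cref{sec:sos_exp_mechanism}—is that the $A$-dependence is cleanly isolated in $\cP_1$, so that neighboring graphs $A,A'$ induce score functions differing by at most $1$ via the substitution $\ell_i(z)$ used in the proof of \cref{lem:sos_score_sensitivity}.
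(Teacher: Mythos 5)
Your proof is correct and takes exactly the same route as the paper, which simply states the lemma is a direct corollary of \cref{lem:sos_exp_mec_privacy}; you have just spelled out the instantiation ($\cQ_1=\cP_1$, $\cQ_2=\cP_4\cup\{|d(Y)-d|\le\alpha d\}$, and well-definedness via $A^*=\tfrac{d}{n}\one\one^\top$) that the paper leaves implicit. You also correctly flag the key subtlety that $\hat{d}$ must be treated as a fixed input to this stage so that $\cQ_2$ is genuinely $A$-independent, with privacy of the composed pipeline handled by composition.
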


\paragraph{Utility}
The utility guarantee of our algorithm is stated in the following lemma.

\begin{lemma}[Utility]
    \label{lem:fine_er_exp_mec_util}
    There are constants $C_1, C_2, C_3$ such that the following holds.
    For any $\eta\le C_1$, $\eps \ge C_2\log(n)/n$, and $\dnull\geq C_3$, 
    given an $\eta$-corrupted \ER random graph $\bbG(n, \dnull/n)$ and a coarse estimate $\hat{d}$ such that $0.5\dnull \le \hat{d} \le 2\dnull$,
    \cref{alg:fine_er_exp_mec} outputs an estimate $\tilde{d}$ satisfying
    \[
        \Abs{\frac{\tilde d}{\dnull}-1} 
        \le 
        O \Paren{\sqrt{\frac{\log n}{\dnull n}} +
        \frac{\log^2 n}{\sqrt{\dnull} \eps n} +
        \frac{\eta\log n}{\sqrt{\dnull}}} \,,
    \]
    with probability $1-n^{-\Omega(1)}$.
\end{lemma}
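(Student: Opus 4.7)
The plan is to instantiate, for the \ER setting, the three-step strategy used to prove \cref{lem:fine_inhomo_exp_mec_util}: feasibility at $\gamma^{\star}=\eta$, a volume lower bound on low-score points, and an \sos identifiability translation of ``low score implies accurate estimate''. The key quantitative gain over the inhomogeneous case comes from \cref{lem:fine-er-sos}, whose right-hand side scales as $\sqrt{\hat d}$ rather than $\hat d$; this is exactly what replaces the $R\eta\log(1/\eta)$ rate with $\eta\log(n)/\sqrt{\dnull}$.

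First I would apply \cref{lem:fine-er-existence-proof} to obtain, with probability $1-n^{-\Omega(1)}$, a feasible pair $(A^{\star},z^{\star})$ for $\cE(Y,z;A,\gamma^{\star},\sigma,\delta,\hat d)$ with $\gamma^{\star}=\eta$, $\sigma=4\log n$, $\delta=4C\sqrt{\log n}$, and $|d(A^{\star})-\dnull|\le 10\sqrt{\dnull\log n/n}$. Since $\hat d\ge\dnull/2\ge C_3/2$, the constant $C_3$ can be chosen so that $d(A^{\star})\ge 2$. Next I would prove the analogue of \cref{lem:fine_inhomo_exp_mec_volume_of_low_score_points}: applying \cref{lem:sos_exp_mec_utility} with $\cD=[0,n]$ and $\cG(A^{\star})=\{d\in\cD:d(A^{\star})/(1+\alpha)\le d\le d(A^{\star})/(1-\alpha)\}$ gives
\[
  \Pr_{\rv d\sim\mu_{A,\eps}}\Paren{s(\rv d;A)\ge\gamma^{\star}n+\tfrac{t\log n}{\eps}}\le \frac{n^{-t+1}}{\alpha}.
\]
Taking $\alpha=n^{-2}$ and $t=10$, and using $\eps\ge C_2\log(n)/n$, the sampled $\tilde d$ therefore satisfies $s(\tilde d;A)\le \tau n$ with probability $1-n^{-7}$, where $\tau\seteq \eta+10\log(n)/(\eps n)$.

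Finally I would translate the low-score guarantee into a utility guarantee via \cref{lem:fine-er-sos}: by definition of $s(\cdot;A)$ there is a degree-$8$ pseudo-expectation satisfying $\cE(Y,z;A,\tau,\sigma,\delta,\hat d)\cup\{|d(Y)-\tilde d|\le\alpha\tilde d\}$, so
\[
  |\tilde d-d(A^{\star})| \le \frac{4(\eta+\tau)\sigma\sqrt{\hat d}+2(\eta+\tau)\delta\sqrt{\hat d}}{(1-\eta-\tau)^2}+\alpha\tilde d.
\]
Choosing $C_1$ and $C_2$ small enough forces $\eta+\tau\le 1/2$, hence the denominator is $\Theta(1)$. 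Plugging in $\sigma=O(\log n)$, $\delta=O(\sqrt{\log n})$, $\hat d\le 2\dnull$, $\alpha=n^{-2}$, and combining with $|d(A^{\star})-\dnull|\le 10\sqrt{\dnull\log n/n}$ gives
\[
  |\tilde d-\dnull| \le O\Paren{\sqrt{\tfrac{\dnull\log n}{n}}+\tfrac{\log^{2} n}{\eps n}\sqrt{\dnull}+\eta\log(n)\sqrt{\dnull}}.
\]
Dividing by $\dnull$ yields the claimed bound.

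There is no substantive obstacle beyond careful bookkeeping; the only thing to verify is that the constants can be tuned so that $(1-\eta-\tau)^{2}=\Theta(1)$ and the dominant error term is $(\eta+\tau)\sigma\sqrt{\hat d}$, which contributes the $\eta\log(n)/\sqrt{\dnull}$ robustness term and the $\log^{2}(n)/(\eps n\sqrt{\dnull})$ privacy term; the $\delta$-term only contributes a lower-order $\sqrt{\log n}$ factor, and the sampling term $\sqrt{\log n/(\dnull n)}$ comes entirely from the feasibility bound on $|d(A^{\star})-\dnull|$. Privacy is inherited directly from \cref{lem:fine_er_exp_mec_privacy}, so the lemma follows.
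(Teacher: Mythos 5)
Your proposal matches the paper's proof essentially step for step: the same application of \cref{lem:fine-er-existence-proof} for feasibility at $\gamma^{\star}=\eta$, the same volume argument via \cref{lem:sos_exp_mec_utility} (with the same parameter choices $\alpha=n^{-2}$, $t=10$), and the same translation of low score into utility via \cref{lem:fine-er-sos} (the paper packages this last step as the separate \cref{lem:fine_er_low_score_implies_utility}, which you effectively inline). The bookkeeping of the error terms is also the same, so this is correct and not a genuinely different route.
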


Before proving \cref{lem:fine_er_exp_mec_util}, we need the following two lemmas.

\begin{lemma}[Volume of low-score points]
    \label{lem:fine_er_exp_mec_volume_of_low_score_points}
    Let $A\in\R^{n\times n}$ and $\eps>0$.
    Consider the distribution $\mu_{A,\eps}$ defined by \cref{eq:fine_er_exp_mec}.
    Suppose $(Y=A^*, z=z^*)$ is a solution to $\cE(Y,z; A,\gamma^*)$ and $d(A^*) \ge 2$.
    Then for any $t\geq0$,
    \[
        \Pr_{\rv d\sim\mu_{A,\eps}}\Paren{s(\rv d;A) \ge \gamma^* n + \frac{t\log n}{\eps}} 
        \le \frac{n^{-t+1}}{\alpha} \,.
    \]
\end{lemma}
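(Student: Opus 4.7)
The plan is to invoke the general low-score volume bound (\cref{lem:sos_exp_mec_utility}) in the setting of \cref{alg:fine_er_exp_mec}, exactly as was done for the coarse estimator (\cref{lem:coarse_exp_mec_volume_of_low_score_points}) and the inhomogeneous fine estimator (\cref{lem:fine_inhomo_exp_mec_volume_of_low_score_points}). The structural reason this works is that \cref{alg:fine_er_exp_mec} is obtained by plugging the polynomial systems $\cQ_1 = \cP_1(Y,z;A,\gamma)$ and $\cQ_2(Y;d) = \cP_4(Y;\sigma,\delta,\hat d) \cup \set{\Abs{d(Y)-d} \le \alpha d}$ into the abstract template of \cref{sec:sos_exp_mechanism}, and \cref{lem:sos_exp_mec_utility} is stated generically for any such instantiation.

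First I would set up the instantiation: take $\cD = [0,n]$ (the support of $\mu_{A,\eps}$ from \cref{eq:fine_er_exp_mec}) and check that the score \cref{eq:fine_er_sos_score} matches the abstract score \cref{eq:sos_score} under the above identification of $\cQ_2$. By hypothesis, $(Y=A^*,\,z=z^*)$ is feasible for $\cE(Y,z;A,\gamma^*,\sigma,\delta,\hat d)$, so in particular $A^*$ satisfies the regularity constraints $\cP_4(A^*;\sigma,\delta,\hat d)$ and $\cP_1(A^*,z^*;A,\gamma^*)$ holds with parameter $\gamma^*$.

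Next I would identify the set of candidates $d$ for which the same witness $(A^*,z^*)$ already certifies $s(d;A) \le \gamma^* n$, namely those $d$ for which the extra constraint $\Abs{d(A^*)-d} \le \alpha d$ is satisfied. This set is
\[
\cG(A^*) \;=\; \Brac{\frac{d(A^*)}{1+\alpha},\;\frac{d(A^*)}{1-\alpha}}.
\]
Since $d(A^*) \ge 2$ and we may assume $\alpha \le 1$, the subinterval $[d(A^*)/(1+\alpha),\,d(A^*)] \subseteq \cG(A^*)$ has length $\alpha\, d(A^*)/(1+\alpha) \ge \alpha$, so $\vol(\cG(A^*)) \ge \alpha$. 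Feeding $\vol(\cD) = n$ and this volume bound into \cref{lem:sos_exp_mec_utility} yields
\[
\Pr_{\rv d \sim \mu_{A,\eps}}\!\Paren{s(\rv d;A) \ge \gamma^* n + \tfrac{t\log n}{\eps}} \;\le\; \frac{n}{\alpha} \cdot n^{-t} \;=\; \frac{n^{-t+1}}{\alpha},
\]
which is the claimed bound.

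There is no real obstacle here; the proof is a bookkeeping transcription of the previous two volume lemmas, and the only thing worth a sanity check is that $\cG(A^*)$ is indeed non-degenerate, which follows from $d(A^*) \ge 2 \ge 1+\alpha$.
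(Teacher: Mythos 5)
Your proof is correct and takes essentially the same route as the paper: instantiate the generic volume bound \cref{lem:sos_exp_mec_utility} with $\cD=[0,n]$ and $\cG(A^*)=\set{d : d(A^*)/(1+\alpha)\le d\le d(A^*)/(1-\alpha)}$, then lower bound $\vol(\cG(A^*))\ge\alpha$ via the subinterval $[d(A^*)/(1+\alpha),d(A^*)]$ and $d(A^*)\ge 2\ge 1+\alpha$. The only cosmetic difference is that you spell out the length computation $\alpha\, d(A^*)/(1+\alpha)\ge\alpha$ where the paper states the conclusion directly.
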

\begin{proof}
    Apply \cref{lem:sos_exp_mec_utility} with $\cD=[0,n]$ and
    \[
        \cG(A^*) = \Set{d\in\cD \,:\, \frac{d(A^*)}{1+\alpha} \le d \le \frac{d(A^*)}{1-\alpha}} \,.
    \]
    As $[d(A^*)/(1+\alpha), d(A^*)] \subseteq \cG(A^*)$ and $d(A^*)\ge2\ge1+\alpha$, we have $\vol(\cG(A^*)) \ge \alpha$.
\end{proof}

\begin{lemma}[Low score implies utility]
    \label{lem:fine_er_low_score_implies_utility}
    Let $A\in\R^{n\times n}$ and consider the score function $s(\cdot \,; A)$ defined in \cref{eq:fine_er_sos_score}.
    Suppose $(Y=A^*, z=z^*)$ is a solution to $\cE(Y,z; A,\gamma^*)$.
    For a scalar $d$ such that $s(d;A) \le \tau n$ and $\gamma^* + \tau \le 0.1$,
    \[
        \frac{1}{1+\alpha} \Paren{d(A^*) - 5(\gamma^* + \tau)(\sigma+\delta)\sqrt{\hat d}}  
        \le d \le
        \frac{1}{1-\alpha} \Paren{d(A^*) + 5(\gamma^* + \tau)(\sigma+\delta)\sqrt{\hat d}} \,.
    \]
\end{lemma}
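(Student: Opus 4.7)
The plan is to mirror the argument for \cref{lem:fine_inhomo_low_score_implies_utility}, substituting the sharper SOS identifiability bound \cref{lem:fine-er-sos} for \cref{lem:fine-inhomo-sos}. First, I will unpack the hypothesis $s(d;A) \le \tau n$: by the definition in \cref{eq:fine_er_sos_score}, there exists some $\gamma' \le \tau$ and a level-$8$ pseudo-expectation $\tE$ satisfying $\cE(Y,z;A,\gamma',\sigma,\delta,\hat d) \,\cup\, \{|d(Y)-d| \le \alpha d\}$. The only $\gamma$-dependent axiom in $\cE$ is $\iprod{\one,z} \ge (1-\gamma)n$, which only weakens as $\gamma$ grows, so the same $\tE$ also satisfies the system with $\gamma$ replaced by $\tau$.

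Second, since $(A^*,z^*)$ is feasible for $\cE(Y,z;A,\gamma^*,\sigma,\delta,\hat d)$ and $\gamma^* + \tau \le 0.1 < 1$, I invoke \cref{lem:fine-er-sos} with $(\gamma,\gamma^*) = (\tau,\gamma^*)$ to obtain the SOS proof
\[
    \cE(Y,z;A,\tau,\sigma,\delta,\hat d) \sststile{8}{Y,z} |d(Y) - d(A^*)| \le \frac{4(\tau+\gamma^*)\sigma\sqrt{\hat d} + 2(\tau+\gamma^*)\delta\sqrt{\hat d}}{(1-\tau-\gamma^*)^2} \,.
\]
Using $(1-\tau-\gamma^*)^2 \ge 0.9^2 = 0.81$ together with the elementary inequalities $4\sigma + 2\delta \le 4(\sigma+\delta)$ for $\delta \ge 0$ and $4/0.81 < 5$, the right-hand side will simplify to at most $B \seteq 5(\tau+\gamma^*)(\sigma+\delta)\sqrt{\hat d}$.

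Finally, applying $\tE$ to the SOS-proved inequality gives $|\tE[d(Y)] - d(A^*)| \le B$, while the axiom $|d(Y)-d| \le \alpha d$ gives $|\tE[d(Y)] - d| \le \alpha d$. The scalar triangle inequality then yields $|d - d(A^*)| \le \alpha d + B$, and rearranging the two one-sided consequences $d(A^*) - d \le \alpha d + B$ and $d - d(A^*) \le \alpha d + B$ will produce $\frac{d(A^*) - B}{1+\alpha} \le d \le \frac{d(A^*) + B}{1-\alpha}$, which is the claimed conclusion. I anticipate no substantive obstacle: this mirrors \cref{lem:fine_inhomo_low_score_implies_utility} verbatim up to swapping which SOS identifiability lemma is invoked, and the only numerical computation is the constant-factor verification above.
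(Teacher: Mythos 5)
Your proof is correct and takes essentially the same route as the paper's: invoke \cref{lem:fine-er-sos} with $\gamma=\tau$, absorb the constant $\frac{4\sigma+2\delta}{(1-\gamma-\gamma^*)^2}$ into $5(\sigma+\delta)$ using $\gamma^*+\tau\le 0.1$, and combine with the axiom $|d(Y)-d|\le\alpha d$ to sandwich $d$. You spell out the monotonicity of $\cE$ in $\gamma$, the numerical check $4/0.81<5$, and the pseudo-expectation/triangle-inequality step explicitly, all of which the paper leaves implicit, but the argument is the same.
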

\begin{proof}
    Applying \cref{lem:fine-er-sos} with $\gamma^* + \tau \le 0.1$, we have
    \[
        \cE(Y,z; A,\tau)
        \; \sststile{8}{Y, z} \;
        \Abs{d(Y) - d(A^*)} \leq 5(\gamma^* + \tau)(\sigma+\delta)\sqrt{\hat d} \,.
    \]
    Thus,
    \begin{align*}
        & \cE(Y,z; A,\tau) \,\cup\, \Set{\Abs{d(Y)-d}\le\alpha d} \\
        \sststile{8}{Y, z} \; 
        & \frac{1}{1+\alpha} \Paren{d(A^*) - 5(\gamma^* + \tau)(\sigma+\delta)\sqrt{\hat d}}  
        \le d \le
        \frac{1}{1-\alpha} \Paren{d(A^*) + 5(\gamma^* + \tau)(\sigma+\delta)\sqrt{\hat d}} \,.
    \end{align*}
\end{proof}

Now we are ready to prove \cref{lem:fine_er_exp_mec_util}.

\begin{proof}[Proof of \cref{lem:fine_er_exp_mec_util}]
    Let $A$ be a realization of $\eta$-corrupted $\bbG(n,\dnull/n)$.
    By \cref{lem:fine-er-existence-proof}, the following event happens with probability at least $1-n^{-\Omega(1)}$.
    There exists a solution $(Y=A^*, z=z^*)$ to $\cE(Y,z; A,\gamma^*,\sigma,\delta,\hat{d})$ where $\gamma^* = \eta$, $\sigma \le O(\log{n})$, $\delta \le O(\sqrt{\log n})$, and $\abs{d(A^*) - \dnull} \le  O \Paren{\sqrt{\dnull\log(n)/n}}$ .
    
    As $d(A^*) \ge 0.9\dnull \ge 2$, then it follows by setting $t=10$ and $\alpha=n^{-2}$ in \cref{lem:fine_er_exp_mec_volume_of_low_score_points} that, 
    \[
        \Pr_{\rv{d} \sim \mu_{A,\eps}} \Paren{s(\rv{d};A) \le \tau n} \ge 1 - n^{-7} 
        \text{ where } \tau \seteq 2\eta + 10\log(n)/(\eps n) \,.
    \]
    Let $\tilde d$ be a scalar such that $s(\tilde{d}; A)\le\tau n$.
    For $\eta$ and $\log(n)/(\eps n)$ smaller than some constant, we have $2\eta + \tau \le 0.1$.
    Then by \cref{lem:fine_er_low_score_implies_utility},
    \[
        \frac{1}{1+\alpha} \Paren{d(A^*) - 5(\eta + \tau)(\sigma+\delta)\sqrt{\hat d}}  
        \le \tilde{d} \le
        \frac{1}{1-\alpha} \Paren{d(A^*) + 5(\eta + \tau)(\sigma+\delta)\sqrt{\hat d}} \,.
    \]
    Plugging in everything, we have
    \[
        \Abs{\frac{\tilde d}{\dnull}-1} 
        \le 
        O \Paren{\sqrt{\frac{\log n}{\dnull n}} +
        \frac{\log^2 n}{\sqrt{\dnull} \eps n} +
        \frac{\eta\log n}{\sqrt{\dnull}}} \,.
    \]
\end{proof}

\paragraph{Proof of \cref{thm:fine_estimation_er}}
By \cref{lem:fine_er_exp_mec_privacy} and \cref{lem:fine_er_exp_mec_util}.

\section{Lower bounds}
\label{sec:lower-bounds}

In this section, we prove \cref{thm:lower_bound_ER}, \cref{thm:robust_lower_bound_inhomo}, and \cref{thm:privacy_lower_bound_inhomo}.

\subsection{Lower bound for \ER random graphs}

In this section, we prove \cref{thm:lower_bound_ER}.

\restatetheorem{thm:lower_bound_ER}

We leave the formal proof of \cref{thm:lower_bound_ER} to the end of this section. Now we sketch the proof idea.
One natural idea to prove this theorem is to construct a coupling $\omega$ of $\bbG(n,\pnull)$ and $\bbG(n,(1-2\alpha)\pnull)$ such that for $(\rv{G}, \rv{G}') \sim \omega$, the typical distance between $\rv G$ and $\rv G'$ can be well controlled.
However, such a coupling is tricky to construct directly, as the node degrees in an \ER random graph are not independent.
To avoid dealing with such dependence, we instead consider the directed \ER random graphs, which is inspired by the proof of \cite[Theorem 1.5]{acharya2022robust}.
The directed \ER random graph model, denoted by $\tilde{\bbG}(n,\pnull)$, is a distribution over $n$-node directed graphs where each edge $(i,j)$ is present with probability $\pnull$ independently.
Since the outdegrees in a directed \ER random graph are i.i.d. Binomial random variables, it is not so difficult to construct a coupling of $\tilde{\bbG}(n,\pnull)$ and $\tilde{\bbG}(n,(1-2\alpha)\pnull)$.
Then we can convert such a coupling into a coupling of $\bbG(n,\pnull)$ and $\bbG(n,(1-2\alpha)\pnull)$.

\begin{lemma}[Coupling]
    \label{lem:coupling}
    Let $\pnull \in [0,1]$, $\alpha \in [0,1/2]$, and $p' \seteq (1-2\alpha) \pnull$.
    There exists a coupling $\omega$ of $\bbG(n, \pnull)$ and $\bbG(n, p')$ with the following property.
    For $(\rv G, \rv G') \sim \omega$, the distribution of $\dist(\rv G, \rv G')$ is the binomial distribution $\Bin(n, \Delta)$ where $\Delta = \TV\paren{\Bin(n,\pnull), \Bin(n,p')}$.
    Moreover, if $\pnull \le c$ and $\alpha \le c' / \sqrt{n\pnull}$ for some constants $c, c'$, then $\Delta \lesssim \alpha \sqrt{n\pnull}$.
\end{lemma}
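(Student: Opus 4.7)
The plan is to follow the sketch given in the excerpt: couple $\bbG(n,\pnull)$ with $\bbG(n,p')$ via an auxiliary coupling of directed Erdős--Rényi graphs, and then bound $\Delta$ by Hellinger tensorization. The coupling will be assembled row-by-row on the directed side, where the rows are independent, and then projected to undirected graphs by taking upper-triangular parts.

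\textbf{Step 1 (directed coupling).} I will introduce the directed model $\tilde{\bbG}(n,q)$ in which every ordered pair $(i,j)$ with $i\ne j$ is present independently with probability $q$, and observe that the out-neighborhood rows $\tilde{\rv X}_i\in\{0,1\}^{n-1}$ are independent across $i$, each with law $\operatorname{Ber}(q)^{\otimes(n-1)}$. For each $i$ independently, I couple $(\tilde{\rv X}_i,\tilde{\rv X}_i')$ (with parameters $\pnull$ and $p'$) in two stages: first apply the maximal coupling to the Hamming weights $\rv S_i\sim\operatorname{Bin}(n-1,\pnull)$ and $\rv S_i'\sim\operatorname{Bin}(n-1,p')$, so that $\Pr[\rv S_i\ne \rv S_i']$ equals $\Delta_{n-1}\seteq\operatorname{TV}(\operatorname{Bin}(n-1,\pnull),\operatorname{Bin}(n-1,p'))$; then, whenever $\rv S_i=\rv S_i'=s$, set $\tilde{\rv X}_i=\tilde{\rv X}_i'$ equal to one uniformly random subset of $[n]\setminus\{i\}$ of size $s$, and whenever $\rv S_i\ne \rv S_i'$, sample the two rows independently from the correct conditional laws. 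Since a Bernoulli product conditioned on its sum is uniform on subsets of that size, the marginals $\tilde{\rv X}_i\sim\operatorname{Ber}(\pnull)^{\otimes(n-1)}$ and $\tilde{\rv X}_i'\sim\operatorname{Ber}(p')^{\otimes(n-1)}$ are preserved. The indicators $\rv R_i\seteq \mathbf{1}[\tilde{\rv X}_i\ne\tilde{\rv X}_i']$ are then i.i.d.\ $\operatorname{Ber}(\Delta_{n-1})$.

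\textbf{Step 2 (reduction to undirected).} I will pass to undirected graphs by keeping only the upper-triangular entries: $\{i,j\}\in\rv G$ iff $\tilde{\rv G}_{ij}=1$ for $i<j$, and similarly for $\rv G'$. Each unordered edge then appears independently with probability $\pnull$ in $\rv G$ and $p'$ in $\rv G'$, giving the desired marginals. If $\rv R_i=0$, then $\tilde{\rv G}_{ij}=\tilde{\rv G}'_{ij}$ for every $j\ne i$, and in particular for $j>i$; hence every edge of the symmetric difference $\rv G\triangle\rv G'$ has its smaller endpoint in $\rv S\seteq\{i:\rv R_i=1\}$. Thus $\rv S$ is a vertex cover of $\rv G\triangle\rv G'$, so rewiring the nodes in $\rv S$ of $\rv G$ yields $\rv G'$ and $\dist(\rv G,\rv G')\le|\rv S|\sim\operatorname{Bin}(n,\Delta_{n-1})$. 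This gives the first claim of the lemma in the sense of stochastic domination (which is what the subsequent lower-bound proof uses), and after replacing $n-1$ by $n$ by adjusting constants, in the form stated.

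\textbf{Step 3 (bounding $\Delta$).} Since $\operatorname{Bin}(n,q)$ is the pushforward of $\operatorname{Ber}(q)^{\otimes n}$ under summation, the data-processing inequality together with $\operatorname{TV}\le H$ and Hellinger tensorization gives
\[
    \Delta \;\le\; H\!\bigparen{\operatorname{Ber}(\pnull)^{\otimes n},\operatorname{Ber}(p')^{\otimes n}} \;\le\; \sqrt{n}\cdot H\!\bigparen{\operatorname{Ber}(\pnull),\operatorname{Ber}(p')} \,.
\]
A direct expansion yields $H^2(\operatorname{Ber}(\pnull),\operatorname{Ber}(p'))=(\sqrt{\pnull}-\sqrt{p'})^2+(\sqrt{1-\pnull}-\sqrt{1-p'})^2$ with $p'=(1-2\alpha)\pnull$. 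The first summand is $\pnull(1-\sqrt{1-2\alpha})^2\le 4\alpha^2\pnull$ using $\sqrt{1-2\alpha}\ge 1-2\alpha$ on $[0,1/2]$; the second equals $(2\alpha\pnull)^2/(\sqrt{1-\pnull}+\sqrt{1-p'})^2\lesssim\alpha^2(\pnull)^2\lesssim\alpha^2\pnull$, since $\pnull\le c$ keeps the denominator bounded away from zero. Combining gives $H^2(\operatorname{Ber}(\pnull),\operatorname{Ber}(p'))\lesssim \alpha^2\pnull$ and therefore $\Delta\lesssim\alpha\sqrt{n\pnull}$.

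The main (routine) technical step is the Hellinger computation in Step 3; the rest is essentially bookkeeping around the directed-ER trick, which converts the joint symmetry obstruction of undirected ER into the independence of the directed rows.
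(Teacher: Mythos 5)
Your proposal is correct and follows the paper's overall strategy: reduce to directed Erd\H{o}s--R\'enyi graphs, couple the out-degree sequences via a maximal coupling, and then project to undirected graphs. The two approaches differ in two places worth noting.

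First, you bound $\Delta = \TV(\Bin(n,\pnull),\Bin(n,p'))$ by $\TV\le H$, data processing, and Hellinger tensorization, plus an elementary two-term computation for $H^2(\Ber(\pnull),\Ber(p'))$. The paper instead invokes a quantitative bound of Adell--Jodr\'a on the total variation between two binomials, which requires the hypothesis $\tau(x)<1$, equivalent (up to constants) to $\alpha\lesssim 1/\sqrt{n\pnull}$. Your Hellinger route is self-contained, needs only $\pnull\le c$ (to keep $\sqrt{1-\pnull}+\sqrt{1-p'}$ bounded away from zero), and holds for all $\alpha\in[0,1/2]$; that is a modest simplification, since the auxiliary condition on $\alpha$ is not really needed by your argument.

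Second, you are more careful about the projection to undirected graphs: you show that $\rv S=\{i:\rv R_i=1\}$ is a vertex cover of $\rv G\triangle\rv G'$ and conclude $\dist(\rv G,\rv G')\le|\rv S|$ where $|\rv S|$ is binomial, i.e.\ stochastic domination rather than exact equality of distributions. The paper asserts the stronger identity $\dist(\tilde G,\tilde G')=\dist(U(\tilde G),U(\tilde G'))$ as ``easy to see,'' which in fact can fail: e.g.\ if $\tilde A-\tilde A'$ is supported on entries $(1,3)$ and $(2,3)$ then two directed rows are nonzero, but the undirected symmetric difference $\{\{1,3\},\{2,3\}\}$ has vertex cover $\{3\}$ of size one. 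So the lemma as literally stated (exact equality with a $\Bin(n,\Delta)$ law) is slightly too strong, and your weaker conclusion of stochastic domination is what is actually established by either construction. This is harmless for the downstream lower-bound proof, since there the coupling is only used through $\E[e^{2\eps\cdot\dist(\rv G,\rv G')}]$, which is monotone in $\dist$ and hence dominated by $\E_{\Bin(n,\Delta)}[e^{2\eps\rv X}]$ under stochastic domination. Your remark acknowledging this is exactly right; you might just want to also mention, when your argument feeds into the proof of \cref{thm:lower_bound_ER}, that the domination step is the only place the coupling is used.
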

\begin{proof}
    We first show that it suffices to construct a coupling of $\tilde{\bbG}(n, \pnull)$ and $\tilde{\bbG}(n, p')$.
    For a directed graph $\tilde{G}$, it can be converted into an undirected graph $U(\tilde G)$ by letting $\{i,j\}\in U(\tilde G)$ iff $i\leq j$ and $(i,j)\in\tilde{G}$.
    It is easy to see that\footnote{
        For a directed graph $\tilde{G}$, we define its adjacency matrix $\tilde A$ to be $\tilde A(i,j) \seteq \one\{(i,j)\in\tilde{G}\}$.
        The (node) distance between two $n$-node directed graphs $\tilde{G}, \tilde{G}'$, denoted by $\dist(\tilde{G}, \tilde{G}')$, is number of nonzero rows of $\tilde{A}-\tilde{A}'$.
    }
    $\dist(\tilde{G} , \tilde{G}') = \dist(U(\tilde{G}) , U(\tilde{G}'))$.
    Also observe that if $\tilde{\rv G} \sim \tilde{\bbG}(n,\pnull)$ then $U(\tilde{\rv G}) \sim \bbG(n,\pnull)$.
    Therefore, a coupling $\tilde{\omega}$ of $\tilde{\bbG}(n, \pnull)$ and $\tilde{\bbG}(n, p')$ can be easily converted in to a coupling $\omega$ of ${\bbG}(n, \pnull)$ and ${\bbG}(n, p')$ such that, 
    for $(\tilde{\rv G}, \tilde{\rv G}') \sim \tilde{\omega}$ and $(\rv G, {\rv G}') \sim \omega$, 
    we have
    \[
        \dist\Paren{\tilde{\rv G},  \tilde{\rv G}'}
        \stackrel{\mathrm{d}}{=}
        \dist\Paren{\rv G, {\rv G}'} \,.
    \]
    
    Now we construct a coupling of $\tilde{\bbG}(n, \pnull)$ and $\tilde{\bbG}(n, p')$.
    Instead of sampling each edge independently, an equivalent way to sample from $\tilde\bbG(n,\pnull)$ is as follows. 
    For each $i\in[n]$ :
    \begin{itemize}
        \item Sample an outdegree $\rv{d}\sim\Bin(n,\pnull)$.
        \item Sample a uniformly random subset $\rv{S} \subseteq [n]$ of size $\rv{d}$. For each $j\in\rv{S}$, add an edge from $i$ to $j$.
    \end{itemize}
    Then it is easy to see there exists a coupling $\tilde\omega$ of $\tilde\bbG(n,\pnull)$ and $\tilde\bbG(n,p')$ such that if $(\tilde{\rv G}, \tilde{\rv G}') \sim \tilde\omega$ then $\dist(\tilde{\rv G}, \tilde{\rv G}') \sim \Bin(n,\Delta)$
    where 
    \[
        \Delta = \TV\Paren{\Bin(n,\pnull), \Bin(n,p')} \,.
    \]
    We have the following bound on the total variation between binomial distributions (see e.g. \cite[Equation (2.15)]{adell2006exact}). 
    For $0<p<1$ and $0<x<1-p$, 
    \[ 
        \TV\Paren{\Bin(N,p), \Bin(N,p+x)} 
        \le \frac{\sqrt{e}}{2} \frac{\tau(x)}{(1-\tau(x))^2} \,,
    \]
    where $\tau(x) \seteq x\sqrt{\frac{N+2}{2p(1-p)}}$ , provided $\tau(x)<1$.
    Plugging in $N=n$, $p=\pnull$, and $x=2\alpha \pnull$, we have 
    \[
        \Delta = \TV\Paren{\Bin(n,\pnull), \Bin(n,p')}
        \lesssim \alpha\sqrt{n \pnull} \,,
    \]
    provided $\pnull \leq c$ and $\alpha \leq c'/\sqrt{n\pnull}$ for sufficiently small absolute constants $c,c'$.
\end{proof}

\begin{proof}[Proof of \cref{thm:lower_bound_ER}]
    Let $\cA$ be an algorithm satisfying the theorem's assumptions.
    Let $p'\seteq(1-2\alpha)\pnull$.
    Let $\omega$ be a coupling of $\bbG(n, \pnull)$ and $\bbG(n, p')$ as guaranteed by \cref{lem:coupling}.
    Then for $(\rv G, \rv G') \sim \omega$, we have $\dist(\rv G, \rv G') \sim \Bin(n, \Delta)$ where $\Delta = \TV\paren{\Bin(n,\pnull), \Bin(n,p')}$.

    By the utility assumption of algorithm $\cA$,
    \[
        \Pr_{\cA, \bbG(n,\pnull)} \Paren{\Abs{\cA(\rv G) - \pnull} < \alpha \pnull} \geq 1-\beta \,. 
    \]

    As algorithm $\cA$ is $\eps$-DP, we have for any graphs $G,G'$ that,
    \[ 
        \Pr_{\cA} \Paren{ \Abs{\cA(G') - p'} < \alpha \pnull } 
        \leq 
        e^{\eps\cdot\dist(G,G')} \cdot \Pr_{\cA} \Paren{\Abs{\cA(G)-p'} < \alpha \pnull} \,.
    \]
    Taking expectation w.r.t. the coupling $\omega$ on both sides gives
    \begin{align*}
        \E_{\omega} \E_{\cA} \one\Set{\Abs{\cA(\rv{G}')-p'} < \alpha \pnull} 
        &\leq \E_{\omega} e^{\eps\cdot\dist(\rv{G},\rv{G}')} \cdot \E_{\cA} \one\Set{\Abs{\cA(\rv{G})-p'} < \alpha \pnull} \,, \\
        \Pr_{\cA,\tilde\bbG(n,p')} \Paren{\Abs{\cA(\rv{G}')-p'} < \alpha \pnull}
        &\leq \E_{\omega,\cA} e^{\eps\cdot\dist(\rv{G},\rv{G}')} \cdot \one\Set{\Abs{\cA(\rv{G})-p'} < \alpha \pnull} \,.  \numberthis \label{eq:lb-er-1}
    \end{align*}
    By the utility assumption of algorithm $\cA$ and $p'<\pnull$, the left-hand side of \cref{eq:lb-er-1} is at least $1-\beta$.
    Using the Cauchy-Schwartz inequality, the right-hand side of \cref{eq:lb-er-1} can be upper bounded as follows,
    \begin{align*}
        \E_{\omega,\cA} e^{\eps\cdot\dist(\rv{G},\rv{G}')} \cdot \one\Set{\Abs{\cA(\rv{G})-p'} < \alpha \pnull} 
        &\leq \sqrt{\E_{\omega,\cA} e^{2\eps\cdot\dist(\rv{G},\rv{G}')}} \sqrt{\E_{\omega,\cA} \one\Set{\Abs{\cA(\rv{G})-p'} < \alpha \pnull}} \\
        &\leq \sqrt{\E_{\Bin(n,\Delta)} e^{2\eps\cdot\rv X}} \sqrt{\Pr_{\cA,\tilde\bbG(n,\pnull)} \Paren{\Abs{\cA(\rv{G})-p'} < \alpha \pnull}} \,.
    \end{align*}
    By squaring both sides of \cref{eq:lb-er-1} and plugging in the above two bounds, we have
    \[
        (1-\beta)^2 \leq \E_{\Bin(n,\Delta)} \Brac{e^{2\eps\cdot\rv X}} \cdot \Pr_{\cA,\tilde\bbG(n,\pnull)} \Paren{\Abs{\cA(\rv{G})-p'} < \alpha \pnull} \,.
    \]
    Using the formula for the moment generating function of binomial distributions, we have
    \[
        \E_{\Bin(n,\Delta)} \Brac{e^{2\eps\cdot\rv X}}
        = \Paren{1 + \Delta(e^{2\eps}-1)}^n
        \leq e^{n\Delta(e^{2\eps}-1)} \,.
    \]
    Then
    \[
        \Pr_{\cA,\bbG(n,\pnull)} \Paren{\Abs{\cA(\rv{G})-p'} < \alpha \pnull}
        \ge (1-\beta)^2 \cdot e^{-n\Delta(e^{2\eps}-1)} \,.
    \]
    Since $p'-\pnull = 2\alpha \pnull$, the two events $\set{\hat{p} : \abs{\hat{p}-\pnull} < \alpha \pnull}$ and $\set{\hat{p} : \abs{\hat{p}-p'} < \alpha \pnull}$ are disjoint. 
    Thus,
    \[
        \Pr_{\cA,\bbG(n,\pnull)} \Paren{\Abs{\cA(\rv{G})-p'} < \alpha \pnull}
        \leq 1 - \Pr_{\cA,\bbG(n,\pnull)} \Paren{\Abs{\cA(\rv{G})-\pnull} < \alpha \pnull}
        \leq \beta \,.
    \]
    Therefore, we have the following lower bound
    \[
        \Delta \geq \frac{2\log(1-\beta)+\log(1/\beta)}{n(e^{2\eps}-1)} \,,
    \]
    which is $\Delta \gtrsim \frac{\log(1/\beta)}{n\eps}$ for samll enough $\eps$ and $\beta$.

    By \cref{lem:coupling}, if $\pnull \le c$ and $\alpha \le c' / \sqrt{n\pnull}$ for some constants $c, c'$, then $\Delta \lesssim \alpha \sqrt{n\pnull}$. Combined with the lower bound $\Delta \gtrsim \frac{\log(1/\beta)}{n\eps}$, we have
    \[ 
        \alpha \gtrsim \frac{\log(1/\beta)}{n\eps\sqrt{n\pnull}} \,.
    \]
\end{proof}

\subsection{Lower bound for inhomogeneous random graphs} 

In this section, we prove \cref{thm:robust_lower_bound_inhomo} and \cref{thm:privacy_lower_bound_inhomo}.

We first show the lower bound for the error rate of robust estimation.
\restatetheorem{thm:robust_lower_bound_inhomo}
\begin{proof}  
    Let $\pnull\in [0,1]$, and let $\Qnull\in [0,1]^{n\times n}$ be the matrix, in which all entries are $\pnull$, except for the rows and columns corresponding to a set of $\eta n$ nodes setting to be $R \pnull$. 
    Let $Q$ be the matrix, in which all entries are $\pnull$, except for the rows and columns corresponding to a set of $\eta n$ nodes setting to be $0$.
    
    We construct the following pair of distributions $\cD_0$ and $\cD_1$:
    \begin{itemize}
        \item $\cD_0$: The distribution of $\rv{G} \sim \bbG(\Qnull)$.
        \item $\cD_1$: The distribution of $\rv{G} \sim \bbG(Q)$.
    \end{itemize}
    Then we have $\frac{1}{n^2}\Abs{\norm{\Qnull}_1- \norm{Q}_1} \geq \Omega(R\eta n^2 \pnull)$.
    
    On the other hand, there is a coupling between  $\tilde{G}\sim \bbG(\Qnull)$ and $\tilde{G'}\sim \bbG(Q)$ such that $\text{dist}(\tilde{G},\tilde{G'})\leq \eta n$.
    Therefore, the two distributions are indistinguishable under the $\eta$-corruption model.
    Since the edge density of $\bbG(\Qnull)$ differs from $\bbG(Q)$ by $\Omega(R\eta \pnull)$, no algorithm can achieve error rate $o(R\eta \pnull)$ with probability $1-o(1)$ for both distributions under the corruption of $\eta$-fraction of the nodes.
\end{proof}

\restatetheorem{thm:privacy_lower_bound_inhomo}

\begin{proof}[Proof of \cref{thm:privacy_lower_bound_inhomo}]
  We will prove the lower bound by constructing a pair of distributions $\cD_0$ and $\cD_1$ such that the total variation distance between them is small, but the difference in edge density is significant.
  Then since $\eps$-differentially node-private algorithm needs to have similar distributions in the output, it could not succeed in accurately estimating the edge density accurately under both distributions. 

  Let $\eta\in [0,0.001)$.
  Let $\pnull\in [0,1]$, and let $\Qnull\in [0,1]^{n\times n}$ be the matrix, in which all entries are $\pnull$, except for the rows and columns corresponding to a set of $\eta n$ nodes setting to be $0$.
  Let $Q$ be the matrix, in which all entries are $\pnull$, except for the rows and columns corresponding to a set of $\eta n$ nodes setting to be $R\pnull$.
  
  We construct the following pair of distributions $\cD_0$ and $\cD_1$:
  \begin{itemize}
    \item $\cD_0$: The distribution of $\rv{G} \sim \bbG(\Qnull)$.
    \item $\cD_1$: The distribution of $\rv{G} \sim \bbG(Q)$.
  \end{itemize}
  Let $p'=\frac{\norm{\Qnull}_1}{n^2}$ and $p=\frac{\norm{Q}_1}{n^2}$.
  We have $\Abs{p-p'} \geq R\eta \pnull$.

  On the other hand, there is a coupling between  $\tilde{G}\sim \bbG(Q)$ and $\tilde{G'}\sim \bbG(\Qnull)$ such that $\text{dist}(\tilde{G},\tilde{G'})\leq \eta n$.
  Taking expectation w.r.t. the coupling $\omega$ on both sides gives
  \begin{align*}
    \E_{\omega} \E_{\cA} \one\Set{\Abs{\cA(\tilde{\rv G}')-p} < \frac{R\eta}{2} \pnull} 
    &\leq \E_{\omega} e^{\eps\cdot\dist(\tilde{\rv G},\tilde{\rv G}')} \cdot \E_{\cA} \one\Set{\Abs{\cA(\tilde{\rv G})-p} < \frac{R\eta}{2} \pnull} \,, \\
    \Pr_{\cA,\tilde\bbG(\Qnull)} \Paren{\Abs{\cA(\tilde{\rv G}')-p} < \frac{R\eta}{2} \pnull}
    &\leq \E_{\omega,\cA} e^{\eps\cdot\dist(\tilde{\rv G},\tilde{\rv G}')} \cdot \one\Set{\Abs{\cA(\tilde{\rv G})-p} < \frac{R\eta}{2} \pnull} \,.
  \end{align*}
  By the utility assumption of algorithm $\cA$ and $p<\pnull$, the left-hand side is at least $1-\beta$.
  Using the Cauchy-Schwartz inequality, the right-hand side can be upper bounded as follows,
  \begin{align*}
    \E_{\omega,\cA} e^{\eps\cdot\dist(\tilde{\rv G},\tilde{\rv G}')} \cdot \one\Set{\Abs{\cA(\tilde{\rv G})-p} < \frac{R\eta}{2} \pnull} 
    &\leq \sqrt{\E_{\omega,\cA} e^{2\eps\cdot\dist(\tilde{\rv G},\tilde{\rv G}')}} \sqrt{\E_{\omega,\cA} \one\Set{\Abs{\cA(\tilde{\rv G})-p} <\frac{R\eta}{2} \pnull}} \\
    &\leq \exp(\epsilon \eta n)\sqrt{\Pr_{\cA,\tilde\bbG(Q)} \Paren{\Abs{\cA(\tilde{\rv G})-p} < \frac{R\eta}{2} \pnull}} \,.
  \end{align*}
  Thus
  \[
    (1-\beta)^2 \leq \exp(\epsilon \eta n)\cdot \Pr_{\cA,\tilde\bbG(Q)} \Paren{\Abs{\cA(\tilde{\rv G})-p} < \frac{R\eta}{2} \pnull} \,.
  \]

  Then 
  \[
    \Pr_{\cA,\tilde\bbG(Q)} \Paren{\Abs{\cA(\tilde{\rv G})-p} < \frac{R\eta}{2} \pnull}
    \ge (1-\beta)^2 \cdot \exp(-\epsilon \eta n) \,.
  \]
  Since $\abs{p-\pnull}\geq R\eta \pnull$, the two events $\set{\hat{p} : \abs{\hat{p}-\pnull} < \frac{R\eta}{2} \pnull}$ and $\set{\hat{p} : \abs{\hat{p}-p} < \frac{R\eta}{2} \pnull}$ are disjoint, which implies
  \[
    \Pr_{\cA,\tilde\bbG(Q)} \Paren{\Abs{\cA(\tilde{\rv G})-p} < \frac{R\eta}{2} \pnull}
    \leq 1 - \Pr_{\cA,\tilde\bbG(Q)} \Paren{\Abs{\cA(\tilde{\rv G})-\pnull} < \frac{R\eta}{2} \pnull}
    \leq \beta \,.
  \]
  
  Therefore, we have $\beta\geq (1-\beta)^2 \exp(-\epsilon \eta n)$. 
  As result, we need to have 
  $\eta\geq \Omega\Paren{\frac{\log(\beta)}{\epsilon n}}$.
  Thus we have
  \begin{equation*}
      \abs{p-p'}\geq \Omega\Paren{\frac{R\log(\beta)p^{\circ}}{\epsilon n}}\,.
  \end{equation*}
  Since $p^{\circ}\geq p'$, it follows that
  \begin{equation*}
     \abs{p-p'}\geq \Omega\Paren{\frac{R\log(\beta)p'}{\epsilon n}}\,,  
  \end{equation*}
  which finishes the proof.
\end{proof}

\phantomsection %
\addcontentsline{toc}{section}{References} %
\bibliographystyle{amsalpha}
\bibliography{custom.bib}

\appendix
\section{Sum-of-squares background}
\label{sec:prelim}

\subsection{Sum-of-squares hierarchy}
In this paper, we use the sum-of-squares semidefinite programming hierarchy \cite{barak2014sum,sos2016note,raghavendra2018high} for both algorithm design and analysis.
The sum-of-squares proof-to-algorithm framework has been proven useful in many optimal or state-of-the-art results in algorithmic statistics~\cite{hopkins2018mixture,KSS18,pmlr-v65-potechin17a,hopkins2020mean}.
We provide here a brief introduction to pseudo-distributions, sum-of-squares proofs, and sum-of-squares algorithms.

\paragraph{Pseudo-distribution} 

We can represent a finitely supported probability distribution over $\R^n$ by its probability mass function $\mu\from \R^n \to \R$ such that $\mu \geq 0$ and $\sum_{x\in\supp(\mu)} \mu(x) = 1$.
We define pseudo-distributions as generalizations of such probability mass distributions by relaxing the constraint $\mu\ge 0$ to only require that $\mu$ passes certain low-degree non-negativity tests.

\begin{definition}[Pseudo-distribution]
  \label{def:pseudo-distribution}
  A \emph{level-$\ell$ pseudo-distribution} $\mu$ over $\R^n$ is a finitely supported function $\mu:\R^n \rightarrow \R$ such that $\sum_{x\in\supp(\mu)} \mu(x) = 1$ and $\sum_{x\in\supp(\mu)} \mu(x)f(x)^2 \geq 0$ for every polynomial $f$ of degree at most $\ell/2$.
\end{definition}

We can define the expectation of a pseudo-distribution in the same way as the expectation of a finitely supported probability distribution.

\begin{definition}[Pseudo-expectation]
  Given a pseudo-distribution $\mu$ over $\R^n$, we define the \emph{pseudo-expectation} of a function $f:\R^n\to\R$ by
  \begin{equation}
    \tE_\mu f \seteq \sum_{x\in\supp(\mu)} \mu(x) f(x) \,.
  \end{equation}
\end{definition}

The following definition formalizes what it means for a pseudo-distribution to satisfy a system of polynomial constraints.

\begin{definition}[Constrained pseudo-distributions]
  Let $\mu:\R^n\to\R$ be a level-$\ell$ pseudo-distribution over $\R^n$.
  Let $\cA = \{f_1\ge 0, \ldots, f_m\ge 0\}$ be a system of polynomial constraints.
  We say that \emph{$\mu$ satisfies $\cA$} at level $r$, denoted by $\mu \sdtstile{r}{} \cA$, if for every multiset $S\subseteq[m]$ and every sum-of-squares polynomial $h$ such that $\deg(h)+\sum_{i\in S}\max\set{\deg(f_i),r} \leq \ell$,
  \begin{equation}
    \label{eq:constrained-pseudo-distribution}
    \tE_{\mu} h \cdot \prod_{i\in S}f_i \ge 0 \,.
  \end{equation}
  We say $\mu$ satisfies $\cA$ and write $\mu \sdtstile{}{} \cA$ (without further specifying the degree) if $\mu \sdtstile{0}{} \cA$.
\end{definition}

We remark that if $\mu$ is an actual finitely supported probability distribution, then we have  $\mu\sdtstile{}{}\cA$ if and only if $\mu$ is supported on solutions to $\cA$.

\paragraph{Sum-of-squares proof} 

We introduce sum-of-squares proofs as the dual objects of pseudo-distributions, which can be used to reason about properties of pseudo-distributions.
We say a polynomial $p$ is a sum-of-squares polynomial if there exist polynomials $(q_i)$ such that $p = \sum_i q_i^2$.

\begin{definition}[Sum-of-squares proof]
  \label{def:sos-proof}
  A \emph{sum-of-squares proof} that a system of polynomial constraints $\cA = \{f_1\ge 0, \ldots, f_m\ge 0\}$ implies $q\ge0$ consists of sum-of-squares polynomials $(p_S)_{S\subseteq[m]}$ such that\footnote{Here we follow the convention that $\prod_{i\in S}f_i=1$ for $S=\emptyset$.}
  \[
    q = \sum_{\text{multiset } S\subseteq[m]} p_S \cdot \prod_{i\in S} f_i \,.
  \]
  If such a proof exists, we say that \(\cA\) \emph{(sos-)proves} \(q\ge 0\) within degree \(\ell\), denoted by $\mathcal{A}\sststile{\ell}{} q\geq 0$.
  In order to clarify the variables quantified by the proof, we often write \(\cA(x)\sststile{\ell}{x} q(x)\geq 0\).
  We say that the system \(\cA\) \emph{sos-refuted} within degree \(\ell\) if $\mathcal{A}\sststile{\ell}{} -1 \geq 0$.
  Otherwise, we say that the system is \emph{sos-consistent} up to degree \(\ell\), which also means that there exists a level-$\ell$ pseudo-distribution satisfying the system.
\end{definition}

The following lemma shows that sum-of-squares proofs allow us to deduce properties of pseudo-distributions that satisfy some constraints.
\begin{lemma}
  \label{lem:sos-soundness}
  Let $\mu$ be a pseudo-distribution, and let $\cA,\cB$ be systems of polynomial constraints.
  Suppose there exists a sum-of-squares proof $\cA \sststile{r'}{} \cB$.
  If $\mu \sdtstile{r}{} \cA$, then $\mu \sdtstile{r\cdot r' + r'}{} \cB$.
\end{lemma}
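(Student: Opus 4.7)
The plan is to unfold both definitions and push the sum-of-squares proof through the pseudo-expectation operator $\tE_\mu$. Write $\cA = \{f_1 \geq 0, \ldots, f_m \geq 0\}$, $\cB = \{g_1 \geq 0, \ldots, g_k \geq 0\}$, and let $\ell$ denote the level of $\mu$. By \cref{def:sos-proof}, the hypothesis $\cA \sststile{r'}{} \cB$ supplies, for each $j \in [k]$, a decomposition
\[
    g_j \;=\; \sum_{S_j} p_{j, S_j} \prod_{i \in S_j} f_i
\]
in which every $p_{j, S_j}$ is sum-of-squares and each summand obeys $\deg(p_{j, S_j}) + \sum_{i \in S_j} \deg(f_i) \leq r'$; in particular $\deg(g_j) \leq r'$.

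To verify $\mu \sdtstile{r r' + r'}{} \cB$, I would fix a multiset $T \subseteq [k]$ and a sum-of-squares polynomial $h$ with $\deg(h) + \sum_{j \in T} \max\{\deg(g_j), r r' + r'\} \leq \ell$, then expand
\[
    h \cdot \prod_{j \in T} g_j
    \;=\; \sum_{(S_j)_{j \in T}} \Bigl( h \cdot \prod_{j \in T} p_{j, S_j} \Bigr) \cdot \prod_{j \in T} \prod_{i \in S_j} f_i \,,
\]
apply linearity of $\tE_\mu$, and handle each summand separately. A product of sum-of-squares polynomials is sum-of-squares, so each factor $h \prod_j p_{j, S_j}$ is itself sum-of-squares, and every summand has the exact form $\tE_\mu \bigl[\,\text{sos}\cdot \prod_{i \in S} f_i\,\bigr]$ to which $\mu \sdtstile{r}{} \cA$ applies --- provided the relevant degree bound can be met.

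The entire remaining work is the degree count that yields the loss of a factor $r + 1$. Without loss of generality every $f_i$ satisfies $\deg(f_i) \geq 1$, since a constant constraint $f_i \geq 0$ is either trivially true or refutes $\cA$; consequently $\sum_{i \in S_j} \deg(f_i) \leq r'$ forces $|S_j| \leq r'$. Combined with the elementary bound $\max\{\deg(f_i), r\} \leq \deg(f_i) + r$, this yields for each $j \in T$ the estimate
\[
    \deg(p_{j, S_j}) + \sum_{i \in S_j} \max\{\deg(f_i), r\}
    \;\leq\; r' + r \cdot |S_j|
    \;\leq\; r r' + r' \,.
\]
Since $\deg(g_j) \leq r'$ implies $\max\{\deg(g_j), r r' + r'\} = r r' + r'$, summing this over $j \in T$ and adding $\deg(h)$ produces total degree at most $\deg(h) + |T|(r r' + r') \leq \ell$, which is exactly the condition needed to invoke $\mu \sdtstile{r}{} \cA$ on each summand. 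The main --- and essentially only --- obstacle is keeping this multi-index bookkeeping straight and carefully justifying the $|S_j| \leq r'$ estimate; once it is in hand, the rest is linearity of pseudo-expectation and a finite sum over the decompositions.
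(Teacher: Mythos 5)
The paper treats this lemma as standard sum-of-squares background and omits the proof entirely (citing the usual references for the framework), so there is no argument in the paper to compare yours against; I'll assess it on its own terms. Your route is the canonical one and it is correct: substitute the SOS decomposition of each $g_j$ into $h\cdot\prod_{j\in T}g_j$, observe that each resulting summand has the form $(\text{SOS polynomial})\cdot\prod_{i\in S}f_i$ with $S$ the multiset union of the $S_j$, and apply $\mu\sdtstile{r}{}\cA$ termwise via linearity of $\tE_\mu$. The degree accounting is right: $\deg(p_{j,S_j}) + \sum_{i\in S_j}\max\{\deg(f_i),r\} \leq r' + r\,|S_j| \leq r' + r r'$, so after summing over $j\in T$ and adding $\deg(h)$ the total is at most $\deg(h) + |T|(rr'+r')$, which equals $\deg(h) + \sum_{j\in T}\max\{\deg(g_j),rr'+r'\}$ (since $\deg(g_j)\leq r'\leq rr'+r'$) and is hence at most the level $\ell$ of $\mu$ by hypothesis --- exactly the budget $\mu\sdtstile{r}{}\cA$ requires. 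The one place deserving a bit more care is the ``without loss of generality $\deg(f_i)\geq 1$'' step, which you need to conclude $|S_j|\leq r'$: you should spell out that a strictly positive constant constraint can always be absorbed into the SOS multipliers in each $g_j$'s decomposition (so it never appears in any $S_j$, and dropping it from $\cA$ only removes verification obligations on $\mu$), that a zero constant constraint kills any term containing it, and that a negative constant constraint is itself SOS-refuted at degree $0$ so no pseudo-distribution satisfies $\cA$ at level $r\leq\ell$, making the implication vacuous. With that filled in, the proof is complete and matches the standard soundness argument one finds in the references the paper cites.
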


\paragraph{Sum-of-squares algorithm}

Given a system of polynomial constraints, the \emph{sum-of-squares algorithm} searches through the space of pseudo-distributions that satisfy this polynomial system by semideﬁnite programming.

Since semidefinite programing can only be solved approximately, we can only find pseudo-distributions that approximately satisfy a given polynomial system.
We say that a level-$\ell$ pseudo-distribution \emph{approximately satisfies} a polynomial system, if the inequalities in \cref{eq:constrained-pseudo-distribution} are satisfied up to an additive error of $2^{-n^\ell}\cdot \norm{h}\cdot\prod_{i\in S}\norm{f_i}$, where $\norm{\cdot}$ denotes the Euclidean norm\footnote{The choice of norm is not important here because the factor $2^{-n^\ell}$ swamps the effects of choosing another norm.} of the coefficients of a polynomial in the monomial basis.

\begin{theorem}[Sum-of-squares algorithm]
  \label{theorem:SOS_algorithm}
  There exists an $(n+ m)^{O(\ell)} $-time algorithm that, given any explicitly bounded\footnote{A system of polynomial constraints is \emph{explicitly bounded} if it contains a constraint of the form $\|x\|^2 \leq M$.} and satisfiable system\footnote{Here we assume that the bit complexity of the constraints in $\cA$ is $(n+m)^{O(1)}$.} $\cA$ of $m$ polynomial constraints in $n$ variables, outputs a level-$\ell$ pseudo-distribution that satisfies $\cA$ approximately.
\end{theorem}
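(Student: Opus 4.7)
The plan is to reduce the search for a level-$\ell$ pseudo-distribution satisfying $\cA$ to solving a semidefinite program of size $(n+m)^{O(\ell)}$, and then invoke standard SDP algorithms. First I would observe that a level-$\ell$ pseudo-distribution is completely determined by the associated pseudo-expectation operator $\tE$, which is a linear functional on the space of polynomials in $n$ variables of degree at most $\ell$. Fixing the monomial basis, this space has dimension $N = \binom{n+\ell}{\ell} \le (n+\ell)^{O(\ell)}$, so $\tE$ is specified by $N$ real numbers $\{\tE[x^{\alpha}]\}_{|\alpha|\le\ell}$, which will be the decision variables of our SDP. The normalization $\sum_x \mu(x)=1$ becomes the affine constraint $\tE[1]=1$.

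Next I would translate the non-negativity axioms into PSD constraints. The condition $\tE[h^2]\ge 0$ for every polynomial $h$ of degree at most $\ell/2$ is equivalent to requiring that the moment matrix $M\in\R^{N'\times N'}$, indexed by monomials of degree at most $\ell/2$ with $M[\alpha,\beta]=\tE[x^{\alpha+\beta}]$, be positive semidefinite; the entries of $M$ are linear in our decision variables. Similarly, for each constraint $f_i\ge 0$ in $\cA$, the requirement that $\tE[h^2\cdot\prod_{i\in S}f_i]\ge 0$ (for all multisets $S$ and all sos polynomials $h$ of appropriate degree) is captured by asking that the \emph{localizing matrices} $M_S[\alpha,\beta]=\tE[x^{\alpha+\beta}\cdot\prod_{i\in S}f_i]$ be PSD, again a linear-matrix-inequality in the decision variables. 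There are at most $m^{O(\ell)}$ such matrices, each of dimension at most $N$, so the resulting SDP has $(n+m)^{O(\ell)}$ total size. Satisfiability of $\cA$ by an actual distribution (which is a special pseudo-distribution) shows the SDP is feasible.

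The final step is to solve this SDP to the required precision. I would use the ellipsoid method with the PSD separation oracle implemented via eigenvalue decomposition: if a candidate moment vector fails PSD-ness of some $M_S$, an eigenvector for the minimum eigenvalue yields a separating hyperplane. The \emph{explicitly bounded} hypothesis $\|x\|^2\le M$ is the key input here: combined with $\tE[1]=1$, it provides an a priori $\ell^\infty$-bound on all pseudo-moments, giving a bounded containing ball and a starting ellipsoid. Running the ellipsoid method to accuracy $2^{-n^{\ell}}$ costs time polynomial in $N$, $\log M$, and $n^{\ell}$, which fits in the claimed $(n+m)^{O(\ell)}$ bound; the resulting pseudo-expectation satisfies each inequality in the definition of $\mu\sdtstile{r}{}\cA$ up to additive error scaled by the appropriate product of coefficient norms, matching the ``approximately satisfies'' notion.

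The main obstacles, and the places where I would need to be careful, are (i) showing that the PSD separation oracle can be implemented in polynomial bit-complexity for rational inputs, which relies on perturbing candidate solutions by $2^{-\mathrm{poly}(N)}$ to avoid degenerate eigenvalues; and (ii) book-keeping how the additive slack in the SDP solution propagates through the non-negativity tests in \cref{eq:constrained-pseudo-distribution} to give the claimed error bound of $2^{-n^{\ell}}\cdot\|h\|\cdot\prod_{i\in S}\|f_i\|$. Both of these are standard in the SDP/SoS literature, so I would cite the usual references (e.g.\ \cite{raghavendra2018high}) rather than redo the analysis from scratch.
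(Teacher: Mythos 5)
The paper does not prove this theorem; it is presented as a background fact in \cref{sec:prelim} with citations to the standard sum-of-squares references. Your reduction to a moment/localizing-matrix SDP of size $(n+m)^{O(\ell)}$, solved via the ellipsoid method with a PSD separation oracle and a containing ball supplied by the explicit boundedness constraint, is precisely the standard argument from those references, so your proposal is correct and matches the approach the paper relies on.
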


\begin{remark}[Approximation error and bit complexity]
  \label{remark:sos-numerical-issue}  
  For a pseudo-distribution that only approximately satisfies a polynomial system, we can still use sum-of-squares proofs to reason about it in the same way as \cref{lem:sos-soundness}.
  In order for approximation errors not to amplify throughout reasoning, we need to ensure that the bit complexity of the coefficients in the sum-of-squares proof are polynomially bounded.  
\end{remark}

\subsection{Useful sum-of-squares lemmas}

\begin{lemma}
    \label{lem:preliminary-SOS-zero-one-range}
    \[
      \set{x^2=x} \sststile{2}{x} 0 \leq x \leq 1 \,.  
    \]
\end{lemma}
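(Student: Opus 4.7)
The plan is to give two short degree-$2$ sum-of-squares identities, one for each inequality, each of which uses the equality $x^2 = x$ via a polynomial multiplier (allowed in the sos proof system since $x^2-x \equiv 0$ can be multiplied by any polynomial of appropriate degree).

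For the lower bound $x \ge 0$, the key observation is that the constraint $x^2 = x$ lets us replace $x$ by its square. Concretely, I would write
\[
  x \;=\; x^2 + (x - x^2) \;=\; x^2 + 1 \cdot (x - x^2),
\]
which exhibits $x$ as a sum of the square $x^2$ and a polynomial multiple of the equality $x^2 - x = 0$. Since $x^2$ has degree $2$ and the multiplier of $x-x^2$ has degree $0$, this is a degree-$2$ sos proof of $x \ge 0$.

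For the upper bound $1-x \ge 0$, the idea is to start from the square $(1-x)^2$ and subtract off the equality constraint. Direct expansion gives
\[
  (1-x)^2 \;=\; 1 - 2x + x^2 \;=\; (1-x) + (x^2 - x),
\]
so rearranging yields
\[
  1 - x \;=\; (1-x)^2 + (-1)\cdot(x^2 - x).
\]
This writes $1-x$ as a sum of the square $(1-x)^2$ plus a degree-$0$ multiplier of $x^2 - x$, so it is a degree-$2$ sos proof of $1-x \ge 0$.

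There is no real obstacle here; the only point to keep track of is the degree accounting, since the definition of $\sststile{2}{x}$ requires $\deg(h) + \sum_{i\in S}\max\{\deg(f_i),r\} \le 2$ where here the only constraint $f_1 = x^2 - x$ has degree $2$, so the multiplier of $f_1$ must have degree $0$ and any extra sos summand must have degree at most $2$. Both identities above meet this bound, which completes the proof.
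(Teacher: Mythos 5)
Your proof is correct and takes essentially the same approach as the paper: both parts rely on $x = x^2$ for the lower bound and on the square $(1-x)^2$ combined with the equality $x^2 = x$ for the upper bound. Your upper-bound derivation is written as a direct identity $1-x = (1-x)^2 - (x^2-x)$, which is a marginally cleaner presentation of the same underlying argument the paper phrases as $x \le \tfrac{x^2}{2}+\tfrac12 = \tfrac{x}{2}+\tfrac12$ followed by rearranging.
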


\begin{proof}
    The first inequality is trivial due to $\set{x^2=x} \sststile{2}{x} x = x^2 \geq 0$. For the second inequality, it follows that
    \begin{equation*}
        \set{x^2=x} \sststile{2}{x} x \leq \frac{x^2}{2} + \frac{1}{2} = \frac{x}{2} + \frac{1}{2} \,.
    \end{equation*}
    Rearranging the terms, we get
    \begin{equation*}
        \set{x^2=x} \sststile{2}{x} x \leq 1 \,.
    \end{equation*}
\end{proof}

\begin{lemma}
\label{lem:preliminary-SOS-set-union-bound}
    \begin{equation*}
        \set{x^2=x, y^2=y} \sststile{4}{x, y} 1-xy \leq (1-x)+(1-y) \,.
    \end{equation*}
\end{lemma}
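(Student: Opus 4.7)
The plan is to observe that the claimed inequality $1 - xy \leq (1-x) + (1-y)$ is equivalent (by simple rearrangement) to $(1-x)(1-y) \geq 0$, so it suffices to exhibit a degree-$4$ SOS proof of the latter from the Boolean constraints $x^2 = x$ and $y^2 = y$. The key identity I would use is that under these constraints, both $1-x$ and $1-y$ square to themselves: indeed, $(1-x)^2 - (1-x) = x^2 - x$ and similarly for $y$. Consequently, the degree-$2$ polynomial $(1-x)(1-y)$ coincides, modulo the ideal generated by the constraints, with its own square $[(1-x)(1-y)]^2$, which is manifestly a sum of squares.

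Concretely, I would expand
\[
    [(1-x)(1-y)]^2 = (1-x)^2(1-y)^2 = \bigl[(1-x) + (x^2-x)\bigr]\bigl[(1-y) + (y^2-y)\bigr](1-y) \cdot \tfrac{1}{1-y}
\]
—or rather, more carefully, compute
\[
    [(1-x)(1-y)]^2 - (1-x)(1-y) = (x^2-x)(1-y)^2 + (1-x)(y^2-y),
\]
which is a direct polynomial identity (no constraints used). Rearranging gives
\[
    (1-x)(1-y) = [(1-x)(1-y)]^2 - (x^2-x)(1-y)^2 - (1-x)(y^2-y),
\]
which is an explicit degree-$4$ SOS certificate: the first term is a square, and the latter two are (arbitrary) polynomial multiples of the equality constraints $x^2 - x = 0$ and $y^2 - y = 0$, which are permitted in SOS proofs since each equality can be split into the pair of inequalities $\pm(x^2 - x) \geq 0$.

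There is no real obstacle here; the only thing to verify is the polynomial identity above, which is a routine expansion. Substituting back, this gives the desired degree-$4$ SOS proof of $(1-x)(1-y) \geq 0$, and hence of $1-xy \leq (1-x) + (1-y)$, completing the proof.
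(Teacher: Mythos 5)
Your proof is correct, and at its core it follows the same strategy as the paper: reduce the claim to showing $(1-x)(1-y) \geq 0$ under the Boolean constraints. The paper gets there by first invoking its preceding lemma to establish $0 \leq x, y \leq 1$ (each via a degree-$2$ certificate), then multiplies the two nonnegativity facts to obtain $(1-x)(1-y) \geq 0$ at degree $4$. You instead bypass the auxiliary lemma and write down a single explicit polynomial identity,
\[
(1-x)(1-y) = [(1-x)(1-y)]^2 - (x^2-x)(1-y)^2 - (1-x)(y^2-y),
\]
which you verified is a direct algebraic identity; reading off a square term plus polynomial multiples of the two equality constraints gives the certificate. If one multiplies out the paper's product-of-certificates, the resulting polynomial is of course the same, but the decomposition into ``square $+$ (coefficient)$\cdot$(constraint)'' differs: the paper's multipliers are $\pm(1-x)^2$, $\pm(1-y)^2$, and $\pm1$, all of definite sign, whereas your multiplier $-(1-x)$ of $(y^2-y)$ has indefinite sign and relies on the standard convention that equality constraints admit arbitrary polynomial multipliers (which, as you note, is kosher and does not increase the degree past $4$ once you split into a difference of squares). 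The paper's route is slightly more modular and reuses \cref{lem:preliminary-SOS-zero-one-range}; yours is more self-contained and exhibits the certificate in one line. Both are degree-$4$ SOS proofs, so both are fine.
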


\begin{proof}
    By \cref{lem:preliminary-SOS-zero-one-range}, it follows that
    \begin{equation*}
        \set{x^2=x, y^2=y} \sststile{2}{x, y} 0 \leq x, y \leq 1 \,.
    \end{equation*}
    Therefore, we have
    \begin{align*}
        \set{x^2=x, y^2=y}
        & \sststile{4}{x, y} (1-y)(1-x) \geq 0 \\
        & \sststile{4}{x, y} 1-x-y \geq -xy \\
        & \sststile{4}{x, y} 2-x-y \geq 1-xy \,.
    \end{align*}
\end{proof}

\begin{lemma}
\label{lem:preliminary-SOS-abs-to-square}
Given constant $C$, we have
    \begin{equation*}
        \set{-C \leq x \leq C} \sststile{2}{x} x^2 \leq C^2  \,.
    \end{equation*}
\end{lemma}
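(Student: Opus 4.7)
The plan is to give the one-line identity
\[
C^2 - x^2 \;=\; (C+x)(C-x),
\]
and observe that this already exhibits the desired SOS certificate. Specifically, the constraint set $\{-C \leq x \leq C\}$ unpacks into the two polynomial inequalities $C + x \geq 0$ and $C - x \geq 0$. The relevant multiset of constraints in the sense of \cref{def:sos-proof} is $S = \{C+x \geq 0,\; C-x \geq 0\}$, with SOS multiplier $p_S \equiv 1$ (which is trivially a sum of squares, of degree $0$). Multiplying gives $p_S \cdot (C+x) \cdot (C-x) = C^2 - x^2$, so this is a valid sum-of-squares proof of $C^2 - x^2 \geq 0$.

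The total degree of the certificate is $\deg(p_S) + \deg(C+x) + \deg(C-x) = 0 + 1 + 1 = 2$, which matches the claimed degree bound. Hence
\[
\{-C \leq x \leq C\} \;\sststile{2}{x}\; C^2 - x^2 \geq 0,
\]
which is exactly the statement. There is no real obstacle here: the identity $(C+x)(C-x) = C^2 - x^2$ is an algebraic tautology, and the only thing to check is that it fits within the degree-$2$ SOS proof budget, which it does since each linear factor is used exactly once with the constant SOS multiplier $1$.
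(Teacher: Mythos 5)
Your proof is correct and matches the paper's argument exactly: both identify the certificate $C^2 - x^2 = (C+x)(C-x)$ as a degree-$2$ product of the two constraint polynomials with SOS multiplier $1$. You simply spell out the multiset/multiplier bookkeeping from the definition a bit more explicitly than the paper does.
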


\begin{proof}
    \begin{align*}
        \set{-C \leq x \leq C}
        & \sststile{2}{x} (C-x)(C+x) \geq 0 \\
        & \sststile{2}{x} C^2-x^2 \geq 0 \\
        & \sststile{2}{x} C^2 \geq x^2 \,.
    \end{align*}
\end{proof}

\begin{lemma}
\label{lem:preliminary-SOS-square-to-abs}
Given constant $C$, we have
    \begin{equation*}
        \set{x^2 \leq C^2} \sststile{2}{x} -C \leq x \leq C  \,.
    \end{equation*}
\end{lemma}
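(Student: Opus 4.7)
The plan is to exhibit explicit degree-$2$ sum-of-squares certificates for each of the two inequalities $C - x \ge 0$ and $C + x \ge 0$, each derived from the single hypothesis polynomial $C^2 - x^2 \ge 0$. The key observation is the algebraic identity
\begin{equation*}
2C\,(C - x) \;=\; (C - x)^2 \;+\; (C^2 - x^2),
\end{equation*}
which one verifies by expanding both sides to $2C^2 - 2Cx$. Assuming $C > 0$ (which is the regime in which the lemma is instantiated in the paper, e.g.\ in the final step of \cref{lem:fine-er-sos} where $C$ is a positive expression in the problem parameters), we may divide through by $2C$ to obtain
\begin{equation*}
C - x \;=\; \tfrac{1}{2C}\,(C - x)^2 \;+\; \tfrac{1}{2C}\,(C^2 - x^2).
\end{equation*}
The right-hand side is a nonnegative combination of a square and the hypothesis polynomial $C^2 - x^2$ with nonnegative scalar coefficients, so this is a valid degree-$2$ sum-of-squares proof of $C - x \ge 0$ from the constraint $x^2 \le C^2$.

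The lower bound $-C \le x$ follows by the symmetric identity
\begin{equation*}
2C\,(C + x) \;=\; (C + x)^2 \;+\; (C^2 - x^2),
\end{equation*}
which, after dividing by $2C$, gives the analogous degree-$2$ SOS proof that $C + x \ge 0$. Combining the two inequalities yields $-C \le x \le C$, completing the plan.

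There isn't really a hard step here: the whole argument reduces to spotting the one-line polynomial identity and dividing by $2C$. The only subtle point worth flagging is that $C$ must be strictly positive for the scalar $1/(2C)$ to make sense as a nonnegative coefficient in an SOS certificate; the statement of the lemma tacitly assumes this, since in the degenerate case $C = 0$ the hypothesis $x^2 \le 0$ forces $x^2 = 0$, from which $x \ge 0$ is not derivable in degree $2$ (a linear term $x$ cannot appear on the right-hand side of an SOS decomposition $\sigma_0(x) + \sigma_1(x)\cdot(-x^2)$ when $\deg(\sigma_0) \le 2$ and $\sigma_1 \ge 0$ has degree $0$).
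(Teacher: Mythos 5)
Your proof is correct and uses exactly the same SOS certificate as the paper: the identity $C - x = \tfrac{1}{2C}(C-x)^2 + \tfrac{1}{2C}(C^2-x^2)$ is precisely what underlies the paper's inequality chain $x \le \tfrac{x^2}{2C} + \tfrac{C}{2} \le \tfrac{C^2}{2C} + \tfrac{C}{2} = C$ (and symmetrically for the lower bound), just presented as an explicit identity rather than as successive inequalities. Your remark that $C>0$ is genuinely needed, and that the degenerate case $C=0$ fails at degree $2$, is accurate and a useful clarification that the paper leaves implicit.
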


\begin{proof}
    For the first inequality, we have
    \begin{equation*}
        \set{x^2 \leq C^2}
        \sststile{2}{x}
        x \geq -\frac{x^2}{2C} - \frac{C}{2}
        \geq -\frac{C^2}{2C} - \frac{C}{2}
        = - C \,.
    \end{equation*}
    For the second inequality, we have
    \begin{equation*}
        \set{x^2 \leq C^2}
        \sststile{2}{x}
        x \leq \frac{x^2}{2C} + \frac{C}{2}
        \leq \frac{C^2}{2C} + \frac{C}{2}
        = C \,.
    \end{equation*}
\end{proof}

\section{Concentration inequalities}
\label{sec:concentration-inequalities}

\begin{lemma}[Average degree concentration]
    \label{lem:average_degree_concentration}
    Let $\Qnull$ be an $n$-by-$n$ edge connection probability matrix and let $\dnull \seteq d(\Qnull)$.
    Let $\rv{A}\sim\bbG(n,\Qnull)$.
    Then for any $\delta\in(0,1)$,
    \[
        \Pr\Paren{\Abs{d(\rv A) - \dnull} \ge \delta \dnull} \le 2\exp\Paren{-\frac{\delta^2 n \dnull}{6}} \,,
    \]
\end{lemma}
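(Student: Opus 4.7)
The plan is to reduce the concentration statement to a direct application of the multiplicative Chernoff bound for sums of independent Bernoulli random variables. The key observation is that, although $d(\rv A)$ is defined as $\sum_{i,j}\rv A_{ij}/n$ summing over \emph{all} ordered pairs, symmetry and the absence of self-loops mean that the only independent randomness comes from the upper-triangular entries $\{\rv A_{ij} : i<j\}$, which are independent Bernoullis with parameters $\Qnull_{ij}$. Thus $d(\rv A) = \tfrac{2}{n}\sum_{i<j}\rv A_{ij}$ and correspondingly $\dnull = \tfrac{2}{n}\sum_{i<j}\Qnull_{ij}$, so if we let $\rv X \seteq \sum_{i<j}\rv A_{ij}$ with mean $\mu \seteq \E[\rv X] = n\dnull/2$, then the event $\{|d(\rv A)-\dnull|\ge \delta\dnull\}$ is exactly $\{|\rv X-\mu|\ge \delta\mu\}$.

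Next I would invoke the standard two-sided multiplicative Chernoff bound: for any sum $\rv X$ of independent $[0,1]$-valued random variables with mean $\mu$ and any $\delta\in(0,1)$,
\[
  \Pr\Paren{|\rv X-\mu|\ge \delta\mu} \le 2\exp\Paren{-\frac{\delta^2\mu}{3}} \,.
\]
Substituting $\mu = n\dnull/2$ yields the claimed tail bound
\[
  \Pr\Paren{|d(\rv A)-\dnull|\ge \delta\dnull} \le 2\exp\Paren{-\frac{\delta^2 n\dnull}{6}} \,.
\]

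There is no real obstacle here; the main thing to check is that the factor of $2$ arising from the symmetry of $\rv A$ is tracked correctly so that the constant in the exponent is $1/6$ (not $1/3$). The proof is a few lines once this bookkeeping is done, and it is the form of concentration already invoked at various places in \cref{sec:coarse_estimation} and \cref{sec:fine_estimation_inhmomo} (e.g.\ within \cref{lem:coarse-existence-proof} and \cref{lem:fine-inhomo-existence-proof}).
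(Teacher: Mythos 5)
Your proposal is correct and follows essentially the same route as the paper: both reduce to the multiplicative Chernoff bound for the upper-triangular sum $\sum_{i<j}\rv A_{ij}$ (independent Bernoullis with parameters $\Qnull_{ij}$), then translate back via $d(\rv A) = \tfrac{2}{n}\sum_{i<j}\rv A_{ij}$ and $\mu = n\dnull/2$ to get the $1/6$ constant. You simply spell out the bookkeeping step that the paper leaves implicit.
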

\begin{proof}
    Let $\mu \seteq \E \sum_{i<j} \rv A_{ij} = \sum_{i<j} p_{ij}$.
    Using Chernoff bound, for $\delta\in(0,1)$,
    \begin{align*}
        \Pr\Paren{\Abs{\sum_{i<j} \rv A_{ij} - \mu} \ge \delta\mu} &\le 2\exp\Paren{-\frac{\delta^2\mu}{3}} \,, \\
        \Pr\Paren{\Abs{d(\rv A) - \dnull} \ge \delta \dnull} &\le 2\exp\Paren{-\frac{\delta^2 n \dnull}{6}} \,.
    \end{align*}
\end{proof}

\begin{lemma}[Degree distribution]
    \label{lem:degree_distribution}
    Let $\Qnull$ be an $n$-by-$n$ edge connection probability matrix.
    Let $d$ be a parameter such that $d\ge5$ and $\normi{\Qnull} \le d/n$.
    Then for every $t\in[2e^2, \log n]$, an inhomogeneous random graph $\bbG(n,\Qnull)$ has at least $e^{-t}n$ nodes with degree at least $td$ with probability at most $\exp\paren{-te^{-t}nd/4}$.
\end{lemma}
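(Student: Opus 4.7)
The plan is a union bound over size-$k$ subsets of vertices, where $k \seteq \lceil e^{-t} n \rceil$. Write $\rv{D}_i = \sum_j \rv{A}_{ij}$ for the degree of vertex $i$ in $\rv{A}\sim\bbG(n,\Qnull)$ and $\rv{N} = |\{i : \rv{D}_i \ge td\}|$. If $\rv{N} \ge k$, then some $S \subseteq [n]$ of size $k$ satisfies $\rv{D}_i \ge td$ for every $i \in S$, which forces $\sum_{i \in S} \rv{D}_i \ge ktd$. A union bound therefore gives
\[
  \Pr(\rv{N} \ge k) \le \binom{n}{k} \cdot \max_{|S|=k} \Pr\!\left(\sum_{i \in S} \rv{D}_i \ge ktd\right).
\]

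For a fixed $S$, I would observe that $\sum_{i \in S} \rv{D}_i = 2E(S) + E(S,\bar S) \le 2 E_S$, where $E_S \seteq E(S) + E(S,\bar S)$ counts edges incident to $S$, so it suffices to bound $\Pr(E_S \ge ktd/2)$. Now $E_S$ is a sum of independent Bernoulli variables with mean $\mu \le \|\Qnull\|_\infty \cdot kn \le kd$, so (since $t \ge 2e^2$ makes $ktd/2$ much larger than $\mu$) the multiplicative Chernoff bound in the form $\Pr(X \ge a) \le (e\mu/a)^a$ yields
\[
  \Pr(E_S \ge ktd/2) \le \left(\frac{2e\mu}{ktd}\right)^{ktd/2} \le \left(\frac{2e}{t}\right)^{ktd/2}.
\]

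Combining this with the standard estimate $\binom{n}{k} \le (en/k)^k \le e^{(1+t)k}$ (using $n/k \le e^t$) and the inequality $\ln(2e/t) \le -1$ for $t \ge 2e^2$, I would deduce
\[
  \ln \Pr(\rv{N} \ge k) \le k(1+t) - \tfrac{ktd}{2} = k\!\left(1 + t - \tfrac{td}{2}\right).
\]
Under the hypotheses $d \ge 5$ and $t \ge 2e^2$, the bracket satisfies $1 + t - td/2 \le -td/4$ (equivalently $d \ge 4 + 4/t$, which holds since $4 + 4/(2e^2) < 5$). Since $k \ge e^{-t}n$, this yields $\ln \Pr(\rv{N} \ge k) \le -ktd/4 \le -te^{-t}nd/4$, as desired.

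The main obstacle is the tight interplay between the three parameters: the Chernoff tail of order $\exp(-\Theta(te^{-t}nd))$ must dominate the $\binom{n}{k}$ union-bound factor of order $\exp(te^{-t}n)$, which is exactly why both hypotheses $t \ge 2e^2$ and $d \ge 5$ appear and are essentially tight in the calculation. A secondary but routine concern is the dependence among the $\rv{D}_i$; passing through the edge count $E_S$ restores independence at the cost of a harmless factor of $2$, which is absorbed into the constants above.
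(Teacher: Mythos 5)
Your proof is correct and follows essentially the same route as the paper's: fix a size-$k$ subset $S$, observe that all vertices in $S$ having degree at least $td$ forces at least $ktd/2$ edges incident to $S$, bound that tail, and take a union bound over $\binom{n}{k}$ choices of $S$. The only cosmetic difference is that you invoke the multiplicative Chernoff bound $\Pr(X\ge a)\le(e\mu/a)^a$ for the edge count, whereas the paper spells out the equivalent first-moment bound ${\gamma n^2 \choose \gamma n td/2}(d/n)^{\gamma n td/2}$ directly; both yield the factor $(2e/t)^{ktd/2}$, and the final algebra with $t\ge 2e^2$ and $d\ge 5$ matches.
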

\begin{proof}
    Let $\rv m_{k}$ denote the number of nodes with degree at least $k$ in $\bbG(n,\Qnull)$.
    Then for every $\gamma\in[0,1]$, 
    \begin{align*}
        \Pr\Paren{\rv m_{td} \ge \gamma n}
        &\le {n \choose \gamma n} {\gamma n^2 \choose \gamma n td/2} \Paren{\frac{d}{n}}^{\gamma n td/2} \\
        &\le \Paren{\frac{e}{\gamma}}^{\gamma n} \Paren{\frac{2e}{t}}^{\gamma n td/2} \\
        &= \exp\Paren{-\gamma n\Paren{\frac{td}{2}\log\frac{t}{2e} - \log\frac{e}{\gamma}}} \,.
    \end{align*}
    Plugging in $\gamma=e^{-t}$ gives
    \[
        \Pr\Paren{\rv m_{td} \ge e^{-t}n}
        \le \exp\Paren{-te^{-t}n\Paren{\frac{d}{2}\log\frac{t}{2e} - 1-1/t}} \,.
    \]
    For $t\in[2e^2, \log n]$ and $d\ge5$,
    \[
        \Pr\Paren{\rv m_{td} \ge e^{-t}n}
        \le \exp\Paren{-te^{-t}n\Paren{\frac{d}{2} - \frac{5}{4}}}
        \le \exp\Paren{-te^{-t}nd/4} \,.
    \]
\end{proof}

\begin{lemma}[Degree pruning]
    \label{lem:degree_pruning}
    Let $\Qnull$ be an $n$-by-$n$ edge connection probability matrix.
    Let $d$ be a parameter such that $d\ge5$ and $\normi{\Qnull} \le d/n$.
    Then with probability at least $1-n^{1-d/4}$, an inhomogeneous graph $\bbG(n,\Qnull)$ has the following property. For all $t\in[2e^2, \log n]$, the number of edges incident to nodes with degree at least $td$ is at most $2te^{-t} nd$;
\end{lemma}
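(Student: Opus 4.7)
The plan is to deduce this lemma from \cref{lem:degree_distribution} via a union bound over an integer grid on the degree threshold, followed by a summation-by-parts argument. Specifically, I apply \cref{lem:degree_distribution} with parameter $k/d$ for each integer $k \in [\lceil 2e^2 d\rceil,\, \lfloor d\log n \rfloor]$; writing $m_k$ for the number of nodes of degree at least $k$, each application bounds $\Pr(m_k \ge e^{-k/d}n)$ by $\exp(-k e^{-k/d}n/4)$. Because $k \mapsto k e^{-k/d}$ is decreasing on $k \ge d$, the worst (largest) probability is attained at $k = \lfloor d\log n\rfloor$ and is at most $O(n^{-d/4})$; since the grid has at most $d \log n$ points, a union bound gives total failure probability $\lesssim d \log n \cdot n^{-d/4} \le n^{1-d/4}$.

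Condition on the success event. Then $m_k \le e^{-k/d}n$ for every integer $k$ in the grid. At the top of the grid $k = \lfloor d\log n\rfloor$, this yields $m_k < 1$; since $m_k$ is integer-valued, $m_k = 0$, and by monotonicity of $k \mapsto m_k$ the same holds for every $k \ge d\log n$. In particular, under the success event no node has degree exceeding $d\log n$.

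Now fix any $t \in [2e^2, \log n]$. The number of edges incident to nodes with degree at least $td$ is bounded by $S(t) := \sum_{i:\deg(i)\ge td} \deg(i)$, and a standard summation-by-parts identity on integer degrees gives $S(t) = K m_K + \sum_{k > K} m_k$, where $K := \lceil td \rceil$. Substituting $m_k \le e^{-k/d}n$ for $K \le k < d\log n$ and $m_k = 0$ for $k \ge d\log n$, then using the geometric series estimate $\sum_{k>K} e^{-k/d} \le e^{-K/d}/(1-e^{-1/d}) \le (10d/9)\,e^{-K/d}$ (valid for $d \ge 5$), together with $K \le td + 1$ and $e^{-K/d} \le e^{-t}$, yields $S(t) \le e^{-t}n\,(td + 1 + 10d/9) \le 2 t e^{-t} n d$, where the last inequality uses $t \ge 2e^2 \gg (1 + 10d/9)/d$ for $d \ge 5$.

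The main obstacle is bookkeeping: ensuring the union bound yields the exact probability $1 - n^{1-d/4}$ promised in the statement and that the endpoint $t = \log n$ (where $e^{-t}n = 1$ forces the tail to vanish) is handled cleanly. Both are resolved by using a $1/d$-spaced integer grid of degree thresholds and exploiting integrality of $m_k$ at the boundary.
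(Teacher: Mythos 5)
Your overall strategy matches the paper's: a union bound over the integer grid of degree thresholds $k\in[\lceil 2e^2 d\rceil, \lfloor d\log n\rfloor]$, followed by a tail sum. The paper does the tail sum by bucketing degrees into intervals $[(t+i)d,(t+i+1)d)$ and writing $\sum_{i\ge 0}(t+i+1)d\cdot e^{-(t+i)}n$, while you use Abel summation $S(t)=Km_K+\sum_{k>K}m_k$; these are two forms of the same accounting and both yield the desired $2te^{-t}nd$ with room to spare.

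There is, however, an arithmetic slip in how you close off the tail. At the top of the grid $k=\lfloor d\log n\rfloor$ we have $k/d\le\log n$, so $e^{-k/d}n\ge 1$ — not $<1$. The conditioned bound $m_k<e^{-k/d}n$ therefore only gives $m_k<e^{1/d}<1.3$, hence $m_k\le 1$; it does not force $m_k=0$, so a single node of degree up to $n-1$ could survive and contribute $\Theta(n)$ to $S(t)$, which would overwhelm the target $2te^{-t}nd$. The clean fix is to also invoke \cref{lem:degree_distribution} at $t=\log n$ exactly: there $e^{-t}n=1$ is an integer, so the success event $m_{d\log n}<1$ really does give $m_{d\log n}=0$, and monotonicity then kills the entire tail $k\ge d\log n$. (This adds only one more term to the union bound, so the failure probability is still $\lesssim d\log n\cdot n^{-d/4}\le n^{1-d/4}$.) It's worth noting the paper's own proof has the same implicit gap — its sum $\sum_{i\ge 0}(t+i+1)d\,e^{-(t+i)}n$ silently assumes $m_{(t+i)d}\le e^{-(t+i)}n$ for $t+i>\log n$, which is outside the range covered by its union bound — so your proof is no worse off, and with the one-line patch above it is correct.
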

\begin{proof}
    Let $\rv m_{k}$ denote the number of nodes with degree at least $k$ in $\bbG(n,\Qnull)$.
    By \cref{lem:degree_distribution}, for any $t\in[2e^2, \log n]$ and $d\ge5$,
    \[
        \Pr\Paren{\rv m_{td} \ge e^{-t}n}
        \le \exp\Paren{-te^{-t}nd/4}
        \le n^{-d/4} \,.
    \]
    Applying union bound, the event that $m_{k} \le e^{-k/d}n$ for any integer $k \in [2e^2d, (\log n)d]$ happens with probability at least $1-n^{1-d/4}$.
    We condition our following analysis on this event.
    
    Fix a $t\in[2e^2, \log n]$.
    The number of edges incident to nodes with degree at least $td$ is at most
    \begin{align*}
        \sum_{i=0} (t+i+1)d \cdot e^{-(t+i)}n 
        = nd \sum_{i=t} (i+1) e^{-i}
        = nd e^{-t} \Paren{\frac{e}{e-1}t + \frac{e^2}{(e-1)^2}}
        \leq 2te^{-t} nd \,.
    \end{align*}
\end{proof}

\begin{lemma}[Degree-truncated subgraph]
    \label{lem:degree_truncated_subgraph}
    Let $\Qnull$ be an $n$-by-$n$ edge connection probability matrix and let $\dnull\seteq d(\Qnull)$.
    Let $d$ be a parameter such that $d\ge5$ and $\normi{\Qnull} \le d/n$.
    For $\delta\in(0,1)$, an inhomogeneous graph $\rv{A}\sim\bbG(n,\Qnull)$ has the following property with probability at least $1-n^{1-d/4}-\exp(-\delta^2n\dnull/6)$.
    For every $t\in[2e^2, \log n]$, $A$ contains an $n$-node subgraph $\tilde{A}$ of  such that 
    \begin{itemize}
        \item $(\tilde{A}\one)_i \le t d$ for any $i\in[n]$;
        \item $(1-\delta)\dnull - 4te^{-t}d \le d(\tilde{A}) \le (1+\delta)\dnull$.
    \end{itemize}
\end{lemma}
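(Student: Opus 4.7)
}

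The plan is to construct $\tilde{A}$ from $A$ by deleting all edges incident to nodes whose degree in $A$ exceeds $td$, and then read off both properties from the two preceding lemmas combined via a union bound. First, I would invoke \cref{lem:average_degree_concentration} with the given $\delta$ to secure the event $(1-\delta)\dnull \le d(A) \le (1+\delta)\dnull$, which holds with probability at least $1-\exp(-\delta^2 n\dnull/6)$. Next, I would invoke \cref{lem:degree_pruning} (which already provides a statement uniform over all $t \in [2e^2, \log n]$) to secure, with probability at least $1-n^{1-d/4}$, that the number of edges of $A$ incident to nodes of degree at least $td$ is bounded by $2te^{-t}nd$ for every such $t$ simultaneously. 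A union bound then yields the claimed probability.

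On this joint event, fix any $t \in [2e^2, \log n]$ and define $\tilde A$ as the subgraph of $A$ obtained by removing every edge that has at least one endpoint of $A$-degree strictly greater than $td$. The max-degree bound is immediate: any node that was high-degree now has degree $0$ in $\tilde A$, and any node that was already low-degree has its degree only decreased, hence remains $\le td$. For the average-degree bound, the number of edges removed is at most $2te^{-t}nd$ by the pruning event, so the corresponding drop in $\sum_{ij}\tilde A_{ij}$ is at most $4te^{-t}nd$ (the factor of two coming from the symmetric counting in $d(\cdot)=\sum_{ij}(\cdot)_{ij}/n$). Combining with the average-degree concentration gives
\[
(1-\delta)\dnull - 4te^{-t}d \;\le\; d(\tilde A) \;\le\; d(A) \;\le\; (1+\delta)\dnull,
\]
as desired.

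There is no substantive obstacle here; the lemma is essentially a repackaging of \cref{lem:average_degree_concentration} and \cref{lem:degree_pruning} into a single statement about a degree-truncated subgraph. The only point requiring a moment of care is the bookkeeping between ``number of edges'' and the quantity $d(\cdot)$, where the factor of two from symmetric adjacency matrices produces the constant $4$ in front of $te^{-t}d$; this is consistent with what the statement claims.
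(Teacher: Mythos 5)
Your proof is correct and follows essentially the same argument as the paper's: combine \cref{lem:average_degree_concentration} and \cref{lem:degree_pruning} via a union bound, then for each $t$ remove the (at most $2te^{-t}nd$) edges incident to high-degree nodes to obtain $\tilde A$, tracking the factor of two from the symmetric adjacency matrix to get the $4te^{-t}d$ loss in $d(\cdot)$. The only difference is cosmetic — you make the construction of $\tilde A$ explicit (deleting all edges touching nodes of degree exceeding $td$), whereas the paper states the existence of such a pruned subgraph more tersely.
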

\begin{proof}
    By \cref{lem:average_degree_concentration} and \cref{lem:degree_pruning}, $\rv A \sim \bbG(n,\Qnull)$ has the following two properties with probability at least $1-n^{1-d/4}-\exp(-\delta^2n\dnull/6)$.
    \begin{itemize}
        \item $\abs{d(A) - \dnull} \le \delta \dnull$.
        \item For all $t\in[2e^2, \log n]$, the number of edges incident to nodes with degree at least $td$ is at most $2te^{-t} nd$.
    \end{itemize}
    Consider a graph $A$ with the above two properties.
    Fix a $t\in[2e^2, \log n]$.
    By removing at most $2te^{-t} nd$ edges from $A$, we can obtain a graph $\tilde A$  such that the maximum degree of $\tilde A$ is at most $td$. 
    Moreover, 
    \begin{align*}
        & d(\tilde{A}) \ge d(A) - 4te^{-t}d \ge (1-\delta)\dnull - 4te^{-t}d \,, \\
        & d(\tilde{A}) \le d(A) \le (1+\delta)\dnull \,.
    \end{align*}
\end{proof}

\begin{lemma}[Spectral bound~\cite{bandeira2016sharp}]
    \label{lem:spectral_bound_with_degree_pruning}
    Let $\rv{A} \sim \bbG(n,p_0)$ and suppose $np_0\ge5$.
    Then with probability at least $1-n^{-\Omega(1)}$,
    \[
        \Normop{\rv A - p_0(\one\one^\top-\Id)} \le 
        O\Paren{\sqrt{np_0\log n}} \,.
    \]
\end{lemma}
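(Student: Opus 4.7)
The plan is to apply the sharp nonasymptotic spectral-norm bound of Bandeira and van Handel~\cite{bandeira2016sharp} to the centered random matrix $X \seteq \rv A - p_0(\one\one^\top - \Id)$. Observe that $X$ is symmetric with $X_{ii}=0$ and with independent above-diagonal entries $X_{ij} = \rv A_{ij} - p_0$, each centered, bounded in $[-1,1]$, and satisfying $\E X_{ij}^2 = p_0(1-p_0) \le p_0$.

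The two relevant parameters in the Bandeira--van Handel bound are
\[
  \sigma^2 \seteq \max_{i \in [n]} \sum_{j \in [n]} \E X_{ij}^2 = (n-1)p_0(1-p_0) \le np_0
  \qquad\text{and}\qquad
  K \seteq \max_{i,j} \Norm{X_{ij}}_\infty \le 1.
\]
Plugging these into \cite[Theorem~1.2]{bandeira2016sharp} yields
\[
  \E \Normop{X} \lesssim \sigma + K\sqrt{\log n} \lesssim \sqrt{np_0} + \sqrt{\log n} \lesssim \sqrt{np_0 \log n},
\]
where the last inequality uses the hypothesis $np_0 \ge 5$ to absorb the additive $\sqrt{\log n}$ term into the main term $\sqrt{np_0 \log n}$.

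To upgrade from expectation to high probability, I would invoke Talagrand's concentration inequality for convex Lipschitz functionals on product spaces: the map sending the independent $[-1,1]$-valued above-diagonal entries of $X$ to $\Normop{X}$ is convex and $\sqrt{2}$-Lipschitz in the Euclidean norm of the entries, so for every $t \ge 0$,
\[
  \Pr\Paren{\Normop{X} \ge \E \Normop{X} + t} \le 2\exp(-c t^2)
\]
for a universal constant $c > 0$. Choosing $t \asymp \sqrt{\log n}$ with a sufficiently large hidden constant and combining with the expectation bound above yields $\Normop{X} \le O(\sqrt{np_0 \log n})$ with probability at least $1 - n^{-\Omega(1)}$.

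The main subtlety is the sparse regime $np_0 \asymp \log n$: there the Bandeira--van Handel boundedness term $K\sqrt{\log n}$, the main variance term $\sigma \asymp \sqrt{np_0}$, and the Talagrand concentration scale $\sqrt{\log n}$ all sit at the same order, and must each be comfortably absorbed by the target $\sqrt{np_0 \log n}$; this works precisely because the hypothesis $np_0 \ge 5$ ensures $\sqrt{np_0 \log n} \ge \sqrt{5 \log n}$, giving enough slack to hide the lower-order contributions.
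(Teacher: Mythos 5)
Your proposal is correct, and since the paper's "proof" of this lemma is simply the citation to Bandeira--van Handel, you have filled in precisely the argument the authors intend: identify $\sigma^2\le np_0$ and $K\le 1$, apply the expectation bound $\E\Normop{X}\lesssim\sigma+K\sqrt{\log n}\lesssim\sqrt{np_0\log n}$ (using $np_0\ge 5$ and $\log n\ge 1$ to absorb both terms), and then upgrade to a high-probability statement via Talagrand's concentration for the convex, $\sqrt{2}$-Lipschitz functional $\Normop{\cdot}$ of the independent bounded above-diagonal entries. The Lipschitz calculation $\Normop{X-Y}\le\norm{X-Y}_F=\sqrt{2}\,\norm{x-y}_2$ and the absorption of the $\sqrt{\log n}$ deviation into $\sqrt{np_0\log n}$ are both handled correctly, so the argument is complete and matches the paper's intended route.
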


\end{document}